\definecolor{ForestGreen}{rgb}{0.1333,0.5451,0.1333}
\definecolor{DarkRed}{rgb}{0.65,0,0}
\definecolor{Red}{rgb}{1,0,0}
\def\namedlabel#1#2{\begingroup
    #2%
    \def\@currentlabel{#2}%
    \phantomsection\label{#1}\endgroup
}
\declaretheorem[numberwithin=section]{theorem}
\declaretheorem[numberlike=theorem]{lemma}
\declaretheorem[numberlike=theorem]{fact}
\declaretheorem[numberlike=theorem]{proposition}
\declaretheorem[numberlike=theorem]{corollary}
\declaretheorem[numberlike=theorem,name=Corollary]{cor}
\declaretheorem[numberlike=theorem]{invariant} 
\declaretheorem[numberlike=theorem,style=definition]{definition}
\newcommand{\C}{\mathcal{C}}
\newcommand{\T}{\mathcal{T}}
\newcommand{\G}{\mathcal{G}}
\newcommand{\MaxFlow}{\mathrm{MaxFlow}}
\newcommand{\poly}{\mathrm{poly}}
\global\long\def\Otil{\tilde{O}}
\global\long\def\Ohat{\widehat{O}}
\global\long\def\mincut{\mathrm{mincut}}
\Crefname{ALC@unique}{Line}{Lines}
\Crefname{fact}{Fact}{Facts}
\Crefname{invariant}{Invariant}{Invariants}
\def\thatchaphol#1{\marginpar{$\leftarrow$\fbox{T}}\footnote{$\Rightarrow$~{\sf\textcolor{purple}{#1 --Thatchaphol}}}}
\def\zhongtian#1{\marginpar{$\leftarrow$\fbox{Z}}\footnote{$\Rightarrow$~{\sf\textcolor{blue}{#1 --Zhongtian}}}}
\def\thatchaphol#1{}
\def\zhongtian#1{}
\title{Cactus Representations in Polylogarithmic Max-flow\\via Maximal Isolating Mincuts}
\author{
Zhongtian He\\
Princeton University
\and
Shang-En Huang\thanks{Supported by NSF Grant No. CCF-2008422.}\\
Boston College
\and
Thatchaphol Saranurak\thanks{Supported by NSF CAREER grant 2238138.}\\
University of Michigan
}
\date{}
\begin{document}

\maketitle

\pagenumbering{gobble}
\begin{abstract}
A \emph{cactus} representation of a graph, introduced by Dinitz \emph{et al.}~in 1976, is an edge sparsifier of $O(n)$ size that exactly captures \emph{all} global minimum cuts of the graph. It is a central combinatorial object that has been a key ingredient in almost all algorithms for the connectivity augmentation problems and for maintaining minimum cuts under edge insertions (e.g.~{[}Naor \emph{et al.}~SICOMP'97{]}, {[}Cen \emph{et al.} SODA'22{]}, {[}Henzinger ICALP'95{]}). This sparsifier was generalized to \emph{Steiner cactus} for a vertex set $T$, which can be seen as a vertex sparsifier of $O(|T|)$ size that captures all partitions of $T$ corresponding to a $T$-Steiner minimum cut, and also \emph{hypercactus}, an analogous concept in hypergraphs. These generalizations further extend the applications of cactus to the Steiner and hypergraph settings. 

In a long line of work on fast constructions of cactus and its generalizations, a near-linear time construction of cactus was shown by Karger and Panigrahi {[}SODA'09{]}. Unfortunately, their technique based on tree packing inherently does not generalize. The state-of-the-art algorithms for Steiner cactus and hypercactus are still slower than linear time by a factor of $\Omega(|T|)$ {[}Dinitz and Vainshtein STOC'94{]} and $\Omega(n)$ {[}Chekuri and Xu SODA'17{]}, respectively. 

We show how to construct both Steiner cactus and hypercactus using polylogarithmic calls to max flow, which gives the first almost-linear time algorithms of both problems. The constructions immediately imply almost-linear-time connectivity augmentation algorithms in the Steiner and hypergraph settings, as well as speed up the incremental algorithm for maintaining minimum cuts in hypergraphs by a factor of $n$.

The key technique behind our result is a novel variant of the influential \emph{isolating mincut technique} {[}Li and Panigrahi FOCS'20, Abboud \emph{et al.~}STOC'21{]} which we called \emph{maximal isolating mincuts.} This technique makes the isolating mincuts to be ``more balanced'' which, we believe, will likely be useful in future applications. 
\end{abstract}

\newpage
\pagebreak

\tableofcontents
\pagebreak

\pagenumbering{arabic}
\section{Introduction}

In a weighted undirected $G=(V,E)$ with $n$ vertices and $m$ edges, a \emph{global minimum cut} (or \emph{mincut}, for short) is a cut with minimum weight separating some pair of vertices. More than 40 years ago, Dinitz \emph{et al.}~\cite{dinits1976structure} showed that, even though $G$ may have as many as $\binom{n}{2}$ distinct mincuts, there exists a $O(n)$-size data structure called a \emph{cactus representation} or simply a \emph{cactus} of $G$ that captures \emph{all} mincuts of $G$ in a strong way as follows. 

A cactus of $G$ is a tuple $(H,\phi)$ where $H$ is a graph with $O(n)$ edges and $\phi:V(G)\rightarrow V(H)$ is a mapping such that, for any vertex set $A\subset V$, a mincut in $G$ separates $A$ and $V\setminus A$ iff a mincut in $H$ separates $\phi(A)$ and $\phi(V\setminus A)$. Hence, $H$ preserves all mincuts of $G$. Furthermore, $H$ and its mincuts are highly-structured: $H$ is a \emph{cactus graph}\footnote{A \emph{cactus graph} is a graph where each edge appears in at most one cycle.} with edge weights from $\{1,2\}$ and its mincut contains either two edges of weight 1 from the same cycle, or an edge of weight 2. Precisely by the ability of cactus to capture all mincuts via extremely simple structure, cactus has become the key ingredient in almost all algorithms for the connectivity augmentation problems \cite{gabow1991applications,naor1997fast,benczur2000augmenting,cen2022augmenting} and for maintaining mincuts on graphs undergoing edge insertions \cite{henzinger1997static,dinitz1998maintaining,goranci2018incremental}. Cactus can also be viewed as one of the first graph sparsifiers, predating other notions such as spanners \cite{althofer1993sparse}, cut sparsifiers \cite{benczur1996approximating}, and spectral sparsifiers \cite{spielman2011spectral}. 

\paragraph{Steiner cactus and hypercactus.}

To capture Steiner mincuts which are more general than global mincuts, a generalization of cactus called \emph{Steiner cactus} was introduced by Dinitz and Vainshtein \cite{DV94}.\footnote{A Steiner cactus was introduced as a core structure inside a more involved structure called \emph{carcass} \cite{DV94,dinitz19952,dinitz2000general}. The detailed proofs of the existence of Steiner cactus were given in \cite{dinitz1996cactus,fleischer1999building,Fleiner2009AQP}} Recall that, for any vertex set $\T\subseteq V$, a \emph{$\T$-Steiner mincut} is a cut with minimum weight separating some pair of vertices in $\T$. A \emph{$\T$-Steiner cactus} of $G$ is a tuple $(H,\phi)$ where $H$ is a graph with $O(|\T|)$ edges and $\phi:\T\rightarrow V(H)$ is a mapping such that, for any $A\subset\T$, a $\T$-Steiner mincut in $G$ separates $A$ and $\T\setminus A$ iff a mincut in $H$ separates $\phi(A)$ and $\phi(\T\setminus A)$. The graph $H$ is also a cactus graph with the same simple structure as in a normal cactus. Note that, when $\T=V$, $\T$-Steiner cactus of $G$ is simply a cactus of $G$. 

The notion of Steiner cactus can be placed nicely into a more modern concept of \emph{vertex sparsifiers} (also called \emph{mimicking networks}) \cite{hagerup1998characterizing,moitra2009approximation,leighton2010extensions,krauthgamer2013mimicking,khan2014mimicking}. The goal in this area is, given a terminal set $\T\subseteq V$, to construct a small graph $H$ and a mapping $\phi$ such that, for every $A\subset\T$, $\mincut_{G}(A,\T\setminus A)=\mincut_{H}(\phi(A),\phi(\T\setminus A))$ where $\mincut_{G}(X,Y)$ denotes the size of minimum cuts separating the sets $X$ and $Y$ in $G$. Unfortunately, there is a lower bound of $|E(H)|=\Omega(2^{|\T|})$ \cite{krauthgamer2013mimicking}, and perhaps the most prominent open problem in this area is, when $(1+\epsilon)$-approximation is allowed, whether the bound $|E(H)|=\poly(|\T|)$ is possible. Interestingly, Steiner cactus implies that the linear bound $|E(H)|=O(|\T|)$ is actually possible without any approximation when we restrict ourselves to the sets $A\subset\T$ separated by some $\T$-Steiner mincuts.

Another generalization of cactus is called \emph{hypercactus}.  
Cheng \cite{Cheng99} (and later \cite{FJ99}) showed an existence of hypercactus $(H,\phi)$ where $H$ is a hypergraph of linear size $\sum_{e\in E(H)}|e|=O(n)$, yet, for every set $A\subset V$, a mincut in $G$ separates $A$ and $V\setminus A$ iff a mincut in $H$ separates $\phi(A)$ and $\phi(V\setminus A)$. Similar to cactus, $H$ is highly structured: each mincut in $H$ contains either two size-2 edges of weight 1 from the same cycle of size-2 edges, or a (hyper)edge of weight 2. 
This compact data structure is perhaps even more surprising than cactus because the total size of a hypergraph can be exponential and a hypergraph may contain exponentially many distinct mincuts.\footnote{Consider a hypergraph $G=(V,E)$ with a single hyperedge containing all vertices, every cut $(S,V\setminus S)$ is a mincut. We note that if we consider set of hyperedges across a mincut, there are at most $O(n^2)$ distinct sets.}

Both Steiner cactus and hypercactus naturally extend the reach of applications of cactus. Cole~et~al.~\cite{cole2003fast} used a Steiner cactus to speed up the algorithm for the \emph{uniform survivable network} problem. They exploited the fact that Steiner cactus can be efficiently maintained under edge insertions between terminals. Hypercactus were used for hypergraph connectivity augmentation algorithms \cite{Cheng99}, and incremental algorithms for hypergraph mincuts \cite{gupta2019incremental}.

\paragraph{Fast Algorithms.}

Because of the elegance and utility of cactus and its generalization, a long line of work has been devoted on fast algorithms for constructing them. Historically, cactus construction has been much more challenging than a more well-known problem of computing a single mincut (e.g.~\cite{karger1993global,karger2000minimum,klimmek1996simple,mak2000fast}), as we need to capture the structure of \emph{all} mincuts. 

Karzanov and Timofeev \cite{karzanov1986efficient} outlined the first algorithm that constructs a cactus of a graph in $\Theta(n^{3})$ time. Their algorithm was parallelized by Naor and Vazirani \cite{naor1991representing} and refined by 
Nagamochi and Kameda \cite{nagamochi1994canonical}. Later on, faster algorithms were developed \cite{gabow1991applications,karger1996new,nagamochi2000fast,fleischer1999building} where the latter two algorithms ran in $\Otil(nm)$ time. Finally, the line of work culminated in a near-linear $\Otil(m)$-time algorithm by Karger and Panigrahi \cite{karger2009near}.\footnote{Throughout the paper, $\Otil(\cdot)$ hides $\textrm{polylog}(n)$ terms and $\Ohat(\cdot)$ hides additional $n^{o(1)}$ terms.}

The state-of-the-art constructions for  Steiner cactus and hypercactus are significantly slower. Dinitz and Vainshtein \cite{DV94,dinitz2000general} showed how to compute a $\T$-Steiner cactus using $\Theta(|\T|)$ max flow calls, which is $\Omega(m|\T|)$ time. Cole~et~al.~\cite{cole2003fast} showed an $\tilde{O}(m+\lambda_{G}(T)n)$-time algorithm on unweighted graphs where $\lambda_{G}(T)$ denotes the value of $\T$-Steiner mincut, but in general $\lambda_{G}(T)$ may be big especially in weighted graphs. To construct a hypercactus of a hypergraph with total size $p=\sum_{e\in E}|e|$, the only algorithm with explicit running time was by Chekuri and Xu \cite{CX17}, which takes $O(pn+n^{2}\log n)$ time. 

To summarize, the fastest constructions for Steiner cactus and hypercactus are still slower than linear time by a factor of $\Omega(|T|)$ and $\Omega(n)$, respectively. This suggests a natural question of how fast one can compute them. 

\subsection{Our Results}

We give a novel approach for constructing a cactus that generalizes to both Steiner cactus and hypercactus using polylogarithmic calls to max-flows. Let $\MaxFlow(m)$ denote the running time of solving max flow in a graph with $m$ edges. Since $\MaxFlow(m)=m^{1+o(1)}$ by \cite{chen2022maximum}, we obtain the first almost-linear time algorithms of both problems. The Steiner cactus algorithm is summarized below.
\begin{theorem}
\label{thm:cactus main}There is a randomized Monte-Carlo algorithm that, given an undirected weighted graph $G$ with $m$ edges and a terminal set $\T\subseteq V$, 
the algorithm computes a $\T$-Steiner cactus of $G$ in $\tilde{O}(\MaxFlow(O(m)))$ time w.h.p.
\end{theorem}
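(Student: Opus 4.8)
The plan is to build the $\T$-Steiner cactus recursively by a divide-and-conquer on the terminal set, where the core primitive is the new \emph{maximal isolating mincut} subroutine advertised in the abstract. First I would normalize the problem: compute the $\T$-Steiner mincut value $\lambda = \lambda_G(\T)$ using $\Otil(1)$ max-flow calls (e.g.\ via a Gomory--Hu-style or random-contraction argument, or simply as a byproduct of the isolating-mincut machinery itself), and handle the degenerate case where $|\T|$ is constant directly. The heart of the algorithm is the following: given the terminal set $\T$, run the maximal isolating mincuts procedure to obtain, for each $t\in\T$ (or for a random sample of terminals, amplified over $O(\log n)$ rounds as in Li--Panigrahi), a side $S_t\ni t$ that is a \emph{maximal} minimum isolating cut — maximal in the sense that among all minimum $(t,\T\setminus t)$-isolating cuts, $S_t$ has the largest terminal-free part, equivalently is the "most balanced" such cut. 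The standard isolating-mincut technique gives $O(\log|\T|)$ max-flow calls on graphs of total size $O(m)$ to get \emph{some} isolating cuts; the maximality refinement is what lets the recursion partition $\T$ into pieces each of size at most (say) $\tfrac23|\T|$, or otherwise certify that a single cycle/edge of the cactus already "explains" the current level, so that the recursion depth is $O(\log|\T|)$.

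Next I would assemble the cactus from the recursion. At each node of the recursion we have a collection of $\T$-Steiner mincuts forming a laminar-ish or cyclic family; the classical structure theory of Dinitz--Vainshtein says these organize into the cactus as a tree of cycles. So the algorithmic step is: from the maximal isolating cuts $\{S_t\}$, extract the "first level" of the cactus — either a set of crossing mincuts that must form a cycle node, or a laminar chunk that gives a tree node — contract each side, recurse on each contracted piece with its induced terminal set, and glue the returned sub-cacti along the shared boundary vertices. Correctness reduces to two claims that I expect are proved earlier in the paper: (i) every $\T$-Steiner mincut is consistent with the decomposition induced by the maximal isolating cuts (no mincut is "lost"), and (ii) the maximality property forces the contracted pieces to shrink geometrically. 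Each level costs $\Otil(\MaxFlow(O(m)))$ total because the isolating-mincut trick is work-efficient across the pieces at a fixed level (the graphs at one level have total size $O(m)$), and with $O(\log|\T|)$ levels the whole thing is $\Otil(\MaxFlow(O(m)))$, using that $\MaxFlow$ is superadditive so $\sum_i\MaxFlow(m_i)\le \MaxFlow(\sum_i m_i)=\MaxFlow(O(m))$.

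**The main obstacle** I anticipate is proving that maximal isolating mincuts actually yield a balanced decomposition — i.e.\ that one cannot be stuck in a situation where every terminal's maximal isolating cut isolates a huge chunk of $\T$, so that contraction fails to make progress. The resolution should be a "uncrossing + cycle" argument: if many of the $S_t$ are large and pairwise crossing, their crossing pattern is rigid (classical submodularity of the cut function forces a cyclic order), and that cyclic structure is \emph{exactly} a cycle node of the cactus, which we can then read off and contract in one shot; if instead the $S_t$ are laminar, they already give a balanced split. Making this dichotomy algorithmic and bounding its cost — in particular, extracting the cyclic order among $\Theta(|\T|)$ crossing cuts without $\Theta(|\T|)$ separate max-flow calls — is where the maximality condition is essential and where the technical work lies. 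A secondary subtlety is handling weighted graphs and the randomization: the isolating-mincut sampling is Monte Carlo, so each level succeeds only w.h.p., and I would union-bound over the $O(\log|\T|)$ levels and the $O(\log n)$ amplification rounds per level, and argue the output is always \emph{a} valid cactus whenever the random choices succeed (one-sided error), giving the claimed w.h.p.\ guarantee in $\tilde O(\MaxFlow(O(m)))$ time.
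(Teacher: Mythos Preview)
Your high-level architecture is right and matches the paper: divide-and-conquer with maximal isolating mincuts as the core primitive, $O(\log|\T|)$ recursion depth, $\Otil(\MaxFlow(O(m)))$ per level via superadditivity, union bound over levels for the Monte-Carlo guarantee. But the mechanism you propose for the crucial step --- how maximality yields a balanced decomposition --- is not the one the paper uses, and the alternative you sketch has an unsolved subproblem at its center.

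Your dichotomy is ``many $S_t$ cross $\Rightarrow$ they form a cycle node of the cactus that we read off in one shot'' versus ``the $S_t$ are laminar $\Rightarrow$ balanced split.'' You yourself flag that extracting the cyclic order among $\Theta(|\T|)$ crossing cuts without $\Theta(|\T|)$ max-flow calls is where the work lies, and you don't solve it. The paper sidesteps this entirely with a different dichotomy, argued on the (unknown) cactus $H$: either $H$ has a \emph{balanced edge-cut} (both sides have $\ge\tfrac14|\T|$ terminals), or it has a unique \emph{centroid} node. In the first case, sampling at the sparsest rate $2^{-\lceil\log|\T|\rceil}$ hits three terminals with constant probability, and a DFS-ordering argument on $H$ shows one of the three maximal isolating mincuts is balanced; a single split suffices. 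In the second case, multi-scale sampling (rates $2^{-i}$ for all $i$) recovers, for each connected component $T_i$ of $H$ minus the centroid, exactly the maximal $T_i$-mincut; these are all small and after contracting them the leftover piece has a \emph{star}-shaped cactus (not a cycle!), which is a base case. Cycles in the cactus are never ``read off'' during the divide step --- they get cut by balanced splits and are reassembled only at merge time, where a constant number of max-flow tests disambiguate the three possible cycle-cycle gluings.

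A second gap: you assert that the work per level is $O(m)$ because ``the isolating-mincut trick is work-efficient across the pieces,'' but for \emph{maximal} isolating mincuts this is not automatic --- unlike minimal isolating mincuts, the $S_t$ need not be disjoint, so a priori their total size could be $\Omega(n|\T|)$. The paper's key structural fact (the Pairwise Intersection Only Lemma: for disjoint $A,B,C\subseteq\T$, any $A$-, $B$-, $C$-mincuts have empty triple intersection, via posi-modularity) is what bounds $\sum_t |S_t|\le 2n$ and makes the whole recursion near-linear. Your plan needs this lemma (or something equivalent) and it doesn't appear.
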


On hypergraphs, we even obtain an algorithm for computing a \emph{Steiner hypercactus}, which naturally generalizes both a hypercactus and a Steiner cactus (see \Cref{def:steiner-cactus} for the formal definition). The result is summarized below. 
\begin{theorem}
\label{thm:hypercactus main}There is a randomized Monte-Carlo algorithm that, given a weighted hypergraph $G$ with total size $p=\sum_{e\in E(G)}|e|$ and a terminal set $\T\subseteq V$, compute a $\T$-Steiner hypercactus of $G$ in $\Otil(\MaxFlow(O(p)))$ time w.h.p.
\end{theorem}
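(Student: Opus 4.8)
The plan is to piggyback on the Steiner cactus construction of \Cref{thm:cactus main}: the bulk of the paper will presumably have developed a divide-and-conquer algorithm driven by a maximal-isolating-mincut primitive, and I would try to show that both this primitive and the surrounding recursion survive the passage to hypergraphs essentially unchanged, so that the only genuinely new ingredient is \emph{how} one implements a max-flow / isolating-mincut query on a hypergraph. First I would set up the hypergraph analogues of the structural facts used in the graph case. The hypergraph cut function $S \mapsto \sum_{e:\, \emptyset \neq e\cap S \neq e} w_e$ is still symmetric and submodular, so the uncrossing arguments that underlie the whole approach (taking the maximal or minimal $\T$-Steiner mincut separating a given pair, showing a well-chosen family of isolating mincuts is laminar up to crossings, etc.) go through verbatim at the level of set systems. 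I would also invoke the existence/structure theorem for hypercactus of Cheng~\cite{Cheng99} and Fleischer--Jordán~\cite{FJ99}, extended to the Steiner setting via \Cref{def:steiner-cactus}, so that once the algorithm has correctly identified the relevant laminar family of mincuts together with the cyclic bundles among them, assembling the output hypercactus --- with its size-$2$ cactus edges and its weight-$2$ hyperedges --- is a purely combinatorial post-processing step.

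Second, the one computational subroutine that must be reimplemented is the max-flow / isolating-mincut query. I would use the Lawler expansion: replace each hyperedge $e$ of weight $w_e$ by two auxiliary nodes $e^{+},e^{-}$ with an arc $e^{+}\to e^{-}$ of capacity $w_e$ and infinite-capacity arcs $v\to e^{+}$ and $e^{-}\to v$ for every $v\in e$. This directed graph has $O(p)$ arcs and faithfully encodes hypergraph $s$--$t$ mincut values (and, suitably, isolating mincuts), so each of the polylogarithmically many queries made by the recursion can be answered in $\MaxFlow(O(p))$ time. Crucially, the operation the recursion uses to descend --- contracting a vertex subset $A$ --- only deletes hyperedges that become self-loops and shrinks the remaining ones, so total size is non-increasing and every recursive instance still has a Lawler expansion of size $O(p)$; combined with the polylog recursion depth and width (and superadditivity of $\MaxFlow$ across sibling subinstances), this yields the claimed $\Otil(\MaxFlow(O(p)))$ bound.

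The step I expect to be the real obstacle is making the \emph{combinatorial} core of the construction hyperedge-aware rather than edge-aware. In a graph the edges crossing a mincut form a simple object, and ``two crossing mincuts share their boundary edges'' is clean; in a hypergraph a single hyperedge can straddle many parts of a laminar family simultaneously, and this is precisely the phenomenon that forces weight-$2$ hyperedges (rather than pairs of weight-$1$ edges) into a hypercactus. So I would need to redo the analysis that decides, for each node of the laminar family, whether its incident mincut structure is ``path-like'' (contributing cactus edges) or ``cycle-like'', and to verify that the maximal-isolating-mincut family still exposes enough of this structure. The delicate point is the \emph{balance} guarantee of maximal isolating mincuts --- the property that makes the recursion terminate in polylog rounds --- which must be shown to survive the presence of hyperedges spanning several isolating sets at once.

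If, as I anticipate, the uncrossing lemmas and the balance guarantee are stated in the body purely in terms of an abstract symmetric submodular cut function, then \Cref{thm:hypercactus main} should follow with little work beyond the Lawler-expansion bookkeeping and the hypergraph contraction accounting above; the one real risk is that some sub-step silently relies on an edge having exactly two endpoints, in which case that lemma would have to be re-proved directly for hypergraph cuts.
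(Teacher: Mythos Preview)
Your high-level plan---reuse the divide-and-conquer framework, implement hypergraph max-flow via the Lawler/bipartite expansion, and invoke submodularity for the uncrossing---matches the paper's route. But two concrete obstacles are not handled by ``the cut function is symmetric submodular, so everything goes through verbatim,'' and the paper devotes real effort to each.

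\textbf{First, the definition of maximal isolating mincuts fails on hypergraphs.} Under the natural definition (the largest $X$ with $t\in X$, $\T\setminus\{t\}\subseteq V\setminus X$, and $|\partial X|$ minimum), the total output size $\sum_{t\in\T}|X_t|$ can be $\Omega(n|\T|)$: take a single hyperedge on all of $V$ with half the vertices as terminals. The linear-size bound you need (and the Pairwise Intersection Only lemma that underpins it) does \emph{not} follow from submodularity alone. The paper fixes this by redefining an $A$-cut of $\T$ to additionally require that $(G/A)[X]$ be connected (\Cref{def:A-mincut hypergraph}); only under this connectivity constraint can they re-prove the key structural lemma (\Cref{lem:three-crossing-isolating-mincuts-on-hypergraph}) and recover the $O(n)$ output bound. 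Your proposal does not anticipate this, and without it the recursion's size accounting breaks.

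\textbf{Second, the divide step can split a rank-$r$ hyperedge of the hypercactus into $\Omega(r)$ pieces.} You correctly flag that a hyperedge may straddle many parts of the laminar family, but the consequence is not just that the merging step becomes ``hyperedge-aware''---it is that the merging step, as in prior work, would need $\Omega(r)$ max-flow tests to reassemble that hyperedge, yielding $\Omega(pn)$ time. The paper's resolution is a structural lemma (\Cref{cor:brittle real corollary}, proved via \Cref{lem:unique mincut in brittle} and \Cref{lem:never-split-brittle}) showing that, \emph{because} the splits come from connectivity-constrained maximal isolating mincuts, every split intersects each high-rank hyperedge of the hypercactus in $0$, $1$, $|e|-1$, or $|e|$ of its nodes---so no nontrivial splitting ever occurs and the merge step remains cheap. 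This is not a bookkeeping issue; it is the second genuinely new idea, and it again hinges on the modified definition above.

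So your Lawler-expansion and contraction accounting are fine, but the hope that ``uncrossing lemmas stated for an abstract symmetric submodular function'' suffice is where the proposal breaks: both missing pieces are specific to the interaction between hyperedges and the \emph{maximal} side of isolating mincuts, and neither is visible at the level of submodularity alone.
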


It is implicit in \cite{FJ99,CX17} that a Steiner hypercactus admits polynomial time algorithms, but their construction requires at least $n$ calls to max flows, which takes at least $\Omega(pn)$ time.
Even for the more special problem of computing hypercactus, the best-known construction by \cite{CX17} takes $\Otil(pn)$ time.\footnote{Their result is based on the \emph{MA-ordering} technique, which does not work well with Steiner mincuts.} \Cref{thm:hypercactus main} gives the first almost-linear time construction. 

As discussed above, a cactus and its generalizations are central objects in many algorithms. Consequently, our almost-linear constructions immediately imply several applications. We defer the definition of the problems and the proofs of these applications to \Cref{sec:applications}. 
\begin{cor}\label{cor:edge-augmentation}
There are randomized almost-linear time algorithms that can w.h.p. compute
\begin{itemize}[noitemsep,nolistsep]
\item the optimal solution of the Steiner connectivity augmentation problem\footnote{
Very recently, Cen \emph{et al.}~\cite{cen2023augmenting} independently showed an almost-linear time algorithm for the Steiner connectivity augmentation problem.}, and 
\item the optimal value of the hypergraph +1-Steiner-connectivity augmentation problem. 
\end{itemize}
\end{cor}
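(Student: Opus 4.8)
}
The plan is to pair our new sparsifier constructions with the existing cactus-based machinery for connectivity augmentation. For each of the two problems, the established pipeline first reduces the augmentation task to an optimization problem over a cactus (resp.\ hypercactus) that captures the relevant mincuts, and then solves that optimization problem in time near-linear in the size of the sparsifier. In every prior approach, the running-time bottleneck of this pipeline in the \emph{Steiner} setting was building the sparsifier itself; \Cref{thm:cactus main} produces a $\T$-Steiner cactus in $\tilde{O}(\MaxFlow(O(m)))$ time and \Cref{thm:hypercactus main} a $\T$-Steiner hypercactus in $\tilde{O}(\MaxFlow(O(p)))$ time, and since $\MaxFlow(\cdot)$ is super-linear these dominate the remaining near-linear work.

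For the Steiner edge-connectivity augmentation problem on $(G,\T)$ with target $\tau$, I would first invoke the classical fact that an optimal solution may use edges between terminals only, so that the effect of each new edge on a $\T$-Steiner cut is determined by whether its endpoints lie on opposite sides, hence by the image of that cut in the $\T$-Steiner cactus $(H,\phi)$. This reduces the ``$+1$'' case to covering the (simply structured) mincuts of $H$, and the general target $\tau$ to the splitting-off/extreme-sets framework of \cite{naor1997fast,benczur2000augmenting,cen2022augmenting} applied to $H$; both run in near-linear time in $|H|=O(|\T|)$, and pulling the resulting edge set back through $\phi$ yields an explicit optimal augmentation of $G$. (As noted in the footnote, \cite{cen2023augmenting} obtained an almost-linear algorithm for this problem independently.)

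For the hypergraph $+1$-Steiner-connectivity augmentation problem, I would build a $\T$-Steiner hypercactus $(H,\phi)$ via \Cref{thm:hypercactus main} and observe that increasing $\lambda_{G}(\T)$ by $1$ is equivalent to adding new edges that put weight at least $1$ across every $\T$-Steiner mincut of $G$, equivalently (through $\phi$) across every mincut of $H$. Since every such mincut has the simple shape guaranteed by the hypercactus structure theorem --- one weight-$2$ (hyper)edge, or two weight-$1$ size-$2$ edges of a common cycle --- the minimum weight of such a cover is captured by a min-max formula over the linear-size object $H$, in the spirit of \cite{Cheng99,FJ99,CX17}, which is evaluable in near-linear time. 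This gives the optimal \emph{value}; I would not attempt to also output an explicit optimal edge set in the hypergraph case.

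The hard part, and essentially the only step beyond bookkeeping, is to confirm that the cactus-based augmentation algorithms transfer to the \emph{Steiner} (hyper)cactus. Two points need care: (i) the covering/optimization routine must be phrased as an operation on an abstract (hyper)cactus, decoupled from whatever method produced it, so that its running time depends only on $|H|=O(|\T|)$ and not on $G$ --- for the general-target graph case this also entails producing the auxiliary extreme-set-type structures in almost-linear time; and (ii) projecting the augmentation problem from $G$ onto $H$ must be lossless, i.e.\ an optimal cover of $H$ must pull back to an optimal augmentation of $G$ and conversely, which I would derive from the defining guarantee that $(H,\phi)$ captures \emph{all} $\T$-Steiner mincuts, the structure theorem for their shape, and (for graphs) the reduction to terminal-only edges. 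I defer these verifications, together with the precise problem definitions, to \Cref{sec:applications}.
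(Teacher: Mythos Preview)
Your treatment of the hypergraph $+1$-Steiner-connectivity augmentation is correct and matches the paper: build the $\T$-Steiner hypercactus via \Cref{thm:hypercactus main} and read off the optimal value from Cheng's min--max formula \cite{Cheng99}, a linear-time computation on $H$.

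For the Steiner connectivity augmentation problem your high-level instinct (plug the new cactus into an existing pipeline) is right, but the plan as written has a gap in the general-target case. You say the extreme-sets/splitting-off framework is ``applied to $H$'' and runs in time $O(|\T|)$. This cannot work: the $\T$-Steiner cactus $H$ captures only the mincuts at the \emph{current} value $\lambda_G(\T)$ and carries no information about cuts of value strictly between $\lambda_G(\T)$ and $\tau$, so the extreme-set hierarchy (which spans all levels up to $\tau$) is not recoverable from $H$. The paper instead follows \cite{cen2022augmenting}: the extreme-set tree is built directly on $G$ (via isolating cuts, which extend to the Steiner setting), Phases~1--2 of the Bencz\'ur--Karger DECA framework run on $G$ to reach connectivity $\tau-1$, and only Phase~3 --- the final matching step from $\tau-1$ to $\tau$ --- requires a $\T$-Steiner cactus, which is precisely where \Cref{thm:cactus main} removes the prior bottleneck. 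Your hedge in care point~(i) about ``producing the auxiliary extreme-set-type structures in almost-linear time'' is closer to the truth, but those structures live on $G$, not on $H$, and are produced independently of the cactus. The terminal-only-edges reduction you invoke is neither used by the paper nor obviously valid in the Steiner setting, and in any case is not needed once one works within the \cite{cen2022augmenting} framework.
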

\Cref{cor:edge-augmentation} improved a polynomial algorithm for the hypergraph +1-connectivity by Cheng \cite{Cheng99} by speeding
up it to almost-linear time and generalizing it to the Steiner version. 

Lastly, we also improved the update time of the incremental algorithm for maintaining hypergraph mincuts from $O(\lambda n)$ \cite{gupta2019incremental} to of $\Ohat(\lambda)$.
\begin{cor}\label{cor:incremental-hypergraph-mincut}
There is an algorithm that, given an unweighted hypergraph $G=(V,E)$ undergoing hyperedge insertions, maintains a mincut in time $\Ohat(\lambda)$ amortized update time where $\lambda$ denotes the mincut value at the end of the updates. 
\end{cor}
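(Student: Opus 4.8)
The plan is to plug \Cref{thm:hypercactus main} (with $\T=V$, so that it produces an ordinary hypercactus) into the standard framework for incremental mincut maintenance via cactus representations---due to Henzinger~\cite{henzinger1997static} for graphs and adapted to hypergraphs in~\cite{gupta2019incremental}---replacing their slower cactus constructions with ours and adding a sparsification layer. The backbone is that the mincut value $\lambda$ is monotone non-decreasing under hyperedge insertions, so the update sequence decomposes into at most $\lambda$ \emph{phases}, in each of which the mincut value equals a fixed integer $k$, with $k$ strictly increasing across phases. Throughout a phase I would maintain a hypercactus $(H,\phi)$ of the hypergraph $G_0$ present at the start of the phase, together with the list $D$ of hyperedges inserted since; a value-$k$ cut of $G_0$ is a mincut of the current graph iff no hyperedge of $D$ crosses it, and no new cut of value at most $k$ ever appears (cut values only increase). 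Hence the current mincut is still $k$ exactly as long as some value-$k$ cut of $G_0$ survives all of $D$, and any such surviving cut---readable off $H$---is a valid mincut to report. Processing a newly inserted hyperedge $e$ then amounts to \emph{contracting} in $H$ the part of the cactus that $\phi(e)$ forces onto one side of every surviving mincut, an operation that only shrinks $H$; when $H$ collapses enough that no value-$k$ cut survives, the phase ends.

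At a phase boundary I would rebuild from scratch: first replace the current hypergraph by a \emph{sparse $k'$-certificate} $G'$ of total size $\Otil(nk')$ that preserves all cuts of value at most $k'$ (the hypergraph analogue of Nagamochi--Ibaraki sparse certificates, where $k'\le\lambda$ is the new mincut value, located by doubling if necessary), and then run \Cref{thm:hypercactus main} on $G'$ at cost $\Otil(\MaxFlow(O(nk')))$. Since there are at most $\lambda$ phases and $k'$ never exceeds $\lambda$, the total rebuilding time is $\sum_{k'\le\lambda}\Otil(\MaxFlow(O(nk'))) = (n\lambda)^{1+o(1)}$.

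The amortization is the crux, and I expect it to be the main obstacle. In an unweighted hypergraph whose mincut is $\lambda$, every vertex (ignoring useless singleton hyperedges) lies in at least $\lambda$ hyperedges, so the total size $p=\sum_e|e|$ of the inserted hyperedges satisfies $p\ge n\lambda$; consequently the rebuilding cost $(n\lambda)^{1+o(1)}$ is $\Ohat(p)$, i.e.\ within an $n^{o(1)}$ factor of merely reading the input stream, which amortizes to $\Ohat(\lambda)$ per update (equivalently, inserting $e$ costs $\Ohat(\lambda)+\Otil(|e|)$ amortized). Two further points need care. First, the in-phase updates must be implemented so that inserting $e$ touches only $\Otil(|e|)$ of the cactus amortized---never all of $H$, which one cannot afford when $\lambda\ll n$; this calls for a pointer structure on $H$ indexed through $\phi$ (union-find over merged cactus nodes, plus a link-cut-style structure to locate and contract the relevant cycle/tree paths), with the total contraction work over a phase charged against the $O(n)$ monotone shrinkage of $H$. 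Second, one must establish that hypergraph sparse $k$-certificates of total size $\Otil(nk)$ exist, are computable within $\Otil$ of that size, and are ideally maintainable incrementally so the certificate is not rebuilt from the full graph each phase; I believe this follows by adapting known sparse-certificate constructions to hypergraphs, but it should be verified carefully.
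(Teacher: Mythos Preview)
Your high-level plan---phases indexed by the monotone mincut value, maintain the hypercactus of the phase's initial graph, rebuild at phase boundaries via \Cref{thm:hypercactus main}---matches the paper's proof, which simply plugs our construction into the Gupta--Karmakar framework. The key difference is that you add a sparsification layer (sparse $k'$-certificates of size $\Otil(nk')$) before each rebuild, whereas the paper runs the hypercactus construction directly on the \emph{full} hypergraph at the end of each phase. Since that costs $\Ohat(p)$ per phase and there are at most $\lambda$ phases, the total rebuild time is $\Ohat(p\lambda)$, which already amortizes to $\Ohat(\lambda)$. Your certificate step is therefore unnecessary, and dropping it eliminates the two loose ends you flagged: you no longer need to verify that hypergraph sparse $k$-certificates of size $\Otil(nk)$ exist and are incrementally maintainable, and your intermediate bound $\sum_{k'\le\lambda}\Otil(\MaxFlow(O(nk')))$---which is $\Ohat(n\lambda^{2})$, not $(n\lambda)^{1+o(1)}$ as you wrote, though still $\le \Ohat(p\lambda)$ via $n\lambda\le p$---becomes moot. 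The in-phase processing you describe (contracting $H$ under $\phi(e)$ with union-find bookkeeping, $\Otil(|e|)$ amortized per insertion) is exactly what Gupta--Karmakar provide, and the paper simply cites their $\Otil(n+\sum_e|e|)$ per-phase bound for that part.
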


\subsection{Techniques}
\label{sec:techniques}

Below, we explain why the two most promising techniques in the literature fail to solve our problems, which motivates our new algorithmic tool called \emph{maximal isolating mincuts}.

\paragraph{First Technical Barrier: Tree-packing Fails. }

Perhaps the most natural approach for devising fast algorithms for both Steiner cactus and hypercactus is to extend the techniques of the only known near-linear time algorithm by Karger and Panigrahi \cite{karger2009near} for a normal cactus. However, their algorithm relied on the \emph{tree packing technique} \cite{gabow1991matroid,karger2000minimum}, which requires solving the so-called \emph{2-respecting mincuts} problem. Although they devised an ingenious way to deal with 2-respecting mincuts, at least quadratic time is likely required for $k$-respecting mincuts for any $k>2$.\footnote{Abboud \emph{et al.}~\cite{AKLPST21} devised a new technique for dealing with $k$-respecting mincuts using $\log^{O(k)}n$ calls to max flow. The algorithm, however, is based on the isolating mincuts technique, which also fails to solve our problems. Furthermore, the exponential dependency on $k$ makes their technique futile for hypergraphs where $k=O(\log n)$.}

Now, in the Steiner and hypergraph settings, it turns out that one needs to compute $4$-respecting and $O(\log n)$-respecting mincuts, respectively, because fast algorithms for tree packing in these settings have worse quality by at least a factor of 2 \cite{mehlhorn1988faster}  and $\Omega(\log n)$ \cite{cheriyan2007packing}, respectively. Therefore, the tree-packing approach seems futile to us.

The same technical barrier was previously illustrated on the problem of computing a \emph{single} mincut. Karger's \cite{karger2000minimum} near-linear time global mincut algorithm was based on tree-packing, and it took 20 years before almost-linear algorithms for Steiner mincut and hypergraph mincut \cite{LP20,CQ21} were found using a very different technique, called \emph{isolating mincuts}. 

\paragraph{Second Technical Barrier: Minimal Isolating Mincuts Fails. }

The \emph{isolating mincuts }technique was recently discovered by \cite{LP20,abboud2021subcubic}. They show that given a graph $G=(V,E)$ and a terminal set $\T\subseteq V$, one can compute a $t$-mincut of $\T$ (i.e., a minimum cut separating $t$ from $\T\setminus t$) for all $t\in\T$ using logarithmic calls to max flow, instead of $|\T|$ many calls. In fact, they showed how to compute a \emph{minimal} $t$-mincut of $\T$ for all $t\in\T$, i.e., the unique mincut separating $t$ from $\T\setminus t$ that is ``closest'' to $t$. These cuts are called the \emph{minimal isolating mincuts of $\T$}. The technique instantly became very influential and found many applications \cite{abboud2021subcubic,li2021vertex,CQ21,li2021approximate,mukhopadhyay2021note,cen2022augmenting,li2022fair,li2022nearly,abboud2022apmf,AKLPST21}, many of which crucially exploit the minimal property of these isolating mincuts. 

Unfortunately, minimal isolating mincuts are ineffective for constructing a cactus: it cannot even distinguish a cycle from a clique! More precisely, consider a clique $K$ and a cycle $C$ with $n$ vertices. Scale the weight edges so that the weighted degree of each node of the two graphs agrees. All mincuts of $K$ consist of $n$ singletons cut, while all mincuts of $C$ contain all $\binom{n}{2}$ arcs of $C$. Now, for every vertex set $\T$, minimal isolating mincuts of $\T$ in both $K$ and $C$ are always a collection of singleton cuts. That is, minimal isolating mincuts alone fail to distinguish the mincut structures between the clique $K$ and the cycle $C$.

\paragraph{Our New Tool: Maximal Isolating Mincuts. }

The above counter-example naturally suggests we consider \emph{maximal }isolating mincuts. That is, given a terminal set $\T$, compute a maximal $t$-mincut of $\T$ for all $t\in\T$, which is the unique mincut separating $t$ from $\T\setminus t$ that is ``furthest'' from $t$. 

At first glance, maximal isolating mincuts seem unsuitable for almost-linear time algorithms because it is not even clear whether these cuts admit a near-linear space representation. In contrast, minimal isolating mincuts consist of disjoint vertex sets and so have linear size. 
For example, when $\T=\{s,t\}$, it is easy to see that the two maximal mincuts can overlap, and almost all vertices may be in both cuts. Therefore, it is conceivable that maximal isolating mincuts of $\T$ requires $\Omega(n|\T|)$ space, quadratic in the worse case.

Perhaps surprisingly, we show that these cuts' total size is linear and can also be computed using polylogarithmic calls to max flow. We devote \Cref{sec:max isocut} to proving this structural result. 

Let us reconsider the toy problem of ``cycle vs.~clique'' above. Indeed, maximal isolating mincuts can resolve this problem. Suppose we sample a terminal set $\T\subset V$ and then compute maximal isolating mincuts $\T$.
While all the minimal isolating mincuts in the clique remain singleton cuts, some maximal isolating mincuts will be non-singleton cuts in the cycle. So, in this simple example, we can distinguish the two graphs. 

\paragraph{Cactus via Maximal Isolating Mincuts.} It turns out that maximal isolating mincuts are powerful enough to capture \emph{all} mincuts.
By highly exploiting their structure, we show in \Cref{sec:steiner-cactus} how to construct Steiner cactus using polylogarithmic maxflow calls.  
At a high level, our algorithm significantly improves the divide-and-conquer approach by \cite{CX17} with linear recursion depth to only logarithmic.
To achieve this, the high-level idea is, in the divide step, our maximal-isolating-cut-based approach can \emph{split} the graph such that every part has size smaller by a constant factor, except at most one part that we guarantee that no more recursion is needed. (See 
\Cref{fig:case2} for an illustration.) Hence, the recursion depth is logarithmic.
In contrast, the algorithm of \cite{CX17} cannot certify this and might need to recurse on the biggest part for $\Omega(n)$ rounds.

\paragraph{First Challenge in Hypergraphs: Defining Maximal Isolating Mincuts}
Let us discuss the technical challenges in generalizing our techniques to hypergraphs.
First, it is tricky even to define the notion of maximal isolating mincuts for hypergraphs. A natural extension is a partition of vertices $(X, V\setminus X)$ that separates a specified terminal $t\in X\cap \mathcal{T}$ from other terminals such that $|X|$ is maximized while the number of boundary edges $|\partial X|$ is minimized.
However, the total output size of maximal isolating mincuts can be quadratic under this definition. Hence, it is useless for almost-linear time algorithms. 
For example, consider a hypergraph $G=(V, \{V\})$ with only one single hyperedge containing all the vertices. Designate half of the vertices as terminals, i.e., $\T\subset V$ and $|\T|=|V|/2$. In this case, every maximal isolating mincut has size $|V|-|\T|+1=\Omega(|V|)$.

It turns out the ``right'' definition is the following tweak --- we require that all mincuts $(X, V\setminus X)$ the algorithm is considering must be \emph{connected at the $X$ side}, that is, after removing the boundary edges $\partial X$, all vertices in $X$ must still be connected.
Under this definition, in \Cref{sec:max isocut hyper}, we show that all nice structural results we had on graphs transfer to hypergraphs. In particular, we can bound the total size of maximal isolating mincuts to be linear. As a sanity check, all maximal isolating mincuts in the above example are single vertices under the new definition, thereby the total output size becomes $O(|V|)$.

\paragraph{Second Challenge in Hypergraphs: Computing Hypercactus}
The second significant challenge is because a hypercactus contains hyperedges of rank higher than two.
Without very careful treatment, our divide-and-conquer algorithms, including previous algorithms by \cite{CX17},   
will split these higher-rank hyperedges of rank $r$ in the divide step into $\Omega(r)$ pieces.
Now, in the conquer step, the algorithm will need to merge them back and each merging step requires at least linear time (in fact, one max flow call) because we need to perform some test to know the topology of the gluing parts.
This incurs the running time  of $\Omega(pr)=\Omega(pn)$ where $p$ is the total size of the hypergraph, which is too slow.

We completely bypass this difficulty showing that our divide-and-conquer algorithm simply never splits these higher-rank hyperedges in a non-trivial way! This is done by again exploiting the ``right'' definition of the maximal isolating mincuts described above.
Compared with the algorithm by \cite{CX17}, our final algorithm in \Cref{sec:divide and conquer hypergraph} is arguably simpler as we do not perform the test related to higher-rank hyperedges and runs in almost-linear time.

\section{Preliminaries}\label{sec:preliminaries}

Let $G=(V, E, w)$ be an undirected, connected, and weighted graph with positive edge weights $w:E\to\mathbb{R}^+$. 
A cut is a partition $(X, V\setminus X)$, when the context is clear we use $X$ (or $V\setminus X$) representing the cut $(X, V\setminus X)$ for brevity.
For any subsets $X, Y\subseteq V$ we define the \emph{value} between $X$ and $Y$ to be  $\C(X, Y) = \sum_{u\in X, v\in Y} w(u, v)$.
When $Y=V\setminus X$ we simply denote $\C(X, V\setminus X)$ by $\C(X)$.
A nonempty subset $X\subset V$ is said to be a (global) \emph{mincut} if $\C(X)=\min_{S: \emptyset \subset S \subset V}\C(S)$.
The set of boundary edges incident to $X$ is denoted by $\partial X$.
For two disjoint subsets of vertices $A$ and $B$, a cut $(X, V\setminus X)$ \emph{separates} $A$ and $B$ if $X\subseteq A$ and $Y\subseteq B$.

We say two subsets of vertices $X$ and $Y\subseteq V$ are \emph{crossing} if  all of $X\cap  Y$, $X\setminus Y$, $Y\setminus X$, and $V\setminus(X\cup Y)$ are nonempty.
The value function $\C$ satisfies \emph{submodularity} and \emph{posi-modularity} (see \cite{NI00}) on crossing subsets.

\begin{lemma}[{\cite{NI00}}]\label{lem:submod}
Let $X, Y\subseteq V$ be subsets that are crossing. Then,
\begin{itemize}[itemsep=0pt]
    \item (Submodularity) $\C(X) + \C(Y)\ge \C(X\cap Y) + \C(X\cup Y)$, and
    \item (Posi-modularity) $\C(X) + \C(Y)\ge \C(X\setminus Y) + \C(Y\setminus X)$. \hfill $\square$
\end{itemize}
\end{lemma}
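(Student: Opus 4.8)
The plan is to prove submodularity directly, by classifying (hyper)edges according to the four regions determined by $X$ and $Y$, and then to deduce posi-modularity from submodularity together with the symmetry $\C(S)=\C(V\setminus S)$ of the cut function.

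First I would partition $V$ into the four blocks $A=X\cap Y$, $B=X\setminus Y$, $C=Y\setminus X$, $D=V\setminus(X\cup Y)$, all nonempty because $X$ and $Y$ are crossing. Every weight $w(u,v)$ appearing in any of $\C(X)$, $\C(Y)$, $\C(X\cap Y)$, $\C(X\cup Y)$ is indexed by the unordered pair of blocks containing $u$ and $v$. For each of the ten block‑pair types ($AA,BB,CC,DD,AB,AC,AD,BC,BD,CD$) I would tabulate the coefficient with which $w(u,v)$ enters each of the four cut values: it enters $\C(X)$ iff exactly one of the two blocks lies in $\{A,B\}$, enters $\C(Y)$ iff exactly one lies in $\{A,C\}$, enters $\C(X\cap Y)$ iff exactly one equals $A$, and enters $\C(X\cup Y)$ iff exactly one equals $D$. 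A short check shows that for every block‑pair type the coefficient on the left‑hand side of the submodularity inequality is at least that on the right (equality in all cases except the $BC$ type, where the left side contributes $2w(u,v)$ and the right side $0$). Summing over all edges gives $\C(X)+\C(Y)\ge\C(X\cap Y)+\C(X\cup Y)$. (Equivalently one may write $\C(S)=d(S)-2\,\mathrm{int}(S)$, with $d$ the modular total weighted degree and $\mathrm{int}(S)=\sum_{\{u,v\}\subseteq S}w(u,v)$ supermodular, which yields submodularity in one line; the edge‑classification view is convenient because it also exposes exactly where slack arises.)

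For posi‑modularity I would simply apply the submodularity inequality just proved to the pair $X$ and $V\setminus Y$. These two sets are again crossing: their four associated blocks are $X\cap(V\setminus Y)=X\setminus Y$, $X\cap Y$, $V\setminus(X\cup Y)$, and $Y\setminus X$, all nonempty by the crossing hypothesis on $X,Y$. Submodularity then gives $\C(X)+\C(V\setminus Y)\ge\C(X\setminus Y)+\C\bigl(V\setminus(Y\setminus X)\bigr)$, and applying the symmetry $\C(V\setminus Y)=\C(Y)$ and $\C(V\setminus(Y\setminus X))=\C(Y\setminus X)$ yields $\C(X)+\C(Y)\ge\C(X\setminus Y)+\C(Y\setminus X)$, as claimed.

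I do not expect any real obstacle here; this is a classical fact. The only point needing care is the routine but slightly tedious enumeration of the ten block‑pair types for submodularity, and organizing it as a coefficient table — rather than arguing edge‑by‑edge in prose — keeps it clean and makes the roles of $X$ and $Y$ manifestly symmetric. The same argument transfers to the hypergraph setting used later, replacing ``edge'' by ``hyperedge'' and adopting the convention that a hyperedge $e$ contributes $w(e)$ to $\C(S)$ iff $e$ meets both $S$ and $V\setminus S$; for the present statement, though, only the graph case is needed.
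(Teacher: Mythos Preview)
Your argument is correct and entirely standard: the block-pair enumeration establishes submodularity, and the reduction of posi-modularity to submodularity via the symmetry $\C(S)=\C(V\setminus S)$ applied to the pair $(X,\,V\setminus Y)$ is the usual trick. The paper itself provides no proof of this lemma---it is stated as a known fact with a citation to \cite{NI00} and marked with a $\square$---so there is nothing to compare against beyond noting that your write-up supplies what the paper omits.
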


Let $X\subseteq V$ be any subset of vertices, we define the \emph{contracted graph} $G/X$ to be the graph obtained from $G$ by (1) contracting all vertices in $X$ into one vertex, (2) replacing all multi-edges with a single edge with the corresponding edge weight, and finally (3) removing all self-loops.

\paragraph{Steiner Mincuts and $t$-Isolating Cuts.}
Let $\T\subseteq V$ be a terminal vertex set of size at least 2.
For any proper subset $A\subsetneq \T$, an \emph{$A$-cut} of $\T$ is a cut that separates $A$ and $\T\setminus A$. An \emph{$A$-mincut} of $\T$ is a minimum valued $A$-cut of $\T$, denoted by the vertex set $X_A$.
Conveniently, for any vertex $t\in \T$ we define a cut $X_t$ to be \emph{$t$-isolating mincut} of $\T$ if $X_t$ is a $\{t\}$-mincut of $\T$.
A \emph{$\T$-Steiner mincut} is defined to be a minimum valued $A$-mincut among all proper subsets $A\subsetneq \T$. The value of a $\T$-Steiner mincut on the graph $G$ is denoted as $\lambda_G(\T)$.

$A$-mincuts of $\T$ are not necessarily unique.
Fortunately, with the submodularity and posi-modularity mentioned in~\Cref{lem:submod},
these $A$-mincuts behave similarly to global mincuts
in the sense that
there is a unique \emph{minimal} and \emph{maximal} $A$-mincut of $\T$.

\begin{definition}[Maximal and Minimal $A$-Mincuts of $\T$]
\label{def:maximal and minimal A-mincut}
We say that an $A$-mincut $X_A$ of $\T$ is \emph{maximal} (resp. minimal), if for any other $A$-mincut $X'_A$ of $\T$, we have $X_A\supseteq X'_A$ (resp. $X_A\subseteq X'_A$).
\end{definition}

\paragraph{Isolating Mincuts.}
Given a graph $G$ and a set of terminal vertex $\T$, the \emph{isolating mincuts of $\T$} is a collection of $t$-isolating mincuts that separates each single terminal vertex $t\in \T$ from $\T\setminus\{t\}$. 
Li and Panigrahi~\cite{LP20} gave an algorithm that computes a minimal isolating cut efficiently.

\begin{theorem}[{\cite[Isolating Cut Lemma]{LP20}}]\label{thm:isocut}
Given a graph $G=(V, E)$ and any terminal set $\T \subseteq V$, there is an algorithm that returns the minimal $t$-isolating mincuts for all $t \in \T$ in $O(\log|\T|\cdot\MaxFlow(2n, 2m))$ time.
\end{theorem}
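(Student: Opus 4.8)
The plan is to follow the two-phase ``isolating cuts'' strategy of Li and Panigrahi. In the \emph{reduction phase} a binary-encoding trick replaces the naive $|\T|$ max-flow computations (one per terminal) by only $b:=\lceil\log_2|\T|\rceil$ of them, producing a single edge set $F$ that simultaneously separates every pair of terminals; in the \emph{recovery phase} each minimal $t$-isolating mincut is read off by one max flow on the connected component of $t$ in $G-F$, and because those components are vertex-disjoint the entire recovery phase costs only one extra max flow.

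For the reduction phase, assign each terminal a distinct label in $\{0,1\}^{b}$. For each coordinate $i\in[b]$, let $\T^0_i,\T^1_i$ partition $\T$ according to the $i$-th label bit; contracting $\T^0_i$ into a source and $\T^1_i$ into a sink and running one max flow yields a cut $(R_i,V\setminus R_i)$ of minimum value among all cuts separating $\T^0_i$ from $\T^1_i$, oriented so that $\T^0_i\subseteq R_i$. Put $F:=\bigcup_{i\in[b]}\partial R_i$. Distinct terminals differ in some coordinate, so $F$ contains a cut separating them; hence in $G-F$ every terminal $t$ lies in its own connected component $U_t$. These components are pairwise vertex-disjoint and each edge of $F$ borders at most two of them, so $\sum_t\bigl(|V(G[U_t])|+|E(G[U_t])|+|\partial U_t|\bigr)=O(n+m)$.

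The crux of the argument — and the step I expect to be the main obstacle — is the structural claim that the minimal $t$-isolating mincut $S_t$ of $\T$ satisfies $S_t\subseteq U_t$. First, $S_t$ is connected: if $G[S_t]$ had a component not containing $t$, deleting it would give a $t$-isolating cut of strictly smaller value (since $G$ is connected and the deleted part holds no terminal), contradicting that $S_t$ has minimum value. Second, I claim $S_t$ lies on $t$'s side of every $(R_i,V\setminus R_i)$. Fix $i$ with, say, $t\in R_i$, so $t\in S_t\cap R_i$. If $R_i\setminus S_t=\emptyset$ then $t$ is the only terminal on its side of coordinate $i$, so $R_i$ is itself a $t$-isolating mincut and minimality of $S_t$ gives $S_t\subseteq R_i$ at once. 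Otherwise $S_t$ and $R_i$ are crossing — $V\setminus(S_t\cup R_i)$ contains the terminals on $t$'s complementary side — so submodularity (\Cref{lem:submod}) gives $\C(S_t)+\C(R_i)\ge\C(S_t\cap R_i)+\C(S_t\cup R_i)$; since $S_t\cup R_i$ still separates $\T^0_i$ from $\T^1_i$ we have $\C(S_t\cup R_i)\ge\C(R_i)$, whence $\C(S_t\cap R_i)\le\C(S_t)$. As $S_t\cap R_i$ contains $t$ and no other terminal, it is a $t$-isolating cut, so the inequality is tight, $S_t\cap R_i$ is a $t$-isolating mincut, and minimality of $S_t$ forces $S_t\subseteq S_t\cap R_i\subseteq R_i$. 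With $S_t$ connected and lying on one side of each $(R_i,V\setminus R_i)$, no edge of $F$ is internal to $S_t$, so $S_t\subseteq U_t$. The subtle points are checking that the uncrossed set $S_t\cap R_i$ is again a legitimate $t$-isolating cut (so that minimality of $S_t$ can be invoked, both on cut value and on set containment) and disposing of the degenerate coordinates.

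Given the claim, the recovery phase and the running time are routine. For each $t$ build $G_t$ from $G[U_t]$ by adding a sink $x_t$ joined to each $a\in U_t$ by an edge of weight $\C(\{a\},V\setminus U_t)$ (parallel edges merged), compute a minimum cut separating $t$ from $x_t$ in $G_t$, and output the source-side-minimal one. Cuts of $G_t$ separating $t$ from $x_t$ correspond exactly to the $t$-isolating cuts of $\T$ contained in $U_t$, with equal value; since $S_t$ is such a cut attaining the global minimum $t$-isolating value, the source-minimal $t$--$x_t$ mincut of $G_t$ is exactly $S_t$. The reduction phase is $b=O(\log|\T|)$ max flows on graphs with $O(n)$ vertices and $O(m)$ edges; the recovery phase is one max flow per terminal on instances supported on disjoint pieces of total size $O(n+m)$, so by superadditivity of $\MaxFlow$ their combined cost is $O(\MaxFlow(2n,2m))$, and the bookkeeping (contractions, connected components, building each $G_t$) is $O(m)$. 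Altogether $O(\log|\T|\cdot\MaxFlow(2n,2m))$.
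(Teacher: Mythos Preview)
The paper does not prove this statement: it is quoted in the Preliminaries as the Isolating Cut Lemma of Li and Panigrahi~\cite{LP20} and used as a black box, with no argument supplied. Your write-up correctly reconstructs the original two-phase Li--Panigrahi proof (binary labelling of terminals, $\lceil\log|\T|\rceil$ max flows to build $F$, the submodular uncrossing showing $S_t\subseteq U_t$, and the superadditive accounting for the per-component recovery flows), so there is nothing to compare against and nothing to fix.
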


The function $\mathrm{MaxFlow}(x, y)$ denotes the time needed for solving an $st$-maxflow instance with $x$ vertices and $y$ edges.
We assume that the function $\mathrm{MaxFlow}(x, y)$ is $\Omega(x+y)$, is non-decreasing, and is superadditive.\footnote{An integral function $f(x, y)$ is said to be \emph{superadditive} if for any $x_1, y_1, x_2, y_2\in\mathbb{N}$ we have $f(x_1+x_2, y_1+y_2)\ge f(x_1, y_1)+f(x_2, y_2)$. The function is \emph{non-decreasing} if both $f(x+1, y)\ge f(x, y)$ and $f(x, y+1)\ge f(x, y)$ holds.}

\section{Maximal Isolating Mincuts}
\label{sec:max isocut}

The maximal minimum isolating cut problem states that, given a graph $G=(V,E)$ with $n$ vertices and $m$ edges and a terminal set $\T$, the goal is to obtain the maximal isolating mincut $X_v$ for each terminal $v\in \T$.
In this section, we give an efficient algorithm for solving the maximal isolating mincuts problem in almost linear time, which summarizes as the following \Cref{thm:max-min-iso-cut}.

\begin{theorem}\label{thm:max-min-iso-cut}
There exists an algorithm that, given an undirected weighted graph $G=(V, E)$ and a terminal set $\T\subseteq V$, in $O(\log|\T|\cdot\mathrm{MaxFlow}(3n,4m))$ time computes the maximal isolating mincut of all terminals $v\in \T$ with respect to $\T$.
\end{theorem}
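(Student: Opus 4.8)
The plan is to reduce the computation of the maximal $t$-isolating mincut to a single application of the (minimal) isolating cut lemma on a suitably modified graph, so that the claimed running time follows immediately. Recall that \Cref{thm:isocut} returns the \emph{minimal} $t$-isolating mincut for every $t\in\T$; minimal on the $t$-side means maximal on the $\T\setminus\{t\}$ side. So the key trick is a ``complement'' operation: for the terminal $t$, I want a mincut that is maximal on $t$'s side, which is the same as a mincut that is minimal on the side of everything else. First I would run the isolating cut algorithm once and fix, for each $t\in\T$, the minimal $t$-isolating mincut $X_t^{\min}$; these sets are vertex-disjoint, so their total size is $O(n)$, and they give us, for each $t$, the mincut value $\lambda_t := \C(X_t^{\min})$ separating $t$ from $\T\setminus\{t\}$, as well as disjoint ``cores'' we can safely contract.

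Next, to obtain the maximal side for a fixed $t$, I would contract each of the other cores $X_s^{\min}$ ($s\ne t$) into a single super-terminal $s'$, contract $X_t^{\min}$ into $t$ itself, and then compute, in this contracted graph $G'$, the minimal mincut separating the \emph{set} $\T'\setminus\{t\}=\{s' : s\in\T\setminus\{t\}\}$ from $t$ --- i.e.\ run an $st$-maxflow with source the merged node $\{s':s\}$ and sink $t$, and take the maximal sink-side mincut (equivalently minimal source-side mincut), which is the standard furthest/closest-to-sink cut extractable from the residual graph. Unpacking the contractions, this yields the maximal $t$-isolating mincut $X_t^{\max}$ of $\T$ in $G$. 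One subtlety to verify is that contracting the other cores is lossless: since $X_t^{\max}$ is a $t$-mincut and each $X_s^{\min}$ is the minimal $s$-mincut, submodularity/posi-modularity (\Cref{lem:submod}) forces $X_s^{\min}\subseteq V\setminus X_t^{\max}$ for all $s\ne t$, so no maximal $t$-isolating mincut is ever destroyed by the contraction; a symmetric uncrossing argument shows $X_t^{\min}\subseteq X_t^{\max}$, so contracting $X_t^{\min}$ into $t$ is also safe.

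The remaining issue is efficiency: doing this naively is $|\T|$ maxflow calls, one per terminal, which is exactly what we are trying to avoid. Here I would recycle the recursive halving scheme of Li--Panigrahi behind \Cref{thm:isocut}: partition $\T$ by $O(\log|\T|)$ bit-labelings, and for the $i$-th labeling solve a single maxflow on the graph where all terminals with bit $0$ are merged into one source and all with bit $1$ into one sink; a terminal $t$ is ``isolated'' in labeling $i$ if its $i$-th bit differs from every other terminal's in some coordinate, and the union over all labelings of the extracted maximal cuts pins down $X_t^{\max}$ exactly. The one genuinely new point compared to the minimal case --- and the part I expect to be the main obstacle --- is bounding the total output size: maximal isolating mincuts can overlap heavily (as the $\{s,t\}$ example in the introduction shows), so one must argue that across all of $\T$ the sets $\{X_t^{\max}\}$ still have total size $O(n)$, or else even writing down the answer is too slow. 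The way I would handle this is structural: show that two distinct maximal isolating mincuts $X_t^{\max}, X_s^{\max}$ are \emph{laminar} or differ only inside a single ``common'' region, so that the family $\{X_t^{\max}\}_{t\in\T}$ admits an $O(n)$-size implicit representation (e.g.\ as a forest of nested sets together with pointers), and have the algorithm output that representation rather than the sets explicitly --- this is presumably what the rest of \Cref{sec:max isocut} establishes, and it is where the real work lies; the maxflow count itself drops out of the halving reduction almost for free with the stated parameters $(3n,4m)$ accounting for the source/sink gadgets and the doubled graph used inside \Cref{thm:isocut}.
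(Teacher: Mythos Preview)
Your proposal has two genuine gaps. First, the bit-labeling ``recycling'' of Li--Panigrahi does not extend to maximal isolating mincuts. In the minimal case the $O(\log|\T|)$ bipartition maxflows work because the resulting pieces are disjoint and each minimal $t$-mincut lives entirely inside its piece; for maximal cuts this containment fails, and your claim that ``the union over all labelings of the extracted maximal cuts pins down $X_t^{\max}$ exactly'' is neither justified nor true in general (note that $X_t^{\max}$ need not even be a mincut for the coarser bipartition, so there is no uncrossing relation between it and the bipartition's maximal cut). The paper explicitly points out that the one-shot LP recursion breaks here precisely because maximal isolating mincuts are not disjoint.

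Second, your structural conjecture is wrong: the family $\{X_t^{\max}\}_{t\in\T}$ is \emph{not} laminar. The correct structural fact---and the heart of the paper's argument---is the Pairwise Intersection Only Lemma (\Cref{lem:three-crossing-isolating-mincuts}): for any three pairwise disjoint terminal subsets $A,B,C$, one has $X_A\cap X_B\cap X_C=\emptyset$. This immediately gives $\sum_t |X_t^{\max}|\le 2n$ (each vertex lies in at most two of the cuts), so the output is explicitly linear and no implicit representation is needed. The algorithm is then a different divide-and-conquer: split $\T\setminus\{p\}$ into halves $A,B$, compute the maximal $A$-mincut $X_A$ and maximal $B$-mincut $X_B$ with two maxflows, and recurse on $G/(V\setminus X_A)$ for the terminals in $A$ and on $G/(V\setminus X_B)$ for those in $B$; Nesting \& Submodularity guarantees $X_v\subseteq X_A$ for $v\in A$, and the Pairwise Intersection Only Lemma (applied across subproblems at the same depth) bounds the total vertex and edge count at each of the $O(\log|\T|)$ levels by $3n$ and $4m$.
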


\paragraph{Key Insight.}
The crux of our maximal isolating mincut algorithm is an observation\footnote{A similar observation can also be found in Dinitz and Vainshtein~\cite[3-Star Lemma]{DV94}.} to any set of three pairwise crossing  mincuts to three disjoint subsets of $\T$ --- their intersection is always an empty set due to \emph{only} posi-modularity but not submodularity.

To formally prove this, we first state a standard fact about posi-modularity in~\Cref{lem:disjoint-posi-modularity} (see its proof in \Cref{app:proof of dosjoint posi}). Then, we state the key observation as the Pairwise Intersection Only Lemma below.

\begin{lemma}[Disjoint \& Posi-modularity]\label{lem:disjoint-posi-modularity}
Let $A, B\subseteq \T$ be two nonempty subsets of terminals with $A\cap B=\emptyset$. Let $X_A$ (resp. $X_B$) be an $A$-mincut (resp. $B$-mincut) of $\T$. Then, $X_A\setminus X_B$ is an $A$-mincut of $\T$, and $X_B\setminus X_A$ is a $B$-mincut of $\T$. 
\end{lemma}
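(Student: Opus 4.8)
\textbf{Proof plan for \Cref{lem:disjoint-posi-modularity}.}
The plan is to apply posi-modularity (\Cref{lem:submod}) to the pair $X_A$ and $X_B$, provided they are crossing, and then argue that the "leftover" sets $X_A \setminus X_B$ and $X_B \setminus X_A$ are feasible $A$- and $B$-cuts of $\T$ respectively, so that the posi-modular inequality forces them to be mincuts. First I would handle the degenerate cases where $X_A$ and $X_B$ are not crossing: if $X_A \cap X_B = \emptyset$ then $X_A \setminus X_B = X_A$ and $X_B \setminus X_A = X_B$, and there is nothing to prove; if $X_A \subseteq X_B$, then since $A \subseteq X_A \subseteq X_B$ and $X_B$ separates $B$ from $\T \setminus B \supseteq A$, we would get $A \subseteq X_B$ and $A \cap B = \emptyset$ forces $A \subseteq X_B \setminus B$, but also $\T \setminus X_B$ must contain $\T \setminus B$... more carefully, $X_A \subseteq X_B$ means $V \setminus X_B \subseteq V \setminus X_A$, and $V \setminus X_A \supseteq \T \setminus A$, in particular $B \subseteq V \setminus X_A$; combined with $B \subseteq X_B$ this is consistent, and here $X_B \setminus X_A$ is still a $B$-cut (it contains $B$ since $B \subseteq X_B$ and $B \cap X_A = \emptyset$ as $B \subseteq \T \setminus A \subseteq V \setminus X_A$) while $X_A \setminus X_B = \emptyset$, contradicting that $X_A$ is a nonempty $A$-cut — so actually this sub-case cannot occur, or one reduces it directly. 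Similarly $X_B \subseteq X_A$ is impossible. The case $X_A \cup X_B = V$ is symmetric to intersection being empty after complementation, but one should check it cannot happen either since $\T \setminus A$ and $\T \setminus B$ are both nonempty and $\T \setminus (A \cup B) \subseteq (V \setminus X_A) \cap (V \setminus X_B) = V \setminus (X_A \cup X_B)$. So in the main case $X_A$ and $X_B$ are crossing.

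Now for the crossing case: since $A \cap B = \emptyset$ and $B \subseteq \T \setminus A \subseteq V \setminus X_A$, we have $B \cap X_A = \emptyset$, hence $B \subseteq X_B \setminus X_A$; moreover $A \subseteq X_A \setminus X_B$ by the symmetric argument. Also $\T \setminus A \supseteq B$ shows $X_A \setminus X_B$ contains no vertex of... wait, we need that $X_A \setminus X_B$ is an $A$-\emph{cut} of $\T$, i.e. $A \subseteq X_A \setminus X_B$ and $\T \setminus A \subseteq V \setminus (X_A \setminus X_B)$. The first holds as noted. For the second: $\T \setminus A \subseteq V \setminus X_A \subseteq V \setminus (X_A \setminus X_B)$. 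So $X_A \setminus X_B$ is a valid $A$-cut of $\T$, and symmetrically $X_B \setminus X_A$ is a valid $B$-cut of $\T$. Therefore $\C(X_A \setminus X_B) \ge \C(X_A)$ and $\C(X_B \setminus X_A) \ge \C(X_B)$ by minimality of the mincuts. Combining with posi-modularity $\C(X_A) + \C(X_B) \ge \C(X_A \setminus X_B) + \C(X_B \setminus X_A)$, all inequalities must be equalities, so $\C(X_A \setminus X_B) = \C(X_A)$ and $\C(X_B \setminus X_A) = \C(X_B)$, meaning $X_A \setminus X_B$ is an $A$-mincut of $\T$ and $X_B \setminus X_A$ is a $B$-mincut of $\T$.

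The main obstacle, such as it is, is the careful bookkeeping of the non-crossing degenerate cases — in particular ruling out $X_A \subseteq X_B$ and $X_A \cup X_B = V$ using the fact that $A$, $B$, $\T \setminus A$, $\T \setminus B$ are all nonempty and that $A, B$ are disjoint — rather than any deep inequality. The crossing case is an immediate two-line consequence of posi-modularity once feasibility of the leftover sets is checked. I would present the disjointness/feasibility observations ($A \subseteq X_A \setminus X_B$, $B \subseteq X_B \setminus X_A$, and the complementary containments) as a short preliminary step, then dispatch the degenerate cases, then close with the posi-modular inequality. Since the lemma is stated as "standard," the write-up can be brief, but the feasibility check is exactly the place where the disjointness hypothesis $A \cap B = \emptyset$ is used and should be made explicit.
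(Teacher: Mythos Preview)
Your crossing-case argument is exactly the paper's: apply posi-modularity to $X_A,X_B$, check that $X_A\setminus X_B$ and $X_B\setminus X_A$ are feasible $A$- and $B$-cuts of $\T$ (using $A\cap B=\emptyset$), and conclude equality. So the core is correct and matches the paper.

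There is one genuine gap in your degenerate-case bookkeeping. You claim $X_A\cup X_B=V$ ``cannot happen'' because $\T\setminus(A\cup B)\subseteq V\setminus(X_A\cup X_B)$. That containment is fine, but it only rules out $X_A\cup X_B=V$ when $\T\setminus(A\cup B)\neq\emptyset$; the nonemptiness of $\T\setminus A$ and $\T\setminus B$ separately does not imply this. Indeed, when $A$ and $B$ partition $\T$ (e.g.\ $\T=\{s,t\}$, $A=\{s\}$, $B=\{t\}$, with a unique $st$-mincut) one can have $X_A\cup X_B=V$. So this case must actually be handled, not dismissed.

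The paper treats it directly: if $X_A\cup X_B=V$ then $X_B\setminus X_A=V\setminus X_A$ and $X_A\setminus X_B=V\setminus X_B$, and one checks $V\setminus X_A$ is a $B$-cut and $V\setminus X_B$ is an $A$-cut (this is exactly where $A\cup B=\T$, which is forced by $X_A\cup X_B=V$, gets used). Then the chain $\C(X_A)=\C(V\setminus X_A)\ge\C(X_B)=\C(V\setminus X_B)\ge\C(X_A)$ forces equalities. Your hint that this case is ``symmetric to intersection being empty after complementation'' is on the right track, but you need to carry it out rather than assert the case is vacuous.
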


\begin{lemma}[Pairwise Intersection Only]
\label{lem:three-crossing-isolating-mincuts}
Given a graph $G$, let $A, B, C\subseteq\T$ be three disjoint nonempty subsets of terminals. 
Let $X_A, X_B, X_C\subseteq V$ be any $A$-mincut, $B$-mincut, and $C$-mincut of $\T$ respectively.
Then $X_A\cap X_B\cap X_C=\emptyset$.
\end{lemma}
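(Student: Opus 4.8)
The plan is to prove the Pairwise Intersection Only Lemma by contradiction, assuming $D := X_A \cap X_B \cap X_C \neq \emptyset$, and then using posi-modularity (\Cref{lem:disjoint-posi-modularity}) twice to ``shave off'' the intersection and derive a contradiction with the minimality of one of the three mincuts. First I would record the trivial observation that $A, B, C$ are pairwise disjoint, so each of the three cuts ``belongs'' to a set disjoint from the other two; in particular, since $A \cap C = \emptyset$ and $B \cap C = \emptyset$, the terminals of $C$ lie outside $X_A$ and outside $X_B$, hence $D$ contains no terminal of $A \cup B \cup C$ at all — it is a set of ``interior'' vertices of all three cuts.

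Next, the key move: apply \Cref{lem:disjoint-posi-modularity} to $X_A$ and $X_C$ (with $A \cap C = \emptyset$) to conclude that $X_A \setminus X_C$ is again an $A$-mincut of $\T$. Then apply it again, this time to the $A$-mincut $X_A' := X_A \setminus X_C$ and the $B$-mincut $X_B$ (using $A \cap B = \emptyset$), to conclude that $X_A'' := (X_A \setminus X_C) \setminus X_B = X_A \setminus (X_B \cup X_C)$ is still an $A$-mincut of $\T$. If $D = X_A \cap X_B \cap X_C$ is nonempty, then $X_A'' \subsetneq X_A$ is a strictly smaller $A$-mincut — but that alone is not yet a contradiction, since $A$-mincuts need not be unique and we have not fixed $X_A$ to be minimal. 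So the argument needs one more idea: I would instead sum the value inequalities directly. The cleaner route is to invoke posi-modularity on the three pairs in a symmetric fashion and add the three resulting inequalities, aiming to show that if $D \neq \emptyset$ then the sum $\C(X_A) + \C(X_B) + \C(X_C)$ strictly exceeds $\C(X_A \setminus (X_B \cup X_C)) + \C(X_B \setminus (X_A \cup X_C)) + \C(X_C \setminus (X_A \cup X_B))$, with the strictness coming precisely from edges incident to $D$. Since each shaved set is still a valid mincut of the corresponding terminal subset (by the iterated \Cref{lem:disjoint-posi-modularity}), each term on the right is at least the corresponding mincut value, i.e. at least $\C(X_A)$, $\C(X_B)$, $\C(X_C)$ respectively; this contradicts the strict inequality.

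Concretely, I would proceed by a careful edge-counting / crossing-pair analysis: whenever two of the three sets are crossing, posi-modularity gives the ``shaving'' inequality, and when they are \emph{not} crossing one must handle the degenerate cases (e.g. $X_A \subseteq X_B$, or $X_A \cup X_B = V$) separately — noting that if, say, $X_A \cup X_B = V$ then some terminal of $C$ would have nowhere to live, contradicting disjointness and nonemptiness of $C$, and if $X_A \subseteq X_B$ then $A \subseteq X_A \subseteq X_B$ forces $A$ and $B$ to not be separated correctly. So the genuinely substantive case is when all relevant pairs cross, and there the iterated application of \Cref{lem:disjoint-posi-modularity} does the work. The main obstacle I anticipate is bookkeeping the degenerate non-crossing configurations cleanly and making sure the strictness of the final inequality is correctly attributed to edges with an endpoint in $D$ (one has to check that removing $D$ from all three sets genuinely decreases the total boundary weight, using that $D$'s incident edges are ``triple-counted'' across the three cuts while the terminal structure prevents any compensating increase). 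Once the strictness is pinned down, the contradiction with minimality of the mincut values is immediate, giving $D = \emptyset$.
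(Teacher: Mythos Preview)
Your proposal is correct and follows essentially the same route as the paper: use \Cref{lem:disjoint-posi-modularity} (iterated) to conclude that the ``shaved'' sets $X'_A = X_A\setminus(X_B\cup X_C)$ etc.\ are again mincuts, obtain the identity $\C(X_A)+\C(X_B)+\C(X_C)=\C(X'_A)+\C(X'_B)+\C(X'_C)$, and then do an edge-by-edge case analysis to show that any edge incident to the triple intersection $D$ would contribute strictly positively to the difference, forcing a contradiction via connectivity of $G$. The only cosmetic difference is in how the contradiction is phrased: rather than saying ``the sum is strictly positive,'' the paper deduces from the edge analysis that no edge runs from $D$ to $X_A\setminus D$, and hence (since $G$ is connected and $D$ must have an edge to $V\setminus X_A$) that $\C(X_A\setminus D)<\C(X_A)$, contradicting that $X_A$ is an $A$-mincut. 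Your anticipated ``bookkeeping'' obstacle is exactly the two-part case analysis the paper carries out.
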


\begin{proof}
The proof is only interesting whenever the intersection of any two mincuts is non-empty (i.e. crossing). Thus, without loss of generality, we assume that $X_A\cap X_B\neq\emptyset$, $X_B\cap X_C\neq\emptyset$, and $X_C\cap X_A \neq \emptyset$.

Define $X'_A = (X_A\setminus X_B)\setminus  X_C$, $X'_B = (X_B\setminus X_C)\setminus X_A$ and $X'_C = (X_C\setminus X_A)\setminus X_B$. These sets are non-empty since $A\subseteq X'_A$, $B\subseteq X'_B$, and $C\subseteq X'_C$. By posi-modularity (\Cref{lem:disjoint-posi-modularity}) we know that $X'_A$, $X'_B$, and $X'_C$ are also $A$-mincut, $B$-mincut, and $C$-mincut of $\T$ respectively, and thus
\begin{equation}\label{eq:crossing-formula}
    \underbrace{(\C(X_A)+\C(X_B)+\C(X_C)) - (\C(X'_A)+\C(X'_B)+\C(X'_C))}_{(\mathrm{LHS})} = 0. \tag{$\star$}
\end{equation}

Assume for contradiction that $X := X_A\cap X_B\cap X_C \neq \emptyset$.
We now claim that there is no edge from $X$ to $X_A\setminus X$. 
Note that with the assumption that $G$ is connected, we have $\C(X, V\setminus X_A)>0$.
The claim implies that 
\[\C(X_A) = \C(X_A \setminus X) - \C(X_A \setminus X, X) + \C(X,V\setminus X_A) > \C(X_A\setminus X),\]
which contradicts the fact that  $X_A$ is $A$-mincut.

We shall prove the claim by two steps: (1) every edge in the graph contributes to non-negative values to the left-hand side (LHS) of \eqref{eq:crossing-formula}. Therefore,
any edge that contributes to a positive amount to LHS should not exist in $G$ by \Cref{eq:crossing-formula}.
(2) For any boundary edge $(u, v)\in \partial X$ with $u\in X$ and $v\notin X$, we have $v\notin X_A\cup X_B\cup X_C$, which implies the claim.

\begin{enumerate}
    \item We consider the cases that $e$ contributes to how many terms of $\C(X'_A), \C(X'_B)$ and $\C(X'_C)$. Since $X'_A$, $X'_B$, and $X'_C$ are disjoint and an edge $e$ incidents to two vertices, $e$ contributes to at most two of the three terms. There is nothing to prove if $e$ contributes to 0 of them.

    If $e$ contributes to one of them, without loss of generality $\C(X'_A)$. Then $e$ contains some vertex in $X'_A$, and also some vertex $v$ in $V\setminus X'_A = (V\setminus X_A)\cup X_B\cup X_C$. Therefore $v$ is in either $V\setminus X_A$, $X_B$ or $X_C$, hence also contributes to either $\C(X_A), \C(X_B)$ or $\C(X_C)$ respectively.
    
    Otherwise $e$ contributes to two of them, without loss of generality $\C(X'_A)$ and $\C(X'_B)$, then $e$ contains some vertex $u\in X'_A$ and $v\in X'_B$. By the definition of $X'_A$ and $X'_B$, we have $u\in X_A$, $u\notin X_B$, $v\in X_B$, $v\notin X_A$.  So $e$ also contributes to $\C(X_A)$ and $\C(X_B)$.
    \item We prove (2) using (1) by showing that the contribution to LHS is positive if $u\in X$ and $v\in (X_A\cup X_B\cup X_C)\setminus X$. Again we consider the cases that $e$ contributes to how many terms of $\C(X'_A), \C(X'_B)$, and $\C(X'_C)$. Note that $e$ can not contribute to more than one of them since $X'_A$, $X'_B$, and $X'_C$ are disjoint and one of the endpoints $u\in X$ is in neither of them.

    If $e$ contributes to none of them, then we need to show that $e$ contributes to at least one of $\C(X_A), \C(X_B)$, and $\C(X_C)$. This claim is directly from the fact that $X = X_A\cap X_B\cap X_C$, $u\in X$, and $v\notin X$.

    Otherwise, $e$ contributes to one of them, without loss of generality $\C(X'_A)$. By the definition of $X'_A$, we have $v\notin X_B$ and $v\notin X_C$. Therefore $e$ also contributes to $\C(X_B)$, and $\C(X_C)$. \qedhere
\end{enumerate}
\end{proof}

\paragraph{Bounding the output size.} Before we describe our algorithm, we emphasize that the total size of all maximal $v$-isolating mincuts is $O(n)$, as a consequence of \Cref{lem:three-crossing-isolating-mincuts}:

\begin{lemma}\label{lem:total-size-max-min-iso-cut}
Let $G$ be a graph and $\T$ be a set of terminals.
For each $v\in \T$, let $X_v$ be any $v$-isolating mincut.
Then, $\sum_{v\in \T} |X_v| \le 2n$.
\end{lemma}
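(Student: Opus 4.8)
The goal is to show $\sum_{v\in\T}|X_v|\le 2n$ for an arbitrary choice of $v$-isolating mincuts $\{X_v\}_{v\in\T}$. The natural approach is a charging/counting argument: I will bound, for each vertex $u\in V$, the number of sets $X_v$ that contain $u$, and show this count is at most $2$ on average (or at most $2$ pointwise, which would even give $\le 2n$ directly). The key structural input is the Pairwise Intersection Only Lemma (\Cref{lem:three-crossing-isolating-mincuts}): applying it with the three disjoint singleton terminal sets $A=\{a\}$, $B=\{b\}$, $C=\{c\}$ for distinct $a,b,c\in\T$, it says $X_a\cap X_b\cap X_c=\emptyset$. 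Equivalently, \emph{no vertex lies in three of the sets $X_v$ simultaneously}: each $u\in V$ belongs to at most two of the $X_v$'s. Summing over $u$ gives $\sum_{v\in\T}|X_v|=\sum_{u\in V}|\{v:u\in X_v\}|\le 2n$.

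The main thing to be careful about is the hypothesis of \Cref{lem:three-crossing-isolating-mincuts}: it asks for $A$-, $B$-, $C$-mincuts of $\T$ with $A,B,C$ \emph{disjoint} and nonempty. For distinct terminals $a,b,c$, the singletons $\{a\},\{b\},\{c\}$ are indeed disjoint and nonempty, and by definition $X_a$ is a $\{a\}$-mincut of $\T$, etc., so the lemma applies verbatim and yields $X_a\cap X_b\cap X_c=\emptyset$. One should also double-check the degenerate case $|\T|=2$: then there are only two sets $X_v$, and the bound $\sum|X_v|\le 2n$ is immediate since each $X_v\subseteq V$. For $|\T|\ge 3$ the three-set argument kicks in. I do not anticipate a serious obstacle here; the lemma does essentially all the work, and the remaining step is just the elementary double-counting identity $\sum_{v\in\T}|X_v|=\sum_{u\in V}\bigl|\{v\in\T: u\in X_v\}\bigr|$ together with the pointwise bound $\bigl|\{v\in\T:u\in X_v\}\bigr|\le 2$.

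So the steps, in order, are: (i) observe the counting identity $\sum_{v\in\T}|X_v|=\sum_{u\in V}\deg(u)$ where $\deg(u):=|\{v\in\T:u\in X_v\}|$; (ii) fix $u\in V$ and suppose toward contradiction that $\deg(u)\ge 3$, i.e.\ $u\in X_a\cap X_b\cap X_c$ for three distinct terminals $a,b,c\in\T$; (iii) invoke \Cref{lem:three-crossing-isolating-mincuts} with $A=\{a\},B=\{b\},C=\{c\}$ to conclude $X_a\cap X_b\cap X_c=\emptyset$, contradicting $u$'s membership; (iv) hence $\deg(u)\le 2$ for all $u$, and summing gives $\sum_{v\in\T}|X_v|\le 2n$. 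The only ``obstacle,'' such as it is, is making sure the singleton-terminal instantiation of the lemma is legitimate (it is, since $X_v$ is by definition a $\{v\}$-mincut of $\T$), and noting that the argument is completely indifferent to \emph{which} $v$-isolating mincut is chosen for each $v$ — the lemma holds for any such choice, which is exactly why the bound is claimed for arbitrary isolating mincuts rather than only the minimal or maximal ones.
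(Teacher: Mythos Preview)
Your proposal is correct and follows essentially the same approach as the paper: invoke \Cref{lem:three-crossing-isolating-mincuts} on singleton terminal sets to conclude that every vertex lies in at most two of the $X_v$'s, then apply the double-counting identity to get $\sum_{v\in\T}|X_v|\le 2n$. The paper's proof is the same argument, just stated more tersely.
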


\begin{proof}
Every vertex $u\in V$ belongs to at most two isolating mincuts.
Suppose by contradiction that there exist three terminal vertices $v_1, v_2, $ and $v_3$ such that $u\in X_{v_1}\cap X_{v_2}\cap X_{v_3}$. By \Cref{lem:three-crossing-isolating-mincuts} we know that $X_{v_1}\cap X_{v_2}\cap X_{v_3}=\emptyset$, a contradiction. Hence, a counting argument shows that $\sum_{v\in T}|X_v|\le 2n$.
\end{proof}

We remark that for our purpose we apply \Cref{lem:total-size-max-min-iso-cut} where $X_v$ is the maximal $v$-isloating mincut for all $v\in\T$.
\Cref{lem:total-size-max-min-iso-cut} implies that the total output size is $O(n)$. Thus, all maximal $v$-isolating mincuts can be stored explicitly.
Now, we are ready for introducing the algorithm.

\paragraph{A Divide and Conquer Algorithm.}
The existing algorithms~\cite{LP20, CQ21} for finding minimal isolating mincuts 
use \emph{one-shot recursion} by first computing $O(\log|\T|)$ cuts on $G$.
These cuts partition the vertex set into $O(|\T|)$ subsets, and isolate every terminal from $\T$.
The minimal $t$-isolating mincuts can then be obtained within each part (and contracting everything outside the part).
This does not work for us!
The most evident reason is that the maximal $t$-isolating mincuts may not be disjoint.

Instead of using one-shot recursion, we can also consider a divide and conquer algorithm that is equivalent to the existing algorithms.
The algorithm considers one cut at a time.
Each cut splits the terminal set (and the vertex set) into two parts, creating two subproblems.
Each subproblem is obtained by contracting all vertices on one side of the cut into a single vertex.
The subsequent cuts are now affecting both subproblems, splitting each subproblem into another two subproblems, and so on and so forth.

Our algorithm solves the maximal isolating mincut problem using a divide and conquer approach very similar to the algorithm described above, but with a slight twist.
In our algorithm,
every subproblem is derived from $G$ via contractions,
and 
there will be \emph{at most one} contracted vertex in each subproblem.
We call such a special vertex $p$ the \emph{pivot} of a subproblem.
Notice that $p=\texttt{null}$ when the algorithm first enters the recursion, as there is no contracted vertex.
Throughout the execution,
the algorithm recursively splits the terminal set $\T\setminus \{p\}$ arbitrarily into two halves $\T\setminus\{p\}=A\cup B$.
Then, the algorithm computes $X_A$ (the maximal $A$-mincut of $\T$) and $X_B$ (the maximal $B$-mincut of $\T$)
by solving \emph{two} $s$-Maximal $st$-Mincut instances (which can be solved using one $st$-MaxFlow and a linear time post processing).

We observe (via submodularity, see \Cref{lem:steiner-modularity}) that for each terminal $v\in A$, the maximal $v$-isolating mincut $X_v$ must be fully contained in $X_A$, i.e., $X_v\subseteq X_A$.
Hence, in order to find out the maximal $v$-isolating mincuts for all vertices $v\in A$, it is safe to contract everything outside $X_A$ and recursively compute maximal isolating mincuts on the contracted graph $G_A := G/(V\setminus X_A)$.
Notice that the pivot $p$ as well as all terminal vertices in $B$ are outside of $X_A$, 
so within the contracted graph $G_A$ the contracted vertex $p_A$ shall be considered as a terminal vertex in the subproblem.

Similarly, the algorithm contracts everything outside $X_B$ which leads to the pivot vertex $p_B$, and recursively computes maximal isolating mincuts on $G/(V\setminus X_B)$ too.
This divide and conquer procedure continues until the number of terminal vertices becomes a constant, where the maximal $v$-isolating mincut can be computed for each vertex $v\in \T$ individually using a max-flow.
The algorithm is summarized in \Cref{alg:max-min-iso-cut}.

\begin{algorithm}
\caption{Maximal isolating mincuts.}
\label{alg:max-min-iso-cut}
\begin{algorithmic}[1]\small
\Procedure{MaxIsoMincut}{$G, \T$}
\State Call \Call{MaxIsoMincutWithPivot}{$G, \T, {\texttt{null}}$}.
\EndProcedure
\Procedure{MaxIsoMincutWithPivot}{$G, \T, p$}
\If{$|\T|\le 4$}\Comment{Base case.}
\State Obtain the maximal $v$-isolating mincut $X_v$ for each $v\in \T$.\Comment{Use MaxFlow.}\label{line:run-max-flow}
\Else 
\State Partition $\T\setminus\{p\}$ arbitrarily into two similarly sized sets $A\cup B=\T\setminus\{p\}$.
\State Compute $X_A$, the maximal $A$-mincut of $\T$. \Comment{Use MaxFlow.}
\State Compute $X_B$, the maximal $B$-mincut of $\T$. \Comment{Use MaxFlow.}
\State Let $G_A \gets G/(V\setminus X_A)$ where $V\setminus X_A$ gets contracted to $p_A$.\label{line:contract-ga}
\State Let $G_B \gets G/(V\setminus X_B)$ where $V\setminus X_B$ gets contracted to $p_B$.
\label{line:contract-gb}
\State Invoke \Call{MaxIsoMincutWithPivot}{$G_A, A\cup \{p_A\}, p_A$}.
\State Invoke \Call{MaxIsoMincutWithPivot}{$G_B, B\cup \{p_B\}, p_B$}.
\EndIf
\EndProcedure
\end{algorithmic}
\end{algorithm}

\paragraph{Analysis}
We now prove the correctness and the runtime.
As mentioned above, the correctness of \Cref{alg:max-min-iso-cut} depends on a standard fact on submodularity on two nesting subsets (for completeness see its proof in
\Cref{app:proof of nesting submod}):

\begin{lemma}[Nesting \& Submodularity]
\label{lem:steiner-modularity}
Let $G$ be a graph and let $\T$ be the set of terminals.
 Consider two nonempty subsets $A$ and $B$ of terminals such that $A\subseteq B\subsetneq \T$.
Let $X_A$ (resp. $X_B$) be any $A$-mincut (resp. $B$-mincut) of $\T$.
Then, $X_A\cap X_B$ is a $A$-mincut of $\T$.
Respectively, $X_A\cup X_B$ is a $B$-mincut of $\T$.
\end{lemma}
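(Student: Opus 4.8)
The plan is to first check that $X_A\cap X_B$ and $X_A\cup X_B$ are \emph{feasible} cuts of the right kinds, and then to use submodularity to pin down their values. For feasibility: since $A\subseteq B$, every terminal of $A$ lies in both $X_A$ and $X_B$, so $A\subseteq X_A\cap X_B$; and every terminal of $\T\setminus A$ avoids $X_A$ (as $X_A$ is an $A$-cut), hence avoids $X_A\cap X_B$. Thus $X_A\cap X_B$ separates $A$ from $\T\setminus A$, i.e.\ it is an $A$-cut of $\T$. Symmetrically, $B\subseteq X_B\subseteq X_A\cup X_B$, and every terminal of $\T\setminus B$ avoids both $X_B$ and $X_A$ (the latter because $\T\setminus B\subseteq\T\setminus A$), so it avoids $X_A\cup X_B$; hence $X_A\cup X_B$ is a $B$-cut of $\T$. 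I would also record here that $X_A\cap X_B\supseteq A\neq\emptyset$ and $V\setminus(X_A\cup X_B)\supseteq\T\setminus B\neq\emptyset$, the latter using the hypothesis $B\subsetneq\T$; these are exactly the two non-emptiness conditions we need to invoke submodularity.

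Next I would split on whether $X_A$ and $X_B$ are crossing. If they are, apply submodularity (\Cref{lem:submod}) to get $\C(X_A)+\C(X_B)\ge\C(X_A\cap X_B)+\C(X_A\cup X_B)$. Writing $\lambda_A$ (resp.\ $\lambda_B$) for the value of an $A$-mincut (resp.\ $B$-mincut) of $\T$, we have $\C(X_A)=\lambda_A$ and $\C(X_B)=\lambda_B$ by assumption, while $\C(X_A\cap X_B)\ge\lambda_A$ and $\C(X_A\cup X_B)\ge\lambda_B$ since, by the previous paragraph, these are an $A$-cut and a $B$-cut respectively. Hence both inequalities are forced to be equalities, so $X_A\cap X_B$ is an $A$-mincut and $X_A\cup X_B$ is a $B$-mincut, which is the claim.

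Finally I would handle the non-crossing case. Because $X_A\cap X_B\neq\emptyset$ and $V\setminus(X_A\cup X_B)\neq\emptyset$, non-crossing forces one of $X_A\setminus X_B$, $X_B\setminus X_A$ to be empty, i.e.\ $X_A\subseteq X_B$ or $X_B\subseteq X_A$. If $X_A\subseteq X_B$, then $X_A\cap X_B=X_A$ and $X_A\cup X_B=X_B$, which are already an $A$-mincut and a $B$-mincut. If instead $X_B\subseteq X_A$, then $X_A\cap X_B=X_B$ is an $A$-cut (first paragraph), so $\lambda_B=\C(X_B)\ge\lambda_A$, while $X_A\cup X_B=X_A$ is a $B$-cut, so $\lambda_A=\C(X_A)\ge\lambda_B$; hence $\lambda_A=\lambda_B$, and therefore $X_B$ is an $A$-mincut and $X_A$ is a $B$-mincut.

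The only genuine subtlety — and thus the main thing to be careful about — is the bookkeeping in these degenerate nesting cases, in particular confirming that $X_B\subseteq X_A$ does not violate the statement; this ultimately rests on the hypothesis $B\subsetneq\T$, which guarantees $V\setminus(X_A\cup X_B)\neq\emptyset$ both for running submodularity in the crossing case and for the value comparison above. Everything else is a direct application of \Cref{lem:submod}.
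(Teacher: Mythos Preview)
Your proof is correct and follows essentially the same approach as the paper: verify that $X_A\cap X_B$ is an $A$-cut and $X_A\cup X_B$ is a $B$-cut, then apply submodularity and compare term by term to force equality. The paper's version is terser and applies the submodular inequality directly without the crossing/non-crossing case split; your split is arguably more careful given that \Cref{lem:submod} is stated only for crossing pairs, but the degenerate cases you handle are exactly where the inequality is a trivial equality, so both arguments arrive at the same place.
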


Consider the recursion tree throughout executing \Cref{alg:max-min-iso-cut}.
We say that a subproblem is a \emph{leaf} if \Cref{line:run-max-flow} is executed for the subproblem and thus no further recursions are invoked.

\begin{lemma}[Correctness]\label{lem:max-iso-mincut-correctness}
Fix a terminal vertex $v\in \T$. There is a unique (leaf) subproblem where the maximal isolating mincut for $v$ is computed.
Let $\hat{X}_v$ be the cut returned at Line~\ref{line:run-max-flow} for vertex $v$.
Then $\hat{X}_v=X_v$ is the maximal $v$-isolating mincut on the graph $G$.
\end{lemma}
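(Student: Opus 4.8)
The plan is to prove \Cref{lem:max-iso-mincut-correctness} by induction on the recursion tree of \Cref{alg:max-min-iso-cut}, tracking two things simultaneously for every subproblem: (1) that each terminal $v\in\T$ ``survives'' in exactly one leaf, so the uniqueness claim is immediate; and (2) a \emph{correctness invariant} stating that if $(G',\T',p)$ is a subproblem obtained from $G$ by a sequence of contractions, then for every terminal $v\in\T'\setminus\{p\}$, the maximal $v$-isolating mincut of $\T'$ in $G'$ equals the maximal $v$-isolating mincut of $\T$ in $G$ (under the natural identification of a vertex subset of $G'$ not containing the pivot with a subset of $V$). The base case of the invariant is trivial since the root subproblem is $(G,\T,\texttt{null})$ itself, so I only need to show the invariant is preserved by one divide step, after which \Cref{line:run-max-flow} at a leaf directly computes the right object.

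For the inductive step, suppose the invariant holds for $(G',\T',p)$ and we split $\T'\setminus\{p\}=A\cup B$, forming $G'_A = G'/(V(G')\setminus X_A)$ with pivot $p_A$ and terminal set $A\cup\{p_A\}$. I would first argue, using \Cref{lem:steiner-modularity} with the nesting $\{v\}\subseteq A\subsetneq\T'$, that for every $v\in A$ the maximal $v$-isolating mincut $X_v$ of $\T'$ in $G'$ satisfies $X_v\cap X_A$ is again a $v$-isolating mincut; combined with maximality of $X_v$ this gives $X_v\subseteq X_A$ (since otherwise $X_v\cup X_A$ would be a strictly larger $A$-mincut than the maximal one $X_A$, or one derives a contradiction to maximality of $X_v$ — the clean way is: $X_v\cap X_A$ is a $v$-mincut contained in $X_v$, and $X_v\cup X_A$ is an $A$-mincut, so by maximality of $X_A$ we get $X_v\subseteq X_A$). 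Then I need the key ``restriction'' fact: because $X_v\subseteq X_A$ and everything outside $X_A$ is contracted into the single vertex $p_A$, the cuts of $G'$ that separate $v$ from $\T'\setminus\{v\}$ and lie inside $X_A$ are in value-preserving bijection with the cuts of $G'_A$ separating $v$ from $(A\cup\{p_A\})\setminus\{v\}$; moreover the $\T'$-Steiner-mincut value is preserved in the relevant range (any $v$-mincut of $\T'$ has value at most $\C(X_A)$, and contraction only destroys cuts of value $>\C(X_A)$ through $p_A$, but we must check it does not \emph{lower} any cut value — it cannot, since $G'_A$ is a minor). Hence the maximal $v$-isolating mincut of $A\cup\{p_A\}$ in $G'_A$ coincides with $X_v$, which by the outer induction equals the maximal $v$-isolating mincut of $\T$ in $G$. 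The symmetric argument handles $B$ and $G'_B$. For uniqueness: the two recursive calls partition $\T'\setminus\{p\}$ into $A$ and $B$, and $p$ itself is never a designated terminal whose isolating mincut we output (it is only a pivot); so each original terminal $v\in\T$ descends into exactly one child at every level and is output at exactly one leaf.

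The main obstacle I anticipate is the bijection/value-preservation step for contraction --- specifically, verifying carefully that contracting $V(G')\setminus X_A$ into $p_A$ (a) does not decrease the value of any cut that matters, so that $\lambda_{G'_A}(A\cup\{p_A\})$ and the individual $v$-mincut values are exactly what they were in $G'$ restricted to cuts inside $X_A$, and (b) that $p_A\in\T'_A$ really does behave as a terminal, i.e. a $v$-isolating mincut of $A\cup\{p_A\}$ in $G'_A$ may not split $p_A$ off but must separate $v$ from $p_A$, matching the requirement in $G'$ that $v$ be separated from the contracted blob $V(G')\setminus X_A$ (which contains $B$ and the old pivot). Both points are essentially ``contraction is a minor, so cut values only go up, and cuts avoiding the contracted set are preserved exactly,'' but one must state it cleanly because the terminal set changes ($p_A$ replaces $B\cup\{p\}$). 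A secondary subtlety is that \Cref{lem:steiner-modularity} is stated for $A\subseteq B\subsetneq\T$ with $X_B$ a $B$-mincut; I should make sure the instantiation $\{v\}\subseteq A\subsetneq\T'$ is exactly of that form (it is, since $|\T'|\ge 5>|A|$ after a proper split), and that I am allowed to use both the maximal $A$-mincut $X_A$ and an arbitrary $v$-mincut there. Everything else — the leaf base case and the uniqueness bookkeeping — is routine.
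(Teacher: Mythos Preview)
Your proposal is correct and follows essentially the same approach as the paper: the key step in both is to invoke \Cref{lem:steiner-modularity} with $\{v\}\subseteq A\subsetneq\T'$ to deduce that the maximal $v$-isolating mincut satisfies $X_v\subseteq X_A$, so that no vertex of $X_v$ is ever contracted along the path to the leaf. The paper's proof is terser --- it simply says ``it suffices to show that in each divide step, all vertices in $X_v$ are not contracted'' and then gives the one-line submodularity argument --- whereas you spell out the contraction/bijection step that the paper leaves implicit; but the underlying argument is the same.
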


\begin{proof}
Fix $v\in \T$.
Since in each divide step, exactly one subproblem contains $v$ so there will be exactly one leaf subproblem where Line~\ref{line:run-max-flow} is invoked for $v$.
Clearly $\hat{X}_v$ does not contain any contracted vertex (since the only contracted vertex in the subproblem is the pivot.)
Now, to show that $\hat{X}_v=X_v$, 
it suffices to show that in each divide step,
all vertices in $X_v$ are not contracted. Indeed, by submodularity in \Cref{lem:steiner-modularity},
suppose without loss of generality that $v\in A$ then we must have $X_v\subseteq X_A$ (otherwise $X_A$ is not a maximal isolating mincut of $A$ since $X_v\cup X_A$ is a larger-sized isolating mincut of $A$.)
\end{proof}

\begin{lemma}[Runtime]\label{lem:max-iso-mincut-runtime}
\textsc{MaxIsoMincut}$(G, \T)$ runs in $O(\log |\T|\cdot \mathrm{MaxFlow}(3n, 4m))$ time.
\end{lemma}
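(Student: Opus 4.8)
The plan is to bound the total work across the recursion tree. The key observation is that each invocation of \textsc{MaxIsoMincutWithPivot} on a subproblem with $n'$ vertices and $m'$ edges performs a constant number of max-flow calls (for computing $X_A$, $X_B$, and in the base case for each of the $O(1)$ terminals), each on a graph that is a minor of $G$ with $O(n')$ vertices and $O(m')$ edges, plus linear-time post-processing; so the cost at a node is $O(\textrm{MaxFlow}(O(n'), O(m')))$. Then I would argue that the subproblems at any fixed recursion depth $d$ have vertex counts and edge counts that sum to $O(n)$ and $O(m)$ respectively, so by superadditivity of $\textrm{MaxFlow}$ the total cost at depth $d$ is $O(\textrm{MaxFlow}(O(n), O(m)))$; finally I would show the recursion has depth $O(\log|\T|)$, giving the claimed bound.

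The first step is to control the recursion depth: in each divide step, $\T\setminus\{p\}$ is split into two similarly-sized halves $A$ and $B$, so the terminal count (excluding the pivot, which is always exactly one extra vertex added back) roughly halves. Since we stop once $|\T|\le 4$, after $O(\log|\T|)$ levels every branch is a leaf. Next, the crucial counting step: I need to show that at a fixed depth $d$, the subproblems' graphs have total size $O(n+m)$. This is where I would exploit that $X_A$ and $X_B$ need not be disjoint — a given vertex of $G$ can survive (uncontracted) into both the $G_A$ branch and the $G_B$ branch, so naively the vertex counts could blow up by a factor of $2^d$. The fix is to observe that in $G_A$ the vertices outside $X_A$ are all contracted into the single pivot $p_A$, so $|V(G_A)| \le |X_A| + 1$, and symmetrically $|V(G_B)| \le |X_B| + 1$; the terminals are partitioned, so by \Cref{lem:total-size-max-min-iso-cut} applied to the $A$-mincut and $B$-mincut restricted appropriately (or more simply by an inductive invariant: the sum of $|X_v|$ over all terminals $v$ alive in subproblems at level $d$ is $O(n)$, since each such $X_v$ is a $v$-isolating mincut of the original $\T$ and no vertex lies in three of them by \Cref{lem:three-crossing-isolating-mincuts}), the total number of \emph{non-pivot} vertices across all level-$d$ subproblems is $O(n)$, and there are at most $2^d$ pivots but also at most $O(|\T|)$ subproblems total, so pivots contribute $O(n)$ as well. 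A parallel argument bounds the total number of edges by $O(m)$, since contraction only merges vertices and never creates new edges beyond identification.

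The main obstacle I expect is making the "total size at each depth is $O(n+m)$" argument airtight, specifically ruling out the blowup from overlapping maximal mincuts. The clean way is to set up an invariant tracked down the recursion tree: for every set $S$ of subproblems forming an antichain in the recursion tree (in particular all subproblems at a fixed depth), $\sum_{P\in S}|V(P)| = O(n)$ and $\sum_{P\in S}|E(P)| = O(m)$. To prove this one uses that the non-pivot vertices of a subproblem $P$ are a subset of $V$, and if a vertex $u$ appears as a non-pivot in two sibling subproblems $G_A$ and $G_B$ then $u \in X_A \cap X_B$; tracing back, the set of leaf-or-deeper subproblems whose uncontracted region contains $u$ corresponds to a nested family of mincuts all containing $u$, and \Cref{lem:three-crossing-isolating-mincuts} (pairwise intersection only) caps the relevant overlaps at $2$. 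Once this invariant is established, the rest is the standard telescoping: cost $= \sum_{d=0}^{O(\log|\T|)} O(\textrm{MaxFlow}(O(n),O(m))) = O(\log|\T|\cdot\textrm{MaxFlow}(3n,4m))$, where the explicit constants $3n, 4m$ come from the largest single max-flow instance constructed (the $s$-maximal $st$-mincut reduction, which adds a source, a sink, and a bounded number of auxiliary edges). I would verify that no individual max-flow call exceeds the $(3n,4m)$ bound and invoke monotonicity of $\textrm{MaxFlow}$ to finish.
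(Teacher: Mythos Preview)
Your proposal is essentially the paper's own argument: bound the recursion depth by $O(\log|\T|)$ via the halving of non-pivot terminals, then at each fixed depth observe that the non-pivot part of each subproblem is exactly a maximal $\T'_j$-mincut of $\T$ for pairwise disjoint $\T'_j$, so \Cref{lem:three-crossing-isolating-mincuts} caps every vertex's appearances at two, giving $\sum_j|V(G_j)|\le 2n+|\T|\le 3n$; edges are then bounded by $4m$ via degree sums (each edge has a non-pivot endpoint), and superadditivity of $\mathrm{MaxFlow}$ collapses the level to $\mathrm{MaxFlow}(3n,4m)$. One small correction: the constants $3n,4m$ are \emph{not} the size of any single flow instance arising from adding a source/sink---each individual call is on at most $n',m'$---but are the \emph{sums} of subproblem sizes across a level, combined via superadditivity rather than monotonicity.
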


\begin{proof}
It suffices to bound the sum of graph sizes in all subproblems throughout the execution of \Cref{alg:max-min-iso-cut}.
First of all, the maximum depth of the recursion tree is 
$\left\lceil \log|\T| \right\rceil $ since in each recursive call the number of non-pivot terminals is reduced to half.
In addition, the number of subproblems in each recursion depth $i$ is at most $\min\{2^i, |\T|\}$.

Now, we focus on a particular recursion depth $i>0$. Let $\{(G_j, \T_j, p_j)\}$ be all the subproblems whose recursion depth is $i$.
We observe that all terminals except pivot go to exactly one subproblem so the subsets $\T'_j := \T_j\setminus \{p_j\}$ are disjoint.
Moreover, by Lines~\ref{line:contract-ga}-\ref{line:contract-gb} we know that for each $j$, removing the pivot $p_j$ from $G_j$ it is exactly the maximal $\T'_j$-mincut of $\T$ on $G$.

Using the Pairwise Intersection Only Lemma (\Cref{lem:three-crossing-isolating-mincuts}),
we are able to conclude that every vertex in the input graph $G$ occurs in at most two subproblems at recursion depth $i$.
Therefore, the total number of vertices across all subproblems at depth $i$ is $\sum_{j}|V(G_j)| \le 2n + \min\{2^i,|\T|\}\le 2n+|\T| \le 3n$.
To analyze
the total number of edges across all subproblems at depth $i$,
we notice that each edge has at least one non-pivot endpoint.
Thus, the total number of edges can be bounded by the sum of all vertex degrees
$\sum_j|E(G_j)| \le 4m$.

Finally, in each subproblem $(G', \T')$, where the graph $G'$ has $n'$ vertices and $m'$ edges, the algorithm computes the maximal $A$-mincut of $\T$ (denoted by $X_A$) and the maximal $B$-mincut of $\T$ (denoted by $X_B$) by the following steps.
First, the algorithm creates a flow graph $G'_{\mathrm{flow}}$ where the source vertex $s$ is obtained by merging all vertices in $A$ and the sink vertex $t$ is obtained by merging all vertices in $B$.
This graph has at most $n'$ vertices and at most $m'$ (undirected) edges since $G'$ is formed from the input graph $G$ by a sequence of contraction.
Then, the algorithm finds any $st$-MaxFlow $f$ within $\mathrm{MaxFlow}(n', m')$ time.
Finally, the algorithm examines the residual graph $G'^{(f)}_{\mathrm{flow}}$ in $O(n'+m')$ time: $X_A$ is exactly the set of vertices that do \emph{not} reach $t$ and $X_B$ is exactly  the set of vertices that are \emph{not} reachable from $s$.

Therefore, by summing up the runtime per recursion depth, we obtain an upper bound to the desired total runtime $O(\log|\T|\cdot\mathrm{MaxFlow}(3n, 4m))$.
\end{proof}

\begin{proof}[Proof of \Cref{thm:max-min-iso-cut}.]
\Cref{thm:max-min-iso-cut} follows directly by \Cref{alg:max-min-iso-cut}, \Cref{lem:max-iso-mincut-correctness}, and \Cref{lem:max-iso-mincut-runtime}.
\end{proof}

\section{Steiner Cactus Construction}
\label{sec:steiner-cactus}

In this section, we apply the maximal isolating mincut algorithm from \Cref{sec:max isocut} to construct a a \emph{Steiner cactus representation} that succinctly represents all $\T$-mincuts on a given graph $G$ with terminal set $\T$ using $O(|\T|)$ space.

\begin{definition}[Steiner Cactus, see also~\cite{DV94,CX17}]
\label{def:steiner-cactus}
Given a graph $G$ and a terminal set $\T$, a \emph{$\T$-Steiner cactus} $(H, \phi)$ is a weighted cactus graph $H$ with a mapping $\phi:\T\to V(H)$ such that 
(1)
edges in a cycle have weights $\lambda_G(\T)/2$ and edges not in a cycle have weights $\lambda_G(\T)$,
(2)
an $A$-mincut of $\T$ is a $\T$-Steiner mincut if and only if a global mincut on $H$ separates $\phi(A)$ and $\phi(\T\setminus A)$.
\end{definition}
We say that a node $v\in V(H)$ in a cactus is \emph{non-empty} if there exists a terminal $t$ where $\phi(t)=v$. There may exist \emph{empty} nodes in $H$.
Notice that cactus and Steiner cactus representations of a graph may not be unique.\footnote{See the work of Nagamochi and Kameda~\cite{nagamochi1994canonical}
for canonical cactus representations of a graph.}
Our Steiner cactus algorithm is now summarized below as \Cref{thm:steiner-cactus-main}.

\begin{theorem}\label{thm:steiner-cactus-main}
Let $G$ be a graph with $n$ vertices and $m$ edges.
Let $\T$ be a set of terminals.
There exists a randomized Monte Carlo algorithm such that,
with probability $1-8n^{-10}$, the algorithm correctly computes a $\T$-Steiner cactus
in $O((\log^4 n)\cdot \mathrm{MaxFlow}(3n, 4m+8n\log |\T|))$ time.
\end{theorem}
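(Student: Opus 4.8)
The plan is to build the $\T$-Steiner cactus by a divide-and-conquer recursion on the terminal set, where at each node of the recursion we use the maximal isolating mincut machinery of \Cref{sec:max isocut} to \emph{split} the problem into subinstances that are each a constant factor smaller --- plus at most one subinstance on which we can certify that no further recursion is needed (the ``already a star/cycle'' case). First I would set up randomness: sample a random terminal subset $\T' \subseteq \T$ (of size $\Theta(\log n)$, or a random $O(\log n)$-fraction, chosen so that a single $\T$-Steiner mincut is ``detected'' by $\T'$ --- i.e.\ both sides of the mincut contain a sampled terminal --- with high probability). This is where the $8n^{-10}$ failure probability and one of the $\log n$ factors come from; a union bound over the $O(n)$ distinct $\T$-Steiner mincut sides gives the bound. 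Then I would invoke \Cref{thm:max-min-iso-cut} to compute the maximal $v$-isolating mincuts $\{X_v\}_{v\in\T'}$ with respect to $\T'$ in $O(\log|\T|\cdot\MaxFlow(3n,4m))$ time; by \Cref{lem:total-size-max-min-iso-cut} their total size is $O(n)$, so we can afford to store and manipulate all of them.

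Next comes the structural heart of the argument: interpret the collection $\{X_v\}$ of maximal isolating mincuts, together with the $\T$-Steiner mincut value $\lambda_G(\T)$ (computed once), as imposing a ``laminar-ish'' skeleton. By \Cref{lem:three-crossing-isolating-mincuts} any vertex lies in at most two of the $X_v$, so the $X_v$'s that are \emph{not} $\T$-Steiner mincuts, and the ones that are, organize into a chain/cycle structure locally --- this is exactly what lets us recognize a cycle in the cactus (overlapping maximal isolating mincuts sharing the ``arc'' structure) versus a cut node (disjoint ones). Concretely, I would split into cases as suggested in the techniques section and \Cref{fig:case2}: either there is a terminal $v$ whose maximal isolating mincut $X_v$ is \emph{balanced} (both $X_v$ and its complement contain a constant fraction of $\T$), in which case we contract $V\setminus X_v$ to one side and $X_v$ to the other, recurse on both halves (each has $\le c|\T|$ terminals for constant $c<1$), and glue the two returned cacti along the image of the contracted vertex; or all maximal isolating mincuts are ``small,'' in which case the union of all the $X_v$ and the global complement already exhibits the cactus structure directly (it is a single cycle or a single hub node with pendant cuts attached) and we output it without recursing. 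Handling the at-most-one big leftover part requires showing that \emph{that} part is already fully resolved --- no $\T$-Steiner mincut crosses strictly inside it that isn't already witnessed --- which follows from the maximality of the isolating cuts plus submodularity (\Cref{lem:steiner-modularity}). The recursion depth is therefore $O(\log n)$; together with the $O(\log|\T|)$ from each maximal-isolating-cut call, the $O(\log n)$ from $\T'$-sampling (we may need to repeat sampling to amplify), and one more $\log$ for bookkeeping/boosting, we get the $\log^4 n$ factor, and the edge count in the \MaxFlow\ argument becomes $4m+8n\log|\T|$ because each recursion level adds $O(n\log|\T|)$ auxiliary edges from the sampled-terminal gadgets and contractions accumulated along a root-to-leaf path.

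The remaining steps are the gluing/merge correctness and the total-work accounting. For the merge: when we split along a balanced $X_v$ and recursively obtain cacti $H_A$ for $G/(V\setminus X_v)$ and $H_B$ for $G/X_v$, I would identify the cactus node in $H_A$ representing the pivot with the node in $H_B$ representing the pivot and argue --- via the exact characterization of which $A$-cuts are $\T$-Steiner mincuts together with \Cref{lem:submod} --- that the identified graph is a cactus graph whose global mincuts are exactly the $\T$-Steiner mincut partitions of $G$; a subtlety is that a $\T$-Steiner mincut may cross the pivot in a way that uses a cycle split between $H_A$ and $H_B$, so the merge must detect whether the pivot node sits on a cycle on either side and, if so, weld the two half-cycles into one cycle (weight $\lambda_G(\T)/2$ edges) rather than leaving a degree-2 artifact. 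For the accounting: the sum of $n'$ over all subproblems at a fixed recursion depth is $O(n)$ by \Cref{lem:three-crossing-isolating-mincuts}-style double-counting (each vertex survives in $O(1)$ subproblems per level), the sum of $m'$ is $O(m+n\log|\T|)$ by the degree bound used in \Cref{lem:max-iso-mincut-runtime}, superadditivity of \MaxFlow\ collapses the per-level sum to a single $\MaxFlow(3n,4m+8n\log|\T|)$ term, and multiplying by $O(\log n)$ levels and the $O(\log^3 n)$ per-level overhead yields the claimed bound. The main obstacle I anticipate is the case analysis and merge step --- precisely proving that the ``one big leftover part'' is always a resolved star/cycle (so the recursion genuinely shrinks by a constant factor and does not degenerate to the $\Omega(n)$-depth behavior of \cite{CX17}), and that welding half-cycles at pivots preserves the cactus mincut-equivalence; both rely delicately on the \emph{maximality} of the isolating cuts, which is exactly the new ingredient over prior work.
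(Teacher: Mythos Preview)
Your high-level architecture matches the paper closely: divide and conquer, case split into ``balanced split found'' versus ``all splits small (hence the big leftover is a star)'', recurse only on the small pieces, and glue sub-cacti at anchor nodes with special handling when both sides are on a cycle. The accounting via pairwise-intersection-only to keep per-level vertex and edge totals linear is also right, as is the observation that cycle-welding at the merge step needs care.

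The genuine gap is your sampling scheme. You propose a \emph{single} random subset $\T'$ of size $\Theta(\log n)$ and then compute maximal $v$-isolating mincuts of $\T'$ for $v\in\T'$. This does not work for either case. In the balanced case, consider a cactus that is a long path: with $\Theta(\log n)$ samples spread along the path, every maximal $v$-isolating mincut of $\T'$ is the segment between $v$'s two neighboring samples, which contains only $\approx |\T|/\log n$ terminals of $\T$ --- so no balanced split is ever revealed, even though balanced $\T$-splits abound. In the unbalanced (centroid) case, you need to discover the maximal $T_i$-mincut for \emph{every} branch $T_i$ of the centroid with $|T_i|\ge 2$; that requires sampling \emph{exactly one} terminal in $T_i$ while hitting at least two other branches, and since the $|T_i|$ can range from $2$ to $|\T|/4$, no single sampling rate achieves this for all branches simultaneously. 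The paper therefore samples at \emph{all} geometric rates $2^{-1},2^{-2},\dots,2^{-\lceil\log|\T|\rceil}$, repeating each $\Theta(\log n)$ times: the sparsest rate (expecting $\approx 3$ samples total) handles the balanced case via a ``one sample on the small side, two on the large side'' event (\Cref{lem:steiner-mincut-case-1,lem:sample-two-vertices}), and the rate $\approx 1/|T_i|$ handles branch $T_i$ in the centroid case (\Cref{lem:steiner-mincut-case-2-part2}). This multi-scale sampling is where the $\log^2 n$ inside the per-level $\log^3 n$ comes from, not from a single boosted sample.

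Two smaller points. First, your merge description is morally right but under-specified: when both anchors sit on cycles, the paper distinguishes the three possible welds (two cycles separated by an empty node, or one big cycle in either of two orientations) by an explicit max-flow \emph{test} on $G$ between $\{x_1,x_2\}$ vs.\ $\{y_1,y_2\}$ and $\{x_1,y_2\}$ vs.\ $\{x_2,y_1\}$; submodularity alone does not tell you which. Second, the $8n\log|\T|$ in the edge budget is not from ``sampled-terminal gadgets'' but simply from the $\le 2n$ new anchor-incident edges created per recursion level, accumulated over $O(\log|\T|)$ levels.
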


\subsection{Divide and Conquer Approach: Prior Works}
\label{sec:divide-and-conquer prior works}

Chekuri and Xu's algorithm~\cite{CX17} finds a linear-sized \emph{hypercactus representation} that represents all (global) mincuts on a hypergraph.
This hypercactus representation degenerates to a cactus representation on a normal graph and also in the Steiner setting.
We now briefly describe their framework in terms of constructing a $\T$-Steiner cactus.

The main idea of Chekuri and Xu's algorithm is to successively find \emph{$\T$-splits} --- $\T$-mincuts that have at least two terminal vertices on both sides.
After obtaining a $\T$-split $(X, V\setminus X)$, a \emph{simple refinement} conceptually \emph{decomposes} the graph $G$ into two graphs $G_1$ and $G_2$, where each of them is obtained from a copy of $G$ with all vertices from one side (either $X$ or $V\setminus X$) are contracted.
Notice that the contracted vertices will be treated as terminals in the decomposed graphs, which we call \emph{anchor vertices}.

\begin{definition}\label{def:t-split}
A \emph{split} (or a \emph{$\T$-split}) of a terminal set $\T$ on a graph $G$ is a $\T$-mincut $(X, V\setminus X)$ such that both $X$ and $V\setminus X$ have at least two terminal vertices, i.e., $|\T \cap X|, |\T \setminus X| \ge 2$.
\end{definition}

\begin{definition}[Simple Refinement and Anchor Vertex]
Fix a graph $G=(V, E)$ and a terminal set $\T$.
We say that
    $\{(G_1, \T_1), (G_2, \T_2)\}$ is a \emph{simple refinement} of $G$ if $G_1$ and $G_2$ are graphs obtained through a $\T$-split $(X, V\setminus X)$ of $G$ and a new \emph{anchor vertex}\footnote{In \cite{CX17} the authors called these vertices \emph{marker vertices}.} $a$ as follows.
    \begin{itemize}[parsep=0pt]
        \item $G_1 := G/ (V\setminus X)$ such that $V\setminus X$ gets contracted to $a$.
        \item $\T_1 := (\T\cap X)\cup \{a\}$.
        \item $G_2 := G/ X$ such that $X$ gets contracted to $a$.
        \item $\T_2 := (\T\cap (V\setminus X))\cup \{a\}$.
    \end{itemize}
\end{definition}

Chekuri and Xu's algorithm maintains a \emph{decomposition} $\G=\{(G_i, \T_i)\}$ (initialized with the input graph $\{(G, \T)\}$),
iteratively finds a $\T_i$-split to any graph $(G_i, \T_i)\in\G$, and replaces the graph $G_i$ with its simple refinements.
Since each simple refinement creates an anchor vertex that appears in both decomposed graphs, at any time, the decomposition $\G$ admits a \emph{decomposition tree}.

\begin{definition}[Decomposition]
Fix a graph $G$ and a terminal set $\T$.
A \emph{decomposition} $\mathcal{G}=\{(G_i, \T_i)\}$ is a collection of graphs and terminal vertices obtained by performing an arbitrary sequence of simple refinements.
\end{definition}

Finally, the algorithm halts when no splits exist in the current decomposition. In this case, the decomposition is called a \emph{prime decomposition}.

\begin{definition}[Prime Decomposition]
Fix a graph $G$ and a terminal set $\T$.
We say that a decomposition $\mathcal{G}=\{(G_i, \T_i)\}$ is \emph{prime} if each graph $G_i$ does not contain a $\T_i$-split.
\end{definition}

In Chekuri and Xu's algorithm, they add a post-processing step turning a prime decomposition into a \emph{canonical decomposition} (see also~\cite{Cheng99, Cunningham83}), where every mincut of $G$ can be found in exactly one graph from the canonical decomposition.
Indeed, 
it is possible for some $\T$-mincuts of $G$ not being preserved anymore in a prime decomposition.
For example, if the algorithm decomposes a graph using a $\T$-split, then all the $\T$-mincuts that cross with that split no longer exist in the simple refinement.
The only guarantee to any decomposition $\G=\{(G_i, \T_i)\}$ is 
that every $\T_i$-mincut in $G_i$ corresponds to some $\T$-mincut in $G$ (by ``expanding'' the anchor vertices with the terminal vertices in $\T$).
Fortunately, all $\T$-mincuts that the algorithm has missed in one divide and conquer step belong to the same cycle on a cactus representation of $\T$.
In \Cref{sec:compute-cactus-from-prime-decomposition-tree} we show that even without the post-processing step,
we are still able to efficiently \emph{glue} the cactus of decomposed graphs (via anchor vertices) such that a cycle on a cactus representation can still be constructed.
Thus, all $\T$-mincuts are preserved.

However, iteratively finding splits and repeatedly invoking simple refinements have a worst case $\Omega(m|\T|)$ runtime, which is too slow.
This worst case occurs when the splits used for simple refinements were imbalanced.
In the rest of the section, we resolve this issue via maximal isolating mincuts, obtaining an algorithm for Steiner cactus in poly-logarithmic max-flow time.

\subsection{Our Divide and Conquer Framework via a Sequence of Splits}
\label{sec:divide and conquer framework}

Our divide and conquer algorithm is based on the idea of Chekuri and Xu~\cite{CX17}, where the goal is to
output a prime decomposition through a series of simple refinements.
However, in each subproblem, instead of seeking one split at a time,
our algorithm uses multiple splits in $G$ and generates a \emph{good} decomposition (see \Cref{def:good-decomposition}) with high probability.
The guarantee of a good decomposition leads to an $O(\log|\T|)$ upper bound to the recursion depth, achieving a poly-logarithmic max-flow runtime.

In the rest of this subsection, we formulate a divide and conquer framework (see \Cref{alg:divide-and-conquer-framework}) for computing a $\T$-Steiner cactus.
The implementation of this framework has to overcome two non-trivial challenges, namely (1) computing a collection of splits that generates a good decomposition (\Cref{lem:compute good splits}) and (2) merging the sub-cactus returned from the subproblems into a $\T$-Steiner cactus (\Cref{lem:merge-cactus}).
We overcome both challenges using our maximal isolating mincut algorithm and describe the details in \Cref{sec:compute good splits collection} and \Cref{sec:compute-cactus-from-prime-decomposition-tree}.
By assuming \Cref{lem:compute good splits,lem:merge-cactus}, we establish a proof to~\Cref{thm:steiner-cactus-main} at the end of this subsection.

\paragraph{Preprocessing.}
To enable the power of the maximal isolating mincut algorithm, we rely on the following two handy properties after preprocessing:
\begin{enumerate}[itemsep=0pt]
\item We may assume that we have already known the value of the Steiner mincut $\lambda_G(\T)$.
\item We may assume that for any two terminal vertices $u$ and $v\in \T$, there exists a $\T$-Steiner mincut that separates $u$ and $v$.\label{assumption:2}
\end{enumerate}
These two assumptions can both be achieved using the isolating cut algorithm from Li and Panigrahi~\cite{LP20}.
The first assumption can be made directly via an almost-linear time algorithm~\cite{CQ21}
that computes a $\T$-mincut and its value.
The second assumption can be made by preprocessing the graph with a ``$\lambda$-connected component algorithm'' implicitly mentioned in Li and Panigrahi~\cite{li2021approximate}.
We summarize the second preprocessing step below in \Cref{lem:preprocessing} and prove them in \Cref{sec:proof-of-preprocessing} for completeness.

\begin{lemma}[Preprocessing~{\cite[run one step of Algorithm 4]{li2021approximate}}]\label{lem:preprocessing}
Given a graph $G$ and a terminal set $\T$, there exists an algorithm such that, with probability $1-n^{-11}$ the algorithm outputs a partition of $\T$ such that $\lambda(u, v) = \lambda_G(\T)$ if and only if $u$ and $v$ belongs to different parts. This algorithm runs in $O(\log ^2 n\cdot\MaxFlow(2n, 2m))$ time.
\end{lemma}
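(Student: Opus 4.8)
The plan is to re-derive the ``$\lambda$-connected-components'' computation that is implicit in Li and Panigrahi~\cite{li2021approximate}; its only non-trivial ingredients are the isolating-cut lemma (\Cref{thm:isocut}) and the submodularity/posi-modularity of the cut function (\Cref{lem:submod}). Write $\lambda:=\lambda_G(\T)$; since the first preprocessing step of the paper already computes a Steiner mincut of $G$ via~\cite{CQ21}, we may take $\lambda$ as known. We always have $\lambda(u,v)\ge\lambda$ for $u,v\in\T$, and by the ultrametric inequality $\lambda(u,w)\ge\min\{\lambda(u,v),\lambda(v,w)\}$ the relation ``$\lambda(u,v)>\lambda$'' is transitive, hence an equivalence relation; so the target output is well-defined and it suffices to decide, for each terminal pair, whether $\lambda(u,v)=\lambda$.

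The algorithm I would run is a randomized divide-and-conquer in the spirit of \Cref{alg:max-min-iso-cut} and of~\cite{li2021approximate}. A subproblem is a graph $G'$ obtained from $G$ by contractions, together with a terminal set $R\subseteq V(G')$ some of whose elements are \emph{anchors} standing for contracted pieces. On $(G',R)$ we draw a random subset $D\subseteq R$, invoke \Cref{thm:isocut} to obtain the pairwise-disjoint minimal $d$-isolating mincuts $\{U_d\}_{d\in D}$, record every cut $(U_d,V(G')\setminus U_d)$ of value $\lambda$ (together with any value-$\lambda$ cut arising from the internal bit-indexed cuts used by \Cref{thm:isocut}), and recurse on the contracted pieces $\bigl(G'/(V(G')\setminus U_d),(R\cap U_d)\cup\{\text{anchor}\}\bigr)$ and on the leftover region $V(G')\setminus\bigcup_{d}U_d$ with its induced anchors; once $|R|=O(1)$ we finish by brute-force max-flows. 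We then output the partition of $\T$ in which $u$ and $v$ lie in the same part iff no recorded value-$\lambda$ cut separates them.

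Correctness proceeds by induction along the recursion, using the two modularity statements already established, \Cref{lem:steiner-modularity,lem:disjoint-posi-modularity}. In one direction, every recorded cut has value $\lambda$, so it can only separate a pair with $\lambda(u,v)=\lambda$. In the other direction, if $\lambda(u,v)=\lambda$ then fix a \emph{minimal} value-$\lambda$ cut $(P,V\setminus P)$ with $u\in P$, $v\notin P$; by submodularity no minimal isolating mincut $U_d$ crosses $(P,V\setminus P)$ (a crossing would make $U_d\cap P$ a strictly smaller isolating mincut of the same value), so $(P,V\setminus P)$ is not destroyed by the contractions and descends into exactly one subproblem, where the random choice of $D$ eventually exposes a value-$\lambda$ cut separating $u$ from $v$. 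For the running time, the subproblems at a fixed recursion depth have pairwise almost-disjoint vertex sets, so, exactly as in the proof of \Cref{lem:max-iso-mincut-runtime}, their total size is $O(n)$ vertices and $O(m)$ edges; by superadditivity of $\MaxFlow$ a single depth costs $O(\log|\T|\cdot\MaxFlow(2n,2m))$, and summing over the $O(\log n)$ depths gives $O(\log^2 n\cdot\MaxFlow(2n,2m))$. A Chernoff bound together with a union bound over the $O(\poly(n))$ subproblems then upgrades the success probability to $1-n^{-11}$.

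I expect the real obstacle to be the probabilistic analysis of the divide step: one must choose the random set $D$ (and reason about the structure of value-$\lambda$ cuts via \Cref{lem:submod}) so that the recursion has depth $O(\log|\T|)$ \emph{and} so that every terminal is eventually isolated, i.e.\ every value-$\lambda$ cut needed to distinguish two classes is produced by some call. This is precisely the content of~\cite{li2021approximate}; the point of rehearsing it here is that the submodularity and posi-modularity lemmas of \Cref{sec:preliminaries} are exactly what is needed to carry that analysis over to the Steiner (non-unit-terminal) setting, after which the remaining bookkeeping --- anchors, and propagating the per-leaf verdicts back to $\T$ --- is routine.
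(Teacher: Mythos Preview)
Your recursive divide-and-conquer is a genuinely different route from the paper's, and it carries a real gap. The paper's proof (in \Cref{sec:proof-of-preprocessing}, stated for hypergraphs) is \emph{not} a recursion at all: it is a flat multi-scale sampling. For each $i=0,1,\ldots,\lceil\log|\T|\rceil$ it samples $\T_i\subseteq\T$ at rate $2^{-i}$, runs the Isolating Cut Lemma once on $(G,\T_i)$, and retains those minimal isolating cuts whose value equals $\lambda$; this whole sweep is repeated $\Theta(\log n)$ times. The point is that for every terminal $v$ whose minimal $\T$-Steiner mincut $X_v$ has $|X_v\cap\T|\le|\T|/2$, some scale hits the right density so that, with constant probability, $v$ is the unique sampled terminal in $X_v$ and at least one terminal is sampled outside---then the minimal isolating mincut at that scale recovers $X_v$ exactly. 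Terminals that never get such a cut form a single residual class $\T_{\mathrm{large}}$, and the recorded cuts together with $\T_{\mathrm{large}}$ are the partition. No contractions, no anchors, no subproblem bookkeeping, and the $O(\log^2 n)$ bound is immediate: $O(\log|\T|)$ scales times $\Theta(\log n)$ repetitions, each costing one Isolating-Cut call.

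Your scheme, by contrast, leaves the distribution of $D$ unspecified, and without that you cannot bound the recursion depth: if $D$ is, say, a uniform random half of $R$, then all minimal isolating mincuts $U_d$ may well be singletons (this is exactly the ``clique vs.\ cycle'' obstruction from \Cref{sec:techniques}), so the leftover region $V(G')\setminus\bigcup_d U_d$ still contains essentially all of $R$ and you make no progress. Your correctness sketch also overreaches: the claim that a minimal $u$-$v$ mincut $(P,V\setminus P)$ cannot cross any minimal $d$-isolating mincut $U_d$ is false in general---submodularity gives $\C(U_d\cap P)\le\C(U_d)$ only when $U_d\cap P$ is still a $d$-isolating cut of $D$, which fails whenever $d\notin P$---so $(P,V\setminus P)$ need not ``descend intact'' into a single subproblem. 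Finally, the appeal to \cite{li2021approximate} is misplaced: their \textsc{CutThresholdStep} is precisely the flat multi-scale procedure the paper uses, not a recursion, so citing it does not close the gaps in your recursive analysis.
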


\paragraph{Good Decomposition.}
Let us now consider decomposing the graph $G$ using one or more splits at a time in Chekuri and Xu's algorithm.
Suppose we have an ideal oracle that always finds a \emph{balanced $\T'$-split} where both sides contain at least $\frac14|\T'|$ terminal vertices in a graph $G'$ with terminal vertex set $\T'$.
Then, performing one simple refinement in each subproblem $(G', \T')$ suffices to bound the recursion depth by $O(\log|\T|)$.
Unfortunately, such an exemplary oracle does not always exist.
In an extreme scenario, consider a graph $G$ and a terminal set $\T$ whose Steiner cactus representation could be a star.
Since all $\T$-mincuts are trivial, there is even no split for $\T$.
Fortunately, if there is no further split for $\T$ on $G$, then $\{(G, \T)\}$ itself is already a prime decomposition so this is a base case in our divide and conquer algorithm.
Motivated by this, we define the \emph{good decomposition} that suffices to bound the recursion depth as follows.

\begin{definition}[Good Decomposition]\label{def:good-decomposition}
    Given a graph $G$ and a set of terminal vertices $\T$,
    a decomposition $\mathcal{G}=\{(G_i, \T_i)\}$ of $G$ is said to be \emph{good} with respect to $\T$ if $\mathcal{G}$ has the following property. Let $\T_i$ be the set of terminal vertices in $G_i$.
    For all $i$ except at most one special index $i^*$, $|\T_i| \le \frac34 |\T|+1$, and there exists a Steiner cactus representation of $\T_{i^*}$ in $G_{i^*}$ that is a star.
\end{definition}

\paragraph{Induced Decomposition from a Collection of Disjoint Splits.} Consider a collection of $\T$-splits $\mathcal{S}=\{X_1, X_2, \ldots, X_k\}$ where the presented subsets in $\mathcal{S}$ are \emph{disjoint}.
The disjointness leads to a robust procedure for performing lots of simple refinements to $G$ using the $\T$-splits from $\mathcal{S}$ in any order.
It is straightforward to check that the resulting decomposition is unique up to relabeling the anchor vertices, and we call the result \emph{the decomposition induced by} $\mathcal{S}$ on $G$.

Now, we formally establish sufficient criteria that our maximal isolating mincut algorithm will achieve with high probability.

\begin{definition}[Good Split Collection]\label{def:good-splits}
    Given a graph $G$ and a set of terminals $\T$, we say that a collection of $\T$-splits $\mathcal{S}=\{X_i\}$ is a \emph{good split collection} if 
    (1) for any $i\neq j$ we have 
    $X_i\cap X_j=\emptyset$, and
    (2) the decomposition induced by $\mathcal{S}$ is a good decomposition.
\end{definition}

With the above \Cref{def:good-splits}, we are able to summarize and highlight the first step
in the divide and conquer framework in \Cref{lem:compute good splits} (proved in \Cref{sec:compute good splits collection}).
The entire divide and conquer algorithm is presented in \Cref{alg:divide-and-conquer-framework}.

\begin{lemma}
\label{lem:compute good splits}
    Given a graph $G = (V, E)$ and a set of terminals $\T$, there exists a randomized Monte Carlo algorithm such that, with probability $1-n^{-11}$, the algorithm returns a good split collection $\mathcal{S}$ in $O((\log^3 n)\cdot \MaxFlow(3n, 4m))$ time.
\end{lemma}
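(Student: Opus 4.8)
The plan is to use the maximal isolating mincut algorithm (\Cref{thm:max-min-iso-cut}) on a randomly sampled terminal set to extract a good split collection. First I would sample each terminal in $\T$ independently with probability $1/2$ to form a subset $\T_0$, and also keep $\T_1 = \T \setminus \T_0$; I repeat this $O(\log n)$ times so that w.h.p.\ the sampling is ``successful'' in the sense below. For a fixed sample, run the maximal isolating mincut algorithm of \Cref{thm:max-min-iso-cut} with respect to $\T$ to obtain, for each $v \in \T_0$, the maximal $v$-isolating mincut $X_v$ of $\T$. Among these, throw away any $X_v$ that is not a $\T$-split (i.e.\ that has at most one terminal outside, which by Assumption~\ref{assumption:2} can only be a trivial singleton-type cut and carries no mincut information), and throw away all $X_v$ whose value exceeds $\lambda_G(\T)$ --- we only want $\T$-Steiner mincuts, which is why the preprocessing to know $\lambda_G(\T)$ matters. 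Let $\mathcal{S}_0$ be the surviving sets. By the Pairwise Intersection Only Lemma (\Cref{lem:three-crossing-isolating-mincuts}) applied to singleton terminal sets, no vertex lies in three of these maximal isolating mincuts; but for disjointness I need more, so the next step is to prune $\mathcal{S}_0$ to a maximal sub-collection of pairwise disjoint sets, which by \Cref{lem:disjoint-posi-modularity} (replacing $X_v$ with $X_v \setminus X_u$ preserves being a $v$-isolating mincut) can in fact be done greedily without losing coverage: process the sets in some order and replace each by its difference with the union of previously kept sets, keeping it if it still contains a terminal on each side. This yields a collection $\mathcal{S}$ of pairwise-disjoint $\T$-splits, satisfying condition (1).

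The core of the argument is condition (2): the decomposition induced by $\mathcal{S}$ is good, meaning all parts except one have at most $\tfrac34|\T| + 1$ terminals, and the exceptional part has a star Steiner cactus. The key structural input is that maximal isolating mincuts are ``balanced'' --- this is exactly the selling point of the maximal (vs.\ minimal) variant emphasized in the introduction. I would argue as follows. Consider the true Steiner cactus $(H,\phi)$ of $\T$ in $G$ (which exists; we just don't have it). Each $\T$-Steiner mincut corresponds to a cut in $H$. For a terminal $v$, the maximal $v$-isolating mincut $X_v$ corresponds, on the cactus side, to the ``largest'' cut isolating $\phi(v)$ --- essentially cutting off the cactus-subtree hanging off $\phi(v)$ at the node/cycle adjacent to it on the side away from the bulk of terminals. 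The crucial claim is: if the cactus $H$ is not a star, then there is some tree edge or cycle of $H$ whose removal splits the terminals into parts each of size $\le \tfrac34 |\T|$ (a weighted-centroid argument on the cactus tree-structure), and with constant probability over the sampling of $\T_0$ this balanced split is ``witnessed'' by the maximal isolating mincut of a sampled terminal sitting in the appropriate part --- because that terminal's maximal isolating mincut will, by maximality, swallow everything up to that balanced cut. Doing the sampling $O(\log n)$ times drives the success probability to $1 - n^{-11}$. The one part where this fails is the part on which the residual cactus is a star: there the terminals all attach directly to a central node, no nontrivial split exists, so it legitimately becomes the special index $i^*$, and we simply detect this (no split found inside) and stop recursing there. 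After pruning to disjoint splits the balance can only improve, and one checks the ``$+1$'' slack in \Cref{def:good-decomposition} absorbs the anchor vertices created by the refinements.

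The runtime is dominated by $O(\log n)$ invocations of \Cref{thm:max-min-iso-cut}, each costing $O(\log|\T| \cdot \mathrm{MaxFlow}(3n,4m))$, plus $O(\log n)$ rounds of a single max-flow to compute $\lambda_G(\T)$ and the preprocessing of \Cref{lem:preprocessing}, giving the stated $O((\log^3 n)\cdot \mathrm{MaxFlow}(3n,4m))$. The pruning and verification steps are near-linear and absorbed. I expect the main obstacle to be the balance argument in condition (2): precisely relating the maximal $v$-isolating mincut of the \emph{original} graph to a centroid-balanced cut in the (hypothetical) cactus, and showing that a random sampled terminal with constant probability produces a maximal isolating mincut that is a balanced split rather than an imbalanced one. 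The subtlety is that $X_v$ is only guaranteed to be the maximal cut isolating $v$ alone --- its balance as a partition of \emph{all} of $\T$ is not immediate and must be extracted from the cactus structure together with the randomness of $\T_0$. The second, more bookkeeping-flavored obstacle is verifying that after the greedy disjointification via \Cref{lem:disjoint-posi-modularity}, the modified sets still induce the same (or an at-least-as-good) decomposition, since replacing $X_v$ by a subset changes which cuts survive.
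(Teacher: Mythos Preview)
There is a genuine gap at the heart of your argument. You compute, for each $v \in \T_0$, the maximal $v$-isolating mincut $X_v$ \emph{of $\T$}. By definition such an $X_v$ separates $\{v\}$ from $\T \setminus \{v\}$, so $X_v \cap \T = \{v\}$ always; no $X_v$ can ever be a $\T$-split (which needs at least two terminals on each side), and your collection $\mathcal{S}_0$ is empty regardless of how $\T_0$ was sampled. The random subset $\T_0$ plays no role in determining $X_v$ under your description --- you merely discard some of the outputs. Your later intuition that ``by maximality, $X_v$ swallows everything up to the balanced cut'' cannot hold: maximality is relative to the terminal set you isolate against, and against all of $\T$ the cut stops at the very first mincut separating $v$ from its neighbors in the cactus.

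The paper's algorithm (\Cref{alg:compute good splits}) fixes this by computing maximal isolating mincuts \emph{with respect to the sampled set $\T_i$}, not $\T$: the maximal $v$-isolating mincut of $\T_i$ must avoid only the other \emph{sampled} terminals, so it can contain many terminals of $\T$ and hence can be a $\T$-split. Even with this fix, a single sampling rate of $1/2$ would not suffice. The paper samples at all geometric rates $2^{-1},\ldots,2^{-\lceil\log|\T|\rceil}$ and does a two-case analysis on the unknown irredundant cactus $(H,\phi)$. If $H$ admits a balanced edge-cut, the \emph{sparsest} rate is what works: with constant probability exactly three terminals are sampled, one on the small side and two on the large side, and a DFS-ordering argument on $H$ (\Cref{lem:sample-two-vertices}) shows the resulting maximal isolating mincut is balanced. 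If $H$ has no balanced edge-cut, there is a unique centroid node $v$, the components $T_1,\ldots,T_k$ of $H-v$ each hold $\le\tfrac14|\T|$ terminals, and to recover the maximal $T_i$-mincut one must sample exactly one terminal from $T_i$ and at least one from each of two other components --- this requires rate roughly $1/|T_i|$, which varies with $i$. A rate-$1/2$ sample would almost surely hit any large $T_i$ many times and never expose the full $X'_i$. Your post-processing (discard non-mincuts, greedy disjointification via posi-modularity) does mirror Lines~\ref{line7} and \ref{line13}--\ref{line15} of the paper's algorithm, but the multi-scale sampling and the choice of terminal set for the isolating mincuts are the load-bearing ideas, and those are missing from the proposal.
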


\begin{algorithm}[h]
\caption{A divide and conquer framework that computes a $\T$-Steiner cactus.}
\label{alg:divide-and-conquer-framework}
\begin{algorithmic}[1]
\Require{a graph $G=(V, E)$, a terminal set $\T$.}
\Ensure{a $\T$-Steiner cactus of $\G$.}
\Procedure{ComputeSteinerCactus}{$G, \T$}
\If{$|\T| \le 3$}\Comment{Either a triangle or a path.}
\State \Return \Call{TrivialCactus}{$G, \T$}.\label{line:merge-cuts-1}
\Else
\State Obtain $\mathcal{S}$, a good collection  of $\T$-splits on $G$.\Comment{See~\Cref{lem:compute good splits} and \Cref{sec:compute good splits collection}.}\label{line:compute-good-collection}
\If{$\mathcal{S}=\emptyset$} 
\State \Return \Call{StarCactus}{$G, \T$}.\footnotemark \Comment{There will be two types of stars, see \Cref{sec:compute-cactus-from-prime-decomposition-tree}.}\label{line:merge-cuts-2} 
\EndIf
\State Compute the decomposition $\mathcal{G}=\{(G_i, \T_i)\}$ induced by $\mathcal{S}$ over $G$.
\label{line:compute-decomposition}
\State
Obtain $H_i\gets \textsc{ComputeSteinerCactus}(G_i, \T_i)$ for all $i$.
\State
\Return \Call{MergeCactus}{$G, \T, \{H_i\}$}. \Comment{See \Cref{lem:merge-cactus} and~\Cref{sec:compute-cactus-from-prime-decomposition-tree}.}\label{line:merge-cuts-3}
\label{line:return-decomposition}
\EndIf
\EndProcedure
\end{algorithmic}
\end{algorithm}
\footnotetext{In the hypergraph setting, we replace this procedure with \textsc{StarOrBrittleCactus}, see~\Cref{sec:returning-a-correct-cactus-hyergraph}.}

\paragraph{Merging Cactus from Subproblems.}
The last piece for accomplishing the divide and conquer algorithm is to merge the cactus returned from each subproblem.
On the bright side, with the help of anchor vertices, we do have the proximity of how two cactus should be combined.
However, the merging procedure is a bit subtle as
we have to make sure that every $\T$-Steiner mincut 
is preserved in the combined cactus.
We summarize the correctness and the runtime guarantee here in \Cref{lem:merge-cactus} and establish the details in \Cref{sec:compute-cactus-from-prime-decomposition-tree}.

\begin{lemma}\label{lem:merge-cactus}
Fix a subproblem $(G=(V, E), \T)$ in \Cref{alg:divide-and-conquer-framework}.
Assume all splits generated from the subproblems $\{(G_i, \T_i)\}$ derived from $(G, \T)$ are good, and each subproblem returns a correct $\T_i$-Steiner cactus of $G_i$.
Then, 
the procedures \textsc{TrivialCactus}, \textsc{StarCactus}, and \textsc{MergeCactus} returns a $\T$-Steiner cactus of $G$  in $O(\log|\T|\cdot \MaxFlow(2n, 2m))$ time.
\end{lemma}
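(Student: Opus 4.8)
I would prove the lemma by going through the three branches of \Cref{alg:divide-and-conquer-framework} one at a time: the two base cases produced by \textsc{TrivialCactus} (when $|\T|\le 3$) and \textsc{StarCactus} (when the good split collection is empty), and the recursive combine step \textsc{MergeCactus}. In every branch the correctness claim is the ``iff'' of \Cref{def:steiner-cactus} --- a bipartition of $\T$ is induced by a $\T$-Steiner mincut of $G$ iff it is induced by a global mincut of the returned cactus --- and the running-time claim follows because each procedure makes only $O(\log|\T|)$ max-flow calls on contractions of $G$ (of size at most $(2n,2m)$) plus near-linear combinatorial bookkeeping, which is absorbed since $\MaxFlow(x,y)=\Omega(x+y)$ and $\MaxFlow$ is superadditive. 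The base cases are short; the substance is \textsc{MergeCactus}.

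\textbf{Base cases.} If $|\T|\le 3$, a $\T$-Steiner mincut induces one of at most three bipartitions of $\T$, so up to isomorphism the target cactus is an edge, a path on three nodes, or a triangle (with the weights dictated by \Cref{def:steiner-cactus}); \textsc{TrivialCactus} picks the right one by testing, with $O(1)$ max flows, which singleton cuts $(\{t\},\T\setminus\{t\})$ attain $\lambda_G(\T)$, and correctness is checked by inspection of these finitely many shapes. If the good split collection $\mathcal{S}$ is empty then, by \Cref{lem:compute good splits} and \Cref{def:good-decomposition}, the trivial decomposition $\{(G,\T)\}$ is good; since it has no ``special petal'', $(G,\T)$ admits a star-shaped Steiner cactus, equivalently $G$ has no $\T$-split and every $\T$-Steiner mincut separates a single terminal from the rest. \textsc{StarCactus} computes, with $O(\log|\T|)$ max flows, the (nonempty) set $L\subseteq\T$ of terminals whose singleton cut attains $\lambda_G(\T)$, and returns the star whose leaves are $\{\phi(t):t\in L\}$ with the remaining terminals placed at the center (the two output ``types'' according to whether the center is empty); its global mincuts are exactly the singleton bipartitions of its leaves, which is exactly the set of Steiner-mincut bipartitions of $G$.

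\textbf{The combine step.} Because $\mathcal{S}=\{X_1,\dots,X_k\}$ is a \emph{good} split collection its members are pairwise disjoint, so the induced decomposition is a \emph{star of pieces}: a core $G_0=G/X_1/\cdots/X_k$ carrying anchors $a_1,\dots,a_k$, and for each $i$ a petal $G_i=G/(V\setminus X_i)$, where the single vertex $a_i$ is shared by $G_0$ and $G_i$. I would first record the standard facts that $\lambda_{G_i}(\T_i)=\lambda_G(\T)$ for every piece, and that, by nesting submodularity (\Cref{lem:steiner-modularity}), a $\T$-Steiner mincut of $G$ that crosses none of the $X_j$ lives inside exactly one piece (the overlap between pieces being precisely the ``anchor cuts'' $X_i$ themselves), while every $\T_i$-Steiner mincut of a piece expands to a $\T$-Steiner mincut of $G$; hence, since each $H_i$ is a correct Steiner cactus of $G_i$ by the inductive hypothesis, the non-crossing Steiner mincuts of $G$ are exactly the bipartitions realized inside the $H_i$. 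The operation \textsc{MergeCactus} performs is $k$ \emph{independent} anchor-gluings: as above, $\phi_0(a_i)$ and $\phi_i(a_i)$ are both singleton-mincut nodes of their cacti --- a leaf incident to one tree edge, or a degree-two node on a cycle --- and the gluing deletes both anchor nodes and reconnects their dangling stubs so that a tree edge splices onto a tree edge, or a cycle-arc of $H_0$ splices together with the complementary cycle-arc of $H_i$ into a single cycle through $\phi(a_i)$. These gluings do not interfere because the disjointness of the $X_i$ forces the nodes $\phi_0(a_i)$ to be pairwise distinct. Tracking every global mincut of $H_0$ and of each $H_i$ through the gluings shows that the result $H$ is a cactus graph (edges in cycles get weight $\lambda_G(\T)/2$, others $\lambda_G(\T)$), that its mincuts induce exactly the non-crossing Steiner mincuts of $G$ via the $H_i$'s, and that each new cycle through $\phi(a_i)$ restores the $\T$-Steiner mincuts of $G$ that cross $X_i$ --- precisely the cuts lost when decomposing by $X_i$ --- while remaining consistent with the cuts already present in $H_0$ and $H_i$, so no Steiner mincut is missed.

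\textbf{Main obstacle and loose ends.} The crux is the crossing case: one must prove, from the submodular/posi-modular structure of crossing Steiner mincuts (the mechanism behind \Cref{lem:three-crossing-isolating-mincuts}, classical in \cite{DV94,CX17}), that the $\T$-Steiner mincuts crossing a split $X_i$ form exactly one cactus cycle through $\phi(a_i)$, that this cycle is faithfully recorded --- up to a single binary orientation --- by the cyclic neighborhoods of $\phi_0(a_i)$ in $H_0$ and of $\phi_i(a_i)$ in $H_i$, and that the correct reattachment can be recovered with a bounded number of cut queries; one must also check that the cycles created at distinct anchors are disjoint (no $\T$-Steiner mincut of $G$ crosses two different $X_i$ for a good collection). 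Beyond that, the routine loose ends are the degeneracies of short cycles (length two behaving like a tree edge) and of petals whose own cactus $H_i$ is itself a star (so the spliced ``arc'' degenerates to a single node or edge), plus a uniform treatment of the two \textsc{StarCactus} output types under splicing. For the running time, \textsc{MergeCactus} uses $O(\log|\T|)$ max-flow calls on contractions of $G$ to obtain the orientations and $O(n+m)$ time of cactus surgery on the $H_i$ (whose sizes sum to $O(n)$), so the whole combine step, like \textsc{TrivialCactus} and \textsc{StarCactus}, runs in $O(\log|\T|\cdot\MaxFlow(2n,2m))$ time, which completes the proof.
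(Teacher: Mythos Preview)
Your high-level plan (handle the two base cases, then glue sub-cacti at anchor vertices that are either leaves or degree-two cycle nodes) matches the paper, and your treatment of \textsc{TrivialCactus} and \textsc{StarCactus} is essentially right. The gap is in \textsc{MergeCactus}.

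The paper does \emph{not} treat the $k$ anchor-gluings uniformly. It splits into two cases dictated by how \Cref{alg:compute good splits} returned $\mathcal{S}$. In the balanced-split case $|\mathcal{S}|=1$, so there is a single anchor and a single gluing; the Cycle--Cycle subcase has \emph{three} possible outcomes (the two cycles may stay separate with an empty node between them, or merge in one of two orientations), distinguished with two max-flow tests. In the no-balanced-split case $|\mathcal{S}|=\ell\ge 1$, the key structural fact (\Cref{lem:steiner-mincut-case-2-part3}) is that the core piece $G_{\ell+1}$ has a \emph{star} Steiner cactus with $\ge 4$ leaves; hence every $\phi_0(a_i)$ is a leaf, every gluing is Leaf--Leaf or Leaf--Cycle, and \emph{no} max-flow test is ever needed. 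Your proposal misses both points: you describe only the ``merge into one cycle'' outcome (``up to a single binary orientation''), omitting the separate-cycles outcome, and you never invoke the star structure of the core, instead positing ``a bounded number of cut queries'' per anchor. With $k$ anchors this would be $\Theta(k)$ max-flow calls, which can be $\Theta(|\T|)$ in the no-balanced-split case; your asserted $O(\log|\T|)$ bound for the orientation queries is unjustified and would not follow from your argument. In the paper, \textsc{MergeCactus} itself costs only $O(\MaxFlow(n,m))$ time (at most one Cycle--Cycle test), and the $O(\log|\T|\cdot\MaxFlow(2n,2m))$ in the lemma is dominated by \textsc{StarCactus}.

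A smaller omission: the paper maintains an invariant that each returned cactus is irredundant and contains no hollow $3$-star, which is what makes the case analysis at an anchor finite and what guarantees \Cref{lem:anchor-vertex-always-at-boundary}; you implicitly rely on this but never state or preserve it across gluings.
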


Now, with \Cref{lem:compute good splits,lem:merge-cactus}, we are able to prove the main \Cref{thm:steiner-cactus-main} by 
completing the (relatively trivial) implementation details of Line~\ref{line:compute-decomposition} and
analyzing its runtime.

\begin{proof}[Proof of \Cref{thm:steiner-cactus-main}]
Let $\mathcal{S}=\{X_1, X_2, \ldots, X_\ell\}$ be a good split collection returned from Line~\ref{line:compute-good-collection}.
To obtain a decomposition $\G$ induced by $\mathcal{S}$ (Line~\ref{line:compute-decomposition}),
the algorithm first computes the induced subgraphs $G[X_i]$ for all $i$ in linear time.
Then, the algorithm simulates the simple refinement of $X_i$ by creating an anchor vertex $a_i$ for each split $X_i$, and
for each edge $(u, v)$ that across the split $u\in X_i$ but $v\notin X_i$, the algorithm either adds a new edge from $(u, a_i)$, or adds the weight to an existing edge $(u, a_i)$.
Finally, the algorithm duplicates the graph $G$ and contract each subset $X_i$ into a single anchor vertex $a_i$, forming the last decomposed graph $(G_{\ell+1}, \T_{\ell+1})$. The implementation of Line~\ref{line:compute-decomposition} takes linear time $O(|V(G)|+|E(G)|)$ in total.

\paragraph{Runtime.}
Consider the recursion tree of subproblems from \Cref{alg:divide-and-conquer-framework}.
By definition of a good decomposition, we know that the recursion depth satisfies the following recurrence relation: $\mathrm{MaxDepth}(k) = \mathrm{MaxDepth}(\lfloor{\frac34 k}\rfloor + 1) + 1$ whenever $k>3$ and $\mathrm{MaxDepth}(k) = 0$ whenever $k\le 3$. By solving the recurrence relation we obtain $\mathrm{MaxDepth}(k) = O(\log k)$.

Now, it suffices to bound the total subproblem sizes within the same recursion depth.
We first claim that the number of vertices that occur across all subproblems in the same recursion depth is at most $2n$ using a potential method.
For each subproblem $(G, \T)$ we define an invariant potential $\Phi(G, \T) := |V(G)|+|\T|-4$.
Notice that in the case where at least one recursion step is performed we must have $|\T|\ge 3$ and hence $\Phi(G, \T) > 0$. 
If $\{(G_1,\T_1),(G_2,\T_2)\}$ is a simple refinement of $G$, observe that \[\Phi(G_{1},\T_{1})+\Phi(G_{2},\T_{2})=(|V(G_{1})|+|V(G_{2})|)+(|\T_{1}|+|\T_{2}|)-8=(|V(G)|+2)+(|\T|+2)-8=\Phi(G,\T).\] Thus, consider the induced decomposition $\{(G_i, \T_i)\}$ on a good split collection of size $k$, we know that $\Phi(G, \T) = \sum_{i=1}^{k+1} \Phi(G_i, \T_i)$.
Therefore, the sum of all potentials within the same recursion depth can be upper bounded by the root problem's potential.
Since in every subproblem we have $|\T|\ge 3$, we conclude that the total number of vertices across all subproblems at any particular recursion depth (or any collection of subproblems that are not related to each other) is at most $\Phi(G, |\T|) = n+|\T|-4\le 2n$. 

As a consequence, we also deduce that there are at most $4|\T|-9$ subproblems in the recursion tree, by noticing that the recursion tree is a branching tree with at most $2|\T|-4$ leaf subproblems (every leaf subproblem contains at least one anchor vertex and every vertex in $V(G)\setminus \T$ occurs in exactly one leaf subproblem.)

To bound the total number of edges across all subproblems within the same recursion depth, we observe that after computing an induced decomposition from a good split collection, the total number of edges is increased by at most $\sum_{i=1}^\ell |V(G_i)|\le 2n$ (notice that we charge the number of the newly generated edges in the last decomposed graph $(G_{\ell+1}, \T_{\ell+1})$ to the edges across each split). Hence, we know that at any recursion depth there are at most $m+2n\log|\T|$ edges in total.

Finally, we add up the runtime needed per recursion depth.
Fix any recursion depth, for each subproblem $(G_j, \T_j)$, by~\Cref{lem:compute good splits}   the runtime spent in Line~\ref{line:compute-good-collection} is at most 
$O((\log^3 n)\cdot\MaxFlow(3|V(G_j)|, 4|E(G_j)|))$, the runtime spent for Line~\ref{line:compute-decomposition} is linear in the graph size $O(|V(G_j)|+|E(G_j)|)$,
and by \Cref{lem:merge-cactus} merging cactus takes $O(\log|\T_j|\cdot \MaxFlow(|V(G_j)|, |E(G_j)|))$ time. 
Hence, by denoting $k=|\T|$,
the runtime of \Cref{alg:divide-and-conquer-framework} is
\begin{align*}
    & \underbrace{O(\log k)}_{\text{recursion depth}}\ \cdot\ O(\underbrace{(m+2n\log k)}_{\text{Line~\ref{line:compute-decomposition}}} + \underbrace{(\log^3 n)\cdot \MaxFlow(3n, 4(m+2n\log k)))}_{\text{Line~\ref{line:merge-cuts-1}, Line~\ref{line:compute-good-collection}, Line~\ref{line:merge-cuts-2}, and Line~\ref{line:merge-cuts-3}}} \\
    =\ & O(m\log k+n\log^2 k + (\log^3 n)\cdot\MaxFlow(3n, 4m+8n\log k))\\
    =\ & O((\log^4 n)\cdot\MaxFlow(3n, 4m+8n\log k)).
\end{align*}
Finally, note that we spent time for preprocessing the input graph by called \Cref{lem:preprocessing} using  $O(\log ^2 n\cdot\MaxFlow(2n, 2m))$ time, but this is subsumed by the above bound.

\paragraph{Correctness.} By~\Cref{lem:compute good splits}, with probability $1-n^{-11}$ the returned collection is good in Line~\ref{line:compute-good-collection}.
Throughout execution there are at most $4|\T|-9\le 4n$ invocations to~\Cref{lem:compute good splits}.
Hence, with a union bound we know that with probability $1-4n^{-10}$ the collections of splits from all subproblems are good.
Now, by applying the union bound again to \Cref{lem:merge-cactus} we know that the returned cactus is a $\T$-Steiner cactus of $G$ with probability at least $1-4n^{-10}$.
Therefore, with another union bound we know that with probability $1-8n^{-10}$ \Cref{alg:divide-and-conquer-framework} correctly outputs a $\T$-Steiner cactus.
\end{proof}

\subsection{Computing a Good Split Collection}
\label{sec:compute good splits collection}

In this subsection, we aim to prove \Cref{lem:compute good splits}. 
Specifically, we propose \Cref{alg:compute good splits}, and then we show that with probability at least $1-n^{-11}$, a good split collection can be computed in almost-linear time via $O(\log^2n)$ maximal isolating mincut algorithms.

\paragraph{Algorithm Description.} 
\Cref{alg:compute good splits} works as follows.
The algorithm set up $\lceil\log |\T|\rceil$ different sample rates, namely $2^{-1}, 2^{-2}, \ldots, 2^{-\lceil \log|\T|\rceil}$. For each sample rate $2^{-i}$, the algorithm samples each terminal vertex with probability $2^{-i}$ and forms a set $\T_i$.
Then, the algorithm computes the maximal isolating mincuts for the set $\T_i$, and keeps the maximal $v$-isolating mincut of $\T_i$ if the cut is a $\T$-split and its value equals to $\lambda_G(\T)$ (i.e., keeps only the non-trivial $\T$-Steiner mincuts.)
To ensure a high probability result, we repeat the whole sampling procedure another $\Theta(\log n)$ times.
Let $\mathcal{S}$ be the collection of splits that the algorithm has found so far.

Recall from \Cref{def:good-splits} that there are two cases where a split collection is considered to be good:
either we find a balanced split whose both sides have at least $\frac14|\T|$ terminals, or we find a collection of disjoint sets where the contracted graph (obtained by contracting all these sets) does not contain a split anymore.

Once obtaining the collection of splits $\mathcal{S}$, the algorithm checks if there exists any balanced split by simply checking the size of each set in $\mathcal{S}$. If there is such a balanced split, returning the split itself is sufficient (Line~\ref{line:check-balanced-split}).
Otherwise, every set in $\mathcal{S}$ now contains either less than $\frac14|\T|$ or more than $\frac34|\T|$ terminals.
The algorithm discards all sets containing more than $\frac34|\T|$ terminals
and then keeps the maximal subsets among the splits in the collection.

In the case where no balanced split is found, the algorithm does an additional post-processing in Line~\ref{line13}-\ref{line15}. The purpose of this post-processing is to obtain a set of disjoint $\T$-splits that satisfy \Cref{def:good-splits}.
Specifically, in Line~\ref{line13} the algorithm get rid of all subsets with only one terminal. In Line~\ref{line14} only maximal subsets are kept and in Line~\ref{line15} the disjointness of these subsets are enforced.
This completes the description of \Cref{alg:compute good splits}.

\begin{algorithm}[h]
\caption{Computing a Good Split Collection}
\label{alg:compute good splits}
\begin{algorithmic}[1]
\Procedure{GoodSplitCollection}{$G, \T, \lambda := \lambda_G(\T)$}
\State Initialize $\mathcal{S} \gets \emptyset$.

\renewcommand{\algorithmicwhile}{\textbf{repeat}}
\While{the following procedure $\lceil 12\cdot 1024e\cdot\ln n \rceil$ times}
    \For{$i = 1, 2, \ldots, \lceil\log |\T|\rceil$}\label{line:good-split-collection-for-loop}
        \State Sample each terminal vertex with probability $2^{-i}$, denote the set by $\T_i$.
        \State $\mathcal{X}\gets $\Call{MaxIsoCut}{$G, \T_i$}.\label{line:alg-good-split call max iso}
        \State $\mathcal{S} \gets \mathcal{S}\cup \{X\in \mathcal{X} \ |\ \C(X)=\lambda\}$.\label{line7}
    \EndFor
\EndWhile \label{line9}
\renewcommand{\algorithmicwhile}{\textbf{while}}

\If{there exists a balanced split $X\in \mathcal{S}$ where $|\T \cap X|,|\T \setminus X|\ge \frac{1}{4}|\T|$}\label{line10}
    \State\Return $\{X\}$. \Comment{A single balanced split.}\label{line:check-balanced-split}
    \Else
    \State $\mathcal{S}\gets \{X_i\in \mathcal{S} \ |\ 2\le |X_i\cap \T|\le \frac14|\T|\}$. \Comment{Keep only small $\T$-splits.}\label{line:keep-only-small-t-splits}\label{line13}
    \State $\mathcal{S}\gets \{X_i\in \mathcal{S}\ |\ X_i\not\subseteq X_j \text{ for all $i\neq j$}\}$.\Comment{Obtain only the maximal subsets.}\label{line14}
    \State For each $X_i\in\mathcal{S}$, set $X_i=X_i\setminus \cup_{j<i} X_j$.\Comment{Enforce disjointness to the subsets.}\label{line:enfore-disjoint}\label{line15}
    \State\Return $\mathcal{S}$. \label{line:return-maximal-subsets}
    \EndIf
\EndProcedure
\end{algorithmic}
\end{algorithm}

To simplify the correctness proof, we introduce the notion of \emph{irredundant} $\T$-Steiner cactus.

\begin{definition}\label{def:irredundant-cactus}
A $\T$-Steiner cactus $(H, \phi)$ of $G$  is said to be \emph{irredundant}, if for every edge $e$ on $H$, the contraction $H/e$ is no longer a $\T$-Steiner cactus of $G$.
\end{definition}

We remark that the irredundant cactus is somewhat similar to the notion of a normal cactus defined in the work of Nagamochi and Kameda~\cite{nagamochi1994canonical}, except that we still allow tree edges in $H$.

\paragraph{Intuition of Correctness.}
The high probability correctness
comes from case analysis to \emph{any} $\T$-Steiner cactus of $G$.
Let $(H, \phi)$ be any irredundant $\T$-Steiner cactus of $G$.
Define a \emph{balanced edge-cut} on $H$ to be a minimum edge cut of $H$ (either one edge or two edges in a cycle) such that the number of terminals on both sides is between $\frac14|\T|$ and $\frac34|\T|$.
Our analysis depends on whether or not a balanced edge-cut exists on $H$.\footnote{The existence of a balanced edge-cut on $H$ is equivalent to the existence of a balanced split on $G$. However, we believe the proof is easier to see through if we analyze the algorithm's behavior on $H$.}

\paragraph{Case 1: Balanced Cuts Exist.} In the first case where there is a balanced edge-cut, the correctness relies on the sparsest sampling rate $2^{-\lceil\log|\T|\rceil}$.
In particular, we rely on a sampled terminal set $\T'=\{u, v, r\}$ of exactly 3 vertices, where two corresponding nodes $\phi(u)$ and $\phi(v)$ are in the ``larger side'' of the cut and the third corresponding node $\phi(r)$ is in the ``smaller side'' of the cut.
Then, it is possible to prove that with constant probability,  the maximal $r$-isolating cut of $\T'$ contains the right amount of terminal vertices --- between $\frac14|\T|$ and $\frac34|\T|$ (the upper bound comes from~\Cref{lem:sample-two-vertices}).
Thus, a balanced split will be found in $\mathcal{S}$ with high probability because the sampling procedure with the sparsest sampling rate is repeated $O(\log n)$ times.
We formalize the first case here as \Cref{lem:steiner-mincut-case-1}, and give an illustration in 
\Cref{fig:case1}.

\begin{figure}[htbp]
    \centering
    \includegraphics[width=0.9\textwidth]{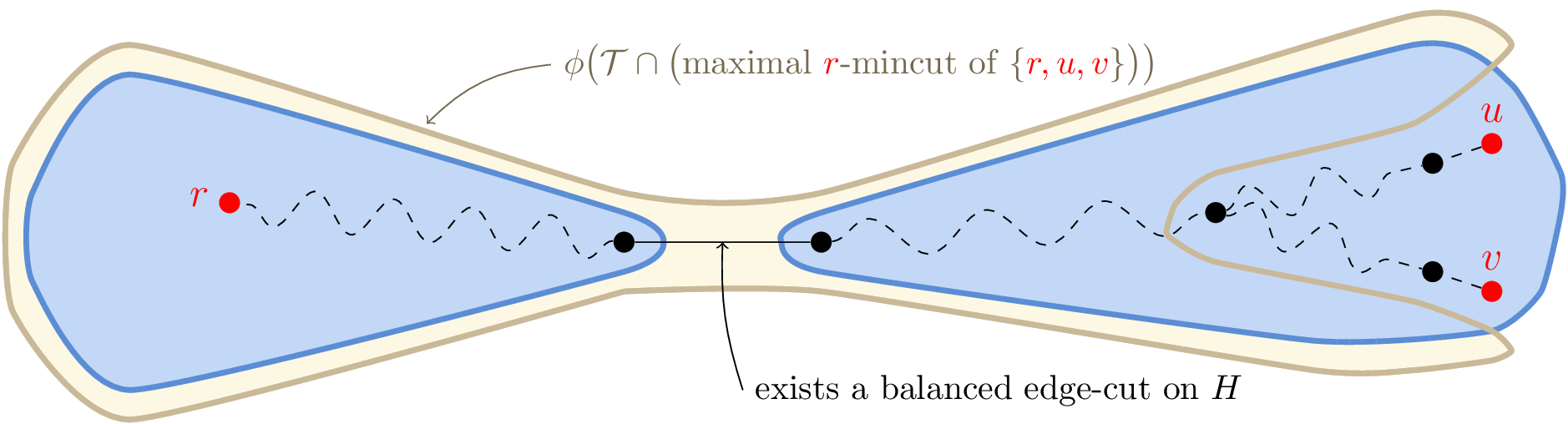}
    \caption{When there is a balanced edge-cut on a $\T$-Steiner cactus $H$ of $G$, a balanced $\T$-split on $G$ will be found with high probability.}
    \label{fig:case1}
\end{figure}

\begin{lemma}\label{lem:steiner-mincut-case-1}
Let $G$ be the graph with terminal set $\T$ and let $(H, \phi)$ be a $\T$-Steiner cactus of $G$.
Suppose there is a balanced edge-cut on $H$.
Then, with probability
$1-n^{-11}$
there is a balanced split in $\mathcal{S}$ returned from \Cref{alg:compute good splits}.
\end{lemma}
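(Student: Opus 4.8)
The plan is to show that the sparsest sampling rate $2^{-\lceil\log|\T|\rceil}$ produces, with constant probability in a single round, a sampled terminal set of exactly three vertices in the right configuration, so that the corresponding maximal isolating mincut is a balanced $\T$-split; then boosting over the $\Theta(\log n)$ rounds gives the claimed high-probability bound. First I would fix an irredundant $\T$-Steiner cactus $(H,\phi)$ and a balanced edge-cut on $H$; this edge-cut corresponds to a partition of $V(H)$, hence of $\T$, into a ``larger'' part $L$ and a ``smaller'' part $S$ with $\frac14|\T|\le|S|\le|L|\le\frac34|\T|$. The key probabilistic event is that when we sample $\T$ at rate $2^{-\lceil\log|\T|\rceil}\in(\frac1{2|\T|},\frac1{|\T|}]$, the sampled set $\T'$ has exactly three elements, two landing in $L$ and one, call it $r$, landing in $S$. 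A standard calculation (sampling $|\T|$ items each at rate $\approx 1/|\T|$, restricted to landing in parts of size $\Theta(|\T|)$) shows this happens with probability $\Omega(1)$: the probability $\T'$ has a prescribed constant size is bounded below by an absolute constant, and conditioned on $|\T'|=3$, the split $2/1$ between $L$ and $S$ also occurs with constant probability since $|L|,|S|=\Theta(|\T|)$.

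Next I would argue that, conditioned on this event, the maximal $r$-isolating mincut $X_r$ of $\T'=\{u,v,r\}$ in $G$ is a balanced $\T$-split of value $\lambda_G(\T)$. Since $r$ and at least one of $u,v$ lie on opposite sides of the $\T$-Steiner mincut induced by the balanced edge-cut, that cut is an $r$-isolating cut of $\T'$ of value exactly $\lambda_G(\T)$, hence $\C(X_r)=\lambda_G(\T)$ and $X_r$ is a $\T$-Steiner mincut. For the terminal count: the smaller side $S$ witnesses that $X_r$ (being the \emph{maximal} such cut) contains at least the $S$-side worth of terminals, so $|\T\cap X_r|\ge|S|\ge\frac14|\T|$; for the upper bound $|\T\cap X_r|\le\frac34|\T|$ I would invoke \Cref{lem:sample-two-vertices} (referenced in the text as the source of the upper bound), which controls how many terminals the maximal isolating mincut of a vertex can swallow when two other sampled terminals $u,v$ are excluded from it — both $u,v\notin X_r$, and with constant probability these two ``block'' more than a quarter of $\T$. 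Thus $X_r$ is a balanced split and it gets added to $\mathcal{S}$ on Line \ref{line7}.

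Finally I would combine the two constant-probability events (right size/configuration of $\T'$, and the upper-bound event from \Cref{lem:sample-two-vertices}) into a single success probability $\Omega(1)$ per round at the sparsest rate, and then note that \Cref{alg:compute good splits} repeats the whole sampling procedure $\lceil 12\cdot1024e\cdot\ln n\rceil$ times; hence the probability that no round succeeds is at most $(1-\Omega(1))^{\Theta(\log n)}\le n^{-11}$, and whenever some round succeeds the check on Line \ref{line10} finds a balanced split and the algorithm returns it on Line \ref{line:check-balanced-split}. The main obstacle I anticipate is the clean quantitative coupling between the combinatorial structure of the balanced edge-cut on $H$ and the maximal isolating mincut computed in $G$ — in particular, pinning down precisely why \Cref{lem:sample-two-vertices} yields the $\frac34|\T|$ upper bound with constant (not just positive) probability, and making sure the two events can be taken to hold simultaneously rather than merely separately; this requires carefully choosing which two nodes of $L$ the pair $u,v$ should be (near the cut, so they block the right terminals) and checking that the sampling conditioning does not destroy the independence needed. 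The rest is routine union-bound bookkeeping.
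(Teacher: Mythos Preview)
Your proposal is correct and follows essentially the same approach as the paper's proof: sample at the sparsest rate, hope for exactly one terminal $r$ on the small side and exactly two $u,v$ on the large side, then argue the maximal $r$-isolating mincut of $\{r,u,v\}$ is balanced using the small side for the lower bound and \Cref{lem:sample-two-vertices} for the upper bound.

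The obstacle you flagged is resolved more simply than you anticipate. There is no need to ``carefully choose which two nodes of $L$ the pair $u,v$ should be'': the paper passes to the contracted graph $G/X$ (where $X$ is the small side), letting $r'$ be the contracted vertex and $\T'=(\T\setminus X)\cup\{r'\}$. Since $X$ is a $\T$-Steiner mincut, $\{r'\}$ is itself a $\T'$-Steiner mincut in $G/X$, so the hypothesis of \Cref{lem:sample-two-vertices} is met directly. Moreover, conditioned on the event $\mathcal{E}$ that exactly one sample lands in $X$ and exactly two land in $V\setminus X$, the pair $u,v$ is uniformly random in $\T\setminus X$, which is precisely the randomness \Cref{lem:sample-two-vertices} assumes. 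Hence the two events compose multiplicatively with no conditioning issues, giving success probability at least $\tfrac{1}{256e}\cdot\tfrac14=\tfrac{1}{1024e}$ per round, and the $\lceil 12\cdot 1024e\cdot\ln n\rceil$ repetitions yield the $1-n^{-11}$ bound.
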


To prove \Cref{lem:steiner-mincut-case-1}, we introduce a helper lemma that shows the benefit of having assumption~\ref{assumption:2}. That is, if we have sampled two vertices $u$ and $v$ in the ``large side'' and sampled a single vertex $r$ in the ``smaller side'' of a balanced split,
then the maximal $r$-isolating cut of $\{u, v, r\}$ has to contain \emph{at most} $\frac{1}{2}|\T|$ terminals.

\begin{lemma}\label{lem:sample-two-vertices}
Let $G$ be the graph with a set $\T$ of terminals that satisfies Assumption~\ref{assumption:2}.
Let $r\in\T$ be a terminal such that any $r$-isolating mincut is a $\T$-mincut.
If we sample terminals $u, v\in \T-\{r\}$ uniformly at random,
then with probability at least $1/4$, 
any $\{u, v\}$-mincut of $\{u, v, r\}$ has at least
$\frac12|\T|$ terminals.
\end{lemma}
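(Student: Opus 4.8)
The plan is to fix an irredundant $\T$-Steiner cactus $(H,\phi)$ and work entirely on $H$, since $\T$-Steiner mincuts of $G$ correspond exactly to global mincuts of $H$. Let $v_r = \phi(r)$ be the node of $H$ containing $r$. Since every $r$-isolating mincut of $\T$ is a $\T$-mincut, the corresponding statement on $H$ is that the minimum cut of $H$ separating $v_r$ from all other nonempty nodes has value $\lambda_G(\T)$; in other words, removing $v_r$ from $H$ (in the cactus sense: cutting the one edge or the one cycle at $v_r$) already realizes a global mincut. This means $v_r$ is a "leaf-like" node of the cactus: either it is incident to a single tree edge, or it sits on a single cycle and is incident to nothing else. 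Call the rest of the cactus (everything reachable after cutting $v_r$ off) the graph $H'$; the nonempty nodes of $H'$ carry all terminals except $r$.

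The key quantity is the following: for the sampled pair $u,v$, the maximal $\{u,v\}$-mincut of $\{u,v,r\}$, viewed on $H$, is the largest global-mincut side of $H$ that contains $\phi(u),\phi(v)$ but not $v_r$. I want to show that with probability at least $1/4$ over the uniform choice of $u,v$ (from $\T\setminus\{r\}$, say with replacement or as an unordered pair — whichever the paper intends), this maximal side contains at least $\frac12|\T|$ terminals; equivalently, its complement contains at most $\frac12|\T|$ terminals. Here is where Assumption \ref{assumption:2} enters: for \emph{every} pair of distinct terminals there is a $\T$-Steiner mincut separating them, so the cactus $H$ has no "redundant" structure — in particular there is a canonical/centroid-type decomposition of the terminals by the global mincuts of $H$. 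Concretely I would locate a centroid edge or centroid cycle of $H$ with respect to the terminal-count measure: a minimum cut of $H$ such that both sides (after assigning the cycle appropriately) carry at most $\frac12|\T|$ terminals, or failing that a node that is "balanced." Because $v_r$ is a leaf-like node, $v_r$ lies strictly on one side of this centroid cut; call $P$ the terminals on the $v_r$-side (so $|P| \le \frac12|\T|$, and $r\in P$) and $Q = \T\setminus\{r\}\setminus P$ the terminals on the far side, with $|Q| \ge \frac12|\T| - 1$ roughly. Then with probability $\ge |Q|/|\T\setminus\{r\}| \cdot (\text{something})$, both $u$ and $v$ land in $Q$; when they do, the centroid cut is a $\{u,v\}$-mincut of $\{u,v,r\}$ separating $u,v$ from $r$, hence the \emph{maximal} such cut is at least as large, so its $\{u,v\}$-side contains all of $Q$, which is at least $\frac12|\T|$ terminals once we track the constants carefully (using that $|\T|$ is large enough, or absorbing the $\pm1$ by choosing the centroid on the correct side). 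Iterating the centroid argument — if no single edge/cycle balances, a cycle through a heavy node can be cut at two places to peel off exactly a $\le \frac12|\T|$ chunk containing $v_r$ — handles the general cactus shape.

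I expect the main obstacle to be two intertwined bookkeeping issues. First, getting the constant exactly $1/4$: a naive "both samples land in the heavy side" bound gives $(|Q|/(|\T|-1))^2$, and $|Q|$ is only guaranteed to be around $\frac12|\T|$, so $(1/2)^2 = 1/4$ is tight and any off-by-one or any case where the centroid side is slightly smaller than half must be absorbed — this is exactly what Assumption \ref{assumption:2} and the freedom to choose \emph{which} centroid side to call "heavy" should buy, but it needs care, and the paper's "at least $\frac12|\T|$" in the conclusion (rather than "more than") suggests they allow equality precisely to make this go through. Second, handling the cycle case of the cactus cleanly: when $v_r$ sits on a cycle $C$, a global mincut either cuts two edges of $C$ or cuts a tree edge elsewhere, and I need to argue that some such cut isolates $r$ on a side of $\le \frac12|\T|$ terminals while putting a $\ge \frac12|\T|$-terminal block on the other side — this requires a prefix-sum / median argument along the cycle, analogous to finding a balanced arc, and is where I'd need to be most careful that the maximal $\{u,v\}$-mincut (not just \emph{some} mincut) inherits the balance. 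Once those are pinned down, the probability computation is immediate from uniform sampling.
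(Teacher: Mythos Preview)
Your centroid approach has a genuine gap: it only controls \emph{one} $\{u,v\}$-mincut (the centroid cut, hence the maximal $\{u,v\}$-mincut), whereas the lemma requires the bound for \emph{every} $\{u,v\}$-mincut of $\{u,v,r\}$ --- equivalently, for the \emph{minimal} one. Having both $u,v$ land in the far side $Q$ of a balanced centroid cut says nothing about how small the minimal $\{u,v\}$-side can be. Concretely, take $H$ to be a path $r\!-\!v_1\!-\!\cdots\!-\!v_N$ (all tree edges); every terminal is in a distinct node, so Assumption~\ref{assumption:2} holds. The centroid edge splits off $Q=\{v_{N/2+1},\dots,v_N\}$. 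If the sampler picks $u=v_{N-1}$ and $v=v_N$, both in $Q$, the maximal $r$-mincut of $\{u,v,r\}$ is $\{r,v_1,\dots,v_{N-2}\}$, so the minimal $\{u,v\}$-mincut has only two terminals. Thus ``$u,v\in Q$'' is neither sufficient nor the right event to track, and no amount of off-by-one bookkeeping repairs this.

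The paper's proof uses a different structural idea that you are missing: it linearizes the terminals by a specialized DFS of the cactus starting at $\phi(r)$ (with the rule that upon entering a cycle one first explores all pendant subtrees before continuing along the cycle). The key property of this ordering is that \emph{every} global mincut of $H$ with $\phi(r)$ on one side has its non-$r$ side equal to a contiguous interval of the DFS order. Consequently, for $u=v_i$ and $v=v_j$ with $i<j$, \emph{every} $\{u,v\}$-mincut of $\{u,v,r\}$ must contain the entire block $\{v_i,\dots,v_j\}$, hence at least $j-i+1$ terminals. The probability that a uniformly random pair has $j-i\ge\tfrac12|\T|$ is then at least $1/4$ by a direct count. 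This interval property is exactly what controls the minimal $\{u,v\}$-mincut and is what your centroid argument cannot supply.
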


\begin{proof}
Let $(H, \phi)$ be a $\T$-Steiner cactus of $G$. 
Recall that Assumption~\ref{assumption:2} states that any two terminals $u, v\in \T$ can be separated by some $\T$-mincut.
This implies that $\phi(u)\neq \phi(v)$ as all $\T$-mincuts are preserved.
From the assumption that $r$-isolating mincut is a $\T$-mincut, we know that $\phi(r)$ has degree 1 or has degree 2 within a cycle in $H$.
Consider a specialized DFS traversal of $H$ starting from $\phi(r)$.
Upon visiting a vertex from a cycle edge, the DFS traversal always tends to choose any edge that leaves the cycle.
Let $(r, v_1, v_2, \ldots, v_{|\T|-1})$ be the unique permutation of $\T$ where $(\phi(r), \phi(v_1), \phi(v_2), \ldots, \phi(v_{|\T|-1}))$ is the order (subsequence) of visited vertices by the DFS traversal, i.e. the pre-order.
Notice that the DFS traversal only returns to $\phi(r)$ at the very end.
Then for any two indices $i$ and $j$ such that $1\le i < j\le |\T|$, maximal $r$-isolating mincut of $\{v_i, v_j, r\}$ must not contain any vertices in $\{v_i, v_{i+1}, \ldots, v_j\}$ and hence the result follows by counting the fraction of pairs (at least $1/4$) whose position in the permutation differs by at least $\frac12|\T|$.
\end{proof}

\Cref{lem:sample-two-vertices} implies that with constant probability, the mincut of our concern is balanced.
Now we are ready to prove~\Cref{lem:steiner-mincut-case-1}.

\begin{proof}[Proof of~\Cref{lem:steiner-mincut-case-1}]
Suppose there is a balanced edge-cut on $H$, i.e. a $\T$-Steiner mincut $(X, V\setminus X)$ where $\frac14|\T|\le |X\cap \T| \le \frac12|\T|$.
Consider the phase that \Cref{alg:compute good splits} samples each terminal vertex with probability $2^{-\lceil\log |\T|\rceil}$, then with probability at least 
\begin{align*}
& {\frac12|\T| \choose 2}\cdot\frac14|\T| \left(1-\frac1{2^{\lceil \log|\T|\rceil}}\right)^{|\T|-3}\cdot\left(\frac{1}{2^{\lceil \log|\T|\rceil}}\right)^3 \\
& \ge ~
\frac{|\T|^3}{32e}\left(1-\frac{2}{|\T|}\right)\left(1-\frac{1}{2(|\T|-1)}\right)^{-3} \left(\frac{1}{2(|\T|-1)}\right)^{3} \\
& \ge ~ \frac{1}{256e} ~.
\end{align*}
there will be one sampled terminal $r$ in $X$, and exactly two sampled terminals $u$ and $v$ in $V\setminus X$.
We denote the event described above by $\mathcal{E}$. When $\mathcal{E}$ happens,
since $(X, V\setminus X)$ is a $\T$-Steiner mincut, we know that the maximal $r$-mincut of $\{r, u, v\}$ must contain entire $X$, hence containing at least $\frac14|\T|$ terminals.
Now
we will use~\Cref{lem:sample-two-vertices} to prove that, conditioned on $\mathcal{E}$,
with probability at least $1/4$, the maximal $r$-mincut of $\{r, u, v\}$ has at most $\frac34|\T|$ terminals.

Indeed, conditioned on event $\mathcal{E}$, 
the maximal $r$-mincut of $\{r, u, v\}$ is disjoint to the minimal $\{u, v\}$-mincut of $\{r, u, v\}$ (otherwise it contradicts to Disjoint \& Posi-modularity~\Cref{lem:disjoint-posi-modularity}).
Consider the graph $G/X$ with $r'$ being the contracted terminal vertex.
Let $\T':=(\T\setminus X)\cup \{r'\}$ be the contracted terminal set.
Since $X$ itself is a $\T$-Steiner mincut, $\{r'\}$ is a $\T'$-Steiner mincut on $G/X$ and hence the criteria of~\Cref{lem:sample-two-vertices} are met.
Therefore, by~\Cref{lem:sample-two-vertices}, with probability at least $1/4$, the minimal $\{u, v\}$-mincut of $\{r, u, v\}$ on $G$ contains at least $\frac12|\T'|\ge \frac14|\T|$ terminals. This implies that the maximal $r$-mincut of $\{r, u, v\}$ contains at most $\frac34|\T|$ terminals.

As a consequence, we know that with probability $1/(1024e)$, a balanced split will be found in one sampling procedure. With repeating the sampling procedure for $\lceil 12\cdot 1024e\cdot \ln n \rceil$ times, \Cref{alg:compute good splits} returns a balanced split with probability at least $1-n^{-12} \ge 1-n^{-11}$ as desired.
\end{proof}

\paragraph{Case 2: No Balanced Cut.} Now let us consider the second case where there is no balanced edge-cut.
An illustration for this case is provided in \Cref{fig:case2}.
In this case, our algorithm should be able to obtain lots of disjoint $\T$-splits such that,
after contracting smaller sides of all these $\T$-splits,
the remaining graph (which could still be large) is guaranteed to have a star shaped cactus representation.
Not surprisingly, the center of this star can be traced back (by undoing the contractions) to a \emph{centroid} node on $H$, given the non-presence of a balanced edge-cut.

Recall that $H$ is an irredundant $\T$-Steiner cactus representation $(H, \phi)$ of $G$.
A \emph{centroid} $v$ is a node on $H$ such that, every 
edge or cycle incident to $v$ defines a $\T$-Steiner mincut
whose corresponding mincut on $H$ not containing $v$    has at most $\frac14|\T|$ terminals.
We will soon prove
(in~\Cref{lem:steiner-mincut-case-2-part1}) that no balanced edge-cut on $H$ implies a unique centroid node $v$ on $H$.
This centroid node $v$ naturally partitions $\T\setminus \phi^{-1}(v)$ into sets of terminals $T_1\sqcup T_2\sqcup \cdots \sqcup T_k$, where for each $i$, $\phi(T_i)$ belongs to the same connected component in $H-v$.
Moreover, since $(H, \phi)$ is a cactus representation for $G$, we know that for each $T_i$ there exists a $\T$-Steiner mincut that separates $T_i$ and $\T\setminus T_i$.
Let $X'_i$ be the maximal $T_i$-mincut of $\T$, and define the collection
$\mathcal{S}'=\{X'_i\}$.

Fix a particular $i$ such that $1\le i\le k$, and consider sampling each vertex in $\T$ at the sampling rate $2^{-\lceil{\log |T_i|}\rceil}$.
We can then prove (in~\Cref{lem:steiner-mincut-case-2-part2})
that,
with constant probability, \emph{exactly one} terminal $w \in T_i$ is sampled, together with \emph{at least one} terminal from any two other subsets (namely $x\in T_j$ and $y\in T_{j'}$, for some $j\neq j'\neq i\neq j$) being sampled.

The following lemma ensures that $X'_i$ can be precisely discovered by our maximal isolating mincut algorithm, illustrated in \Cref{fig:case2}(b).

\begin{lemma}\label{lem:avoid-centroid}
    Let $v$ be a centroid node on $H$.
    Let $w, x, y\in \T$ be three terminals such that $\phi(w), \phi(x), $ and $\phi(y)$ belongs to distinct connected components in $H-v$.
    Let $X$ be the maximal $w$-isolating mincut of $\{w, x, y\}$, then the corresponding cut of $X$ in $H$ must not contain $v$. 
\end{lemma}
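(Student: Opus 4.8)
The plan is to argue by contradiction: suppose the maximal $w$-isolating mincut $X$ of $\{w,x,y\}$ corresponds to a cut in $H$ that \emph{does} contain the centroid $v$. I will exploit two facts simultaneously --- that $X$ is a $\{w,x,y\}$-mincut of value $\lambda_G(\T)$ (so its image in $H$ is a minimum edge-cut, i.e.\ a single tree edge or a pair of edges on a common cycle), and that $v$ being a centroid means every edge/cycle incident to $v$ isolates at most $\tfrac14|\T|$ terminals on the side away from $v$. The key structural observation is that, since $\phi(w)$, $\phi(x)$, $\phi(y)$ lie in three \emph{distinct} connected components of $H-v$, any minimum edge-cut of $H$ that separates $\phi(w)$ from both $\phi(x)$ and $\phi(y)$ must, in the cactus $H$, "sit below $v$" --- that is, removing it disconnects a region hanging off a single component of $H-v$.

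The steps I would carry out, in order: (1) Translate $X$ to $H$: let $\hat X \subseteq V(H)$ be the vertex set of the corresponding mincut, so $\phi(w)\in\hat X$ and $\phi(x),\phi(y)\notin\hat X$, and the cut is either one tree edge or two cycle edges. (2) Assume $v\in\hat X$ for contradiction. (3) Use the fact that $\hat X$ is a connected minimum cut (or at worst, that both sides of a cactus mincut are connected in a suitable sense) together with the cactus structure: the edges of $\hat X$ all lie in one "branch" of $H$ in the following sense --- when we remove $v$, the component of $H-v$ containing $\phi(x)$ and the component containing $\phi(y)$ are different, and a single tree edge or a single cycle's pair of edges cannot separate $v$ from two things living in two different components of $H-v$ unless that edge/cycle is itself incident to $v$. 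Concretely: if the cut edges are not incident to $v$, then $v$ stays connected to at least two of the three terminals $\phi(w),\phi(x),\phi(y)$ after removing the cut (since at most one component of $H-v$ is affected), contradicting that the cut separates $\phi(w)$ from both $\phi(x)$ and $\phi(y)$ with $v$ on $\phi(w)$'s side. (4) So the cut edges \emph{are} incident to $v$; but then by the definition of centroid, the side of the cut \emph{not} containing $v$ has at most $\tfrac14|\T|$ terminals --- and that side is exactly $V(H)\setminus\hat X$, which contains $\phi(x)$ and $\phi(y)$. (5) Derive the contradiction with maximality: since the cut edge/cycle is incident to $v$ and the $v$-side $\hat X$ is the \emph{large} side, I claim $\hat X$ is not maximal as a $w$-isolating mincut --- one can enlarge it. The clean way is to observe that, because $v$ is a centroid, there is another edge/cycle incident to $v$ leading toward $\phi(x)$ (resp.\ $\phi(y)$), and the union of $\hat X$ with the small piece hanging off that direction (which avoids $\phi(x)$ when we pick the branch toward $\phi(y)$, and vice versa) is still a $w$-isolating mincut of $\{w,x,y\}$ of the same value strictly containing $\hat X$ --- contradicting maximality. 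Alternatively, and perhaps more robustly, invoke \Cref{lem:steiner-modularity} / \Cref{lem:disjoint-posi-modularity}: the minimal $x$-isolating and minimal $y$-isolating mincuts of $\{w,x,y\}$ are disjoint (Pairwise Intersection Only, \Cref{lem:three-crossing-isolating-mincuts}), each has value $\lambda_G(\T)$, and $X$ must be disjoint from both; if $v\in\hat X$ then $v$ lies outside both of those two small cuts, and their complement (intersected appropriately) gives a $w$-isolating mincut containing $v$'s whole "star neighborhood" except the branches toward $x$ and $y$ --- but since the branches toward $x$ and $y$ each carry $<\tfrac14|\T|$ terminals, this is a strictly larger $w$-isolating mincut unless it already equals $\hat X$, which it cannot since it is strictly bigger than any cut using only edges incident to $v$ on the far side.

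The main obstacle I anticipate is step (5): cleanly showing that a $v$-containing minimum cut separating $\phi(w)$ from $\phi(x),\phi(y)$ is \emph{strictly extendable} inside the cactus without increasing the cut value. This requires carefully tracking, in the cactus $H$, which "branch edges/cycles" incident to $v$ point toward $\phi(x)$ versus $\phi(y)$ versus $\phi(w)$, and arguing that sweeping one more branch into $\hat X$ keeps it a valid minimum cut (this is where the cactus-specific fact that distinct branches off $v$ contribute independently to the cut, and that merging a branch changes the boundary by swapping one unit-weight pair for another, comes in). A secondary subtlety is making sure the case "cut is two edges on a cycle through $v$" is handled --- here both cut edges lie on one cycle at $v$, the far side is an arc of that cycle, and the centroid bound again caps its terminal count; the extension argument then adds the rest of that cycle plus the branches toward neither $x$ nor $y$. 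Once the extension claim is in place, the contradiction with maximality of $X$ is immediate, and translating back via the cactus correspondence finishes the proof.
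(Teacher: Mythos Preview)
Your step~(3) already contains the entire argument, and the paper's proof is exactly that observation stated in one line. The point is: if $v\in\hat X$ while $\phi(x),\phi(y)\notin\hat X$, then the boundary $\partial\hat X$ in $H$ separates $v$ from both $\phi(x)$ and $\phi(y)$. Since $\phi(x)$ and $\phi(y)$ lie in \emph{distinct} components of $H-v$, the paths from $v$ to $\phi(x)$ and from $v$ to $\phi(y)$ use disjoint sets of edges (each component attaches to $v$ by one tree edge of weight $\lambda_G(\T)$ or one cycle contributing $\lambda_G(\T)$). Hence $\C_H(\hat X)\ge 2\lambda_G(\T)$, contradicting that $\hat X$ is a mincut of $H$. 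That is the paper's entire proof; it uses neither the centroid property nor the maximality of $X$.

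You split into ``cut edges not incident to $v$'' versus ``cut edges incident to $v$'' and only applied the branch-counting argument to the first case, but it works identically in the second: a mincut incident to $v$ (one tree edge, or two edges on a single cycle through $v$) still separates $v$ from exactly one component of $H-v$, so $V(H)\setminus\hat X$ lies inside a single branch and cannot contain both $\phi(x)$ and $\phi(y)$. Your steps~(4)--(5) are therefore handling a case that cannot occur, and the maximality-extension argument you sketch there is both unnecessary and shaky (you never actually produce a larger $w$-isolating cut of the same value; indeed in that impossible case $\hat X$ would already be everything except one branch, so there is nothing to add without swallowing $\phi(x)$ or $\phi(y)$). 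Drop steps~(4)--(5), note that the reasoning in step~(3) does not depend on where the cut edges sit relative to $v$, and you have the paper's proof.
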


\begin{proof}
Let $Y$ be a mincut on $H$ that corresponds to $X$.
    Suppose by contradiction that $Y$ contains $v$.
    But since $\phi(x), \phi(y)\notin Y$, the cut value of $Y$ would be at least $2\lambda_G(\T)$, a contradiction to $Y$ being a mincut of $H$.
\end{proof}

Notice that by \Cref{lem:avoid-centroid}, whenever the algorithm seeks the maximal $w$-isolating mincut of this sampled terminal set, the algorithm obtains exactly the set $X'_i$.
Is the collection $\mathcal{S}'$ serves for our purpose? Not really --- $\mathcal{S}'$ may not be a good split collection (\Cref{def:good-splits}). For example, some mincut $X'_i\in\mathcal{S}'$ may contain exactly one terminal vertex --- simply removing these mincuts is an easy fix.
What's worse, there could be two mincuts $X'_i$ and $X'_j$ that are not disjoint (e.g.,~\Cref{fig:case2}(a)).
Fortunately, an additional post-processing step can be further applied: whenever there exists $X'_i\cap X'_j\neq\emptyset$ (say $i > j$), we \emph{prune} the larger indexed one by replacing $X'_i$ with $X'_i\setminus X'_j$.
By posi-modularity (\Cref{lem:disjoint-posi-modularity}), $T_i\cap T_j=\emptyset$ implies that $X'_j$ is still a $T_j$-mincut of $\T$.
It is straightforward to check that at the end of the post-processing step we have obtained a pruned set $\mathcal{S}'_{\text{pruned}}=\{X_i\}$ where $X_i=X'_i\setminus \cup_{j<i} X'_j$ and every $X_i$ contains at least two terminal vertices.

The post-processing steps mentioned above correspond to Line~\ref{line15} of \Cref{alg:compute good splits}. 
We prove as a corollary of~\Cref{lem:steiner-mincut-case-2-part2} (\Cref{lem:transition-part2-to-part3}) that 
the collection $\mathcal{S}$ returned by~\Cref{alg:compute good splits} is exactly the same as $\mathcal{S}'_{\text{pruned}}$ with high probability.
At the end of the analysis, we prove in~\Cref{lemma:S-is-good-split-collection} that $\mathcal{S}'_{\text{pruned}}$ is actually a good split collection.

\begin{figure}[htbp]
    \centering
    \includegraphics[width=0.45\textwidth]{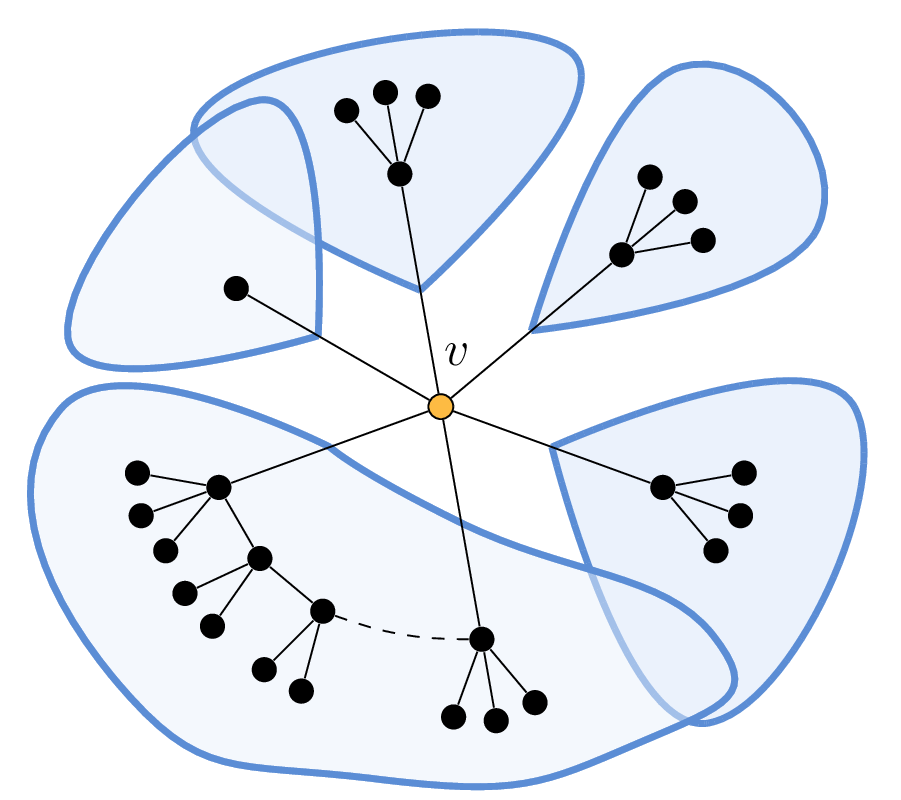}~
    \includegraphics[width=0.45\textwidth]{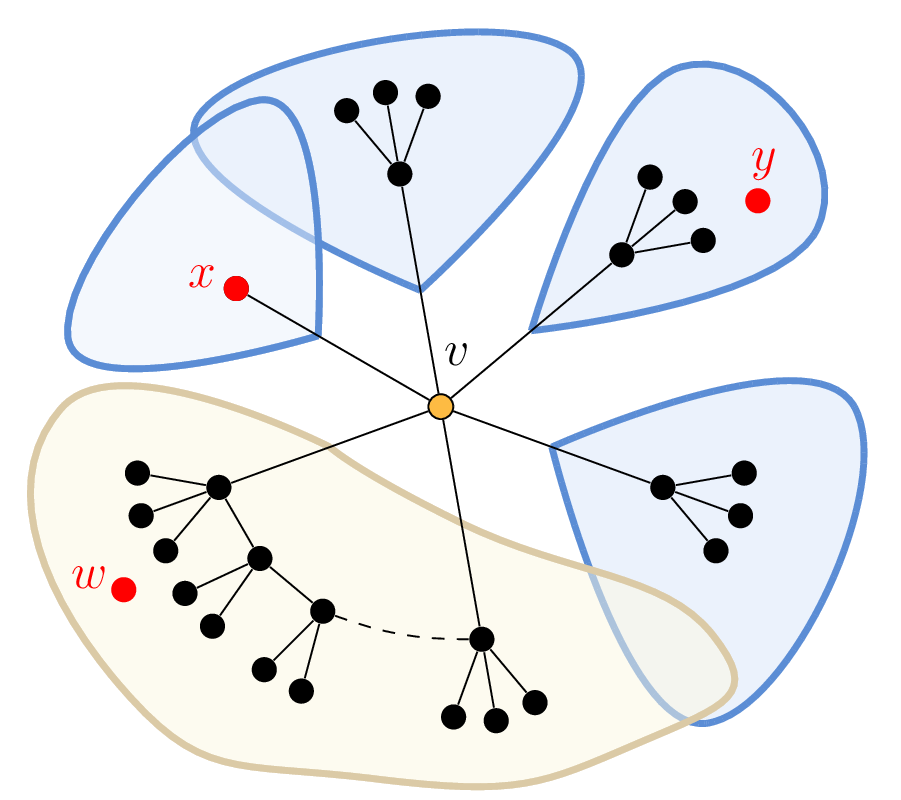}
    \caption{An illustration of a cactus representation $H$. The colored regions correspond to $\mathcal{S}'$, which is a collection of $\T$-Steiner mincuts of $G$.
    (a) When there is no balanced edge-cut, there must be a centroid node $v$ on $H$. (b) If we sample three terminals $w, x, y$ that are mapped into distinct connected components in $H-v$, then the maximal $w$-mincut of $\{w, x, y\}$ correspond to exactly the connected component of $H-v$ where $w$ belongs to.}
    \label{fig:case2}
\end{figure}

\paragraph{Formalizing the Proof to Case 2.} 
The rest of this subsection devotes to 
formalize the high-level idea described above. 
Let $G$ be the graph with terminal set $\T$ and let $(H, \phi)$ be an irredundant $\T$-Steiner cactus of $G$.
Assume that there is no balanced edge-cut on $H$.
We first show that there exists a unique centroid node on $H$.

\begin{lemma}\label{lem:steiner-mincut-case-2-part1}
there exists a unique centroid node $v$ on $H$ whose all incident 1-edges and 2-edges from the same cycle correspond to $\T$-Steiner mincuts of at most $\frac14|\T|$ terminals in the side not containing $v$.
\end{lemma}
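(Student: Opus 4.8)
The plan is to transport the classical existence-and-uniqueness argument for the centroid of a tree onto the cactus $H$. First I would fix bookkeeping. For a node $v\in V(H)$, call a \emph{branch at $v$} either (i) the component of $H-e$ not containing $v$, for a bridge $e$ (a weight-$\lambda_G(\T)$ edge) incident to $v$, or (ii) the component of $H-\{e_1,e_2\}$ not containing $v$, where $e_1,e_2$ are the two edges incident to $v$ that lie on a common cycle $\mathcal C$ through $v$. Since $H$ is a cactus, the branches at $v$ partition $V(H)\setminus\{v\}$, and each branch is exactly the side not containing $v$ of one of the global mincuts of $H$ referenced in the statement (a single bridge, or a pair of cycle edges meeting at $v$). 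Write $w_v(B)$ for the number of terminals mapped into a branch $B$, and call $B$ \emph{heavy} if $w_v(B)>\frac14|\T|$. The crucial consequence of having no balanced edge-cut is: if $B$ is heavy at $v$ then in fact $w_v(B)>\frac34|\T|$, because the complementary side of that mincut, being complementary to a side with more than $\frac14|\T|$ terminals, cannot itself lie in $[\frac14|\T|,\frac34|\T|]$ (else the cut is balanced), hence it has fewer than $\frac14|\T|$ terminals. In particular every node has at most one heavy branch, and a \emph{centroid} is precisely a node with no heavy branch.

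For \textbf{existence} I would run a descent. Start at an arbitrary node $v_0$; if it has no heavy branch it is already a centroid. Otherwise let $B_0$ be its unique heavy branch and move to the node $v_1$ ``into'' $B_0$: if $B_0$ is a bridge-branch via $e=(v_0,v_1)$, take that $v_1$; if $B_0$ is the cycle-branch of a cycle $\mathcal C$ through $v_0$, take $v_1$ to be a node of $\mathcal C$ carrying the most terminal mass off $\mathcal C$. Here I would prove, from ``no balanced edge-cut,'' that exactly one node $u$ of $\mathcal C$ other than $v_0$ carries more than $\frac34|\T|$ terminals off the cycle: walking around $\mathcal C$ and taking prefix sums of the off-cycle masses, if every node carried at most $\frac12|\T|$ some contiguous arc — hence a global mincut of $H$ — would land in $[\frac14|\T|,\frac34|\T|]$, and a node carrying strictly between $\frac12|\T|$ and $\frac34|\T|$ would already be such an arc; two nodes above $\frac34|\T|$ is impossible, and it is not $v_0$ since $v_0$ carries $|\T|-w_{v_0}(B_0)<\frac14|\T|$ off $\mathcal C$. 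Take $v_1:=u$. In either case the branch of $v_1$ pointing back toward $v_0$ has terminal mass $|\T|-w_{v_0}(B_0)<\frac14|\T|$, so it is not heavy, and any heavy branch $B_1$ of $v_1$ is therefore a ``forward'' branch, which as a vertex set is a proper subset of $B_0$: it excludes $v_1\in B_0$ (and, in the cycle case, also the other cycle nodes). Thus $|B_0|>|B_1|>\cdots$ strictly decreases, so the descent halts after finitely many steps at a node with no heavy branch, i.e.\ a centroid.

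For \textbf{uniqueness} I would exploit the wide $\frac14$-versus-$\frac34$ gap. Suppose $v$ is a centroid and $v'\neq v$. Then $v'$ lies in exactly one branch $B$ of $v$, and $w_v(B)\le\frac14|\T|$. Since $v$ is the unique gateway from outside $B$ into $B$, every terminal not mapped into $B$ — and there are at least $|\T|-\frac14|\T|=\frac34|\T|$ of them — is mapped into the single branch of $v'$ that contains $v$; that branch is therefore heavy, so $v'$ is not a centroid. This yields uniqueness, and since the centroid is exactly the node all of whose branches are $\le\frac14|\T|$, it has the property claimed in the statement.

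The main obstacle is genuinely the cycle case: a cactus is not a tree, so I must (a) make the ``branch at $v$'' abstraction precise enough that the branches really do partition $V(H)\setminus\{v\}$ and coincide with the mincuts named in the statement, and (b) prove the single-cycle arc lemma — that ``no balanced edge-cut'' forces essentially all the terminal mass of a cycle-branch to sit at one node — which is exactly what keeps the descent step well defined and strictly contracting. The remainder is the textbook tree-centroid argument, and I do not expect to need the irredundancy of $H$ for this particular lemma (it is carried along for later use).
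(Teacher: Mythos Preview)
Your argument is correct and reaches the same conclusion as the paper, but the organization differs. The paper avoids building the descent machinery directly on the cactus: it first replaces every cycle of $H$ by a star (one fresh dummy node per cycle, joined to each cycle node), obtaining a tree $H'$ whose edges are in bijection with the bridge edges and the per-node cycle cuts of $H$. It then invokes the standard weighted-tree centroid (weight $1$ on nonempty nodes, $0$ elsewhere). If the tree centroid lands on a genuine node of $H$, the ``no balanced edge-cut'' gap immediately upgrades the usual $\le\tfrac12|\T|$ per subtree to $\le\tfrac14|\T|$ per branch, so that node is the desired centroid of $H$. If it lands on a dummy cycle-node, every off-cycle mass around that cycle is $\le\tfrac14|\T|$, and a greedy prefix walk along the cycle manufactures a balanced $2$-edge cut, contradicting the hypothesis. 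Uniqueness is argued exactly as you do.

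Your direct descent on $H$ is a legitimate alternative: your single-cycle ``arc lemma'' --- that under the no-balanced hypothesis exactly one cycle node carries more than $\tfrac34|\T|$ off-cycle mass --- is precisely the content the paper packages into the dummy-node contradiction. The paper's reduction is shorter because it outsources the descent to the black-box tree-centroid theorem; your version is more self-contained. Two small slips to tidy: in the cycle step, the backward branch at $v_1$ is the cycle-branch on $\mathcal C$, whose mass is $|\T|-m_{v_1}$, not $|\T|-w_{v_0}(B_0)$; both quantities are $<\tfrac14|\T|$, so the logic survives. Also, your case split on the cycle should read ``in $(\tfrac12|\T|,\tfrac34|\T|]$'' rather than ``strictly between $\tfrac12|\T|$ and $\tfrac34|\T|$'', since a node with off-cycle mass exactly $\tfrac34|\T|$ is already a balanced singleton arc. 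Your remark that irredundancy of $H$ is not needed here is consistent with the paper, whose proof of this lemma does not invoke it either.
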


\begin{proof}
    \label{proof:4.16-part1} The existence of such a centroid node can be proved as follows. We first replace each cactus cycle on $H$ with a star (adding an additional node that represents the cactus cycle), forming a tree $H'$.
We note that each edge on $H'$ still represents a mincut on $H$.
It is well-known that any tree contains a centroid with respect to any vertex weight.
Let $v$ be any centroid node on the tree $H'$ when empty nodes have weight zero and non-empty nodes have weight one.
If $v$ is a node on $H$, then by definition $v$ is a centroid node on $H$.
Otherwise, if $v$ is an additional node that represents a cactus cycle, then since each incident edge of $v$ on the tree $H'$ corresponds to a mincut with at most $\frac14|\T|$ terminals, one can obtain a balanced edge-cut on $H$ greedily along the cactus cycle represented by $v$, thereby a contradiction.

    To show uniqueness, suppose by contradiction that there are two centroids $u$ and $v$ on $H$. 
    Consider any mincut on $H$ that separates $u$ and $v$.
    By the fact of being a centroid, the side of this mincut not containing $u$ (resp. not containing $v$) has at most $\frac14|\T|$ terminals.
    Howerver, this implies that the total number of terminals is at most $\frac12|\T|$, a contradiction.
\end{proof}

Let $T_1\sqcup T_2\sqcup \ldots\sqcup T_k$ be the partition of terminal vertices of $\T\setminus \phi^{-1}(v)$ where two terminal vertices $x$ and $y$ belong to the same $T_i$ if and only if $\phi(x)$ and $\phi(y)$ are in the same connected component of $H-v$.
For each $i$, let $X'_i$ be the maximal $T_i$-mincut of $\T$ on $G$, and let $\mathcal{S}'=\{X'_i\}$ be the collection of all these mincuts.
We first establish the relation between the collection $\mathcal{S}'$  and the mincuts computed from the algorithm:

\begin{lemma}\label{lem:steiner-mincut-case-2-part2}
Let $\mathcal{\hat{S}}$ be the collection of $\T$-Steiner mincut right after the execution of Line~\ref{line14}.
With probability at least $1-n^{-11}$,
\begin{enumerate}[itemsep=0pt]
\item for all $X'_i\in \mathcal{S}'$ such that $|X'_i\cap \T|\ge 2$, we have $X'_i\in \mathcal{\hat{S}}$.
\item $\mathcal{\hat{S}}\subseteq \mathcal{S}'$.
\end{enumerate}
\end{lemma}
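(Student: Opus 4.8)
The plan is to reason against a fixed irredundant $\T$-Steiner cactus $(H,\phi)$ of $G$ with no balanced edge-cut, its unique centroid $v$ (from \Cref{lem:steiner-mincut-case-2-part1}), and the partition $T_1\sqcup\cdots\sqcup T_k$ of $\T\setminus\phi^{-1}(v)$ into the terminal sets of the components $C_1,\dots,C_k$ of $H-v$. By Assumption~\ref{assumption:2} the map $\phi$ is injective. First I would record the deterministic structural facts. The edge (or pair of cycle-edges) of $H$ incident to $v$ leading into $C_i$ is a global mincut of $H$ separating $\phi(T_i)$ from $\phi(\T\setminus T_i)$, so it corresponds to a $\T$-Steiner mincut and $\C(X'_i)=\lambda:=\lambda_G(\T)$; moreover the \emph{maximal} mincut of $H$ on the $\phi(T_i)$-side of such a cut is exactly $C_i$, since any larger one would contain $v$, which (as $k\ge3$) would force it to separate $v$ from two other components, hence have cut value $\ge 2\lambda$ — the argument of \Cref{lem:avoid-centroid}. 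Hence $X'_i\cap\T=T_i$, and by the centroid property $|T_i|\le\tfrac14|\T|$. The same reasoning, together with the absence of a balanced edge-cut, yields the key fact I will use repeatedly: every global mincut of $H$ has at most $\tfrac14|\T|$ terminals on the side not containing $v$. (We may assume $|\T|\ge4$, smaller cases being base cases of \textsc{ComputeSteinerCactus}; this forces $k\ge3$, and it also means no balanced $\T$-split exists on $G$, so the test at Line~\ref{line:alg-good-split call max iso}'s enclosing \textbf{if} fails and Line~\ref{line14} is indeed reached.)

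For part~1, fix $\ell$ with $|T_\ell|\ge2$ and consider one pass of the sampling loop at rate $p:=2^{-\lceil\log|T_\ell|\rceil}\in(\tfrac1{2|T_\ell|},\tfrac1{|T_\ell|}]$; since $|T_\ell|\le\tfrac14|\T|$, $p\ge\tfrac2{|\T|}$. I would greedily bundle the components $\{C_j:j\ne\ell\}$ into two buckets $B_1,B_2$, each holding $\Omega(|\T|)$ terminals — possible because each $|T_j|\le\tfrac14|\T|$ while $\sum_{j\ne\ell}|T_j|\ge\tfrac34|\T|-1$. Let $\mathcal E_\ell$ be the event that exactly one terminal $w$ of $T_\ell$ is sampled and both buckets are hit; as these conditions involve disjoint terminal sets, $\Pr[\mathcal E_\ell]\ge\tfrac1{2e}\cdot(1-e^{-\Omega(1)})^2=:c_0$, an absolute constant that one checks is $\ge\tfrac1{1024e}$. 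Conditioned on $\mathcal E_\ell$, pick sampled $x\in B_1$, $y\in B_2$; then $w,x,y$ lie in three distinct components of $H-v$, so by \Cref{lem:avoid-centroid} (and the observation that the maximal $w$-isolating mincut $Z$ of the full sampled set is contained in that of $\{w,x,y\}$, since both have value $\lambda$) we get $Z\cap\T\subseteq T_\ell$; on the other hand $X'_\ell$ is itself a $w$-isolating mincut of the sampled set (it separates $T_\ell\ni w$ from $\T\setminus T_\ell$ and has value $\lambda$, which therefore equals the $w$-isolating mincut value), so $Z\supseteq X'_\ell$, forcing $Z=X'_\ell$ with $\C(Z)=\lambda$, i.e.\ $X'_\ell$ enters $\mathcal S$ at Line~\ref{line7}. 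Repeating $\lceil12\cdot1024e\ln n\rceil$ times makes $\Pr[X'_\ell\notin\mathcal S]\le(1-c_0)^{\lceil12\cdot1024e\ln n\rceil}\le n^{-12}$. Finally, conditioned on $X'_\ell\in\mathcal S$, it survives Line~\ref{line13} (it is a $\T$-split with $2\le|T_\ell|\le\tfrac14|\T|$) and Line~\ref{line14}: any $\T$-Steiner mincut strictly containing $X'_\ell$ contains $T_\ell$ and some terminal outside $T_\ell$, so its image in $H$ is connected, meets $C_\ell$ and a point outside $C_\ell$, hence contains $v$, hence carries $\ge\tfrac34|\T|$ terminals and was discarded at Line~\ref{line13}. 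A union bound over the $\le|\T|\le n$ indices $\ell$ gives part~1 with probability $\ge1-n^{-11}$.

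For part~2 I would condition on the probability-$(\ge1-n^{-11})$ event — established exactly as above — that every $X'_i$ with $|T_i|\ge2$ lies in $\mathcal S$ right after Line~\ref{line13}. Take $X\in\mathcal{\hat{S}}$, arising as the maximal $w$-isolating mincut of some sampled set $S$ at Line~\ref{line:alg-good-split call max iso}, with $\C(X)=\lambda$ and $2\le|X\cap\T|\le\tfrac14|\T|$. A short argument shows $X$ is the maximal $(X\cap\T)$-mincut of $\T$: it is a $(X\cap\T)$-mincut of value $\lambda$, and the maximal such cut is itself a $w$-isolating mincut of $S$ (since $S\cap(X\cap\T)=S\cap X=\{w\}$), so maximality of $X$ among $w$-isolating mincuts of $S$ forces equality. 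Let $\hat X$ be the corresponding global mincut of $H$ with $\phi(w)$ on the $\hat X$-side; since $|\hat X\cap\phi(\T)|=|X\cap\T|\le\tfrac14|\T|$ while every mincut of $H$ has at most $\tfrac14|\T|$ terminals on its $v$-free side, $v$ cannot lie on the $\hat X$-side (else that side would carry $\ge\tfrac34|\T|$ terminals). Thus $\hat X$ is a connected subgraph of $H-v$ through $\phi(w)$, hence lies inside the component $C_\ell\ni\phi(w)$; in particular $w\in T_\ell$, $X\cap\T\subseteq T_\ell$, and $|T_\ell|\ge2$. If $X\cap\T=T_\ell$ then $X$ is the maximal $T_\ell$-mincut of $\T$, i.e.\ $X=X'_\ell\in\mathcal S'$. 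Otherwise $X\cap\T\subsetneq T_\ell$; by Nesting \& Submodularity (\Cref{lem:steiner-modularity}) $X\cup X'_\ell$ is a $T_\ell$-mincut of $\T$, hence $X\cup X'_\ell\subseteq X'_\ell$, so $X\subsetneq X'_\ell$; but $X'_\ell$ is present in $\mathcal S$ right after Line~\ref{line13} by our conditioning, so $X$ would be removed at Line~\ref{line14}, contradicting $X\in\mathcal{\hat{S}}$. Hence $\mathcal{\hat{S}}\subseteq\mathcal S'$, and both parts hold simultaneously on the same probability-$(\ge1-n^{-11})$ event.

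The main obstacle I anticipate is the probabilistic estimate $\Pr[\mathcal E_\ell]\ge c_0$: one must verify that, conditioned on sampling exactly one terminal of $T_\ell$, a rate-$\Theta(1/|T_\ell|)$ sample hits two distinct components of $H-v$ other than $C_\ell$ with constant probability, which relies on there being at least two such components (from $k\ge3$), each of size at most $\tfrac14|\T|$, jointly covering $\approx\tfrac34|\T|$ terminals, so that they can be packed into two bins each holding a constant fraction of $\T$. The remaining delicacy is purely the bookkeeping that the ``keep maximal subsets'' step (Line~\ref{line14}) cooperates with the size filter (Line~\ref{line13}) — both to show $X'_\ell$ survives and to rule out spurious survivors $X$ — which everywhere rests on the centroid property that every mincut of $H$ leaves at most $\tfrac14|\T|$ terminals on the side away from $v$.
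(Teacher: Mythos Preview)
Your proof is correct and follows essentially the same approach as the paper's: Part~1 uses the same greedy bucketing of the components $\{T_j:j\ne\ell\}$ into two groups of size $\Theta(|\T|)$ together with \Cref{lem:avoid-centroid} to show each $X'_\ell$ is discovered with probability $\ge 1-n^{-12}$, and Part~2 uses the centroid property (the paper packages this as \Cref{fact-for-case2}/\Cref{prop-for-case-2}) plus Nesting \& Submodularity to show any survivor $X$ satisfies $X\subseteq X'_\ell$ for some $\ell$ with $|T_\ell|\ge 2$. Your write-up is actually more careful than the paper's in two places---explicitly arguing that the maximal $w$-isolating mincut of the \emph{full} sampled set (not just $\{w,x,y\}$) equals $X'_\ell$, and in Part~2 first identifying $X$ as the maximal $(X\cap\T)$-mincut of $\T$---but these are refinements of the same argument, not a different route.
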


We first establish the following fact and a helper claim:
\begin{fact}\label{fact-for-case2}
Let $X$ be any $\T$-Steiner mincut.
Consider any corresponding edge-cut that separates $\phi(X\cap \T)$ and $\phi(\T\setminus X)$ on $H$.
Suppose that the centroid $v$ belongs to the $\phi(X\cap \T)$ side of the edge-cut, then $|X\cap \T| > \frac14|\T|$.
\end{fact}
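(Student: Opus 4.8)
The plan is to argue by contradiction using the defining property of the centroid $v$ together with the fact that $H$ preserves all $\T$-Steiner mincuts. Suppose $X$ is a $\T$-Steiner mincut and, in some corresponding edge-cut on $H$ separating $\phi(X\cap\T)$ from $\phi(\T\setminus X)$, the centroid $v$ lies on the $\phi(X\cap\T)$ side. First I would observe that this edge-cut on $H$ is a minimum edge cut of $H$, so it is either a single tree edge or a pair of edges from a common cycle. In either case the edge(s) of the cut are ``incident'' (in the sense of \Cref{lem:steiner-mincut-case-2-part1}) to the connected side containing $v$; more carefully, the edge-cut separates $v$ from $\phi(\T\setminus X)$, so there is a first edge (or cycle) along the unique $H'$-path from $v$ toward the $\phi(\T\setminus X)$ side that belongs to the cut, and this edge or cycle is incident to $v$.

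Next I would invoke the centroid property itself: by \Cref{lem:steiner-mincut-case-2-part1}, every $1$-edge or $2$-edge-from-a-common-cycle incident to $v$ defines a $\T$-Steiner mincut whose side not containing $v$ has at most $\frac14|\T|$ terminals. Hence the side of \emph{that} incident cut not containing $v$ has at most $\frac14|\T|$ terminals. Since $\phi(\T\setminus X)$ lies entirely on the $v$-free side of the full edge-cut, and hence on the $v$-free side of the incident edge/cycle cut as well (the incident cut is ``closer'' to $v$), we get $|\T\setminus X| \le \frac14|\T|$, i.e. $|X\cap\T| = |\T| - |\T\setminus X| \ge \frac34|\T| > \frac14|\T|$, which is in fact a stronger conclusion than claimed and certainly suffices.

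The one point that needs a little care — and the main obstacle — is the claim that the side of the incident-edge cut not containing $v$ contains all of $\phi(\T\setminus X)$. I would justify this by working in the tree $H'$ (cactus cycles contracted to stars, as in the proof of \Cref{lem:steiner-mincut-case-2-part1}): removing the cut edge(s) of the $X$-edge-cut from $H'$ splits $H'$ into a component $C_v \ni v$ and the rest, with $\phi(\T\setminus X)$ contained in $H'\setminus C_v$; the first edge on the path from $v$ into $H'\setminus C_v$ is an edge incident to $v$, and deleting it separates $v$ from all of $H'\setminus C_v \supseteq \phi(\T\setminus X)$. The only subtlety is the cactus-cycle case, where ``incident edge'' must be read as ``two edges of a cycle through $v$'', but the monotonicity of which terminals end up on the $v$-free side still holds because the $v$-free side of the cycle cut contains the $v$-free side of any coarser cut through that cycle. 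Once this monotonicity is in place the inequality is immediate.
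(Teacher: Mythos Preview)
Your argument lands on the same key inequality as the paper, namely $|\T\setminus X|\le\frac14|\T|$ (hence $|X\cap\T|\ge\frac34|\T|$), and the underlying idea is the same: the side not containing $v$ sits inside a single connected component of $H-v$, so the centroid property bounds its terminal count. However, your route to that containment is noticeably more convoluted, and your first paragraph contains a misstep: you assert that ``there is a first edge (or cycle) along the unique $H'$-path from $v$ \ldots\ that belongs to the cut, and this edge or cycle is incident to $v$.'' The cut edge(s) need \emph{not} be incident to $v$; what you actually use (and correctly argue in your third paragraph) is that the first edge of the $H'$-path leaving $v$---which is incident to $v$ by definition---already separates $v$ from the entire $v$-free side, regardless of where the original cut sits. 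That later argument is fine, but it makes the first paragraph's claim both wrong and unnecessary.

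The paper's proof bypasses all of this by stating the structural fact directly: in a cactus, any minimum edge-cut has one side entirely contained in a single connected component of $H-v$ (trivially, since that side is connected and avoids $v$). From there the centroid property gives $|\T\setminus X|\le\frac14|\T|$ in one line. Your detour through $H'$, the ``first edge on the path,'' and the monotonicity discussion for cycles is correct in the end but reproves this one-sentence observation the hard way; I would recommend replacing paragraphs~1--3 with the direct statement.
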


\begin{proof}
Since $(H, \phi)$ is a cactus representation of $G$,
any minimum edge-cut of $H$ must have one side whose nodes are all  within the same connected component of $H-v$\footnote{This sentence even holds when $v$ is not a centroid.}.
The condition that $v$ belongs to the $\phi(X\cap\T)$ side implies that
all nodes in $\phi(\T\setminus X)$ belong to the same connected component of $H-v$.
Now, using the assumption that $v$ is a centroid node, we know that $|\T\setminus X| \le \frac14|\T|$ and hence $|X\cap \T| \ge \frac34|\T| > \frac14|\T|$.
\end{proof}

\begin{proposition}\label{prop-for-case-2}
    Fix any $X'_i\in \mathcal{S}'$, the maximal $T_i$-mincut of $\T$.
    For any $\T$-Steiner mincut $X\not\subseteq X'_i$ but $X\cap T_i\neq\emptyset$, we must have $|X\cap \T| > \frac14|\T|$.
\end{proposition}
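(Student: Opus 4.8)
The plan is to exploit the cactus $H$ and its centroid $v$ already fixed in this context, and to case‑split on which side of a corresponding edge‑cut of $X$ the node $v$ lies. By property~(2) of \Cref{def:steiner-cactus}, the $\T$-Steiner mincut $X$ corresponds to a minimum edge‑cut of $H$ separating $\phi(X\cap\T)$ from $\phi(\T\setminus X)$, and both of these node sets are nonempty since $X$ is a $\T$-Steiner mincut and $X\cap T_i\neq\emptyset$. Fix one such edge‑cut; the node $v$ lies on exactly one of its two sides, giving two cases.

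In the first case $v$ lies on the $\phi(X\cap\T)$ side. Then \Cref{fact-for-case2} applies directly and gives $|X\cap\T|>\frac14|\T|$, which is exactly the conclusion. Hence the real content is to rule out the second case, where $v$ lies on the $\phi(\T\setminus X)$ side. For this I would use the structural observation recorded in the footnote of \Cref{fact-for-case2}: one side of any minimum edge‑cut of $H$ lies entirely inside a single connected component of $H-v$. Since $v$ lies on the $\phi(\T\setminus X)$ side, the opposite side — which contains all of $\phi(X\cap\T)$ — lies inside one component of $H-v$; and since $X\cap T_i\neq\emptyset$, that component contains the image of a terminal of $T_i$, so it must be exactly the component whose terminals form $T_i$. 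Consequently $\phi(X\cap\T)$ lies in that component, which forces $X\cap\T\subseteq T_i$.

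It then remains to derive a contradiction with $X\not\subseteq X'_i$, which I would do by a nesting/submodularity argument in $G$. Put $A:=X\cap\T$; it is nonempty, and $A\subseteq T_i\subsetneq\T$, the strict inclusion holding because the centroid property of $v$ forces $|T_i|\le\frac14|\T|$. Every $A$-cut of $\T$ separates a pair of terminals and hence has value at least $\lambda_G(\T)$, which $X$ attains, so $X$ is an $A$-mincut of $\T$. Since $X'_i$ is a $T_i$-mincut of $\T$ and $A\subseteq T_i$, \Cref{lem:steiner-modularity} yields that $X\cup X'_i$ is again a $T_i$-mincut of $\T$; maximality of $X'_i$ then gives $X\cup X'_i\subseteq X'_i$, i.e.\ $X\subseteq X'_i$, contradicting the hypothesis. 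Therefore the second case is impossible, and $|X\cap\T|>\frac14|\T|$.

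I do not expect a genuine obstacle here: the argument is short and leans on \Cref{fact-for-case2}, the component observation from its footnote, and \Cref{lem:steiner-modularity}. The only points needing care are bookkeeping — checking that a corresponding edge‑cut of $X$ exists and that $\phi(X\cap\T)$ really sits on a $v$-free side contained in one component of $H-v$, and verifying the hypotheses of \Cref{lem:steiner-modularity} (in particular $A\neq\emptyset$ and $T_i\subsetneq\T$).
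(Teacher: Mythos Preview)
Your proposal is correct and takes the same route as the paper: reduce to \Cref{fact-for-case2} by showing that $v$ must lie on the $\phi(X\cap\T)$ side of a corresponding edge-cut on $H$. The paper's proof asserts this implication in a single line without justification, whereas you spell out the contrapositive via the component observation and \Cref{lem:steiner-modularity} against the maximality of $X'_i$ --- so your write-up is in fact more complete than the paper's.
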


\begin{proof}[Proof of \Cref{prop-for-case-2}]
Let $T_X=X\cap \T$ be the set of terminal vertices in $X$.
Consider the corresponding edge-cut on $H$ that separates $\phi(T_X)$ and $\phi(\T\setminus T_X)$. The condition of the proposition implies that $v$ belongs to the $\phi(T_X)$ side of the edge-cut.
Using \Cref{fact-for-case2} we obtain $|T_X| > \frac14|\T|$.
\end{proof}

Intuitively speaking, \Cref{prop-for-case-2} validates Line~\ref{line13}-\ref{line15} of \Cref{alg:compute good splits}: any $\T$-Steiner mincut that is either crossing or containing some $X'_i\in \mathcal{S}'$ will contain too many terminals and will be removed. As a consequence, Line~\ref{line13} protects $X'_i$ from being accidentally removed in Line~\ref{line14} and being ``chopped'' in Line~\ref{line15}.
Now we formally prove \Cref{lem:steiner-mincut-case-2-part2}.

\begin{proof}[Proof of \Cref{lem:steiner-mincut-case-2-part2}]~
\paragraph{Part 1.} Fix an $X'_i\in\mathcal{S}'$.
We first notice that by \Cref{prop-for-case-2},
any $\T$-Steiner mincut $X$ that is a superset of $X'_i$ has 
more than $\frac14|\T|$ terminal vertices. Hence, such mincut $X$ will be excluded by executing Line~\ref{line13} of the algorithm.
Now it suffices to show that $X'_i$ will be found and appeared in $\mathcal{\hat{S}}$ with high probability.

Consider sampling terminal vertices with probability $p:={2^{-\lceil\log |T_i|\rceil}}$.
By~\Cref{lem:avoid-centroid}, it suffices to lower bound the probability of the event where (1) no other terminals in $T_i$ is sampled and (2) some terminals are sampled from at least two other sets.
The following analysis further restricts condition (2): we first form a partition $\T\setminus \phi^{-1}(v)=T_i\sqcup Q_1\sqcup Q_2$ and compute the probability that at least one terminal from each of $Q_1$ and $Q_2$ are sampled.

Indeed, since each part has a size at most $\frac14|\T|$, by a straightforward greedy algorithm, it is possible to group all the parts except $T_i$ into two large sets, whose sizes are between $\frac14|\T|$ and $\frac58|\T|$\footnote{The greedy strategy iteratively merges each subset to the smaller pile of two. In the end, the difference between the two piles' sizes is at most $\frac14|\T|$. Since the sum of the two sizes is at least $\frac34|\T|$, the smaller pile must have at least $\frac14|\T|$ elements.}.
Let $Q_1$ and $Q_2$ be such two sets. We have $\T\setminus \phi^{-1}(v) = T_i\sqcup Q_1\sqcup Q_2$.
Now it suffices to prove that with constant probability, exactly one terminal is sampled from $T_i$ and at least one terminal is sampled from each of $Q_1$ and $Q_2$.
By a standard probability argument, we know that sampling exactly one terminal from $T_i$ has a probability of at least $1/(2e)$. For each $j\in\{1,2\}$, sampling at least one terminal from $Q_j$ has a probability at least:
\begin{align*}
    1-\left(1-p\right)^{|Q_j|} \ge 1-\left(1-\frac{1}{2|T_i|}\right)^{|Q_j|} &\ge 1-e^{-1/2}. \tag{$|Q_j|\ge \frac14|\T|\ge |T_i|$}
\end{align*}

Hence, the success probability per sampling at the particular scale is at least 
\begin{align*}
    \frac{1}{2e}\left(1-e^{-1/2}\right)^2 \ge 0.02  
\end{align*}
By repeating the sampling procedure at least $\lceil12\cdot (1/0.02)\cdot \ln n\rceil$ times, we know that \Cref{alg:compute good splits} obtains a $\T$-Steiner mincut that separates $T_i$ and $\T\setminus T_i$ with probability $1-n^{-12}$.
By applying another union bound over all $i$ we obtain the success probability $1-|\T|n^{-12}\ge 1-n^{-11}$ as desired.

\paragraph{Part 2.} 
To show that $\mathcal{\hat{S}}\subseteq \mathcal{S}'$, assume by contradiction that there is $X\in\mathcal{\hat{S}}$ but $X\notin\mathcal{S}'$. By Line~\ref{line7} of \Cref{alg:compute good splits}, we know that $X$ is a $\T$-Steiner mincut. By Line~\ref{line13}, we know that $2\le |X\cap \T| \le \frac14|\T|$. 
Using \Cref{fact-for-case2}, we know that there must exist a part $T_i$ such that $X\cap \T\subseteq T_i$.
This implies that $X\subsetneq X'_i$.
However, from Part 1 we knew that with probability $1-n^{-12}$, $X'_i\in \mathcal{\hat{S}}$.
According to Line~\ref{line14}, $X$ will be removed from $\mathcal{S}$ so $X\notin \mathcal{\hat{S}}$, a contradiction.
\end{proof}

Recall that $\mathcal{S}'_{\text{pruned}}$ is defined by, first removing all $X'_i\in \mathcal{S}'$ where $|X'_i\cap \T|=1$, and then replace each $X'_i$ with $X_i=X'_i\setminus \cup_{j<i} X'_j$. Since these steps are identical to Line~\ref{line13} and Line~\ref{line15} of \Cref{alg:compute good splits}, we obtain the following corollary.

\begin{corollary}\label{lem:transition-part2-to-part3}
Suppose that \Cref{lem:steiner-mincut-case-2-part2} holds. Then, after the execution of Line~\ref{line15}, $\mathcal{S}=\mathcal{S}'_{\mathrm{pruned}}$.
\end{corollary}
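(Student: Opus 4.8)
The plan is to show that, conditioned on \Cref{lem:steiner-mincut-case-2-part2}, the three post-processing lines \ref{line13}--\ref{line15} of \Cref{alg:compute good splits} implement verbatim the two operations in the definition of $\mathcal{S}'_{\mathrm{pruned}}$, so that the corollary becomes pure bookkeeping. I would begin by recording that we are in the regime of \Cref{lem:steiner-mincut-case-2-part2}: there is no balanced edge-cut on the irredundant cactus $H$, hence a unique centroid $v$ (\Cref{lem:steiner-mincut-case-2-part1}), hence a well-defined family $\mathcal{S}'=\{X'_i\}$ of maximal $T_i$-mincuts, one per connected component $T_i$ of $H-v$. By the equivalence between balanced edge-cuts on $H$ and balanced splits on $G$, the collection $\mathcal{S}$ contains no balanced split, the test in Line~\ref{line10} fails, and the algorithm indeed executes Lines~\ref{line13}--\ref{line15}, as the statement presupposes.

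Next I would pin down the collection $\hat{\mathcal{S}}$ present right after Line~\ref{line14}. Two elementary facts: each $X'_i$ satisfies $X'_i\cap\T=T_i$ (it is a $T_i$-mincut of $\T$), and $|T_i|\le\frac14|\T|$ since $v$ is a centroid (the edge or pair of cycle-edges from $v$ into the component $T_i$ defines a $\T$-Steiner mincut whose side not containing $v$ is exactly $T_i$). Consequently $X'_i$ passes the upper-size test in Line~\ref{line13} automatically, and Line~\ref{line13} deletes $X'_i$ exactly when $|T_i|=1$. On the other hand, after Line~\ref{line13} every surviving set has terminal size $\ge 2$, so Part~2 of \Cref{lem:steiner-mincut-case-2-part2} ($\hat{\mathcal{S}}\subseteq\mathcal{S}'$) gives $\hat{\mathcal{S}}\subseteq\{X'_i:|T_i|\ge2\}$, while Part~1 gives the reverse inclusion. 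Hence $\hat{\mathcal{S}}=\{X'_i\in\mathcal{S}':|T_i|\ge2\}$, which is precisely the collection produced by the first step in the definition of $\mathcal{S}'_{\mathrm{pruned}}$ (discarding the singleton sets $X'_i$ with $|X'_i\cap\T|=1$).

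It remains to analyse Line~\ref{line15}, which processes the surviving sets in iteration order and replaces each $X_i$ by $X_i\setminus\bigcup_{j<i}X_j$ using the sets already updated. A one-line induction shows this preserves all prefix unions, i.e.\ after processing the first $i$ sets $\bigcup_{j\le i}X_j^{\mathrm{seq}}=\bigcup_{j\le i}X'_j$, so the in-place update coincides with the simultaneous update $X_i\mapsto X'_i\setminus\bigcup_{j<i}X'_j$ used in the definition of $\mathcal{S}'_{\mathrm{pruned}}$. Applying this to $\hat{\mathcal{S}}=\{X'_i:|T_i|\ge2\}$ yields exactly $\mathcal{S}'_{\mathrm{pruned}}$, and the corollary follows.

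The proof has no genuinely hard step: all the probabilistic content is already absorbed into \Cref{lem:steiner-mincut-case-2-part2}, and what remains is matching the algorithm's operations against the definition. The only points requiring (minor) care are (i) that the iteration order used in Line~\ref{line15} agrees with the index ordering used to define $\mathcal{S}'_{\mathrm{pruned}}$ — equivalently, reading the identity up to relabeling of indices, which is harmless since the induced decomposition is unique up to relabeling of anchor vertices — and (ii) the routine equivalence between the sequential and the simultaneous set subtraction noted above.
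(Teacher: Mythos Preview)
Your proposal is correct and follows the same approach as the paper, just spelled out in more detail: the paper's proof is essentially the one-liner ``the post-processing that defines $\mathcal{S}'_{\mathrm{pruned}}$ is exactly the same as Line~\ref{line15},'' relying on the hypothesis of \Cref{lem:steiner-mincut-case-2-part2} to identify $\hat{\mathcal{S}}$ after Line~\ref{line14} with $\{X'_i:|X'_i\cap\T|\ge 2\}$. Your explicit verification that Line~\ref{line13} does not accidentally discard any $X'_i$, and your remarks on the sequential-versus-simultaneous subtraction and on indexing, are careful additions that the paper leaves implicit.
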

\begin{proof}The post-processing that defines $S'_{\textrm{pruned}}$ is exactly the same as Line~\ref{line15}.
\end{proof}

Once we prove that with high probability, the returned collection $\mathcal{S}$ from \Cref{alg:compute good splits} is the same as $\mathcal{S}'_{\mathrm{pruned}}$, we can put all our effort now proving that $\mathcal{S}'_{\mathrm{pruned}}$ is exactly what we want for the divide-and-conquer algorithm. That is, we aim to show that $\mathcal{S}'_{\mathrm{pruned}}$ is a good split collection.

\begin{proposition}\label{prop-for-good-split-collection}
Each $X_i\in \mathcal{S}'_{\mathrm{pruned}}$ is a $\T$-split.    
\end{proposition}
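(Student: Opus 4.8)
The plan is to show that each surviving set $X_i\in\mathcal{S}'_{\mathrm{pruned}}$ satisfies both defining properties of a $\T$-split (\Cref{def:t-split}): it is a $\T$-mincut, and both sides contain at least two terminals. The crucial input is \Cref{lem:transition-part2-to-part3}, which tells us that $\mathcal{S}'_{\mathrm{pruned}}$ is obtained from the collection $\mathcal{S}'=\{X'_i\}$ (each $X'_i$ the maximal $T_i$-mincut of $\T$) by first discarding the $X'_i$ with $|X'_i\cap\T|=1$ and then replacing each remaining $X'_i$ with $X_i=X'_i\setminus\bigcup_{j<i}X'_j$.

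First I would argue the mincut property. Each original $X'_i$ is by definition a $T_i$-mincut of $\T$, hence has value exactly $\lambda_G(\T)$ (it separates some pair of terminals and is minimum among such cuts; the discussion preceding the proposition establishes $\lambda(u,v)=\lambda_G(\T)$ for terminals in distinct components of $H-v$, so $T_i$-mincuts are $\T$-Steiner mincuts). Now the pruning step: when we replace $X'_i$ by $X'_i\setminus X'_j$ for $j<i$, I invoke Disjoint \& Posi-modularity (\Cref{lem:disjoint-posi-modularity}) using the fact that $T_i\cap T_j=\emptyset$ (they are distinct parts of the partition of $\T\setminus\phi^{-1}(v)$), so $X'_i\setminus X'_j$ is still a $T_i$-mincut of $\T$. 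Iterating this argument over all $j<i$ — peeling off the previously-processed sets one at a time, each time preserving the $T_i$-mincut status by posi-modularity — gives that $X_i=X'_i\setminus\bigcup_{j<i}X'_j$ is a $T_i$-mincut of $\T$, hence a $\T$-Steiner mincut, hence in particular a $\T$-mincut.

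Next I would argue both sides have $\ge 2$ terminals. For the $X_i$ side: the discussion right before the proposition states that after pruning, every $X_i$ contains at least two terminal vertices — this is exactly why the $|X'_i\cap\T|=1$ sets were discarded in Line~\ref{line13} before the disjointification in Line~\ref{line15}; one must check that the peeling $X_i=X'_i\setminus\bigcup_{j<i}X'_j$ does not remove terminals of $T_i$, which follows because the $T_j$ for $j\ne i$ are disjoint from $T_i$, so no terminal of $T_i$ lies in any $X'_j$ that gets an earlier index... wait, more carefully: a terminal of $T_i$ could a priori lie in $X'_j\cap\T$, but $X'_j\cap\T$ need not equal $T_j$. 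Here I would use \Cref{prop-for-case-2}: if $X'_j\not\subseteq X'_i$ yet $X'_j$ contained a terminal of $T_i$... actually the cleanest route is: $\Cref{lem:avoid-centroid}$ (via $X'_j$ being realized as a maximal mincut avoiding the centroid $v$) forces the corresponding cut of $X'_j$ in $H$ to sit inside one connected component of $H-v$, namely the component of $T_j$; hence $X'_j\cap\T\subseteq T_j$, which is disjoint from $T_i$. So peeling preserves all of $T_i\cap\T$ minus at most... in fact preserves all of it, and since $|T_i|\ge 2$ whenever the set survived Line~\ref{line13}, the $X_i$ side has $\ge 2$ terminals. For the complementary side $V\setminus X_i$: the centroid node $v$ is not in $X_i$ (again by \Cref{lem:avoid-centroid}/\Cref{fact-for-case2}), and since the centroid property guarantees $|X'_i\cap\T|\le\frac14|\T|$ while $|\T|\ge 4$ in the non-base case, the complement contains at least $\frac34|\T|\ge 3\ge 2$ terminals.

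The main obstacle I anticipate is the bookkeeping in the second half: carefully justifying that the iterated posi-modularity peeling both (a) preserves the mincut value and (b) does not delete any terminal of $T_i$, because $X'_j\cap\T$ is in general a proper subset of the component $\phi(T_j)$ and one must rule out $X'_j$ grabbing a terminal that is "charged" to $T_i$. The resolution is to lean on the structural fact — provable from \Cref{lem:avoid-centroid} and the cactus structure of $H$ — that each $X'_j$'s terminal set is exactly $T_j$ (or at least is contained in $T_j$), so the parts $\{X'_j\cap\T\}$ are pairwise disjoint and the peeling is "clean." Everything else is a direct application of \Cref{lem:disjoint-posi-modularity} and the centroid bound from \Cref{lem:steiner-mincut-case-2-part1}.
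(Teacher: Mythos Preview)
Your proposal is correct and follows the same approach as the paper's (two-sentence) proof: iterated posi-modularity (\Cref{lem:disjoint-posi-modularity}) to show $X_i=X'_i\setminus\bigcup_{j<i}X'_j$ remains a $\T$-Steiner mincut, then the terminal-count check. One simplification: you need not reach for \Cref{lem:avoid-centroid} or \Cref{prop-for-case-2} to argue that peeling preserves the terminals of $T_i$ --- by definition a $T_j$-cut of $\T$ separates $T_j$ from $\T\setminus T_j$, so $X'_j\cap\T=T_j$ exactly, which is disjoint from $T_i$; your explicit verification that $|\T\setminus X_i|\ge 2$ (via $|T_i|\le\frac14|\T|$) is a detail the paper leaves implicit.
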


\begin{proof}
First, $X_i$ is a $\T$-Steiner mincut by the definition $X_i=X'_i\setminus \cup_{j<i} X'_j$ and by posi-modularity (\Cref{lem:disjoint-posi-modularity}). Moreover, since each $X_i$ contains at least two terminal vertices so $X_i$ is a $\T$-split.
\end{proof}

\begin{lemma}\label{lem:steiner-mincut-case-2-part3}
Consider a particular pruned mincut $X_i\in\mathcal{S}'_{\mathrm{pruned}}$.
Then, the mincut on $H$ that separates $\phi(X_i\cap \T)$ with 
$\phi(\T\setminus X_i)$ is incident to $v$. That is, after the contraction of all $X_i$, the contracted graph has a cactus that is a star shape with $\ge 4$ leaves.
\end{lemma}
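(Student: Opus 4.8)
The plan is to prove the structural claim about $H$ first and then read off the star shape. Write $v$ for the unique centroid of the irredundant $\T$-Steiner cactus $(H,\phi)$ supplied by \Cref{lem:steiner-mincut-case-2-part1}, and $C_1,\dots,C_k$ for the connected components of $H-v$, so that $T_\ell$ is exactly the set of terminals that $\phi$ maps into $C_\ell$. I would begin with two preliminary observations. First, since $H$ is a cactus (every edge lies in at most one cycle), each $C_\ell$ is joined to $v$ either by a single bridge or by the two $v$-incident edges of one cactus cycle all of whose remaining vertices lie in $C_\ell$ --- any other configuration would force two distinct cycles through a common edge at $v$. Hence the edge set $\partial C_\ell$ between $C_\ell$ and $v$ is itself a minimum edge-cut of $H$ of value $\lambda_G(\T)$, and by \Cref{def:steiner-cactus}(2) the minimum $T_\ell$-cut of $\T$ in $G$ therefore has value $\lambda_G(\T)$. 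Second, since $H$ preserves every $\T$-Steiner mincut and, by Assumption~\ref{assumption:2}, every pair of terminals is separated by some $\T$-Steiner mincut, the map $\phi$ is injective on $\phi^{-1}(v)$, so $|\phi^{-1}(v)|\le 1$ and $\sum_\ell|T_\ell|\ge|\T|-1$; combined with $|T_\ell|\le\frac14|\T|$ for all $\ell$ (\Cref{lem:steiner-mincut-case-2-part1}) and with $|T_i|\ge 2$ (since $X_i\in\mathcal{S}'_{\mathrm{pruned}}$, forcing $|\T|\ge8$), this gives $k>3$, i.e.\ $k\ge 4$.

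Next I would show that $X_i$ is realized in $H$ precisely by the cut $\partial C_i$. Applying \Cref{lem:disjoint-posi-modularity} repeatedly to the pairwise-disjoint sets $T_j$, the pruned set $X_i=X'_i\setminus\bigcup_{j<i}X'_j$ is still a $T_i$-mincut of $\T$, with $X_i\cap\T=T_i$ and value $\lambda_G(\T)$ (using the first preliminary observation); so $X_i$ is a $\T$-Steiner mincut and by \Cref{def:steiner-cactus}(2) corresponds to a minimum edge-cut $D_i$ of $H$ separating $\phi(T_i)$ from $\phi(\T\setminus T_i)$. As in the proof of \Cref{fact-for-case2}, one side of $D_i$ lies inside a single connected component of $H-v$; since $\phi(\T\setminus T_i)$ meets the $k-1\ge 3$ components $C_j$ with $j\ne i$, that side must be the $\phi(T_i)$-side, which therefore lies in $C_i$ and so $v$ lies on the $\phi(\T\setminus T_i)$-side. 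The technical heart is to upgrade ``$\subseteq C_i$'' to ``$=C_i$'': using irredundancy of $H$ --- every empty node has degree at least $3$, so $C_i$ contains no slack that a minimum cut could shave off while keeping all terminal nodes of $C_i$ on one side --- one argues that no minimum edge-cut whose $\phi(T_i)$-side is a proper subset of $C_i$ can separate $\phi(T_i)$ from $\phi(\T\setminus T_i)$, whence $D_i=\partial C_i$. In particular $D_i$ is incident to $v$, which is the first assertion. I expect this last step to be the main obstacle, since it is the only place where the internal structure of $C_i$ and the irredundancy hypothesis must be used carefully.

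For the ``that is'' consequence, note that contracting $X_i$ in $G$ corresponds, by the standard compatibility of simple refinements with cactus representations, to contracting the side $C_i$ of $D_i$ in $H$. Performing this for every $X_i\in\mathcal{S}'_{\mathrm{pruned}}$ turns $H$ into a cactus $H'$ in which every node other than $v$ --- each anchor (the image of a contracted $C_i$ with $|T_i|\ge 2$) and each leftover singleton terminal node --- is joined to $v$ directly by a single edge or by a $2$-cycle. This is exactly a star (of one of the two admissible types), it is a Steiner cactus of the contracted graph, and its number of leaves equals $k$, which we already showed is at least $4$.
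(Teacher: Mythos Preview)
Your proposal is correct and follows essentially the same route as the paper, but you are far more explicit than the paper, whose entire argument for the first assertion is the single sentence ``This is straightforward to check using the fact that $H$ is irredundant.'' You are right that the substantive step is upgrading ``the $\phi(T_i)$-side lies in $C_i$'' to ``equals $C_i$''; the paper simply asserts this via irredundancy, and your sketch (empty nodes of an irredundant cactus have degree $\ge 3$, so no minimum cut can shave off a proper empty fringe of $C_i$ while retaining all of $\phi(T_i)$) is exactly the intended justification.

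Two small remarks. First, for the leaf count the paper argues $|\T|\ge 5$ (from ``at least one split and no balanced split'') and that each part has strictly fewer than $\tfrac14|\T|$ terminals; your route via $|T_i|\ge 2\Rightarrow|\T|\ge 8$ combined with $\sum_\ell|T_\ell|\ge|\T|-1$ and $|T_\ell|\le\tfrac14|\T|$ is a cleaner way to reach $k\ge 4$ and avoids the paper's slightly loose ``strictly less than.'' Second, in your final paragraph you describe the contracted object as ``$H$ with each $C_i$ collapsed''; strictly speaking the lemma concerns a Steiner cactus of the contracted graph $G_{\ell+1}$, not a literal contraction of $H$. This is harmless here because, once you know every relevant mincut of $H$ is incident to $v$ and (by irredundancy) each $C_j$ with $|T_j|=1$ is a single node, the star you describe \emph{is} a valid $\T_{\ell+1}$-Steiner cactus of $G_{\ell+1}$; but it is worth saying that explicitly rather than relying on ``standard compatibility of simple refinements.''
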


\begin{proof}
This is straightforward to check using the fact that $H$ is irredundant.
To bound the number of leaves,
we first deduce that $|\T|\ge 5$ --- this is because at least one split is found and there is no balanced split.
Moreover, using Assumption~\ref{assumption:2}, there can be at most one terminal being the center of the star.
Since each split contains strictly less than $\frac14|\T|$ terminals, we conclude that 
the star shape has at least $4$ leaves.
\end{proof}

\begin{lemma}\label{lemma:S-is-good-split-collection}
$\mathcal{S}'_{\mathrm{pruned}}$ is a good split collection.
\end{lemma}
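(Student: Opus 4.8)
The plan is to verify directly the two requirements of \Cref{def:good-splits}: that $\mathcal{S}'_{\mathrm{pruned}} = \{X_i\}$ is a collection of pairwise disjoint $\T$-splits, and that the decomposition it induces on $G$ is a good decomposition in the sense of \Cref{def:good-decomposition}. The first requirement is essentially already in hand: \Cref{prop-for-good-split-collection} shows each $X_i$ is a $\T$-split, and disjointness is immediate from the definition $X_i = X'_i\setminus\bigcup_{j<i}X'_j$, since any vertex lands in at most the first index of $\mathcal{S}'$ whose set contains it. So the real content is the second requirement.

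For the good-decomposition requirement, I would argue that exactly one part of the induced decomposition — call it $(G_{i^*},\T_{i^*})$, the one obtained by contracting \emph{all} the sets $X_i$ simultaneously — is the ``special'' part, and every other part is small. For the small parts: each non-special part is (up to the anchor vertex) $G[X_i]$ for some $i$, and its terminal set is $(\T\cap X_i)\cup\{a_i\}$, which has size $|X_i\cap\T|+1 \le \frac14|\T|+1 \le \frac34|\T|+1$ by construction (Line~\ref{line13}/\ref{line:keep-only-small-t-splits} keeps only sets with at most $\frac14|\T|$ terminals, and this bound is inherited by the pruned sets $X_i\subseteq X'_i$). For the special part: I would invoke \Cref{lem:steiner-mincut-case-2-part3}, which says that after contracting all the $X_i$, the cactus of the contracted graph is a star (with $\ge 4$ leaves); the vertices of $G_{i^*}$ outside the anchors map to the centroid node $v$ and the $|\mathcal{S}'_{\mathrm{pruned}}|$ anchor vertices become the leaves. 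By definition this exhibits a Steiner cactus of $\T_{i^*}$ on $G_{i^*}$ that is a star, which is exactly what \Cref{def:good-decomposition} demands of the special index. (One small point to confirm: that the contracted graph $G_{i^*}$ arising in the \emph{induced} decomposition coincides with the ``contract all $X_i$'' graph analyzed in \Cref{lem:steiner-mincut-case-2-part3} — this follows from the disjointness of the $X_i$ and the fact that simple refinements commute up to relabeling anchors, which was observed right after the definition of induced decomposition.)

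The main obstacle I anticipate is bookkeeping rather than a deep new idea: one must be careful that \Cref{lem:steiner-mincut-case-2-part3} is being applied in the regime it was proved for (the ``Case 2'' setting where $(H,\phi)$ is an irredundant Steiner cactus with \emph{no} balanced edge-cut, hence has a unique centroid), and that this case is indeed the only one in which the algorithm reaches Line~\ref{line15} to return a nonempty collection of small splits — in Case 1 the algorithm returns a single balanced split at Line~\ref{line:check-balanced-split}, which trivially induces a good decomposition (both parts have between $\frac14|\T|$ and $\frac34|\T|$ terminals plus an anchor, so both are small and no special part is even needed). So the clean way to organize the proof is: (i) handle the Case-1 branch separately and trivially; (ii) in the Case-2 branch, use \Cref{lem:transition-part2-to-part3} (conditioned on the high-probability event of \Cref{lem:steiner-mincut-case-2-part2}) to identify $\mathcal{S}$ with $\mathcal{S}'_{\mathrm{pruned}}$, then apply \Cref{prop-for-good-split-collection} for the splits, the size bound from Line~\ref{line13} for the small parts, and \Cref{lem:steiner-mincut-case-2-part3} for the special part. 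Assembling these gives that $\mathcal{S}'_{\mathrm{pruned}}$ satisfies both clauses of \Cref{def:good-splits}, completing the proof of \Cref{lem:compute good splits} as well.
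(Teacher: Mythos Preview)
Your proposal is correct and follows essentially the same approach as the paper: verify condition~(1) of \Cref{def:good-splits} via \Cref{prop-for-good-split-collection} and the explicit disjointness of the $X_i$, then verify condition~(2) by bounding $|\T_i|\le |X_i\cap\T|+1\le\frac14|\T|+1$ for the non-special parts and invoking \Cref{lem:steiner-mincut-case-2-part3} for the star-shaped special part. One minor point of over-reach: the lemma as stated is purely about the deterministic object $\mathcal{S}'_{\mathrm{pruned}}$ (which is only defined in the Case~2 setting), so your discussion of the Case~1 branch and of \Cref{lem:transition-part2-to-part3} belongs to the proof of \Cref{lem:compute good splits} rather than to this lemma itself --- the paper keeps these separate, and you should too.
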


\begin{proof}
It suffices to check that $\mathcal{S}'_{\text{pruned}}=\{X_i\}$ satisfies \Cref{def:good-splits}.
By \Cref{prop-for-good-split-collection}, we know that all $X_i$ are indeed $\T$-splits.
Furthermore, by the construction of $X_i$ we know that for any $i\neq j$, $X_i\cap X_j=\emptyset$. Thus,
condition (1) of \Cref{def:good-splits} is satisfied.

Now, since all $\T$-Steiner mincuts in $\mathcal{S}'_{\text{pruned}}$ are disjoint, the decomposition $\{(G_i, \T_i)\}$ induced by $\mathcal{S}$ is well-defined.
Without loss of generality we may apply simple refinements to $G$ in the order of $X_1, X_2, \ldots, X_{k}$. Let $(G_1, \T_1), \ldots, (G_{k+1}, \T_{k+1})$ be the decomposed graphs\footnote{Note that for $1\le i\le k$, we must have $\T_i\supsetneq T_i$: $\T_i$ contains exactly one more anchor vertex.}.
By \Cref{lem:steiner-mincut-case-2-part3}, we know that 
there exists a cactus representation that is a star shape for the very last graph $(G_{k+1}, \T_{k+1})$.
Therefore, by \Cref{def:good-decomposition}, $\{(G_i, \T_i)\}$ is a good decomposition.
This implies that condition (2) of \Cref{def:good-splits} holds and $\mathcal{S}$ is a good split collection.
\end{proof}

\begin{proof}[Proof of \Cref{lem:compute good splits}]
The correctness of
\Cref{lem:compute good splits}  follows from \Cref{lem:steiner-mincut-case-1} and \Cref{lemma:S-is-good-split-collection}.
To analyze the runtime, we first note that 
 \Cref{alg:compute good splits} involves $O(\log^2n)$ maximal isolating mincut computations, which contributes a total of $O((\log^3 n)\cdot \MaxFlow(3n, 4m))$ time by \Cref{thm:max-min-iso-cut}.
 Moreover, by \Cref{lem:total-size-max-min-iso-cut} we know that the total size of subsets after Line~\ref{line9} is at most $O(n\log^2 n)$.
 
It is not hard to see that each of the remaining steps can be implemented in time linear to the total size of the found mincuts, which is $O(n\log^2 n)$: in Line~\ref{line10} it suffices to scan through every subset in $\mathcal{S}$ and count the number of terminals; similar analysis holds for Line~\ref{line13}.
To implement Line~\ref{line14}, it suffices to scan through all the subsets $X_i\in \mathcal{S}$ and mark the size of $|X_i|$ at each terminal vertex $u\in \T\cap X_i$.
For each terminal vertex $u$, we simply select the largest sized $X_i$ that contains $u$ and get rid of all the others. This step also take $O(n\log^2 n)$ time.
 In Line~\ref{line15}, the algorithm initializes a boolean array $A$ and iterates through each set $X_1, X_2, \ldots, X_k$. For each set $X_i$, the algorithm checks for each vertex $u\in X_i$ whether or not $u$ has been set in $A$.
 If set, the algorithm discard $u$ from $X_i$. Otherwise, the algorithm marks $A[u]\gets\mathsf{true}$ and keeps $u$ in the set. Thus, it takes only a linear scan for Line~\ref{line15}. 
 In conclusion, the total runtime of \Cref{alg:compute good splits} is $O((\log^3 n)\cdot \MaxFlow(3n, 4m))$.
\end{proof}

\subsection{Returning a Correct Cactus}
\label{sec:compute-cactus-from-prime-decomposition-tree}

In this section, we prove \Cref{lem:merge-cactus}.
In particular, we will implement the procedures \textsc{TrivialCactus}($G, \T$), \textsc{StarCactus}($G, \T$), and \textsc{MergeCactus}($G, \T, \{H_i\}$).
Then, we will prove the correctness and analyze the runtime.
In this section, we assume that all the other divide and conquer steps (especially a good split collection is returned from Line~\ref{line:compute-good-collection} of \Cref{alg:divide-and-conquer-framework}) are correct.

\paragraph{Hollow 3-Stars vs 3-Cycle.}
A \emph{hollow 3-star} is an induced subgraph on $H$ that has 4 nodes where an empty node connects to exactly 3 other nodes.
Nagamochi and Kameda~\cite{nagamochi1994canonical} pointed out that if we replace this hollow 3-star induced subgraph with a 3-cycle (thereby removing the center empty node) from $H$, the resulting graph preserves the same set of mincuts on $H$.

To ensure a correct cactus being returned, we implement the procedures of~\Cref{alg:divide-and-conquer-framework} with the following invariant:

\begin{invariant}\label{inv:returned_cactus}
Let $(H, \phi)$ be the $\T$-Steiner cactus returned by any subproblem $(G, \T)$.
Then $H$ is irredundant, and does not contain a hollow 3-star as an induced subgraph.
\end{invariant}

Now, we start proving~\Cref{lem:merge-cactus}.

\paragraph{(Line~\ref{line:merge-cuts-1}) Trivial Cactus.} Since $|\T| \le 3$, by computing the mincut for every partition of $\T$ we obtain a $\T$-Steiner cactus in $O(\MaxFlow(n, m))$ time.
Notice that a returned cactus can be chosen to satisify~\Cref{inv:returned_cactus} as there are only $O(1)$ ways to construct such a cactus.

\paragraph{(Line~\ref{line:merge-cuts-2}) Star Cactus.} 
In this case, there is no $\T$-split that is a $\T$-Steiner mincut.
Since $|\T| \ge 4$ when Line~\ref{line:merge-cuts-2} is reached in \Cref{alg:divide-and-conquer-framework}, we claim that there exists a star shaped $\T$-Steiner cactus of $G$.

\begin{fact}
Let $G$ be a graph and $\T$ be a set of terminals with $|\T|\ge 4$.
Suppose that every $\T$-Steiner mincut is trivial (i.e., a $t$-isolating mincut of $\T$ for some $t\in \T$).
Then, there exists $(H, \phi)$, a $\T$-Steiner cactus of $G$ such that $H$ is a star graph.
\end{fact}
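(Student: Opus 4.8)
The plan is to show that when every $\T$-Steiner mincut is a singleton $t$-isolating mincut, the "connectivity structure" of $\T$ with respect to Steiner mincuts is exactly that of a star. First I would observe that, by Assumption~\ref{assumption:2} (which holds after preprocessing), every pair of terminals $u,v\in\T$ is separated by some $\T$-Steiner mincut; since every such mincut is trivial, the only mincuts available are the $\{t\}$-mincuts $X_t$ for various $t\in\T$. In particular, for $u,v$ to be separated, at least one of $X_u$ (with $u$ inside, $v$ outside) or $X_v$ must exist as a genuine $\T$-Steiner mincut. This gives a handle: let $\T_0\subseteq\T$ be the set of terminals $t$ for which the maximal $t$-isolating mincut has value exactly $\lambda_G(\T)$, i.e. is a genuine $\T$-Steiner mincut. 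I would argue $|\T\setminus\T_0|\le 1$: if two distinct terminals $u,v$ both had their $t$-isolating mincut value strictly above $\lambda_G(\T)$, then no trivial mincut separates $u$ from $v$ (a $\{u\}$-mincut must be a $u$-isolating mincut, hence $\ge$ this value $>\lambda$, and symmetrically for $v$), contradicting Assumption~\ref{assumption:2}. So all but at most one terminal have a genuine maximal isolating mincut.

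Next I would build $H$ explicitly as a star: take a center node $c$ and, for each $t\in\T$, a leaf node $\ell_t$ joined to $c$ by a single (non-cycle) edge of weight $\lambda_G(\T)$; set $\phi(t)=\ell_t$. (If there is the exceptional terminal $t_0$ with no genuine isolating mincut, we may instead place $\phi(t_0)=c$, absorbing it into the center; I'd check both conventions and pick whichever makes the verification cleanest — likely placing the special terminal, if it exists, at the center, and otherwise using an empty center.) The mincut value of this star $H$ is $\lambda_G(\T)$, and its global mincuts are precisely the edge cuts isolating one leaf $\ell_t$, i.e. exactly the partitions $(\{t\},\T\setminus\{t\})$ of $\T$. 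So I must verify condition~(2) of \Cref{def:steiner-cactus}: an $A$-mincut of $\T$ is a $\T$-Steiner mincut iff a global mincut of $H$ separates $\phi(A)$ from $\phi(\T\setminus A)$. The "only if" direction uses the hypothesis directly: every $\T$-Steiner mincut is a $t$-isolating mincut for some $t$, so the induced partition of $\T$ is $(\{t\},\T\setminus\{t\})$, which is exactly a mincut of $H$. The "if" direction requires: for every $t\in\T$, the partition $(\{t\},\T\setminus\{t\})$ is actually achieved by some $\T$-Steiner mincut in $G$ — this is precisely where $t\in\T_0$ is used, and it is why $|\T|\ge 4$ matters: with the at-most-one exceptional terminal placed at the center (degree-$(|\T|-1)\ge 3$, so it generates no mincut of $H$ by itself), every leaf $\ell_t$ with $t\in\T_0$ corresponds to a genuine Steiner mincut, and the center, if empty, also generates no spurious mincut.

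The main obstacle I anticipate is the "if" direction combined with correctly handling the single exceptional terminal $t_0$ (the one whose isolating mincut exceeds $\lambda$): I need to be sure that placing it at the center does not create a mincut of $H$ that has no counterpart in $G$, and conversely that no partition $(\{t\},\T\setminus\{t\})$ is "missing". The condition $|\T|\ge 4$ is exactly what guarantees the center has degree $\ge 3$ so the trivial cut around it is not a mincut of the star (a degree-$\ge 3$ star node carries total boundary weight $\ge 3\lambda > \lambda$), ruling out spurious mincuts; and the fact that at most one terminal is exceptional guarantees no genuine partition is unrepresented. I would also double-check that this $H$ can be taken irredundant and free of hollow 3-stars (\Cref{inv:returned_cactus}) — a star on $\ge 4$ nodes has no empty node of degree exactly $3$ if we route the special terminal to the center, and contracting any edge would merge two leaf-classes or collapse the center, destroying a genuine mincut, so irredundancy holds. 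The remaining details are routine cut-value bookkeeping.
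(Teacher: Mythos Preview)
Your proposal is correct and takes essentially the same approach as the paper: let $A$ be the terminals whose isolating mincut attains $\lambda_G(\T)$, make them the leaves of a star, and map $\T\setminus A$ to the center. The paper's two-sentence proof does not invoke Assumption~\ref{assumption:2} inside the Fact (it simply allows possibly several terminals at the center) and defers your ``at most one exceptional terminal'' observation to the sentence immediately following the proof, but otherwise your more detailed verification of both directions of \Cref{def:steiner-cactus} and the role of $|\T|\ge 4$ follows the same line.
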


\begin{proof}
Let $A\subseteq \T$ be the subset of terminals $t$ whose $t$-isolating mincut is a $\T$-Steiner mincut. Construct $H$ as a star graph with $A$ being the set of leaves. The rest vertices in $\T\setminus A$ (possibly empty) are mapped to the center of $H$. Then $H$ preserves all $\T$-Steiner mincuts.
\end{proof}

With the assumption~\ref{assumption:2}, we know that there is at most one terminal vertex $t$ whose $t$-isolating mincut of $\T$ is not a $\T$-Steiner mincut.
Hence, after invoking one Isolating Cut Lemma~(\Cref{thm:isocut}) and checking the isolating mincut values, the \textsc{StarCactus} procedure is able to return a $\T$-Steiner cactus of $G$ in $O(\log|\T|\cdot\MaxFlow(2n, 2m))$ time.
By assumption~\ref{assumption:2} and $|\T|\ge 4$, no hollow 3-star can be formed and thus  \Cref{inv:returned_cactus} holds.

\paragraph{(Line~\ref{line:merge-cuts-3}, Case 1.) Merging from a Balanced Split.}
There are two cases when Line~\ref{line:merge-cuts-3} is reached, depending on whether a balanced split is found or not.
Suppose that a balanced split is found so the good split collection contains exactly one $\T$-split $|\mathcal{S}|=1$.

The \textsc{MergeCactus} procedure relies on the following useful observation.

\begin{fact}\label{lem:anchor-vertex-always-at-boundary}
Let $(G, \T)$ be a subproblem and let $t\in \T$ be an \emph{anchor vertex} generated in some ancestor problems.
Let $(H, \phi)$ be any irredundant $\T$-Steiner cactus of $G$.
Then, the node $\phi(t)$ on $H$ has either degree 1 (a leaf), or degree 2 but in a cycle.
\end{fact}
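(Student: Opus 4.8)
Let $(G,\T)$ be a subproblem, $t\in\T$ an anchor vertex, and $(H,\phi)$ an irredundant $\T$-Steiner cactus of $G$. We must show $\phi(t)$ has degree $1$ in $H$, or degree $2$ lying on a cycle of $H$.

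The plan is as follows. The key point is that $t$ was created as the contracted image of one side of a $\T'$-split in some ancestor problem $(G',\T')$; writing that split as $(X,V'\setminus X)$ with $t$ corresponding to the contracted set $V'\setminus X$ (say), we know $X$ is a $\T'$-Steiner mincut, hence $\C_{G'}(X)=\lambda_{G'}(\T')=\lambda$. I first want to argue that this value $\lambda$ is preserved down the decomposition: every simple refinement only contracts vertices, which cannot decrease the Steiner mincut value, but the anchor-defining cut survives each contraction (the contracted sets in later refinements lie entirely on one side of $(X,V'\setminus X)$, since all the splits used are from a disjoint good collection and the decomposition has a tree structure). Consequently, in the current graph $G$, the single vertex $\{t\}$ itself is a $\T$-Steiner mincut: $\C_G(\{t\})=\lambda=\lambda_G(\T)$. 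This is the crucial structural fact — $t$ is a terminal whose singleton cut achieves the Steiner mincut value.

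Next I translate this to $H$. Since $(H,\phi)$ is a $\T$-Steiner cactus and $\{t\}$ is a $\T$-Steiner mincut separating $\{t\}$ from $\T\setminus\{t\}$, there is a global mincut in $H$ separating $\phi(t)$ from $\phi(\T\setminus\{t\})$. In a cactus graph, every global mincut is either a single weight-$\lambda$ tree edge or a pair of weight-$\lambda/2$ edges from a common cycle; and the node $\phi(t)$ must be isolated on one side of such a cut (as it is the only terminal image there, and there are no empty leaves since by construction — invariant — $H$ has no hollow $3$-stars, though I should double-check that $\phi(t)$ need not be empty; in fact $\phi(t)$ is non-empty by definition). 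If $\phi(t)$ is isolated by a single tree edge, it is a degree-$1$ leaf. If it is isolated by two cycle edges, those two edges must both be incident to $\phi(t)$ — otherwise cutting them would not separate $\phi(t)$ alone — so $\phi(t)$ has degree $2$ on that cycle. To conclude degree exactly $2$ and not more, I use irredundancy together with the fact that $\C_H(\phi(t))$ cannot exceed $\lambda$: any extra edge incident to $\phi(t)$ (beyond the two cycle edges, or beyond the one tree edge) would push $\C_H(\{\phi(t)\})$ strictly above $\lambda$, contradicting that the mincut through $\phi(t)$ has value exactly $\lambda$ and is a \emph{min}cut.

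I expect the main obstacle to be the first half — making precise and airtight the claim that the anchor-defining cut "survives" all subsequent refinements so that $\{t\}$ remains a $\T$-Steiner mincut in the current $G$ with the same value $\lambda$. One must be careful that later refinements in the recursion, including those done \emph{within} the subtree rooted at this subproblem's ancestor, never merge $t$ with anything and never create a cheaper Steiner cut; here I would lean on the decomposition-tree structure (each anchor lives in exactly two adjacent decomposed graphs, and refinements on one side don't touch the other side's contracted blob) and on monotonicity of Steiner mincut value under contraction. The $H$-side argument is comparatively routine given the cactus mincut structure and irredundancy, so I would keep it brief.
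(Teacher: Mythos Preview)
Your approach matches the paper's: establish that $\{t\}$ is itself a $\T$-Steiner mincut in $G$, translate this via the cactus definition to a mincut of $H$ isolating $\phi(t)$, and conclude from the cactus mincut structure together with irredundancy. Two corrections are worth making.

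First, the step you flag as the main obstacle is a one-liner in the paper: an anchor vertex is by construction the contraction of one side of a Steiner mincut in an ancestor problem, and each simple refinement is just a further contraction (which preserves the Steiner mincut value and leaves the cut $\{t\}$ intact), so $\{t\}$ remains a $\T$-Steiner mincut in every descendant subproblem automatically. No careful tracking of later refinements is needed.

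Second, your argument that $\phi(t)$ sits alone on its side of the $H$-mincut is off. You invoke the algorithm's invariant (no hollow $3$-stars), but the Fact is stated for an \emph{arbitrary} irredundant cactus, so that invariant is unavailable---and it would not rule out empty leaves anyway. The paper's argument here is simply irredundancy: if the $\phi(t)$-side contained another (necessarily empty) node, one could contract an edge on that side and still have a valid $\T$-Steiner cactus, contradicting irredundancy. Your final degree bound via $\C_H(\{\phi(t)\})\le\lambda$ is circular as written, since a priori you only know that \emph{some} mincut (not necessarily the singleton cut $\{\phi(t)\}$) separates $\phi(t)$ from the other terminal images; once irredundancy gives you the singleton side, the degree conclusion is immediate and that extra argument is unnecessary.
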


\begin{proof}
Since $t$ is an anchor vertex, $\{t\}$ is a $\T$-Steiner mincut of $G$.
Hence, there exists a mincut on $H$ that separates $\phi(t)$ with $\phi(\T\setminus \{t\})$. Since $H$ is irredundant, the $\phi(t)$-side of the mincut must be a single node (i.e., $\phi(t)$ itself), and the statement follows.
\end{proof}

Now, let us assume that the balanced split in a good split collection $\mathcal{S}$ decomposes the graph $G$ into two subproblems $(G_1, \T_1)$ and $(G_2, \T_2)$, with a shared anchor vertex $a\in \T_1\cap \T_2$.
Let $(H_1, \phi_1)$ and $(H_2, \phi_2)$ be the cactus returned from the two subproblems respectively.

Using \Cref{lem:anchor-vertex-always-at-boundary},
there are only constantly many situations to be handled:

\begin{description}
    \item[Leaf-Leaf Case.] If both anchor nodes $\phi_1(a)$ and $\phi_2(a)$ have degree 1, say edge $(x, \phi_1(a))$ in $H_1$ and edge $(\phi_2(a), y)$ in $H_2$, then the procedure forms the merged cactus by simply connecting $H_1$ and $H_2$ with an edge $(x, y)$ and then delete the anchor nodes $\phi_1(a)$ and $\phi_2(a)$. 
    Notice that \Cref{inv:returned_cactus} holds since $(x, y)$ cannot be further contracted, and this operation does not produce a hollow 3-star, simply because $|\T|\ge 4$.
    \item[Leaf-Cycle Case.] If the anchor vertex $a$ has degree 1 in one of the cactus and degree 2 in the other, without loss of generality, edge $(\phi_1(a), x)$ in $H_1$, edge $(\phi_2(a), y)$ and $(\phi_2(a), z)$ in $H_2$.
    Then the procedure simply connects $H_1$ and $H_2$ with edges $(x, y)$ and $(x, z)$ and then deletes the anchor nodes $\phi_1(a)$ and $\phi_2(a)$.
    \Cref{inv:returned_cactus} holds here too.
    \item[Cycle-Cycle Case.] The last case is nontrivial. Now both anchor nodes $\phi_1(a)$ and $\phi_2(a)$ have degree 2, say edges $(\phi_1(a), x_1)$ and $(\phi_1(a), y_1)$ in $H_1$ and edges $(\phi_2(a), x_2)$ and $(\phi_2(a), y_2)$ in $H_j$.
    We need to take care of the relation between these two cycles, as in this case there may be missing $\T$-Steiner mincuts.
    There will be three possible outcomes in the merged cactus, and the algorithm has to identify the correct formulation.
    \begin{enumerate}
        \item The two cycles are separated from each other in the cactus, and there is an empty node in the middle.
        \item They form one larger cycle together, concatenated by edges $(x_1,x_2)$ and $(y_1, y_2)$.
        \item They form one larger cycle together, concatenated by edges $(x_1,y_2)$ and $(y_1, x_2)$.
    \end{enumerate}
\end{description}

\paragraph{Test via Max-Flows.}
Fortunately, it is possible to distinguish the three cases mentioned above.
All we need to do is to obtain the mincut values between terminal sets $\{x_1,x_2\}$ and $\{y_1,y_2\}$, and between terminal sets $\{x_1,y_2\}$ and $\{x_2,y_1\}$ respectively.
This can be done by running $st$-mincut oracles on $G$ with the terminal sets $\{x_1,x_2\}$ (or $\{x_1,y_2\}$) contracted to $s$ and $\{y_1,y_2\}$ (or $\{x_2,y_1\}$ respectively) contracted to $t$.
If none of them equals to the known value $\lambda_G(\T)$, then it is the case (1).
Otherwise, $\{x_1,x_2\}$ or $\{x_1,y_2\}$ is a $\T$-Steiner mincut corresponding to case (2) and (3) respectively. 
This can be done in $O(\MaxFlow(n, m))$ time.

\paragraph{Correctness.} Here we briefly prove that the combined cactus is indeed a $\T$-Steiner cactus of $G$.
Let $H'$ be some irredundant $\T$-Steiner cactus of $G$. 
Let $\mathcal{S}=\{X\}$ and let $a\in\T_1\cap \T_2$ be the anchor vertex.
Since $H'$ is irredundant, there is a unique mincut on $H'$ that separates $\phi(X\cap \T)$ and $\phi(\T\setminus X)$.
It is straightforward to check that if $X$ corresponds to a 1-edge-cut of $H'$, then every mincut in $G$ is preserved in the subproblems already.
If $X$ corresponds to a $2$-edge-cut on $H'$, then at least one anchor node $\phi_1(a)$ or $\phi_2(a)$ must be in a cycle in the returned cactus. In this case, the above procedure recovers the cycle correctly.
\Cref{inv:returned_cactus} holds for the cycle-cycle case too as no new star can be formed.

\paragraph{(Line~\ref{line:merge-cuts-3}, Case 2.) Merging When There Is No Balanced Split.}
The other case when invoking Line~\ref{line:merge-cuts-3} is that no balanced split exists and thus a good split collection with one or more splits is returned.
Suppose that $|\mathcal{S}|=\ell \ge 1$ and the graph is decomposed into $\ell+1$ subproblems $\{(G_i, \T_i)\}$, with the last subgraph $(G_{\ell+1}, \T_{\ell+1})$ containing all the anchor vertices generated in this recursion step.
By \Cref{lem:steiner-mincut-case-2-part3} and the algorithm, the returned cactus from the subproblem $(G_{\ell+1}, \T_{\ell+1})$ must be a star graph of at least $4$ leaves. Let $H_{\mathrm{center}}$ be this star graph.

The algorithm attaches each cactus $H_i$ from other subproblems $(G_i, \T_i)$ to the star graph $H_{\mathrm{center}}$ via the {\textbf{Leaf-Leaf Case}} or the {\textbf{Leaf-Cycle Case}}.
Since no tests are required, this step can be done in linear time $O(|V(G)|+|E(G)|)$.
The correctness argument is the same as the balanced-split case since we can view this process as sequentially merging two cactus at a time.
For the same reason, \Cref{inv:returned_cactus} holds too.

\paragraph{Conclusion in Runtime.} We conclude that the bottleneck of the runtime happens whenever the algorithm invokes the \textsc{StarCactus} procedure, which invokes one isolating cut algorithm and runs in $O(\log|\T|\cdot \MaxFlow(2n, 2m))$ time.
All other operations require at most one max flow procedure so \textsc{MergeCactus} takes up to $O(\MaxFlow(n, m))$ time.

\section{Steiner Hypercactus Construction}
\label{sec:hypercactus}
\label{sec:hypercactus-representation-for-hyperedge-mincuts}

In this section, we generalize our algorithms in \Cref{sec:max isocut,sec:steiner-cactus} to hypergraphs; we give an almost-linear time construction of a \emph{Steiner hypercactus representation} that succinctly represents all Steiner hyperedge mincuts. 
This is the first almost-linear time algorithm even for normal hypercactus representation.

\paragraph{Preliminaries on (Steiner) Hypercactus Representation.}
A hypergraph $G = (V,E)$ consists of a vertex set $V$ and a hyperedge set $E$ where each edge $e$ is a subset of vertices. Let $n = |V|, m = |E|$ and $p = \sum_{e\in E}|e|$.
Given a subset of vertices $X\subseteq V$, the \emph{induced subhypergraph} $G[X]$ is defined by removing all outside vertices $V\setminus X$ and all the incident edges, i.e. $G[X] = (X, E_X)$ where $E_X = \{ e\in E \mid \forall v\in e, v\in X \}$.
Let $\T\subseteq V$ be a terminal vertex set. A $\T$-Steiner cut is a cut of $G$ that separates $\T$. A $\T$-Steiner mincut is a minimum valued $\T$-Steiner cut.

A \emph{hypercactus} is a hypergraph that satisfies the following properties:

\begin{itemize}[itemsep=0pt]
    \item $H$ is connected.
    \item Every rank-2 edge on $H$ is in at most one cycle, and this cycle must contain only rank-2 edges.
    \item For every hyperedge of rank $r>2$, removing this hyperedge partitions $H$ into \emph{exactly} $r$ connected components, where each connected component is also a hypercactus.
\end{itemize}

Fleiner and Jord{\'{a}}n \cite{FJ99} showed that there exists a hypercactus graph $H$ that represents all $\T$-Steiner hyperedge mincuts. 
\begin{definition}[Steiner Hypercactus, see also~\cite{FJ99}]\label{def:T-steiner-hypercactus}
Given a hypergraph $G$ and a terminal set $\T$,
a \emph{$\T$-Steiner hypercactus} $(H, \phi)$ is a weighted hypercactus graph $H$ with a mapping $\phi:\T\to V(H)$ such that 
(1) edges in a cycle have weights $\lambda_G(\T)/2$ and edges or hyperedges have weights $\lambda_G(\T)$, and (2)
an $A$-mincut of $\T$ is a $\T$-Steiner mincut if and only if a global mincut on $H$ separates $\phi(A)$ and $\phi(\T\setminus A)$.
\end{definition}

\paragraph{Our Result.}
Our main result is an algorithm for constructing Steiner hypercactus for any hypergraphs using polylogarithmic maxflow calls.

\begin{theorem}\label{thm:steiner-hypercactus-main-hypergraph}
Let $G$ be a hypergraph with $n$ vertices, $m$ edges and $p$ total volume of edges.
Let $\T$ be a set of terminals.
There exists a randomized Monte Carlo algorithm such that,
with probability $1-8n^{-10}$, the algorithm correctly computes a $\T$-Steiner hypercactus
in $O(\log^4 n)\cdot \mathrm{MaxFlow}(O(n+m), O(p + n\log |\T|))$ time.
\end{theorem}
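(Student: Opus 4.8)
The plan is to mirror the proof of \Cref{thm:steiner-cactus-main}: run the divide-and-conquer framework of \Cref{alg:divide-and-conquer-framework} essentially verbatim, replacing every subroutine by its hypergraph analogue, and check that each structural lemma of \Cref{sec:steiner-cactus} survives the generalization. First I would handle preprocessing: $\lambda_G(\T)$ is obtained from the almost-linear hypergraph Steiner mincut algorithm of \cite{CQ21}, and the $\lambda$-connected-component step of \Cref{lem:preprocessing} carries over unchanged since it only invokes max flow; so we may assume $\lambda_G(\T)$ is known and that every pair of terminals is separated by some $\T$-Steiner mincut. The recursion then proceeds as before --- compute a good split collection, form the induced decomposition, recurse on each part, and merge the returned sub-hypercacti, with small base cases done by brute-force max flow --- and the recursion depth is $O(\log|\T|)$ by \Cref{def:good-decomposition}. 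For the running time, each max-flow call is on a hypergraph that, via the standard hyperedge-gadget reduction, becomes a graph with $O(n+m)$ vertices and $O(p)$ edges; the same potential and counting argument as in the proof of \Cref{thm:steiner-cactus-main} bounds the total problem size at each recursion depth, with the newly created split edges having rank two and hence adding only to the edge count (contributing the $O(n\log|\T|)$ term), giving the claimed $O(\log^4 n)\cdot\MaxFlow(O(n+m),O(p+n\log|\T|))$ bound.

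The first ingredient is a hypergraph maximal isolating mincut routine, and here I would use the definition advertised in \Cref{sec:techniques}: the maximal $v$-isolating mincut of $\T$ is, among all $v$-isolating mincuts $(X,V\setminus X)$ whose $X$-side stays connected after deleting $\partial X$, one maximizing $|X|$. After checking that the hypergraph cut function $\C$ is submodular and posi-modular on crossing sets, the Pairwise Intersection Only Lemma (\Cref{lem:three-crossing-isolating-mincuts}) goes through --- its proof uses only posi-modularity --- once we additionally verify that the set operations $X_A\setminus X_B$ and $X_A\cap X_B$ preserve the connected-$X$-side property, which is a short check. This yields the $\sum_v |X_v| = O(n)$ bound of \Cref{lem:total-size-max-min-iso-cut}, so all maximal isolating mincuts can be stored explicitly, and \Cref{alg:max-min-iso-cut} runs unchanged, each divide step costing one hypergraph max flow followed by a linear-time residual-graph scan.

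With the routine in hand, \Cref{alg:compute good splits} and its analysis transfer: sample terminals at $O(\log|\T|)$ scales, call the routine, keep the splits of value $\lambda_G(\T)$, and post-process to disjoint maximal small splits. The two-case analysis (\Cref{lem:steiner-mincut-case-1} and \Cref{lem:steiner-mincut-case-2-part2}) is carried out against any irredundant $\T$-Steiner hypercactus $H$; the only new points are (i) a weighted centroid node $v$ of $H$ still exists --- replace each rank-two cycle and each rank-$r$ hyperedge by an $r$-leaf star to obtain a tree $H'$ each of whose edges still represents a mincut of $H$, take its weighted centroid, and argue as in \Cref{lem:steiner-mincut-case-2-part1} (now greedily over the $r$ branches of a hyperedge-center as well as over cycles) that it lands on a genuine node of $H$ --- and (ii) \Cref{lem:avoid-centroid} still holds, because one side of any minimum cut of a hypercactus lies entirely within a single connected component of $H-v$, so a cut separating three terminals in distinct components of $H-v$ cannot contain $v$. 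Hence a good split collection is found with probability $1-n^{-11}$, exactly as in \Cref{lem:compute good splits}.

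The main obstacle is the merging step, precisely because a hypercactus has hyperedges of rank $>2$: a careless divide step would slice such a rank-$r$ hyperedge into $\Omega(r)$ fragments spread over distinct subproblems, and stitching them back would cost one max flow per fragment, i.e.\ $\Omega(pr)=\Omega(pn)$ overall. I would bypass this by proving that the decomposition induced by a good split collection never splits a rank-$>2$ hyperedge of $H$ nontrivially. Indeed, every split $X_i$ in the collection is a $\T$-Steiner mincut whose $X_i$-side stays connected after removing $\partial X_i$ (by the chosen definition), so on $H$ it corresponds to a $1$-edge cut, a $2$-edge cut inside a rank-two cycle, or the separation of a \emph{single} branch hanging off a rank-$r$ hyperedge; in every case the simple refinement of $X_i$ merely replaces that branch by an anchor vertex in the ``center'' subproblem and places the branch in its own subproblem, so every rank-$>2$ hyperedge of $H$ survives intact inside exactly one sub-hypercactus. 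Consequently the merge routine only ever splices rank-two structure at anchor vertices: for a balanced split this is the Leaf-Leaf, Leaf-Cycle, and Cycle-Cycle case analysis of \Cref{sec:compute-cactus-from-prime-decomposition-tree}, each resolved by $O(1)$ hypergraph max-flow tests; and when no balanced split exists, the returned sub-hypercacti are attached as leaves of the center sub-hypercactus produced by \textsc{StarOrBrittleCactus} (a star, or a single-hyperedge ``brittle'' center), a purely combinatorial operation. Maintaining \Cref{inv:returned_cactus} (irredundant, no hollow $3$-star) throughout, correctness and the $1-8n^{-10}$ success probability follow from the same union bound over the $O(|\T|)$ subproblems as in the graph case, which completes the proof of \Cref{thm:steiner-hypercactus-main-hypergraph}.
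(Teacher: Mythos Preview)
Your overall plan matches the paper's, but two of the ``short checks'' you assert actually fail, and the paper has to work around them.

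First, in porting the Pairwise Intersection Only Lemma you write that one need only ``verify that the set operations $X_A\setminus X_B$ and $X_A\cap X_B$ preserve the connected-$X$-side property, which is a short check.'' This is false: in hypergraphs neither operation preserves the connectivity clause of \Cref{def:A-mincut hypergraph}, and the paper says so explicitly. The fix is to introduce \emph{relaxed} $A$-mincuts (dropping connectivity), show their values agree with $A$-mincut values, and redo the $(\star)$-equality argument at the hyperedge level: every hyperedge touching $X=X_A\cap X_B\cap X_C$ that is not entirely inside $X$ must contribute to all three of $\C(X'_A),\C(X'_B),\C(X'_C)$ and hence to $\C(X_A)$, so it is a boundary edge of $X_A$ and is absent from the induced subhypergraph $(G/A)[X_A]$---contradicting its connectivity. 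This is \Cref{lem:three-crossing-isolating-mincuts-on-hypergraph}, and it is not the graph proof with a one-line patch.

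Second, your centroid argument does not go through. You claim that if the weighted centroid of the auxiliary tree $H'$ lands on a hyperedge-center, one can ``greedily over the $r$ branches'' produce a balanced edge-cut and force the centroid onto a genuine node. But greedily grouping branches yields a cut meeting $e$ in between $2$ and $|e|-2$ nodes, which is \emph{not} an accessible mincut of $H$; since the no-balanced-cut hypothesis of \Cref{lem:steiner-mincut-case-2 hypergraph} concerns only accessible cuts, there is no contradiction, and the centroid can genuinely be a hyperedge. That is precisely case~(2) of \Cref{lem:steiner-mincut-case-2 hypergraph}, the brittle case you invoke later in \textsc{StarOrBrittleCactus} but never actually derive. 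Relatedly, your one-line justification that connectivity of the $X_i$-side forces a single-branch cut on $H$ is not self-evident: it rests on the structural lemma that in a brittle all $A$-mincuts with $|A|\le|\hat\T|-2$ share the \emph{same} boundary edge set (\Cref{lem:unique mincut in brittle}), from which \Cref{lem:never-split-brittle} then deduces that a connected maximal $t$-isolating mincut meets every rank-$\ge 3$ hyperedge of $H$ in $0$, $1$, $|e|-1$, or $|e|$ nodes.
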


The rest of the section is organized as follows. 
In \Cref{sec:max isocut hypergraph}, we show that by carefully modifying a definition of $A$-cuts in a subtle way, we can generalize our maximal isolating mincuts algorithm from \Cref{sec:max isocut} to hypergraphs.
Then, we present our divide-and-conquer algorithm for constructing Steiner cactus in \Cref{sec:divide and conquer hypergraph}, generalizing our algorithm in \Cref{sec:steiner-cactus}.

\subsection{Maximal Isolating Mincuts on Hypergraphs}
\label{sec:max isocut hypergraph}

\label{sec:max isocut hyper}
\label{sec:maximal-isolating-hyperedge-mincuts}

To overcome the first challenge to fast algorithms for computing maximal isolating mincuts in hypergraphs as discussed in \Cref{sec:techniques}, we carefully give a new definition of $A$-cuts of $\T$ in~\Cref{def:A-mincut hypergraph} with additional \emph{connectivity constraint}. Then, we give a generalization of \Cref{thm:max-min-iso-cut} in~\Cref{thm:max-min-iso-cut-hypergraph}.

\begin{definition}
\label{def:A-mincut hypergraph}
Let $G=(V, E)$ be a hypergraph and $\T\subseteq V$.
For any proper subset of terminals $A\subsetneq \T$, a cut $X$ is \emph{$A$-cut} of $T$ if it satisfies the following conditions:
\begin{itemize}[itemsep=0pt]
    \item The cut $(X, V\setminus X)$ separates $A$ and $\T\setminus A$, with $A\subseteq X$.
    \item After removing the boundary edges $\partial X$, for any $u\in X$, there exists a path from $u$ to some vertex $v\in A$.
    That is, $(G/A)[X]$ is connected.
\end{itemize}
\end{definition}
We use the same terminology for related concepts. An \emph{$A$-mincut} of $\T$ is a minimum valued $A$-cut of $\T$, denoted as $X_A$.  For any vertex $t\in \T$, a cut $X_t$ is \emph{$t$-isolating mincut} of $\T$ if $X_t$ is a $\{t\}$-mincut of $\T$.
We say that an $A$-mincut $X_A$ of $\T$ is \emph{maximal} (resp. minimal), if for any other $A$-mincut $X'_A$ of $\T$, we have $X_A\supseteq X'_A$ (resp. $X_A\subseteq X'_A$).

By crucially exploiting  \Cref{def:A-mincut hypergraph}, we are able to generalize the maximal isolating mincuts algorithm from \Cref{thm:max-min-iso-cut} to hypergraphs.

\begin{theorem}\label{thm:max-min-iso-cut-hypergraph}
There exists an algorithm that, given an undirected weighted hypergraph $G=(V, E)$ and a terminal set $\T\subseteq V$, in $O(\log|\T|)\cdot\mathrm{MaxFlow}(O(n+m),O(p))$ time computes the maximal $v$-isolating mincuts of $\T$ for all terminals $v\in \T$.
\end{theorem}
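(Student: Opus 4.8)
The plan is to mirror the graph algorithm (\Cref{alg:max-min-iso-cut}) essentially verbatim, using the connectivity-constrained definition of $A$-mincuts from \Cref{def:A-mincut hypergraph}, and to re-establish the three structural ingredients that made the graph proof work: (i) a disjointness/posi-modularity lemma (analogue of \Cref{lem:disjoint-posi-modularity}), (ii) a nesting/submodularity lemma (analogue of \Cref{lem:steiner-modularity}), and (iii) the ``pairwise intersection only'' lemma (analogue of \Cref{lem:three-crossing-isolating-mincuts}). Given these, the divide-and-conquer algorithm is identical: partition $\T\setminus\{p\}$ into two halves $A\cup B$, compute the maximal $A$-mincut $X_A$ and maximal $B$-mincut $X_B$ of $\T$ via a single hypergraph max-flow each (contracting $A$ into a source and $B$ into a sink, then reading off the source/sink-side of the residual graph — which automatically respects the connectivity constraint since unreachable-from-$s$ or cannot-reach-$t$ sets are connected through the appropriate terminal), recurse on $G/(V\setminus X_A)$ with terminal set $A\cup\{p_A\}$ and on $G/(V\setminus X_B)$ with $B\cup\{p_B\}$, and bottom out when $|\T|\le 4$. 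Correctness of the recursion follows from the nesting lemma exactly as in \Cref{lem:max-iso-mincut-correctness}: if $v\in A$ then the maximal $v$-isolating mincut $X_v\subseteq X_A$, so contracting everything outside $X_A$ is safe.

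The runtime analysis is where the connectivity constraint pays off, and it follows the structure of \Cref{lem:max-iso-mincut-runtime}. The recursion depth is $\lceil\log|\T|\rceil$. At each depth $i$, the subproblems $\{(G_j,\T_j,p_j)\}$ have disjoint non-pivot terminal sets, and $G_j$ minus its pivot is exactly the maximal $\T_j'$-mincut of $\T$ on $G$ (with $\T_j' = \T_j\setminus\{p_j\}$). By the hypergraph pairwise-intersection-only lemma, every vertex of $G$ lies in at most two subproblems at depth $i$, so $\sum_j|V(G_j)| \le 2n + \min\{2^i,|\T|\} = O(n+|\T|) = O(n)$. For the edges and the total volume $p$: here one must be careful because contracting $V\setminus X_A$ into a single vertex could in principle create many parallel hyperedges; the standard fix is that each hyperedge in a subproblem has at least one non-pivot vertex, and a hyperedge of $G$ appears in at most two subproblems at a given depth (one per side), so $\sum_j m_j = O(m)$ and $\sum_j p_j = O(p)$, giving the claimed $O(\log|\T|)\cdot\mathrm{MaxFlow}(O(n+m),O(p))$ bound after summing over depths. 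The base-case max-flows ($|\T|\le 4$) contribute only a constant factor.

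The main obstacle I expect is re-proving the pairwise-intersection-only lemma (\Cref{lem:three-crossing-isolating-mincuts}) for hypergraphs \emph{with} the connectivity constraint. The graph proof uses an edge-by-edge charging argument against equation \eqref{eq:crossing-formula}, which relied on an edge touching exactly two vertices — a hyperedge $e$ of rank $r$ can be split among $X_A', X_B', X_C'$ in more complicated ways, and one has to argue that $\C(X_A)+\C(X_B)+\C(X_C) - (\C(X_A')+\C(X_B')+\C(X_C'))\ge 0$ hyperedge-by-hyperedge, then that a hyperedge straddling the triple intersection $X := X_A\cap X_B\cap X_C$ contributes strictly positively, contradicting minimality of $X_A$ (using connectedness of $G/A$ on $X_A$ to guarantee $\C(X,V\setminus X_A)>0$, which is precisely why the connectivity constraint is needed — without it $X$ could be ``cut off'' from $A$ by $\partial X_A$ and the argument collapses, matching the single-hyperedge counterexample in \Cref{sec:techniques}). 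The bookkeeping is: a hyperedge $e$ with $e\cap X\ne\emptyset$ and $e\not\subseteq X$ contributes to at least one of $\C(X_A),\C(X_B),\C(X_C)$ since $X$ is their common intersection, and contributes to at most one of the primed cuts since those are disjoint and $e$ already meets $X$; so its net contribution is $\ge 0$, and is $>0$ unless $e$ contributes equally to a primed cut and an unprimed cut, which a short case check rules out when $e$ meets $X$. I would also need to double-check that the posi-modularity-based claim ``$X_A\setminus X_B$ is again an $A$-mincut'' survives the connectivity constraint — here the key point is that $(G/A)[X_A\setminus X_B]$ is connected because removing vertices that are separated from $A$ by $\partial X_A$ (and the removed part $X_A\cap X_B$ is such, by the no-edges-across-$X$ argument used in reverse) does not disconnect the rest from $A$; this may require a short separate lemma (the hypergraph analogue of \Cref{lem:disjoint-posi-modularity}), which I would prove by the standard submodularity/posi-modularity of the hypergraph cut function together with a connectivity-preservation observation.
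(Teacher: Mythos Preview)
Your high-level plan is right and matches the paper: run \Cref{alg:max-min-iso-cut} unchanged, re-establish nesting/submodularity for correctness of the recursion, re-establish the pairwise-intersection-only lemma for the size/runtime bound, and reduce each step to a hypergraph max-flow. But your sketch of the hypergraph pairwise-intersection-only lemma has a real gap, and the paper's fix is conceptually different from what you wrote.

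\paragraph{The gap in the pairwise-intersection-only argument.}
You claim that a hyperedge $e$ meeting $X = X_A\cap X_B\cap X_C$ ``contributes to at most one of the primed cuts since those are disjoint and $e$ already meets $X$.'' This is exactly the rank-two reasoning and is false for hyperedges: a single hyperedge can have vertices in $X$, $X'_A$, $X'_B$, and $X'_C$ simultaneously, contributing to all three primed cut values while also contributing to all three unprimed ones, for a net contribution of zero. So you cannot conclude strict positivity, and the graph argument (``such an edge cannot exist'') does not transport. The paper's argument is different in kind: it shows (by a case analysis over how many primed cuts $e$ hits) that any hyperedge touching $X$ with zero net contribution either lies entirely in $X$, or hits all three of $X'_A, X'_B, X'_C$. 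In the latter case $e$ has a vertex in $X'_B\subseteq V\setminus X_A$, so $e\in\partial X_A$ and hence $e$ is \emph{removed} from the induced subhypergraph $(G/A)[X_A]$. Therefore $X$ is disconnected from $A\subseteq X'_A$ in $(G/A)[X_A]$, contradicting the connectivity clause of \Cref{def:A-mincut hypergraph}. The punchline is not ``the offending edge is absent'' but ``the offending edge is a boundary edge and thus excluded from the induced subhypergraph''; this is precisely why the connectivity-constrained definition is the right one.

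\paragraph{Posi-modularity and relaxed mincuts.}
You propose to verify that $(G/A)[X_A\setminus X_B]$ stays connected so that $X_A\setminus X_B$ is again an $A$-mincut in the sense of \Cref{def:A-mincut hypergraph}. The paper does not do this and it is not clear your route works. Instead, the paper introduces \emph{relaxed} $A$-mincuts (dropping the connectivity clause), proves their value coincides with that of $A$-mincuts, and applies posi-modularity to the relaxed notion; this gives $\C(X'_A)=\C(X_A)$ etc.\ immediately and makes equation $(\star)$ hold with no connectivity bookkeeping. You should adopt this detour rather than trying to preserve connectivity under set subtraction.

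\paragraph{A minor point on reading off the maximal cut.}
``Set of vertices not reaching $t$'' in the residual graph gives the maximal \emph{relaxed} $A$-mincut; to get the maximal $A$-mincut of \Cref{def:A-mincut hypergraph} you must additionally take the connected component containing (the contraction of) $A$ in the induced subhypergraph. The paper does this explicitly as a linear-time post-processing step.

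Your runtime argument is morally right but the sentence ``a hyperedge of $G$ appears in at most two subproblems'' is not literally true after contractions; the paper instead bounds the total volume by charging each non-pivot incidence to its original vertex (which appears in at most two subproblems at a given depth), yielding $\sum_j p_j\le 4p$.
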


The new definition of $A$-cuts leads to some inconveniences. For example, an $A$-mincut of $\T$ may not be a $(\T\setminus A)$-mincut of $\T$ in a hypergraph. 
Also, suppose $X_A$ and $X_B$ are $A$-mincut and $B$-mincut, respectively, where $A \subseteq B$.
In normal graphs, $X_A \cap X_B$ would be an $A$-mincut by \Cref{lem:steiner-modularity}, but in hypergraphs $X_A \cap X_B$ might not be $A$-mincut with respect to our definition because $(G/A)[X_A \cap X_B]$ might not be connected. Anyhow, this inconvenience is easy to deal with.

The technical reason why we need to work with this new definition is that it allows us to show the hypergraph version of the Pairwise Intersection Only \Cref{lem:three-crossing-isolating-mincuts}, which is the key to efficiency.
Given this lemma, we verify that all ideas in \Cref{sec:max isocut} indeed generalize to hypergraphs and prove \Cref{thm:max-min-iso-cut-hypergraph} in \Cref{sec:max iso mincut proof}. 
We need to reprove everything since we are working with the new basic definition.

\subsection{Our Divide and Conquer Framework}
\label{sec:divide and conquer hypergraph}

In this section, we finally give an almost-linear time construction for Steiner hypercactus, based on the maximal isolating mincut algorithms for hypergraphs from \Cref{sec:maximal-isolating-hyperedge-mincuts}. 
Let us describe our algorithm. 

\paragraph{Preprocessing.}
First, we preprocess the hypergraphs in the same way that we did for  normal graphs. 
After preprocessing in $O(\log^2 n)\cdot\MaxFlow(O(n+m), O(p))$ time, we may assume that for any two terminal vertices there is a $\T$-Steiner mincut that separates them (see \Cref{lem:preprocessing-hypergraph}.)

\paragraph{Modified \Cref{alg:divide-and-conquer-framework}.} Our main divide-and-conquer algorithm for Steiner hypercactus is based on the same framework as \Cref{sec:divide and conquer framework}.
Recall that a hypergraph consisting of a single (weighted) hyperedge containing all vertices is called a \emph{brittle}.
We modify \Cref{alg:divide-and-conquer-framework} to make it compute a $\T$-Steiner hypercactus as follows. 
\begin{enumerate}
    \item In Line~\ref{line:compute-good-collection}, we called \Cref{alg:compute good splits} to return a good split collection. In Line~\ref{line:alg-good-split call max iso} of \Cref{alg:compute good splits}, we now use the maximal isolating mincut algorithm on hypergraphs (\Cref{thm:max-min-iso-cut-hypergraph}) instead. 
    \item  The hypercactus returned from the 
\textsc{StarCactus} procedure --- the procedure now may return a brittle instead of a star so we now call it \textsc{StarOrBrittleCactus}, which will be discussed in \Cref{sec:returning-a-correct-cactus-hyergraph}.
    \item  The implementation of \textsc{MergeCactus} is changed and will be specified in \Cref{sec:returning-a-correct-cactus-hyergraph}.
\end{enumerate}

\paragraph{Key Property: Never Split Higher-rank Hyperedges.}
As discussed in \Cref{sec:techniques}, the second challenge to efficiently compute a Steiner hypercactus representation is because a hypercactus contains hyperedges of rank more than two. 
Splitting these higher-rank hyperedges leads to slow run time. 
Our key observation is that the Modified \Cref{alg:divide-and-conquer-framework} never splits these hyperedges in a non-trivial way (proved at the end of the section).
    
\begin{lemma}
\label{cor:brittle real corollary} 
Consider \emph{any} $\T$-Steiner hypercactus $(H, \phi)$.
Let $e$ be an hyperedge on $H$ of rank $\ge 3$.
Then,
the Modified \Cref{alg:divide-and-conquer-framework} above never finds a split such that $e$ gets decomposed to  at least two smaller hyperedges with rank $\ge 3$ in at least two subproblems.
\end{lemma}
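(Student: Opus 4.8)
The plan is to translate the statement into a combinatorial condition on how the components of $H$ obtained by deleting $e$ get spread over the children of a single recursion step, and then to rule out the bad spread using the connectivity constraint of \Cref{def:A-mincut hypergraph}.

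Fix the hypercactus $H$, the hyperedge $e$ with $\rho := |e| \ge 3$, and let $C_1,\dots,C_\rho$ be the connected components of $H$ after deleting $e$; set $T_k := \phi^{-1}(V(C_k))$, so that $T_1,\dots,T_\rho$ partition $\T$ and each $T_k \ne \emptyset$. Every $\T$-Steiner mincut of $G$ corresponds to a minimum cut of $H$, which deletes either a single (hyper)edge or two rank-$2$ edges of one cycle; the family of $\T$-Steiner mincuts ``belonging to $e$'' is exactly the set of bipartitions of $\{C_1,\dots,C_\rho\}$. Consider one recursion step of the Modified \Cref{alg:divide-and-conquer-framework} with its disjoint split collection $\mathcal S = \{X_1,\dots,X_\ell\}$ (a single balanced split is the case $\ell=1$), producing children $G_i = G/(V\setminus X_i)$ for $i \le \ell$ and the center $G_{\ell+1}$ (for a balanced split, $G_2 = G/X_1$). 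Unwinding the contraction rules shows that in the hypercactus of $G_i$ for $i\le\ell$ the hyperedge representing the surviving part of $e$'s family has rank $c_i + 1$, where $c_i := |\{k : T_k \subseteq X_i\}|$, and that it disappears when $c_i = 0$; in particular its rank is $\ge 3$ and strictly below $\rho$ exactly when $2 \le c_i \le \rho-2$. Since the center is a single child, the bad event forces at least one index $i\le\ell$ with $2\le c_i\le\rho-2$, so it suffices to prove that \textbf{no split $X_i$ used by the algorithm satisfies $2\le c_i\le\rho-2$.} The base case $\mathcal S=\emptyset$ is vacuous, since nothing is decomposed.

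To prove this I would split on the minimum cut of $H$ that $X_i$ induces. Suppose that cut deletes an edge, a hyperedge, or a pair of cycle edges other than $e$; then, since anything joining two distinct $C_k$'s would reconnect them in $H-e$, that object lies inside a single $C_m$, and a short check of the hypercactus structure shows the $X_i$-side is either contained in one $C_m$, whence $c_i\le 1$, or is the complement of a proper piece of some $C_m$, whence $c_i=\rho-1$; neither lies in $[2,\rho-2]$. The only remaining case is that $X_i$'s cut deletes $e$ itself, so $X_i\cap\T$ is a union of whole $T_k$'s, and then I must show it is a union of \emph{at most one} of them. This is where \Cref{def:A-mincut hypergraph} enters: $X_i$ is derived from a maximal $v$-isolating mincut computed by the algorithm of \Cref{thm:max-min-iso-cut-hypergraph}, so before the pruning step its side is the $A$-side with $A=\{v\}$, and by \Cref{def:A-mincut hypergraph} the induced subhypergraph $G[X_i]$, which omits the boundary $\partial X_i$, is connected after contracting $v$; say $\phi(v)\in C_a$. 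If $X_i$ also contained some $T_b$ with $b\ne a$, pick terminals $t_a\in T_a$ and $t_b\in T_b$; since $C_a$ and $C_b$ are joined in $H$ only through $e$, while $\partial X_i$ realizes in $G$ exactly the $H$-cut that deletes $e$, the vertices $t_a$ and $t_b$ must lie in different components of $G[X_i]$, contradicting connectivity. Making this last separation statement precise, i.e.\ transferring a disconnection from $H$ to $G$ through the mincut-faithfulness of $H$ together with posi-modularity and the hypergraph analogue of the Pairwise Intersection Only \Cref{lem:three-crossing-isolating-mincuts} (needed because $\phi$ is defined only on terminals, so the $C_k$'s are not literally readable off $G$), is the step I expect to be the main obstacle.

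Finally I would handle the pruned splits $X_i = X'_i\setminus\bigcup_{j<i}X'_j$ of Case~2 of \Cref{alg:compute good splits}. The unpruned $X'_i$ is a genuine maximal isolating mincut, so the previous paragraph applies to it; pruning only deletes vertices, and by posi-modularity (\Cref{prop-for-good-split-collection}) the pruned $X_i$ is still a $\T$-Steiner mincut with at least two terminals, so if its induced $H$-cut deletes $e$ it still keeps exactly one $T_k$ inside $X_i$, and otherwise the structural case analysis above applies verbatim; in all cases $c_i\notin[2,\rho-2]$. Putting the cases together, in every recursion step at most one child (the center) can carry $e$ with rank in $\{3,\dots,\rho-1\}$, so $e$ is never decomposed into two smaller hyperedges of rank $\ge 3$ across two subproblems, which is the lemma.
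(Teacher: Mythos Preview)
Your overall plan matches the paper's: you reduce the lemma to showing that every split $X_i$ the algorithm uses corresponds, on the hypercactus side, to a cut meeting $e$ in $0$, $1$, $|e|-1$, or $|e|$ nodes (your $c_i\notin[2,\rho-2]$), and you correctly single out the connectivity constraint in \Cref{def:A-mincut hypergraph} as the mechanism. The case split on whether the induced $H$-cut deletes $e$ or something else is also essentially the paper's \Cref{lem:never-split-brittle}.

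The gap is exactly where you flag it, and it is real. When the $H$-cut of $X_i$ deletes $e$ and $X_i$ contains terminals $t_a\in T_a$ and $t_b\in T_b$ from two different components, you assert that $t_a$ and $t_b$ lie in different components of $G[X_i]$. But the hypercactus $(H,\phi)$ only records \emph{which terminal bipartitions arise as $\T$-Steiner mincuts}; it says nothing about how a particular boundary set $\partial X_i\subseteq E(G)$ shatters $G$. That ``$\partial X_i$ realizes the $H$-cut deleting $e$'' only tells you how $\partial X_i$ separates $X_i$ from $V\setminus X_i$, not that it additionally separates $t_a$ from $t_b$ inside $X_i$. Mincut-faithfulness, posi-modularity, and the Pairwise Intersection Only lemma as you invoke them all speak about cut \emph{values} and \emph{sides}; none of them gives you control over which edges make up $\partial X_i$, which is what the disconnection claim needs.

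The paper closes this gap by first passing to an auxiliary terminal set $\hat\T$ (one representative per component $C_k$, with $t$ swapped in for its own representative) whose Steiner hypercactus is a single brittle, and then proving a dedicated structural lemma (\Cref{lem:unique mincut in brittle}): in a brittle, every maximal $A$-mincut with $|A|\le|\hat\T|-2$ has the \emph{same boundary edge set} $E'$. The proof is a six-term sub/posi-modularity equality on three carefully chosen mincuts $X_{ab},X_{ac},X_{\overline{bc}}$, finer than the three-term Pairwise Intersection argument you point to. Once that is in hand, your offending $X_i$ coincides (as a vertex set) with some maximal $A$-mincut of $\hat\T$ with $2\le|A|\le|e|-2$, hence $\partial X_i=E'=\partial X_w$ for a representative $w\in A\setminus\{t\}$; but $\partial X_w$ by definition separates $w$ from $t$, contradicting connectivity of $G[X_i]$. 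So the missing idea is precisely this ``unique boundary in a brittle'' step; with it, your outline becomes the paper's proof.
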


Therefore, the algorithm never splits higher-rank hyperedges in a hypercactus. 
This motivates us to define a \emph{good decomposition} for hypergraphs in almost the same way as defined for normal graphs (\Cref{def:good-decomposition}), except that we treat brittles as a base case similar to stars.

\begin{definition}[Good Decomposition for Hypergraphs]\label{def:good-decomposition-hypergraph}
    Given a hypergraph $G$ and a set of terminal vertices $\T$,
    a decomposition $\mathcal{G}=\{(G_i, \T_i)\}$ of $G$ is said to be \emph{good} with respect to $\T$ if $\mathcal{G}$ has the following property. Let $\T_i$ be the set of terminal vertices in $G_i$.
    For all $i$ except at most one special index $i^*$, $|\T_i| \le \frac34 |\T|+1$, and there exists a Steiner cactus representation of $\T_{i^*}$ in $G_{i^*}$ that is a star or a brittle.
\end{definition}

Now, the definition of a \emph{good split collection} on a hypergraph is the same as \Cref{def:good-splits} on normal graph.
Similar to \Cref{sec:steiner-cactus}, there are two main steps in the analysis of the algorithms. 

First, 
\Cref{lem:compute good splits hypergraph} summarizes the result that computes a good split collection on hypergraphs.

\begin{lemma}
\label{lem:compute good splits hypergraph}
    Given a hypergraph $G = (V, E)$ and a set of terminals $\T$, there exists a randomized Monte Carlo algorithm such that, with probability $1-n^{-11}$, the algorithm returns a good split collection $\mathcal{S}$ in $O(\log^3 n)\cdot \MaxFlow(O(n+m), O(p))$ time.
\end{lemma}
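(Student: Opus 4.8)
The plan is to mirror the proof of \Cref{lem:compute good splits} for normal graphs (given in \Cref{sec:compute good splits collection}), reusing \Cref{alg:compute good splits} verbatim except that every call to the maximal isolating mincut routine now invokes the hypergraph version (\Cref{thm:max-min-iso-cut-hypergraph}). First I would fix an arbitrary irredundant $\T$-Steiner hypercactus $(H,\phi)$ of $G$ and split into the same two cases as before according to whether $H$ admits a \emph{balanced edge-cut}. The only genuinely new ingredient is that $H$ may contain hyperedges of rank $\ge 3$, so the notions of ``centroid node'' and the counting arguments behind \Cref{fact-for-case2} and \Cref{prop-for-good-split-collection} must be reinterpreted for hypercacti: a minimum edge-cut of a hypercactus is now either a single (hyper)edge, or two rank-$2$ edges in a common cycle, and deleting a rank-$r$ hyperedge splits $H$ into $r$ components. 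I would replace the tree $H'$ used in \Cref{lem:steiner-mincut-case-2-part1} by the analogous tree obtained by turning each rank-$2$ cycle into a star \emph{and} each rank-$r$ hyperedge into a star with $r$ leaves; a centroid of this weighted tree still yields a centroid node of $H$, where now the ``branches'' $T_1,\dots,T_k$ around the centroid correspond to the connected components of $H$ after deleting whatever edge or hyperedge the centroid sits in. Crucially, \Cref{cor:brittle real corollary} guarantees the relevant splits never chop a high-rank hyperedge into two pieces of rank $\ge 3$, so the maximal $w$-isolating mincut of a sampled triple $\{w,x,y\}$ lying in three distinct branches still equals the desired $T_i$-mincut $X'_i$ by (the hypergraph analogue of) \Cref{lem:avoid-centroid}: a mincut of $H$ containing the centroid while excluding two branch terminals would have value $\ge 2\lambda_G(\T)$.

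Concretely, I would carry out the steps in this order. (1) State and prove the hypergraph analogues of \Cref{lem:disjoint-posi-modularity} and \Cref{lem:steiner-modularity} for the connectivity-constrained $A$-cuts of \Cref{def:A-mincut hypergraph}; these are exactly the statements already needed and proved inside \Cref{sec:max iso mincut proof}, so I would just cite them. (2) Redo the structural lemma on centroids: prove the hypergraph version of \Cref{lem:steiner-mincut-case-2-part1} using the star-replacement tree above, noting that if the tree-centroid lands on an auxiliary node representing a rank-$2$ cycle we get a balanced edge-cut greedily (contradiction), and if it lands on an auxiliary node representing a rank-$r$ hyperedge then, since each of the $r$ branches has $\le\frac14|\T|$ terminals and $r\ge 3$, we can again greedily group the $r$ branches into two piles each of size in $[\frac14|\T|,\frac34|\T|]$, producing a balanced hyperedge-cut — again a contradiction — so the centroid is a genuine node of $H$. (3) Port \Cref{fact-for-case2}, \Cref{prop-for-case-2}, \Cref{lem:steiner-mincut-case-2-part2}, \Cref{lem:transition-part2-to-part3}, \Cref{prop-for-good-split-collection}, \Cref{lem:steiner-mincut-case-2-part3} and \Cref{lemma:S-is-good-split-collection} word for word, substituting the word ``cactus''/``cycle'' by ``hypercactus''/``rank-$2$ cycle'' and invoking \Cref{cor:brittle real corollary} wherever the old argument implicitly used that rank-$2$ structure is all there is; the probability bookkeeping (sample rates $2^{-i}$, $\Theta(\log n)$ repetitions, the $1/(2e)\cdot(1-e^{-1/2})^2\ge 0.02$ and $1/(256e)$ constants) is unchanged. (4) Handle Case~1 (a balanced edge-cut exists on $H$): the proof of \Cref{lem:sample-two-vertices} via a pre-order DFS of $H$ goes through once we fix a DFS convention that, when it reaches a rank-$r$ hyperedge, visits the $r-1$ unexplored branches one at a time before returning; then \Cref{lem:steiner-mincut-case-1} transfers verbatim. (5) For the runtime, note \Cref{alg:compute good splits} still performs $O(\log^2 n)$ maximal isolating mincut computations, each costing $O(\log|\T|)\cdot\MaxFlow(O(n+m),O(p))$ by \Cref{thm:max-min-iso-cut-hypergraph}, for a total of $O(\log^3 n)\cdot\MaxFlow(O(n+m),O(p))$; by the hypergraph analogue of \Cref{lem:total-size-max-min-iso-cut} the cuts found have total size $O(p\log^2 n)$ (or $O((n+m)\log^2 n)$, whichever bound the hypergraph size lemma gives), so the post-processing Lines~\ref{line10}--\ref{line15} — scanning subsets, marking sizes at terminals, selecting maximal subsets, enforcing disjointness with a boolean array — run in time linear in that total size and are subsumed.

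The main obstacle I expect is step (2)–(3): making precise what a ``centroid node'', ``edge-cut'', and the branches $T_1,\dots,T_k$ mean once high-rank hyperedges are present, and verifying that \Cref{cor:brittle real corollary} is exactly strong enough to let the maximal isolating mincut algorithm recover each $X'_i$ without ever producing a split that dissects a rank-$\ge 3$ hyperedge into two rank-$\ge 3$ fragments. In the normal-graph proof, the fact that every minimum edge-cut of a cactus has one side inside a single component of $H-v$ was essentially free; in a hypercactus one must argue this componentwise over both the cycle structure and the hyperedge structure, and one must double-check that when a sampled terminal triple straddles a high-rank hyperedge, the \emph{connectivity constraint} in \Cref{def:A-mincut hypergraph} does not cause the maximal $w$-isolating mincut to fall short of the full branch $X'_i$ — this is precisely where the ``connected at the $X$ side'' definition pays off, because the branch $X'_i$ is connected after removing its boundary while any proper sub-part touching the hyperedge would not be. Once these structural facts are nailed down, everything else is a routine re-run of \Cref{sec:compute good splits collection}.
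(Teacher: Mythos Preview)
Your overall plan---rerun \Cref{alg:compute good splits} with the hypergraph maximal isolating mincut routine and redo the two-case analysis on an irredundant $\T$-Steiner hypercactus $H$---matches the paper, but step~(2) has a genuine gap. You argue that the tree-centroid cannot be an auxiliary node representing a rank-$r$ hyperedge $e$, because greedily grouping the $r$ branches into two piles of size in $[\tfrac14|\T|,\tfrac34|\T|]$ would yield a balanced cut, contradicting the Case~2 hypothesis. But such a grouped cut has at least two nodes of $e$ on each side, and by \Cref{cor:brittle real corollary} the algorithm's maximal isolating mincuts \emph{never} produce such non-singular hyperedge splits. Hence either (a) your notion of ``balanced edge-cut'' is restricted to cuts the algorithm can actually find (what the paper calls \emph{accessible} mincuts: singular hyperedge cuts or normal-edge cuts), in which case the grouped cut is not a balanced edge-cut and there is no contradiction---the centroid may well be a hyperedge; or (b) you allow arbitrary balanced mincuts of $H$, in which case your Case~1 proof breaks: the sampled maximal $r$-isolating mincut of $\{r,u,v\}$ need not contain the small side $X$ of a non-singular balanced cut. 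Concretely, let $G$ be a single rank-$5$ hyperedge with $\T=V$: every $2$-vs-$3$ partition is a balanced mincut, yet under \Cref{def:A-mincut hypergraph} every maximal $t$-isolating mincut is the singleton $\{t\}$, so \Cref{alg:compute good splits} never returns a balanced split here.

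The paper takes route~(a) and explicitly allows the centroid to be a hyperedge (case~(2) of Part~1 in \Cref{lem:steiner-mincut-case-2 hypergraph}). When this happens, contracting the branch mincuts leaves a residual subproblem whose hypercactus is a \emph{brittle}, not a star; this is precisely why \Cref{def:good-decomposition-hypergraph} reads ``star \emph{or a brittle}'' and why \textsc{StarCactus} becomes \textsc{StarOrBrittleCactus}. Your ports of \Cref{lem:steiner-mincut-case-2-part3} and \Cref{lemma:S-is-good-split-collection} must accommodate this extra base case. Once you redefine ``balanced edge-cut'' to mean a balanced accessible mincut and carry the brittle case through Parts~1--4 of \Cref{lem:steiner-mincut-case-2 hypergraph}, the remainder of your plan (the DFS convention for hyperedges in step~(4), the probability bookkeeping, and the $O(\log^3 n)\cdot\MaxFlow(O(n+m),O(p))$ runtime) is correct and matches the paper.
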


Second, \Cref{lem:merge-cactus hypergraph} summarizes the result for returning a correct hypercactus.

\begin{lemma}\label{lem:merge-cactus hypergraph}
Fix a subproblem $(G=(V, E), \T)$ in \Cref{alg:divide-and-conquer-framework}.
Assume all splits generated from the subproblems $\{(G_i, \T_i)\}$ derived from $(G, \T)$ are good, and each subproblem returns a correct $\T_i$-Steiner hypercactus of $G_i$.
Then, 
with probability $1-n^{-11}$,
the procedures \textsc{TrivialCactus}, \textsc{StarOrBrittleCactus}, and \textsc{MergeCactus} returns a $\T$-Steiner hypercactus of $G$  in $O(\log|\T|)\cdot \MaxFlow(O(n+m), O(p))$ time. The randomization comes only from the almost-linear time max-flow algorithm.
\end{lemma}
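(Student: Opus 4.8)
The plan is to mirror the structure of the proof of \Cref{lem:merge-cactus} for normal graphs (given in \Cref{sec:compute-cactus-from-prime-decomposition-tree}), handling the three procedures \textsc{TrivialCactus}, \textsc{StarOrBrittleCactus}, and \textsc{MergeCactus} separately, but now accounting for the possibility that a subproblem is \emph{brittle}-shaped rather than star-shaped, and that the returned hypercactus may contain hyperedges of rank $\ge 3$. First I would re-establish the hypergraph analogue of the boundary fact (\Cref{lem:anchor-vertex-always-at-boundary}): since an anchor vertex $t$ satisfies that $\{t\}$ is a $\T$-Steiner mincut of $G$, in any irredundant $\T$-Steiner hypercactus $(H,\phi)$ the node $\phi(t)$ must be incident to exactly one $\lambda_G(\T)$-weight hyperedge or be degree $2$ inside a cycle of rank-$2$ edges. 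This is because its unique separating mincut forces $\phi(t)$ to be a singleton side, and the structural constraints of a hypercactus (a rank-$r$ hyperedge splits $H$ into exactly $r$ components) leave only these two possibilities. This immediately bounds the number of gluing cases as in the graph setting.

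Next I would treat each base case. For \textsc{TrivialCactus} ($|\T|\le 3$), brute force over all $O(1)$ partitions of $\T$, each requiring one hypergraph max-flow call costing $\MaxFlow(O(n+m),O(p))$; correctness and \Cref{inv:returned_cactus} hold since there are only constantly many candidate hypercacti. For \textsc{StarOrBrittleCactus}, invoked when the good split collection is empty (no non-trivial $\T$-Steiner mincut), I would run one Isolating Cut computation (\Cref{thm:isocut}, hypergraph version) to identify the terminals whose singleton cut is a $\T$-Steiner mincut; if the set $A$ of such terminals has $|A|\ge 2$ and moreover \emph{every} cut of $H$ is forced to go around a single rank-$|A|$ hyperedge, the correct shape is a brittle, otherwise a star. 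I would prove a short fact, analogous to the ``star fact'' in the graph case, stating that when all $\T$-Steiner mincuts are trivial and $|\T|\ge 4$, then $G$ has a $\T$-Steiner hypercactus that is either a star or a brittle, by constructing the brittle when the mincut value is achieved only by ``splitting off'' single terminals against a common complement (the $1$-edge-cuts do not combine into a cycle). Here the new $A$-cut definition with the connectivity constraint is what guarantees we can detect which of the two shapes is correct using the isolating-cut values alone, in $O(\log|\T|)\cdot\MaxFlow(O(n+m),O(p))$ time; the failure probability $n^{-11}$ comes from the randomized max-flow / isolating-cut subroutine.

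Finally, for \textsc{MergeCactus} I would split into the balanced-split case ($|\mathcal S|=1$) and the unbalanced case ($|\mathcal S|=\ell\ge1$, last subproblem star- or brittle-shaped), exactly as in \Cref{sec:compute-cactus-from-prime-decomposition-tree}. In both, merging is a sequence of pairwise gluings at shared anchor vertices, each being a \textbf{Leaf-Leaf}, \textbf{Leaf-Cycle}, or \textbf{Cycle-Cycle} operation; the first two are linear-time and the last needs one max-flow test (checking whether $\{x_1,x_2\},\{y_1,y_2\}$ or $\{x_1,y_2\},\{y_1,x_2\}$ form a $\T$-Steiner mincut of value $\lambda_G(\T)$) to decide among the three outcomes. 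The crucial new point, and the main obstacle, is ensuring that gluing never introduces a spurious splitting of a rank-$\ge 3$ hyperedge and that all $\T$-Steiner mincuts crossing the used splits are recovered: I would invoke \Cref{cor:brittle real corollary} to argue that whenever an anchor vertex $\phi_i(a)$ is incident to a rank-$\ge 3$ hyperedge in some $H_i$, the other copy $\phi_j(a)$ is a leaf, so the gluing is a pure \textbf{Leaf-Cycle} (or \textbf{Leaf-Leaf}) attachment that simply re-inserts $a$ into that hyperedge without breaking it — hence the higher-rank hyperedge is reconstituted, not split. The correctness argument then reduces to the same case analysis as the graph setting: if the split corresponds to a $1$-edge-cut of an irredundant hypercactus, every mincut is already preserved in the subproblems; if it corresponds to a $2$-edge-cut, the Cycle-Cycle test recovers the enclosing cycle. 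Summing: \textsc{StarOrBrittleCactus} dominates at $O(\log|\T|)\cdot\MaxFlow(O(n+m),O(p))$, \textsc{MergeCactus} uses $O(1)$ max-flow calls plus linear work, and \Cref{inv:returned_cactus} is maintained throughout since no new hollow $3$-star is created (each gluing involves $|\T|\ge 4$ terminals across the two sides). I expect the bookkeeping around which hyperedge an anchor lands inside, and verifying \Cref{inv:returned_cactus} is preserved through Cycle-Cycle merges in the presence of higher-rank hyperedges, to be the most delicate part.
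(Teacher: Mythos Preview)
Your overall structure—mirroring \Cref{sec:compute-cactus-from-prime-decomposition-tree} and treating the three procedures separately—is right, and your hypergraph analogue of \Cref{lem:anchor-vertex-always-at-boundary} is stated correctly. But there are two genuine gaps.

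\textbf{Distinguishing star from brittle.} Your plan for \textsc{StarOrBrittleCactus} is to run one isolating-cut computation and decide the shape ``using the isolating-cut values alone.'' This cannot work: in a star with empty center and in a brittle, \emph{every} terminal $t$ has $t$-isolating mincut value exactly $\lambda_G(\T)$, so the isolating-cut values are identical in the two cases. The paper instead picks an arbitrary $A\subset\T$ with $|A|=2$ and tests via a single max-flow whether the $A$-vs-$(\T\setminus A)$ mincut has value $\lambda_G(\T)$: in a brittle every bipartition is a mincut so the answer is yes, while in a star no size-$2$ subset of terminals can be separated at cost $\lambda_G(\T)$. You need some test of this kind.

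\textbf{Missing brittle cases in \textsc{MergeCactus}.} You claim, via \Cref{cor:brittle real corollary}, that whenever $\phi_i(a)$ lies on a rank-$\ge 3$ hyperedge the other anchor $\phi_j(a)$ must be a leaf, and hence reduce all gluings to \textbf{Leaf-Leaf}, \textbf{Leaf-Cycle}, \textbf{Cycle-Cycle}. This inference is wrong. \Cref{cor:brittle real corollary} says only that a \emph{single} hyperedge $e$ of $H$ is never split so that both sides retain a rank-$\ge 3$ piece of $e$; it does not prevent the two sub-hypercacti $H_1,H_2$ from each having \emph{different} rank-$\ge 3$ hyperedges incident to the anchor. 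Concretely, if $H$ consists of two rank-$3$ hyperedges $e_1=\{a,b,c\}$ and $e_2=\{c,d,f\}$ sharing the empty node $c$, then a balanced split separating $\{a,b\}$ from $\{d,f\}$ (which is a legal split satisfying \Cref{cor:brittle real corollary}: it meets $e_1$ in $|e_1|-1$ nodes and $e_2$ in $0$ nodes) yields irredundant sub-hypercacti that are \emph{both} brittles with the anchor on a rank-$3$ hyperedge. The paper accordingly introduces three additional cases—\textbf{Leaf-Brittle}, \textbf{Cycle-Brittle}, \textbf{Brittle-Brittle}—each with its own (test-free) gluing rule, and these are also needed in the unbalanced case when $H_{\mathrm{center}}$ is a brittle. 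Your proposal omits all of them, so the case analysis is incomplete and the merged object would not in general be a correct $\T$-Steiner hypercactus.
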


With~\Cref{lem:compute good splits hypergraph} and~\Cref{lem:merge-cactus hypergraph}, 
we can prove our main result~\Cref{thm:steiner-hypercactus-main-hypergraph}. Since the proof goes in very similar way to the analogous theorem for normal graphs (\Cref{thm:steiner-cactus-main}), we defer the proof to \Cref{sec:omit hypergraph}. Now, we proceed to prove  \Cref{lem:compute good splits hypergraph} and \Cref{lem:merge-cactus hypergraph} in respective subsections.

\subsection{Computing a Good Split Collection}

Now we aim to prove \Cref{lem:compute good splits hypergraph} using the same approach as in  \Cref{sec:compute good splits collection}.
Specifically, we show that \Cref{alg:compute good splits} also works for hypergraph.

Recall that $H$ denotes a hypercactus.
Motivated by \Cref{cor:brittle real corollary},
we observe that our algorithm never decomposes a rank $\ge 4$ hyperedge on $H$ into two smaller hyperedges of rank $\ge 3$.
This leads us to consider \emph{singular hyperedge cut} only.
Define a \emph{singular hyperedge cut} $(Q, V(H)\setminus Q)$ of $H$ if there exists hyperedge $e\in H$ such that $|Q\cap e| = 1$.

We call a mincut of $H$ \emph{accessible} if it is a singular hyperedge cut, or all edges across the boundary are normal edges (either one edge or two edges in a cycle).
Note that with~\Cref{cor:brittle real corollary}, all $\T$-Steiner mincuts returned from~\Cref{alg:divide-and-conquer-framework} correspond to accessible mincuts on $H$.
Define a \emph{balanced edge-cut} on hypercactus $H$ to be an accessible mincut of $H$ 
such that the number of terminals on both sides is between $\frac14|\T|$ and $\frac34|\T|$.
Again, to show correctness of~\Cref{alg:compute good splits}, our analysis depends on whether or not a balanced edge-cut exists on $H$. 
\thatchaphol{The notion of accessible mincut is never used in this section. Is this correct? Or it actually must be used somehow when you exploit \Cref{cor:brittle real corollary}}

\paragraph{Case 1: Balanced Cuts Exist.} In the first case where there is a balanced edge-cut, The proof is exactly the same as normal graph, except that we need to plug in a hypergraph version of \Cref{lem:sample-two-vertices}.

\begin{lemma}\label{lem:sample-two-vertices hypergraph}
Let $G$ be the hypergraph with a set $\T$ of terminals that satisfies Assumption~\ref{assumption:2}.
Let $r\in\T$ be a terminal such that any $r$-isolating mincut is a $\T$-mincut.
If we sample terminals $u, v\in \T-\{r\}$ uniformly at random,
then with probability at least $1/4$, 
maximal $\{r\}$-isolating mincut of $\T' = \{u,v,r\}$ has at most
$\frac12|\T|$ terminals.
\end{lemma}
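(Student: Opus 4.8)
The plan is to mirror the proof of \Cref{lem:sample-two-vertices} for normal graphs, replacing the cactus by a $\T$-Steiner hypercactus and the DFS over cactus cycles by a DFS over the hypercactus structure. First I would fix a $\T$-Steiner hypercactus $(H,\phi)$ of $G$. As before, Assumption~\ref{assumption:2} forces $\phi(u)\neq\phi(v)$ for distinct terminals, since every pair of terminals is separated by a $\T$-Steiner mincut and all such mincuts are preserved in $H$. Since any $r$-isolating mincut is a $\T$-mincut, the node $\phi(r)$ must be ``peripheral'' in $H$: either a leaf (degree-$1$ incident to a single rank-$2$ edge or a single hyperedge with $|e\cap\{\phi(r)\}|=1$), or a degree-$2$ node inside a rank-$2$ cycle. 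In all cases removing the unique minimum edge-cut incident to $\phi(r)$ leaves $\phi(r)$ alone on one side.

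Next I would set up a DFS traversal of $H$ rooted at $\phi(r)$, analogous to the one in the graph case: whenever the traversal is at a node reached through a cycle edge it prefers to leave the cycle first, and whenever it enters a hyperedge of rank $r'\ge 3$ it visits the $r'-1$ other components hanging off that hyperedge in some fixed order before backtracking. Let $(r,v_1,\dots,v_{|\T|-1})$ be the unique permutation of $\T$ induced by the pre-order in which the nodes $\phi(v_i)$ are first visited; the traversal returns to $\phi(r)$ only at the very end. The key structural claim is: for any $1\le i<j\le |\T|-1$, the maximal $\{r\}$-isolating mincut of $\{v_i,v_j,r\}$ contains none of the terminals $v_i,v_{i+1},\dots,v_j$. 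This holds because the set of nodes visited strictly between the first visit of $\phi(v_i)$ and (inclusive) the last moment before the traversal ``closes'' $v_j$ forms a connected region of $H$ that lies on the $\{v_i,v_j\}$ side of any accessible mincut separating $\{r\}$ from $\{v_i,v_j\}$; here we use \Cref{cor:brittle real corollary} / the accessibility discussion to know the relevant mincut does not cut a rank-$\ge 3$ hyperedge in a complicated way, so the ``between'' region stays on one side.

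Given the claim, the maximal $\{r\}$-isolating mincut of $\{v_i,v_j,r\}$ has at most $|\T|-(j-i+1)$ terminals, so it has at most $\frac12|\T|$ terminals whenever $j-i\ge \frac12|\T|$. Finally, counting over uniformly random unordered pairs $\{u,v\}=\{v_i,v_j\}$ from $\T-\{r\}$: the fraction of pairs whose positions in the permutation differ by at least $\frac12|\T|$ is at least $1/4$ (the same counting as in the graph case, since the permutation has length $|\T|-1\ge |\T|/2$ for $|\T|\ge 2$, and among $\binom{N}{2}$ pairs of $\{1,\dots,N\}$ with $N=|\T|-1$ at least a quarter are at distance $\ge N/2$), which gives the claimed probability bound.

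The main obstacle is the structural claim about the ``between'' region, because the notion of a connected region cut off by a hyperedge is subtler than the cactus-cycle argument for normal graphs: a rank-$\ge 3$ hyperedge can scatter the DFS across several components, and one must argue that the accessible mincut witnessing the maximal $\{r\}$-isolating mincut of $\{v_i,v_j,r\}$ really keeps the whole pre-order interval $[v_i,v_j]$ together. I would handle this by using the tree-like structure of $H$ (each rank-$2$ cycle contracted to a star node, each rank-$\ge 3$ hyperedge acting as a branching node with $r'$ subtrees), arguing that any accessible mincut of $H$ separating $\phi(r)$ from $\{\phi(v_i),\phi(v_j)\}$ corresponds to removing a single edge/hyperedge that disconnects the subtree rooted at the least common ancestor of $\phi(v_i),\phi(v_j)$ from $\phi(r)$; the pre-order interval $[v_i,v_j]$ is exactly contained in that subtree, so it cannot be on the $\phi(r)$ side, hence the maximal such mincut must avoid all of it. The rest of the argument is then identical to the graph case.
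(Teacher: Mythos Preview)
Your proposal is correct and follows essentially the same approach as the paper: fix a $\T$-Steiner hypercactus, run a specialized DFS from $\phi(r)$ (leaving cycles first, traversing all branches of a hyperedge before backtracking), obtain the pre-order permutation, argue that the maximal $r$-isolating mincut of $\{v_i,v_j,r\}$ misses the whole interval $\{v_i,\dots,v_j\}$ via \Cref{cor:brittle real corollary}, and finish with the same $1/4$ counting.

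The one presentational difference is how the key structural claim is argued. The paper does an explicit three-way case split on how the corresponding mincut on $H$ looks (a single tree edge, two cycle edges, or a hyperedge) and in the hyperedge case invokes \Cref{cor:brittle real corollary} directly to say $|Q\cap e|\in\{1,|e|-1\}$, then reads off from the DFS which subtree is excluded. Your unified ``contract cycles to stars and argue about the LCA subtree'' framing is morally the same but is slightly imprecise as stated: a two-cycle-edge cut in $H$ does \emph{not} become removal of a single tree edge in the star-contracted tree, so the sentence ``any accessible mincut \dots corresponds to removing a single edge/hyperedge that disconnects the subtree rooted at the LCA'' is not literally true in that case. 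This is easily patched---just handle the cycle case exactly as in the normal-graph proof of \Cref{lem:sample-two-vertices} (the paper does precisely this, saying ``the former two cases are identical to the normal graph'')---but it is worth being explicit that your LCA picture is really only doing work in the hyperedge case, where \Cref{cor:brittle real corollary} is what makes it go through.
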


\begin{proof}
Let $(H, \phi)$ be a $\T$-Steiner hypercactus of $G$. By Assumption~\ref{assumption:2} every vertex $v\in \T$ will be mapped to different vertices in $H$.
From the assumption that $r$-isolating mincut is a $\T$-mincut, we know that $\phi(r)$ is a \emph{leaf} node on $H$, i.e. either has degree 1 (may connected to a normal edge or a hyperedge) or has degree 2 within a cycle in $H$.
Consider a specialized DFS traversal of $H$ starting from $\phi(r)$: where upon visiting a vertex from an cycle edge, the DFS traversal always tends to choose any edge that leaves the cycle; upon first visiting a vertex from a specific hyperedge, the DFS traversal will visit the other veritces in the hyperedge with arbitrary order.
Let $(r, v_1, v_2, \ldots, v_{|\T|-1})$ be the a permutation of $\T$ where $(\phi(r), \phi(v_1), \phi(v_2), \ldots, \phi(v_{|\T|-1}))$ is the order (subsequence) of visited vertices by the DFS traversal, i.e. the pre-order.
Then for any two indices $i$ and $j$ such that $1\le i < j\le |\T|$, we will show that maximal $r$-isolating mincut of $\{ v_i, v_j, r \}$ must not contain any vertices in $\{v_i, v_{i+1}, \ldots, v_j\}$ by \Cref{cor:brittle real corollary}
and hence the result follows by counting the fraction of pairs (at least $1/4$) whose position in the permutation differs by at least $\frac12|\T|$.

It remains to show that maximal $r$-isolating mincut of $\{ v_i, v_j, r \}$ is disjoint with $\{v_i, v_{i+1}, \ldots, v_j\}$. There are three cases of the maximal $r$-isolating mincut cutting the hypercactus: (1) on a normal edge (2) on cycle edges (3) on a hyperedge. The former two cases are identical to the normal graph. For the third case, \Cref{cor:brittle real corollary} implies that the maximal $r$-isolating mincut $X_r$ contains either 1 or $|e|-1$ vertices of this hyperedge $e$ (not 0 or all the vertices since it cuts through this hyperedge). If $X_r$ contains one node $u$ in this hyperedge, then $u$ is the first one visited by the DFS traversal starting from root $r$. Therefore in this case, all the nodes in subtree rooted at $u$ are not contained in $X_r$ except $u$, while the subtree contains the entire set $\{v_i, v_{i+1}, \ldots, v_j\}$, and hence imples the claim. If $X_r$ contains all nodes in this hyperedge except one node $u$, then the subtree rooted at $u$ are not contained in $X_r$, and again implies that the set $\{v_i, v_{i+1}, \ldots, v_j\}$ is disjoint with $X_r$.
\end{proof}

\begin{lemma}\label{lem:steiner-mincut-case-1 hypergraph}
Let $G$ be the hypergraph with terminal set $\T$ and let $(H, \phi)$ be a $\T$-Steiner hypercactus of $G$.
Suppose there is a balanced edge-cut on $H$.
Then, with probability
$1-n^{-11}$
there is a balanced split in $\mathcal{S}$ returned from \Cref{alg:compute good splits}.
\end{lemma}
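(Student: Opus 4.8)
The plan is to run the argument of \Cref{lem:steiner-mincut-case-1} essentially line for line, feeding in \Cref{lem:sample-two-vertices hypergraph} wherever the normal-graph proof invoked \Cref{lem:sample-two-vertices}, and replacing each appeal to submodularity/posi-modularity by its hypergraph counterpart established in \Cref{sec:max isocut hyper}. First I would fix, as the witness, a $\T$-Steiner mincut $(X,V\setminus X)$ of $G$ corresponding to the given balanced (accessible) edge-cut of $H$, taking $X$ to be the side with at most $\frac12|\T|$ terminals, so $\frac14|\T|\le|X\cap\T|\le\frac12|\T|$. Then I would examine the sampling round of \Cref{alg:compute good splits} that uses the sparsest rate $2^{-\lceil\log|\T|\rceil}$ and bound, by the same binomial-times-$(1-p)^{|\T|}$ estimate as in \Cref{lem:steiner-mincut-case-1}, the probability of the event $\mathcal{E}$ that the sample hits $X\cap\T$ in a single terminal $r$ and hits $\T\setminus X$ in exactly two terminals $u,v$; this is $\ge 1/(256e)$ and the computation is word-for-word the graph case.

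Conditioned on $\mathcal{E}$, I would establish the two bounds on the number of terminals in the maximal $r$-isolating mincut $X_r$ of $\T':=\{r,u,v\}$. The lower bound $|X_r\cap\T|\ge\frac14|\T|$ follows because any cut separating $r$ from $\{u,v\}$ is a $\T$-cut and hence has value $\ge\lambda_G(\T)$, while a valid connected-at-$r$ cut derived from $X$ achieves value $\lambda_G(\T)$, is therefore an $r$-isolating mincut of $\T'$, and is contained in $X_r$ while still carrying $\ge\frac14|\T|$ terminals. For the upper bound $|X_r\cap\T|\le\frac34|\T|$, I would contract $X$ in $G$ to a vertex $r'$ and set $\T':=(\T\setminus X)\cup\{r'\}$; since $X$ is a $\T$-Steiner mincut, $\lambda_{G/X}(\T')=\lambda_G(\T)$ and $\{r'\}$ is a $\T'$-Steiner mincut, so every $r'$-isolating mincut of $\T'$ is a $\T'$-mincut and \Cref{lem:sample-two-vertices hypergraph} applies. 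It gives that, with probability $\ge1/4$ over the random $u,v$, the maximal $r'$-isolating mincut of $\{r',u,v\}$ in $G/X$ omits at least $\frac12|\T'|\ge\frac14|\T|$ terminals; equivalently the minimal $\{u,v\}$-mincut of $\{r,u,v\}$ in $G$ holds $\ge\frac14|\T|$ terminals, and by the hypergraph Disjoint \& Posi-modularity lemma it is disjoint from $X_r$, forcing $X_r$ to miss $\ge\frac14|\T|$ terminals. Hence conditioned on $\mathcal{E}$ a balanced split appears in $\mathcal{S}$ with probability $\ge1/4$, so unconditionally with probability $\ge1/(1024e)$ per round; repeating $\lceil12\cdot1024e\cdot\ln n\rceil$ rounds and a union bound give failure probability at most $n^{-12}\le n^{-11}$.

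The main obstacle is not the probabilistic bookkeeping — which is verbatim from \Cref{lem:steiner-mincut-case-1} — but verifying that the three structural moves (``$X$ yields a cut contained in $X_r$'', ``$X_r$ is disjoint from the minimal $\{u,v\}$-mincut'', and ``contracting a $\T$-Steiner mincut preserves $\lambda_G(\T)$ and makes the image a $\T'$-Steiner mincut'') survive \Cref{def:A-mincut hypergraph}, under which a $\T$-Steiner mincut need not itself be a connected-at-$r$ $\{r\}$-cut. The delicate point is that accessibility of the balanced edge-cut on $H$, together with the hypergraph posi-modularity and nesting lemmas re-proved in \Cref{sec:max isocut hyper}, lets one either use $X$ directly (in the rank-$2$ edge and cycle-edge cases, where both sides of the cut are connected) or pass to the connected component of $G[X]$ containing $r$ (in the singular hyperedge case) without increasing the cut value $\lambda_G(\T)$ and without dropping below $\frac14|\T|$ terminals on the $r$-side; once this is pinned down and \Cref{lem:sample-two-vertices hypergraph} is invoked, the argument is identical to the graph case.
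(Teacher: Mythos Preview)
Your proposal is correct and takes essentially the same approach as the paper: the paper's own proof is literally a one-line sketch, ``By plugging in \Cref{lem:sample-two-vertices hypergraph} into the proof of \Cref{lem:steiner-mincut-case-1}, we get the proof,'' and you have carried out exactly this substitution with the probabilistic calculation copied verbatim. You are in fact more careful than the paper, in that you explicitly flag the one place where the hypergraph connectivity constraint of \Cref{def:A-mincut hypergraph} makes the transplant non-automatic (namely, that the balanced $\T$-Steiner mincut $X$ need not itself be a connected-at-$r$ cut, so one must pass to the $r$-component without losing the $\frac14|\T|$ terminal count) and sketch how accessibility of the edge-cut on $H$ resolves it; the paper leaves this entirely implicit.
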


\begin{proof}[Proof Sketch.]
By plugging in \Cref{lem:sample-two-vertices hypergraph} into the proof of \Cref{lem:steiner-mincut-case-1}, we get the proof of \Cref{lem:steiner-mincut-case-1 hypergraph}. 
\end{proof}

\paragraph{Case 2: No Balanced Cuts.} In the second case there is no balanced edge-cut. Note that the definition of irredundant $\T$-Steiner hypercactus $H$ is the same as \Cref{def:irredundant-cactus}, since one can never contract a hyperedge in $H$ without losing Steiner mincuts.

\begin{lemma}\label{lem:steiner-mincut-case-2 hypergraph}
Let $G$ be a hypergraph with terminal set $\T$ and let $(H, \phi)$ be an irredundant $\T$-Steiner hypercactus of $G$.
Suppose there is no balanced edge-cut on $H$, then 
with probability $1-n^{-11}$ the following statements holds:
\begin{enumerate}[itemsep=0pt]
    \item Either (1)\label{lem5.11-part1-1} there exists a unique centroid node $v$ on $H$ whose all incident 1-edges and 2-edges from the same cycle corresponds to $\T$-Steiner mincuts of at most $\frac14|\T|$ terminals, or (2) there exists a unique hyperedge $e$ on $H$ such that each connected component in $H-e$ corresponds to $\T$-Steiner mincuts of at most $\frac14|\T|$ terminals.
    \item Define $\mathcal{S}'_{\mathrm{pruned}}$ in the same way as in \Cref{sec:compute good splits collection}. 
    Let $\mathcal{S}=\{X_i\}$ be the returned collection of $\T$-Steiner mincuts from \Cref{alg:compute good splits}.
    Then $\mathcal{S}=\mathcal{S}'_{\mathrm{pruned}}$.
    (Recall that $\mathcal{S}'_{\mathrm{pruned}}$ contains all $\T$-Steiner mincuts for each component $T_i$ described in part 1 as long as it contains at least $2$ terminals).
    \item Moreover, the mincut on $H$ that separates $\phi(X_i)$ with the rest vertices is incident to $v$. That is, after the contraction of all $X_i$, the contracted graph has a hypercactus that is a star shape with $\ge 4$ leaves or a brittle containing $\ge 4$ nodes. 
    \item $\mathcal{S}$ is a good split collection.
\end{enumerate}
\end{lemma}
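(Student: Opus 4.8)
The plan is to mirror the structure of Section~\ref{sec:compute good splits collection} exactly, reusing the graph-case proofs wherever the hypergraph structure does not interfere, and patching only the places where rank-$\ge 3$ hyperedges intervene. Concretely, \Cref{lem:steiner-mincut-case-2 hypergraph} has four parts, so I would prove them in order.

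\textbf{Part 1 (centroid or hyperedge).} I would replace each cactus cycle \emph{and} each rank-$r$ hyperedge of $H$ by a star, forming a tree $H'$ in which every edge still corresponds to an accessible mincut of $H$. Take a weighted centroid $v$ of $H'$ (empty/auxiliary nodes weight $0$, terminal-carrying nodes weight $1$). If $v$ is a genuine node of $H$ it is a centroid node and we are in case (1); if $v$ is the auxiliary node of a cycle we derive a balanced edge-cut by walking along the cycle (contradiction, as in \Cref{lem:steiner-mincut-case-2-part1}); the new case is when $v$ is the auxiliary node of a rank-$r$ hyperedge $e$ — then, by the centroid property, each of the $r$ components of $H-e$ carries at most $\frac14|\T|$ terminals, which is exactly case (2). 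Uniqueness is the same counting argument: two centroids separated by an accessible mincut would force $|\T|\le\frac12|\T|$.

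\textbf{Parts 2--3 (the algorithm recovers $\mathcal{S}'_{\mathrm{pruned}}$, and it induces a star/brittle).} Here I would reprove \Cref{fact-for-case2}, \Cref{prop-for-case-2}, and \Cref{lem:steiner-mincut-case-2-part2} verbatim, using the hypergraph Pairwise-Intersection-Only lemma and the hypergraph maximal-isolating-mincut theorem in place of their graph analogues, plus \Cref{lem:avoid-centroid} generalized so that a maximal isolating mincut of three terminals in distinct $H-v$ (or $H-e$) components cannot contain $v$ (or cannot cut $e$ except singularly) — this is precisely what \Cref{cor:brittle real corollary} buys us: the algorithm's cuts are accessible, hence they never chop $e$ into two rank-$\ge 3$ pieces. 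The sampling-probability computations carry over unchanged. For Part~3, irredundancy of $H$ forces each pruned split $X_i$ to correspond to a mincut incident to $v$ (resp.\ to $e$), so contracting all $X_i$ leaves a hypercactus whose non-trivial mincuts all touch $v$ or $e$; by Assumption~\ref{assumption:2} at most one terminal sits at the center, and since each split has $<\frac14|\T|$ terminals and $|\T|\ge5$, we get a star or brittle with $\ge 4$ leaves/nodes, as in \Cref{lem:steiner-mincut-case-2-part3}.

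\textbf{Part 4 (good split collection).} This is then immediate, following \Cref{lemma:S-is-good-split-collection}: disjointness of the $X_i$ holds by construction ($X_i = X'_i\setminus\bigcup_{j<i}X'_j$ and posi-modularity, \Cref{lem:disjoint-posi-modularity}), each $X_i$ is a $\T$-split by \Cref{prop-for-good-split-collection}'s argument, and Part~3 gives a star-or-brittle Steiner hypercactus for the final contracted graph, so the induced decomposition meets \Cref{def:good-decomposition-hypergraph}. Finally, combining Case~1 (\Cref{lem:steiner-mincut-case-1 hypergraph}) and Case~2 yields \Cref{lem:compute good splits hypergraph} with the stated failure probability by a union bound over the $O(\log^2 n)$ sampling scales and repetitions.

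The main obstacle I anticipate is isolating the exact points where \Cref{def:A-mincut hypergraph}'s connectivity constraint breaks the clean graph-case lemmas — in particular, that an $A$-mincut need not be a $(\T\setminus A)$-mincut and that $X_A\cap X_B$ need not be an $A$-mincut (the intersection may disconnect $G/A$). I would handle this by always re-running the connected-at-$X$-side reasoning through the hypergraph Pairwise-Intersection-Only lemma rather than through submodularity of intersections, and by invoking \Cref{cor:brittle real corollary} to certify that every cut the algorithm actually produces is accessible; accessibility is exactly the property that lets the DFS-ordering and centroid arguments go through without the higher-rank hyperedges ever being split non-trivially.
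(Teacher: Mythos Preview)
Your proposal is correct and takes essentially the same approach as the paper: replace hyperedges (and then cycles) by stars to locate a centroid in an auxiliary tree, split into the node-centroid and hyperedge-centroid cases, invoke \Cref{cor:brittle real corollary} to guarantee every cut the algorithm produces is accessible so that \Cref{fact-for-case2}, \Cref{prop-for-case-2}, and the sampling analysis of \Cref{lem:steiner-mincut-case-2-part2} go through unchanged, and then finish Parts~3--4 by the irredundancy and disjointness arguments exactly as you sketch. The paper's proof is terser but follows the same skeleton; your identification of \Cref{cor:brittle real corollary} as the key tool for the hyperedge case and your anticipation of the connectivity-constraint pitfalls in \Cref{def:A-mincut hypergraph} are both on target.
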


\begin{proof}~
\paragraph{Part 1.} For each hyperedge $e\in H$, replace $e$ by a node $v$ and a bunch of normal edges $\{ (u,v)\mid u\in e \}$, and denote this cactus as $H'$. Then the uniqueness of centriod of $H'$ follows the same proof as \Cref{lem:steiner-mincut-case-2-part1}.
\item There is no balanced edge-cut on $H$, then by the definition of balanced edge-cut there must be a centroid node $v$ on $H'$ such that by removing $v$ the cactus $H'$ shattered into connected components.
Then centroid node $v$ mapped to a node or a hyperedge in $H$, corresponds to (1) and (2) respectively.
        
\paragraph{Part 2.} 
Let $\{T_i\}$ be the partition of all terminal vertices (possibly excluding $v$ if $v$ is non-empty) within each connected components.
Then by the assumption where no balanced cut exists, we have $\cup_i T_i = \T\setminus \{v\}, \forall i, |T_i|\le \frac14 |\T|$. 
Similar to~\Cref{lem:avoid-centroid} (but now the ``centroid'' $v$ could be either a node or a hyperedge), if three terminal vertices $w, x, y$ are sampled such that $\phi(w), \phi(x)$, and $\phi(y)$ belongs to three distinct connected component in $H-v$, then the maximal $w$-mincut of $\{w, x, y\}$ (denoted as $X_w$) has a corresponding mincut in $H$ that is exactly the connected component $W$ of $\phi(w)$ in $H-v$.

It is straightforward to check from definition that any $A$-mincut of $\T$ will be corresponding to an accessible mincut on $H$.\zhongtian{refer to \Cref{lem:unique mincut in brittle}}
By \Cref{cor:brittle real corollary}, our algorithm finds $X_w$, and it does not contain any vertex outside $\phi^{-1}(W)$.

Now, again
by \Cref{cor:brittle real corollary}, the algorithm always finds accessible mincuts.
Note that \Cref{fact-for-case2} and \Cref{prop-for-case-2} work for accessible mincuts.
Thus,
the algorithm ensures that $X_w$ stays maximal in Line~\ref{line14}.

The rest analysis about lower bounding the probability is the same as \Cref{lem:steiner-mincut-case-2-part2}.
Hence, with desired probability the returned collection $\mathcal{S}$ contains all desired $\T$-Steiner mincuts.

\paragraph{Part 3.}
By Part 2, $\mathcal{S}=\mathcal{S}'_{\mathrm{pruned}}$.
This is straightforward to check using the fact that $H$ is irredundant. The star case is proved in \Cref{lem:steiner-mincut-case-2-part3}.
To bound the number of nodes in the brittle, recall that each split contains strictly less than $\frac14|\T|$ terminals, so we conclude that 
the brittle has at least $5$ nodes.

\paragraph{Part 4.} By Part 2, $\mathcal{S}=\mathcal{S}'_{\mathrm{pruned}}$.
If \hyperref[lem5.11-part1-1]{case (1)} in Part 1 is true (i.e., there exists a unique centroid node on $H$), then the arguments follow exactly the same as in normal graph. Otherwise, there exists a unique hyperedge $e$ on $H$ such that each connected component in $H-e$ corresponds to $\T$-Steiner mincuts of at most $\frac14|\T|$ terminals. By Line~\ref{line:enfore-disjoint} in \Cref{alg:compute good splits}, condition (1) of \Cref{def:good-splits} is satisfied.

Now, since all $\T$-Steiner mincuts in $\mathcal{S}$ are disjoint\zhongtian{I will show ShangEn they are connected. I bet 1 dollar here.}, the decomposition $\{(G_i, \T_i)\}$ induced by $\mathcal{S}$ is well-defined.
Without loss of generality we may apply simple refinements to $G$ in the order of $X_1, X_2, \ldots, X_{k}$.
With this ordering, for each $1\le i\le k$, we have $|\T_i|\le |\T\cap X_i|+1 \le \frac14|\T|+1$.
Moreover, by Part 3, we know that there exists a cactus representation that is a brittle shape for the very last graph $(G_{k+1}, \T_{k+1})$.
Thus, by \Cref{def:good-decomposition-hypergraph}, $\{(G_i, \T_i)\}$ is a good decomposition.
This implies that condition (2) \Cref{def:good-splits} holds and $\mathcal{S}$ is a good split collection.
\end{proof}

\begin{proof}[Proof of \Cref{lem:compute good splits hypergraph}]
The correctness of
\Cref{lem:compute good splits hypergraph} follows from \Cref{lem:steiner-mincut-case-1 hypergraph} and \Cref{lem:steiner-mincut-case-2 hypergraph}.
It is straightforward to check (with the proof of~\Cref{lem:compute good splits}) that the runtime of Modified~\Cref{alg:compute good splits} is $O(\log^3 n)\cdot \MaxFlow(O(n+m), O(p))$. 
\end{proof}

\subsection{Returning a Correct Hypercactus}\label{sec:returning-a-correct-cactus-hyergraph}

In this subsection, we prove \Cref{lem:merge-cactus hypergraph}. In particular, we implement the procedures \textsc{TrivialCactus}($G, \T$), \textsc{MergeCactus}($G, \T, \{H_i\}$), and \textsc{StarOrBrittleCactus}($G, \T$). The first procedure is exactly the same as stated in \Cref{sec:compute-cactus-from-prime-decomposition-tree}. 

To ensure a correct hypercactus is returned, we implement the procedures of~\Cref{alg:divide-and-conquer-framework} with the following invariant:

\begin{invariant}\label{inv:returned-hypercactus}
Let $(H, \phi)$ be the $\T$-Steiner hypercactus returned by any subproblem $(G, \T)$.
Then $H$ is irredundant and does not contain a hollow 3-star as an induced subgraph.
\end{invariant}

\paragraph{Star or Brittle Cactus.} 
In this case, there is no $\T$-split that is an accessible $\T$-Steiner mincut.
Since $|\T| \ge 4$ when Line~\ref{line:merge-cuts-2} is reached in \Cref{alg:divide-and-conquer-framework}, we claim that there exists a star or brittle shaped $\T$-Steiner hypercactus of $G$.

\begin{fact}
Let $G$ be a hypergraph and $\T$ be a set of terminals with $|\T|\ge 4$.
Suppose that every accessible $\T$-Steiner mincut is trivial (i.e., a $t$-isolating mincut of $\T$ for some $t\in \T$).
Then, there exists $(H, \phi)$, a $\T$-Steiner hypercactus of $G$ such that $H$ is a star graph or a brittle.
\end{fact}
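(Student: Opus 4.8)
The plan is to follow the proof of the corresponding fact for normal graphs but leave room for brittles. First I would invoke Fleiner and Jord{\'a}n's existence result \cite{FJ99} to get \emph{some} $\T$-Steiner hypercactus of $G$, and then repeatedly simplify it: contract any edge whose contraction still yields a $\T$-Steiner hypercactus, and delete any empty port of a hyperedge of rank $\ge 3$ whose component is a single node. This terminates and produces an irredundant $\T$-Steiner hypercactus $(H,\phi)$ (in the sense of \Cref{def:irredundant-cactus}) that moreover has no empty leaf and no empty node of degree $\le 2$. Since $(H,\phi)$ already preserves all $\T$-Steiner mincuts, it suffices to show that under the hypothesis — $|\T|\ge 4$, and every $\T$-Steiner mincut whose image in $H$ is an \emph{accessible} mincut separates a single terminal from the rest — the hypergraph $H$ is a star or a brittle. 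Throughout I would use two routine irredundancy facts: every maximal part of $H$ hanging off a single vertex contains a terminal (an all-empty branch would be contractible), and the $\le 1$-terminal side of any edge is a single node (otherwise the recursion ``an empty degree-$\le 2$ node is forbidden, a leaf has a terminal'' forces a contradiction).

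Next I would rule out cycles. Suppose $H$ had a cycle on $c\ge 3$ nodes; each such node controls at least one terminal by the bookkeeping above, so the $c$ single-node arcs carry terminal counts $t_1,\dots,t_c\ge 1$ with $\sum_i t_i=|\T|\ge 4$. Then some cyclic arc carries between $2$ and $|\T|-2$ terminals: if some $t_i$ already lies in $[2,|\T|-2]$ we are done; if all $t_i=1$ then $c=|\T|\ge 4$ and any two-node arc works; and if some $t_i\ge|\T|-1$ then the other $c-1\ge 2$ arcs would together carry $\le 1$ terminal, impossible. Cutting the cycle along the two rank-$2$ edges bounding such an arc is an accessible mincut (two normal edges of a common cycle) that is a non-trivial $\T$-Steiner mincut, contradicting the hypothesis. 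Hence $H$ is acyclic.

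Then I would classify the acyclic structure. Any rank-$2$ edge of $H$ must be pendant: if not, both sides control $\ge 2$ terminals (a side controlling only one terminal is a single node, making the edge pendant after all), so cutting it would be an accessible non-trivial mincut. Now suppose $H$ has a hyperedge $e$ of rank $r\ge 3$. For each of the $r$ components $C_1,\dots,C_r$ of $H-e$, the cut $(C_i,V(H)\setminus C_i)$ has boundary exactly $\{e\}$ with $|C_i\cap e|=1$, so it is a singular hyperedge cut, hence accessible, and it is a $\T$-Steiner mincut. By triviality together with $r\ge 3$ (which makes $V(H)\setminus C_i$ control $\ge 2$ terminals), each $C_i$ controls exactly one terminal, hence is a single node; any other edge of $H$ would lie inside some $C_i$, so there is none, and $H$ is the single hyperedge $e$ on $r=|\T|\ge 4$ nodes, i.e.\ a brittle. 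If instead $H$ has no hyperedge of rank $\ge 3$, then $H$ is a tree all of whose edges are pendant, which has at most one non-leaf node, i.e.\ a star. Either way $(H,\phi)$ is the desired representation, which proves the fact.

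The main obstacle I expect is not any single step but the careful matching of the ``accessible'' definition: each cut used to derive a contradiction (a two-edge cycle cut, a pendant rank-$2$ edge cut, a single-port hyperedge cut) must be checked to be an accessible mincut so that the hypothesis can legitimately be applied to it, and the ``controls at least/exactly one terminal'' claims must be justified in the reduced hypercactus (no empty leaves, no degree-$\le 2$ empty nodes, no contractible empty branches). These are routine but need to be written out to avoid an off-by-one in the cycle count and to justify that a part controlling exactly one terminal collapses to a single node.
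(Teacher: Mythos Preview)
Your proposal is substantially more detailed than what the paper provides: the paper's entire ``proof'' is the single sentence ``This fact follows from the definition of accessible mincut.'' You are not taking a different route so much as actually supplying the argument the paper leaves implicit. Your structural case analysis (rule out cycles, force rank-$2$ edges to be pendant, collapse a rank-$\ge 3$ hyperedge to a brittle) is the natural unpacking of that one-line claim, and it is essentially correct.

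Two small points worth tightening when you write it out. First, your cycle elimination treats only $c\ge 3$; you should either remark that $2$-cycles can be replaced by single weight-$\lambda$ edges during simplification, or handle them directly (a $2$-cycle cut is accessible and, if both sides carry $\ge 2$ terminals, already yields the contradiction). Second, the claim ``a side controlling $\le 1$ terminal is a single node'' needs a short induction on the sub-hypercactus structure, since at the point you use it (before the hyperedge case) the side may still contain hyperedges; your sketch via ``empty leaf / empty degree-$\le 2$ node forbidden'' is the right idea but you should state clearly that your simplification also shrinks hyperedge ports whose component is a single empty node, so that a one-terminal side collapses all the way down regardless of whether its internal edges are rank-$2$ or higher. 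These are exactly the ``routine but need to be written out'' items you already flagged; once they are in, your proof is complete and stands on its own where the paper's does not.
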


This fact follows from the definition of accessible mincut.
Let $A\subset \T$ be arbitrary subset with size $|A| = 2$. To distinguish the star case and brittle case, we simply test the mincut value between $A$ and $\T\setminus A$, which can be done in $O(\MaxFlow(n+m, p))$ time. And contructing the star cactus is the same as \Cref{sec:compute-cactus-from-prime-decomposition-tree}. The \textsc{StarOrBrittleCactus} procedure is able to return a $\T$-Steiner hypercactus of $G$ in $O(\log|\T|)\cdot\MaxFlow(O(n+m), O(p))$ time.
By assumption~\ref{assumption:2} and $|\T|\ge 4$, no hollow 3-star can be formed and thus  \Cref{inv:returned-hypercactus} holds.

\paragraph{(Line~\ref{line:merge-cuts-3}, Case 1.) Merge from a Balanced Split.}
There are two cases when Line~\ref{line:merge-cuts-3} is reached, depending on whether a balanced split is found or not.
Suppose that a balanced split is found so the good split collection contains exactly one $\T$-split $|\mathcal{S}|=1$.

The \textsc{MergeCactus} procedure rely on the following useful observation, which can be proved in the same way as \Cref{lem:anchor-vertex-always-at-boundary}.

\begin{fact}\label{lem:anchor-vertex-always-at-boundary hypergraph}
Let $(G, \T)$ be a subproblem and let $t\in \T$ be an \emph{anchor vertex} generated in some ancestor problems.
Let $(H, \phi)$ be any irredundant $\T$-Steiner hypercactus of $G$.
Then, the node $\phi(t)$ on $H$ has either degree 1, or degree 2 but in a cycle.
\end{fact}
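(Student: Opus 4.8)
The plan is to transcribe the proof of \Cref{lem:anchor-vertex-always-at-boundary} and patch the single spot where the richer hypercactus mincut structure enters. First I would recall why $\{t\}$ is a $\T$-Steiner mincut of $G$: anchor vertices are created only by simple refinements along $\T$-splits, and by \Cref{def:t-split} a $\T$-split is a $\T$-mincut of value $\lambda_G(\T)$; contracting one of its sides turns the new anchor $a$ into a singleton cut whose value equals that split value, and since a simple refinement does not change the Steiner mincut value, $\{a\}$ is a $\T$-Steiner mincut of the child subproblem --- a property that persists as $a$ is passed down the recursion. Since $(H,\phi)$ represents all $\T$-Steiner mincuts of $G$, there is then a global mincut $(S, V(H)\setminus S)$ of $H$ with $\phi(t)\in S$ and $\phi(\T\setminus\{t\})\subseteq V(H)\setminus S$; in particular $S$ contains no non-empty node besides $\phi(t)$.

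Next I would argue $S=\{\phi(t)\}$, the step that \Cref{lem:anchor-vertex-always-at-boundary} derives from irredundancy. If the separating mincut of $H$ is a one-normal-edge cut or a two-cycle-edge cut, then $S$ is a connected sub-hypercactus; were $|S|\ge 2$, some (hyper)edge $e'$ of $H$ would lie entirely inside $S$ and be incident to $\phi(t)$, and contracting it destroys only mincuts of $H$ that separate two nodes of $S$, each of which induces the same partition of $\T$ as a mincut keeping all of $S$ on one side (because $\phi(t)$ is the only terminal-image in $S$); hence $H/e'$ would still be a $\T$-Steiner hypercactus, contradicting \Cref{def:irredundant-cactus}. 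If instead the separating mincut is a singular hyperedge cut through a hyperedge $e$ of rank $r\ge 3$, I would first note that $S$ is a union of connected components of $H-e$ and drop from $S$ every component missing $\phi(t)$ (each being empty-only), obtaining a mincut of no larger value still separating $\phi(t)$ from $\phi(\T\setminus\{t\})$; so we may assume $S$ is the single component of $H-e$ containing $\phi(t)$, and then reuse the interior-contraction argument.

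Finally, with $S=\{\phi(t)\}$ the cut boundary is exactly the set of (hyper)edges incident to $\phi(t)$, and since $(S,V(H)\setminus S)$ is a mincut of the hypercactus, the classification of hypercactus mincuts leaves only three options: a single normal edge, a single hyperedge of rank $\ge 3$, or two rank-$2$ edges of one common cycle --- giving degree $1$, degree $1$, or degree $2$ inside a cycle respectively; a degree-$2$ node whose two incident edges lie in no common cycle would produce cut value $2\lambda_G(\T)$ and violate minimality. The only genuinely new piece relative to \Cref{lem:anchor-vertex-always-at-boundary} is the singular-hyperedge case, and I expect the minimality reduction there --- ruling out empty-only components accumulating on the $\phi(t)$-side before the contraction argument applies --- to be the point that most deserves explicit care.
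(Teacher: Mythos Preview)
Your proposal is correct and follows exactly the route the paper takes: the paper's proof of this fact is literally a pointer back to \Cref{lem:anchor-vertex-always-at-boundary}, whose three-sentence argument is that $\{t\}$ is a $\T$-Steiner mincut, hence some mincut of $H$ isolates $\phi(t)$, and irredundancy forces the $\phi(t)$-side of that mincut to be a single node. Your write-up simply unpacks the irredundancy step with an explicit contraction argument and adds the singular-hyperedge reduction for the hypercactus setting, which is more detail than the paper supplies but in the same spirit.
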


Now, let us assume that the balanced split in a good split collection $\mathcal{S}$ decomposes the graph $G$ into two subproblems $(G_1, \T_1)$ and $(G_2, \T_2)$, with a shared anchor vertex $a\in \T_1\cap \T_2$.
Let $(H_1, \phi_1)$ and $(H_2, \phi_2)$ be the cactus returned from the two subproblems respectively. Using \Cref{lem:anchor-vertex-always-at-boundary},
there are only a constant situations to be handled, depending on whether $\phi_1(a)$ and $\phi_2(a)$ are leaves (connected to a normal edge), degree 1 nodes on brittle, or on cycle. The \textbf{Leaf-Leaf Case}, the \textbf{Leaf-Cycle Case}, and the \textbf{Cycle-Cycle Case} are implemented in \Cref{sec:compute-cactus-from-prime-decomposition-tree}. In addition, we only need to handle the following cases. 
Note that \Cref{inv:returned-hypercactus} still holds after these operations.

\paragraph{Brittle-Brittle Case.}
If both anchor nodes $\phi_1(a)$ and $\phi_2(a)$ have degree 1 and connected to hyperedges in both $G_1$ and $G_2$, then we simply make $a$ to be a normal node on hypercactus, and connect $G_1$ and $G_2$ via $a$.

\paragraph{Leaf-Brittle Case.}
If both anchor nodes $\phi_1(a)$ and $\phi_2(a)$ have degree 1, while one connected to a normal edge and the other connected to a hyperedge, say $(x, \phi_1(a))$ and hyperedge $e$ where $\phi_2(b)\in e$, then the procedure simply delete anchor vertex $a$ and replace hyperedge $e$ by hyperedge $(e\cup \{x\})\setminus \{a\}$, which connects $H_1$ and $H_2$.

\paragraph{Cycle-Brittle Case.}
If one anchor node connects to a cycle and the other connects to a hyperedge, say $(x, \phi_1(a)), (y, \phi_1(a))$ and hyperedge $e$ where $\phi_2(b)\in e$, then we simply make $a$ to be a normal node on hypercactus, and connect $G_1$ and $G_2$ via $a$.

\paragraph{(Line~\ref{line:merge-cuts-3}, Case 2.) Merging When There Is No Balanced Split.}

The other case when invoking Line~\ref{line:merge-cuts-3} is that no balanced split exists and thus a good split collection with one or more splits were returned.
Suppose that $|\mathcal{S}|=\ell \ge 1$ and the graph is decomposed into $\ell+1$ subproblems $\{(G_i, \T_i)\}$, with the last subgraph $(G_{\ell+1}, \T_{\ell+1})$ containing all the anchor vertices generated in this recursion step.
By part 3 of~\Cref{lem:steiner-mincut-case-2 hypergraph} and the algorithm, the returned cactus from the subproblem $(G_{\ell+1}, \T_{\ell+1})$ must be a star graph of at least $4$ leaves, or a brittle. Let $H_{\mathrm{center}}$ be this graph.

The algorithm attaches each cactus $H_i$ from other subproblems $(G_i, \T_i)$ to the center graph $H_{\mathrm{center}}$ via {\textbf{Leaf-Leaf Case}},  {\textbf{Leaf-Cycle Case}}, and {\textbf{Leaf-Brittle Case}} if $H_{\mathrm{center}}$ is a star graph; or via {\textbf{Leaf-Brittle Case}}, {\textbf{Cycle-Brittle Case}}, and {\textbf{Brittle-Brittle Case}} if $H_{\mathrm{center}}$ is a brittle.
Since no tests are required, this step can be done in linear time $O(|V(G)|+|E(G)|)$.
The correctness argument is the same as the balanced-split case, since we can view this process as sequentially merging two cactus at a time.
With the same reason, \Cref{inv:returned-hypercactus} holds too.

\subsection{Proof of \Cref{cor:brittle real corollary}}
\label{sec:structural properties of hyper mincuts}

In this section, we prove \Cref{cor:brittle real corollary}, the structural lemma that was crucial for our algorithm.
First, we shows that, although a hyperedge of rank $r$ on a hypercactus implies $2^r$ mincuts on $G$, almost all of these mincuts on $G$ have the same set of boundary edges.

\begin{lemma}
\label{lem:unique mincut in brittle}
Let $G=(V, E)$ be hypergraph and $\T$ be  a terminal vertex set with $|\T|\ge 4$.
Suppose that
there exists a brittle hypergraph $H$ that is a $\T$-Steiner hypercactus of $G$.
That is, for every proper subset $A\subsetneq \T$, 
there exists a $\T$-Steiner mincut that separates $A$ and $\T\setminus A$.
Then, there exists a unique set of hyperedges $E'\subseteq E$ such that, 
whenever $|A|\le |\T|-2$, $E'$ is
the set of boundary hyperedges to the maximal $A$-mincut $X_A$ of $\T$, i.e. $E'=\partial X_A$.
\end{lemma}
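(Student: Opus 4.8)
The plan is to work directly with the brittle structure of $H$ and the single hyperedge $e^* = V(H)$ it contains, together with the cactus-preservation property. Since $H$ is a brittle, $H = (V(H), \{V(H)\})$, and every bipartition $(S, V(H)\setminus S)$ with both sides nonempty is a global mincut of $H$, all of cut value $\lambda_G(\T)$. So for \emph{every} proper subset $A \subsetneq \T$, the partition $(\phi(A), \phi(\T\setminus A))$ is (part of) a global mincut on $H$, hence by \Cref{def:T-steiner-hypercactus} there is a $\T$-Steiner mincut of $G$ separating $A$ and $\T\setminus A$; in particular $\lambda_G(\T \cap X) = \lambda_G(\T)$ whenever $X$ is any $A$-mincut. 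First I would fix an arbitrary $A_0$ with $|A_0| \le |\T| - 2$ (so both $A_0$ and $\T\setminus A_0$ have size $\ge 2$ — note $|\T|\ge 4$ makes such $A_0$ exist) and let $E' := \partial X_{A_0}$ be the boundary hyperedge set of its maximal $A_0$-mincut. The goal is to show $E'$ does not depend on $A_0$ in this range.

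The key step is to compare the maximal $A$-mincut $X_A$ and the maximal $B$-mincut $X_B$ for two such subsets $A, B$. I would first reduce to the case where $A$ and $B$ are \emph{crossing} as subsets of $\T$ (if $A \subseteq B$ or they are disjoint, a short direct argument via \Cref{lem:steiner-modularity}/\Cref{lem:disjoint-posi-modularity}, adapted to the hypergraph connectivity-constrained definition, handles it, using the fact that all these cuts have the same value $\lambda_G(\T)$). When $A$ and $B$ cross, consider $X_A$ and $X_B$ in $G$. Submodularity (\Cref{lem:submod}) gives $\C(X_A \cap X_B) + \C(X_A \cup X_B) \le \C(X_A) + \C(X_B) = 2\lambda_G(\T)$. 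The set $X_A \cup X_B$ separates $A \cup B$ from $\T \setminus (A\cup B)$, and since $|A \cup B| \le |\T|-2$... wait — I need to be careful: $|A\cup B|$ could be $|\T|-1$. Instead I would work on the complement side: since $X_A$ is a maximal $A$-mincut with $(G/A)[X_A]$ connected, and similarly $X_B$, I claim $\partial X_A = \partial X_B = E'$ by the following reasoning. Every boundary hyperedge of $X_A$ must be the unique hyperedge $e^*$'s ``shadow'' — more precisely, the correspondence between mincuts of $G$ and mincuts of $H$ means $\partial X_A$ is exactly the preimage structure of whatever edge-set of $H$ realizes the cut $(\phi(A\cap\T), \cdot)$, but $H$ has only the single hyperedge $e^*$, so \emph{every} nontrivial mincut of $H$ uses $e^*$ and nothing else. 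Hence all these $\T$-Steiner mincuts of $G$ have boundary contained in a fixed set of hyperedges of $G$; the maximality of $X_A$ together with the connectivity constraint pins down $\partial X_A$ uniquely.

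Concretely, the cleanest route: let $E'$ be the set of hyperedges $f \in E$ such that $f$ is a boundary hyperedge of \emph{some} $\T$-Steiner mincut separating a pair of terminals. I would show (i) every $\T$-Steiner mincut $X$ with $2 \le |\T\cap X| \le |\T|-2$ has $\partial X = E'$, by using that on the brittle $H$ every mincut has boundary exactly $\{e^*\}$, so two such mincuts of $G$ that both ``come from'' $e^*$ have the same boundary hyperedge set, and (ii) the maximal $A$-mincut $X_A$ is indeed such a $\T$-Steiner mincut (value $\lambda_G(\T)$, and $|\T \cap X_A| \ge |A| \ge$ could be $1$ — here is where I'd use that actually when $|A|\le |\T|-2$ and we take the \emph{maximal} $A$-mincut, one shows $|\T\cap X_A|\ge 2$ because if it were a singleton $A=\{t\}$ then $X_A$ could be enlarged, contradicting maximality, \emph{or} we just handle $|A|=1$ separately noting the $t$-isolating maximal mincut still has boundary $E'$ since it corresponds to a leaf-cut of the brittle which still only uses $e^*$). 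Wait, a leaf cut of a brittle on $\ge 4$ nodes: removing node $\phi(t)$ from the cut means the cut $(\{\phi(t)\}, \text{rest})$ — its only boundary edge in $H$ is $e^*$ again. Good, so even singleton-terminal cuts have the same boundary. So in fact $\partial X_A = E'$ for \emph{all} proper $A$, and the restriction $|A|\le|\T|-2$ in the statement is just to also guarantee the symmetric statement / nonemptiness of the complement side; I would state it as given.

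\textbf{Main obstacle.} The delicate point is the hypergraph-specific subtlety flagged right after \Cref{def:A-mincut hypergraph}: the connectivity constraint means $X_A \cap X_B$ or $X_A \cup X_B$ need not be valid $A$-cuts / $B$-cuts even when they have small cut value, so the usual uncrossing argument must be replaced. I expect the bulk of the work to be: given two $\T$-Steiner mincuts $X, X'$ of $G$ with $\partial X, \partial X'$ both consisting of boundary hyperedges that ``project to $e^*$'' in $H$, prove $\partial X = \partial X'$. This should follow because the cut $(\phi(\T\cap X), \phi(\T\setminus X))$ in $H$ is realized by deleting $e^*$ and regrouping its $r = \mathrm{rank}(e^*)$ resulting components; the hyperedges of $G$ that cross $X$ are precisely those hyperedges of $G$ that $\phi$-map onto $e^*$ and get split — but $e^*$ being the \emph{only} hyperedge of $H$, and the map $\phi$ being fixed, forces the crossing hyperedges of $G$ to be the same set regardless of how we regroup, namely exactly those $f\in E$ with $|\phi(f)|\ge 2$ that lie ``over'' $e^*$. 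Formalizing ``$f$ projects onto $e^*$'' rigorously from the black-box cactus definition (which only speaks about cut values, not an explicit edge-correspondence) is the part needing care; I would do it by a counting/value argument: if $\partial X \ne \partial X'$ then some hyperedge $f \in \partial X \setminus \partial X'$, and one can build from $X, X'$ a cut of strictly smaller value or a cut of $H$ violating the brittle mincut structure, a contradiction. This value-pushing argument, carried out with the posi-modularity and submodularity of $\C$ restricted appropriately, is where I'd spend the most effort.
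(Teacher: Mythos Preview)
Your intuition that the single hyperedge $e^*$ of the brittle $H$ ``forces'' all Steiner mincuts of $G$ to share the same boundary is morally right, but the argument you sketch does not close. The cactus definition (\Cref{def:T-steiner-hypercactus}) only relates \emph{which bipartitions of $\T$} are realized by mincuts in $G$ versus in $H$; it gives no edge-to-edge correspondence, so the phrase ``$f$ projects onto $e^*$'' has no content you can exploit. You flag this yourself in the Main Obstacle paragraph, and then fall back on a two-cut comparison: if $\partial X \neq \partial X'$, build a strictly smaller cut via sub-/posi-modularity. That step does not go through. Submodularity on two mincuts $X,X'$ only yields $\C(X\cap X') + \C(X\cup X') \le 2\lambda$, hence equality when both corners separate terminals --- but equality of \emph{values} does not give equality of \emph{boundary sets}. (On a $4$-cycle, $\{1,2\}$ and $\{2,3\}$ are both mincuts with equal value, yet have disjoint boundaries; the brittle hypothesis rules out this particular graph, but you never show where in your two-cut argument the hypothesis enters.)

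The paper's proof is structurally different and supplies exactly the missing mechanism. It picks \emph{three} terminals $a,b,c$ and works with the three maximal mincuts $X_{ab},\,X_{ac},\,X_{\overline{bc}}$ (the last being the maximal $(\T\setminus\{b,c\})$-mincut); crucially, all three exist as $\T$-Steiner mincuts precisely \emph{because} the hypercactus is a brittle. Their three outer corners $Z_b,Z_c,Z_{\overline{abc}}$ are also mincuts, and one gets the zero-sum identity
\[
\bigl(\C(X_{ab})+\C(X_{ac})+\C(X_{\overline{bc}})\bigr)\;-\;\bigl(\C(Z_b)+\C(Z_c)+\C(Z_{\overline{abc}})\bigr)\;=\;0.
\]
Now the same edge-by-edge contribution bookkeeping as in the Pairwise Intersection Only Lemma (\Cref{lem:three-crossing-isolating-mincuts-on-hypergraph}) shows that every hyperedge touching the inner corner $Z_a \supseteq X_a$ and contributing zero to the left-hand side must in fact lie in all three boundaries $\partial Z_b,\partial Z_c,\partial Z_{\overline{abc}}$, hence also in $\partial X_{ab}$. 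Combining this with $\partial X_a = \partial Z_a$ gives $\partial X_a = \partial X_{ab}$, and by the symmetric argument $\partial X_b = \partial X_{ab}$, whence $\partial X_a = \partial X_b$ for all singletons. The extension to arbitrary $A$ with $|A|\le |\T|-2$ comes by replacing $b$ with $A\setminus\{a\}$ in the same template. The three-cut configuration, not a pairwise comparison, is what makes the contribution accounting tight enough to pin down the boundary sets.
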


\begin{proof}
It suffices to prove that: $\partial X_a=\partial X_A$ for all $a\in A\subset \T$ such that $|A|\le |\T| - 2$, where $X_a$ (resp. $X_A$) is the maximal $a$-isolating (resp. $A$-) mincut of $\T$. We shall first prove a simpler statement $\partial X_a=\partial X_b$ for all $a, b\in \T$.

For any $a,b,c\in \T$, suppose $X_{ab}$, $X_{ac}$ and $X_{\overline{bc}}$ are the maximal $\{a,b\}$-mincut, maximal $\{a,c\}$-mincut and maximal $\T\setminus \{b,c\}$-mincut of $\T$ respectively. By assumption, $\C(X_{ab}) = \C(X_{ac}) = \C(X_{\overline{bc}}) = \lambda_G(T)$.

Let $Z_a = X_{ab}\cap X_{ac}\cap X_{\overline{bc}}$,
$Z_b = (X_{ab}\setminus X_{ac})\setminus X_{\overline{bc}}$,
$Z_c = (X_{ac}\setminus X_{ab})\setminus X_{\overline{bc}}$ and
$Z_{\overline{abc}} = (X_{\overline{bc}}\setminus X_{ab})\setminus X_{ac}$.
Then $\C(Z_a) = \C(Z_b) = \C(Z_c) = \C(Z_{\overline{abc}}) = \lambda_G(T)$ by submodularity and posi-modularity (\Cref{lem:submod}). Next we use the proof method similar to \Cref{lem:three-crossing-isolating-mincuts-on-hypergraph}. That is, the following invariant equality always holds.

\[
    \underbrace{(\C(X_{ab})+\C(X_{ac})+\C(X_{\overline{bc}})) - (\C(Z_b)+\C(Z_c)+\C(Z_{\overline{abc}}))}_{\texttt{(LHS)}} = 0 ~.
\]
Using the same argument with the proof of \Cref{lem:three-crossing-isolating-mincuts-on-hypergraph}, we have
\begin{enumerate}
    \item For any hyperedge $e$, the total contribution of the weight $e$ to the LHS of the equality is non-negative.
    \item\label{property2} For any hyperedge $e$ connecting $X_a$ such that $e$ contributes 0 to the LHS of the equality, either $e$ contributes to none of $\C(X_{ab})$, $\C(X_{ac})$ and $\C(X_{\overline{bc}})$ , or $e$ contributes to all of $\C(Z_b)$, $\C(Z_c)$, and $\C(Z_{\overline{abc}})$.
\end{enumerate}

Let $X_a$ be the maximal $a$-isolating mincut of $\T$. First observe that $X_a\subseteq Z_a$, otherwise contradicts to maximality of either $X_{ab}$, $X_{ac}$ or $X_{\overline{bc}}$ by Nesting \& Submodularity \Cref{lem:steiner-modularity-hypergraph}. So $X_a$ must be the connected component in $G[Z_a]$ connected to $a$, by the definition of maximal $\{a\}$-mincut of $\T$. Therefore $\partial X_a = \partial Z_a$.

For any edge $e$ in $\partial Z_a$, it always contributes to either $\C(X_{ab})$, $\C(X_{ac})$ or $\C(X_{\overline{bc}})$. By property~\ref{property2}, it will also contributes to all of $\C(Z_b)$, $\C(Z_c)$, and $\C(Z_{\overline{abc}})$, furthermore by the equality, contributes to all of $\C(X_{ab})$, $\C(X_{ac})$ and $\C(X_{\overline{bc}})$. Combining with the fact that $\C(X_{ab}) = \C(Z_a) = \lambda_G(T)$, we have $\partial X_{ab} = \partial Z_a = \partial X_a$.

Since $a, b$ are arbitrary terminals in $\T$, we also have $\partial X_{ab} = \partial X_b$ by swapping $a$ and $b$ in the argument above. Therefore, for any $a, b\in \T$, $\partial X_a = \partial X_b$ and the statement follows.

After replacing $b$ by $A\setminus \{a\}$ and using the same argument, we have $\partial X_a=\partial X_A$ for all $a\in A\subset \T$ such that $|A|\le |\T| - 2$.
\end{proof}

\begin{lemma}\label{def:cor6.4}\label{lem:never-split-brittle}
Consider a $\T$-Steiner hypercactus $(H, \phi)$ of $G$.
Let $\T'\subseteq\T$ be a subset of $\T$ with $|\T'|\ge 2$.
Let $e$ be a hyperedge on $H$ with rank $\ge 3$.
Then for all $t\in \T'$ such that the maximal $t$-isolating mincut $X_t$ of $\T'$ is a $\T'$-Steiner mincut,
any corresponding mincut that separates $\phi(X_t\cap \T)$ and $\phi(\T\setminus X_t)$ on $H$ includes $0$, $1$, $|e|-1$, or $|e|$ nodes in $e$.
\end{lemma}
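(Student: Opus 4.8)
The plan is to prove \Cref{lem:never-split-brittle} (which is the statement labeled \texttt{def:cor6.4}/\texttt{lem:never-split-brittle}, the engine behind \Cref{cor:brittle real corollary}) by analyzing how the maximal $t$-isolating mincut $X_t$ of $\T'$ can interact with a high-rank hyperedge $e$ on the hypercactus $H$. First I would fix $t\in\T'$ with $X_t$ a $\T'$-Steiner mincut, and let $Y$ be a mincut on $H$ corresponding to $X_t$, i.e.\ $Y$ separates $\phi(X_t\cap\T)$ from $\phi(\T\setminus X_t)$ and has value $\lambda_G(\T)$ on $H$. The goal is to show $|Y\cap e|\in\{0,1,|e|-1,|e|\}$. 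Since $H$ is a hypercactus, removing $e$ splits $H$ into exactly $|e|$ connected components $H_1,\dots,H_{|e|}$, each attached to $e$ through exactly one of its $|e|$ endpoints $v_1,\dots,v_{|e|}$. I would parametrize $Y$ by the set $S\subseteq\{1,\dots,|e|\}$ of indices $j$ such that $v_j\in Y$, together with, for each such $j$, which part of $H_j$ lies in $Y$.

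**The value computation / structural dichotomy.** The key step is a cut-value count on $H$: the hyperedge $e$ contributes $\lambda_G(\T)$ to $\C_H(Y)$ as soon as $Y$ separates at least one pair of endpoints of $e$, i.e.\ whenever $\emptyset\neq S\neq\{1,\dots,|e|\}$. But $\C_H(Y)=\lambda_G(\T)$, so if $Y$ cuts $e$ at all, then $e$ is the \emph{only} boundary (hyper)edge of $Y$ in $H$; in particular $Y$ cannot also cut any normal edge or any other hyperedge of $H$. This forces $Y$ to be, in each component $H_j$ with $j\notin S$, either all of $H_j$ or none of it — and symmetrically for $j\in S$. Combining this with the constraint that $\phi(X_t\cap\T)$ is on one side and $\phi(\T\setminus X_t)$ on the other, I would then invoke the \emph{connectivity} side condition from \Cref{def:A-mincut hypergraph}: $(G/(X_t\cap\T'))[X_t]$ — equivalently, the $X_t$-side of $X_t$ after deleting $\partial X_t$ — must be connected. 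In the hypercactus picture this connectivity, pulled back through $\phi$, says that $Y$ on $H$ must be ``connected after removing its boundary,'' which rules out $Y$ grabbing several full components $H_j$ that hang off distinct endpoints of $e$ while leaving $e$ itself as the only boundary. I would conclude that either $|S|\le 1$ (so $Y$ touches $e$ in at most one endpoint, and since $Y$ cannot cut anything besides $e$ it takes either just $\{v_j\}$ or $\{v_j\}$ together with a full component rooted there — in all subcases $|Y\cap e|\le 1$, actually $=1$), or $|S|\ge |e|-1$ by the symmetric argument applied to the complement side, giving $|Y\cap e|\ge|e|-1$; the remaining possibilities $S=\emptyset$ or $S=\{1,\dots,|e|\}$ give $|Y\cap e|\in\{0,|e|\}$.

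**Deriving \Cref{cor:brittle real corollary}.** Once \Cref{lem:never-split-brittle} holds, \Cref{cor:brittle real corollary} follows by tracking what each split in the Modified \Cref{alg:divide-and-conquer-framework} does to $e$: every split used by the algorithm is a maximal $v$-isolating mincut of some sampled terminal set $\T'$ (Line~\ref{line:alg-good-split call max iso} of \Cref{alg:compute good splits}) restricted to those that are $\T$-Steiner mincuts, hence falls under the hypothesis of \Cref{lem:never-split-brittle}. Such a split either leaves $e$ entirely on one side ($0$ or $|e|$ nodes), which is a trivial split of $e$, or carves off exactly one endpoint ($1$ or $|e|-1$ nodes). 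In the simple-refinement decomposition, carving off one endpoint replaces $e$ by itself on one side (rank unchanged, $\ge 3$) and by a rank-$2$ edge incident to the anchor on the other side; so at most one side ever contains a hyperedge of rank $\ge 3$ derived from $e$. Inductively over the (logarithmic-depth) recursion this shows $e$ is never decomposed into two pieces of rank $\ge 3$ living in two different subproblems, which is exactly the claim.

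**Main obstacle.** The delicate point I expect to fight with is the second half of the structural dichotomy: cleanly extracting, from the single scalar equation $\C_H(Y)=\lambda_G(\T)$ together with the connectivity constraint, the statement that $Y$ cannot pick up full components attached to \emph{two or more} distinct endpoints of $e$ while keeping $e$ as its unique boundary. The cut-value bookkeeping is easy ($e$ alone already pays $\lambda_G(\T)$), but I must be careful that ``corresponding mincut $Y$ on $H$'' is chosen to respect the connectivity side condition inherited from \Cref{def:A-mincut hypergraph} — this is precisely where the ``right'' definition of $A$-cuts (the one forcing $(G/A)[X]$ connected) is doing the work, and I would need a short lemma (analogous to how \Cref{lem:anchor-vertex-always-at-boundary} uses irredundancy) certifying that maximal isolating mincuts on $G$ pull back to ``boundary-connected'' cuts on any $\T$-Steiner hypercactus. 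The normal-edge and cycle subcases of this pullback are routine and mirror the graph argument; the genuinely new content is the hyperedge subcase, and that is where I would spend the proof's effort.
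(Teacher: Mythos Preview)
Your cut-value computation on $H$ is correct: if $Y$ crosses $e$ then $\partial_H Y=\{e\}$ and $Y$ is exactly a union $\bigcup_{j\in S}H_j$ of full components hanging off $e$. But the step where you ``pull back'' connectivity from $G$ to $H$ is the whole proof, and you have not supplied it. The connectivity constraint in \Cref{def:A-mincut hypergraph} says $(G/\{t\})[X_t]$ is connected \emph{in $G$ after removing $\partial_G X_t$}; it says nothing about $Y$ being connected in $H$ after removing $\partial_H Y$. The map $\phi$ only records where \emph{terminals} go, so a path in $G-\partial_G X_t$ from a terminal in $T_{j_1}$ to one in $T_{j_2}$ can run through non-terminal vertices and through $G$-edges that have no counterpart on $H$. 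Concretely, such a path must cross the boundary of \emph{some} $\T$-Steiner mincut separating $T_{j_1}$ from $T_{j_2}$, but you have no reason to believe that boundary coincides with $\partial_G X_t$; two $\T$-Steiner mincuts of the same value can have completely different boundary edge sets. Your ``short lemma (analogous to \Cref{lem:anchor-vertex-always-at-boundary})'' would therefore have to show that, in the brittle situation, \emph{all} the relevant mincuts share the same boundary set in $G$. That is not a short lemma; it is exactly \Cref{lem:unique mincut in brittle}.

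The paper does not try to transfer connectivity to $H$. It stays in $G$: it builds an auxiliary terminal set $\hat{\T}$ with one representative per component of $H-e$ (replacing the representative in $t$'s component by $t$ itself), so that the hypercactus of $\hat{\T}$ is a single brittle. It then proves $X_t$ equals the maximal $A$-mincut $X_A$ of $\hat{\T}$ for the subset $A\subseteq\hat{\T}$ indexed by $S$, and invokes \Cref{lem:unique mincut in brittle} to get $\partial_G X_t=\partial_G X_A=\partial_G X_w$ for any $w\in A\setminus\{t\}$. Since $\partial_G X_w$ separates $w$ from $t$ in $G$, this contradicts $t$ and $w$ being connected in $G[X_t]$ after removing $\partial_G X_t$. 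The engine \Cref{lem:unique mincut in brittle} is itself a three-way submodularity/posi-modularity argument in the spirit of \Cref{lem:three-crossing-isolating-mincuts-on-hypergraph}, not a one-liner. Your derivation of \Cref{cor:brittle real corollary} from the lemma is fine; the gap is entirely in the lemma itself.
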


\begin{proof}
For any $u\in e$, let $r(u) = \phi^{-1}(v)$ be a terminal vertex in $G$ where $v$ is some arbitrary fixed node in the connected component of $H$ connected to $u$ when removing $e$. There exists such $v$ with non-empty pre-image by the definition of hypercactus. Let $\hat T_0 = \{ r(u)\mid u\in e \}$, the hypercactus of $\hat T_0$ is a brittle.

Suppose a contradiction there exists $X_t$ to be a maximal $t$-isolating mincut of $\T'$, and mincut $(Q, V(H)\setminus Q)$ on $H$ that separates $\phi(X_t\cap \T)$ and $\phi(\T\setminus X_t)$,
such that $2\le |Q\cap e|\le |e|-2$.
Let $v\in e$ be the node in the same connected component with $\phi(t)$ when removing $e$ from $H$.
Define $\hat \T = (\hat T_0 \setminus \{r(v)\}) \cup \{t\}$.
Then, the hypercactus representation of $\hat \T$ is also a brittle.
Let $A = \{ r(u)\mid u\in (Q\cap e)\setminus \{v\}\}\cup \{t\}$ which is a subset of $\hat \T$.
Observe that the hyperedge $e$ on $H$ itself must be the \emph{only} mincut
that separates $\phi(X_A\cap \T)$ and $\phi(\T\setminus X_A)$ (resp. separates $Q$ and $V(H)\setminus Q$).
So, for consider any two vertices $a, b$ such that their mapped vertices on $H$ are in the same connected component of $H-e$, we have $a\in X_A$ if and only if $b\in X_A$.
Respectively, $a\in X_t$ if and only if $b\in X_t$.

Now we have $A\subseteq X_t$ since $r(u)\in A$ implies that $u\in (Q\cap e)$ so there is some $a\in X_t$ such that $\phi(a)$ is in the same component of $H-e$ with $u$, further implies that $r(u)\in X_t$.
Now we claim that $X_t=X_A$.
First, $X_t$ does not contain other terminals in $\hat \T\setminus A$, so $X_t\subseteq X_A$ (otherwise, $X_t\cup X_A$ is a larger sized (connected) $A$-mincut of $\T$ by submodularity).
On the other hand, since $A\subseteq X_t$ and that $X_A$ does not contain any terminal in $(\T'\setminus \{t\})$ (otherwise, $X_A$ contains a terminal $a\in \T'\setminus \{t\}$, let $u\in e$ be the node in the same connected component with $\phi(a)$ in $H-e$. Then, $r(u)\in A\subseteq X_t$, which further implies that $a\in X_t$, contradicting to the fact that $X_t$ separates $a$ and $t$.)
By the same submodularity argument we have $X_A\subseteq X_t$.

There exists $w$ be a vertex in $A$ other than $t$, since $|A|\ge 2$.
By \Cref{lem:unique mincut in brittle} ($|A|\le |e| - 2 = |\hat \T| - 2$), $\partial X_A = \partial X_w$ where $X_w$ is the maximal $w$-isolating mincut of $\hat T$, so $\partial X_t = \partial X_A = \partial X_w$. $\partial X_w$ separates $w$ and $t$ by definition, and $\partial X_t = \partial X_w$, contradicts to $t$ and $w$ are connected in $G[X_t]$.
\end{proof}

 \Cref{lem:never-split-brittle} directly implies \Cref{cor:brittle real corollary}, since the splits used in the Modified \Cref{alg:divide-and-conquer-framework} are maximal isolating mincuts of some terminal sets $\T'\subseteq \T$.
 \thatchaphol{explain 
 a bit more.}\zhongtian{done.}

\section{Conclusion and Open Problems}
We develop a new approach based on \emph{maximal isolating cuts} for computing the cactus representation of all mincuts that gives the first almost-linear time algorithms for computing Steiner hypercactus, which generalizes both Steiner cactus and hypercactus, each of which generalizes the standard cactus for global edge mincuts. 

A natural question is whether our framework works with even more generalized settings than hypergraph connectivity, such as \emph{element connectivity}. 
Let $U \subseteq V$ be a set of terminals. An element mincut between $s$ and $t$ is the smallest mixed cut $C \subset (E \cup (V-U))$ whose removal disconnects $s$ and $t$. 
There exists a hypercactus representation that captures all global element mincuts as well  \cite{FJ99} (in the same sense as Steiner cactus captures all Steiner mincuts). Given our result, one can also hope that it admits an almost-linear time construction.

However, element cuts do not fall into the setting of symmetric submodular set functions. Instead, they are captured by a more general notion of \emph{bisubmodular} set functions (see, e.g., \cite{CQ21}). To make our approach works, it seems we need to generalize the notion of posi-modularity for bisubmodular set functions, but it is unclear how to come up with the right definition. More concretely, what is the usable version of the Pairwise Insertion Only Lemma (\Cref{lem:three-crossing-isolating-mincuts}) for element cuts?

Since cactus representation exists for arbitrary symmetric submodular set functions \cite{FJ99}, it is also interesting whether there are algorithms with small \emph{query complexity} for cactus construction. Our algorithm carefully decomposes graphs into small pieces and works on each of them separately so that the total running time is almost-linear. This approach that works on small pieces in parallel does not seem to work with the setting for arbitrary symmetric submodular set functions where we count the number of queries.

\section*{Acknowledgement}
We thank the anonymous reviewers for their constructive suggestions.

\bibliographystyle{alpha}
\bibliography{main}

\appendix 
\section{Omitted Proofs from \Cref{sec:max isocut}}
\subsection{Proof of \Cref{lem:disjoint-posi-modularity}}
\label{app:proof of dosjoint posi}

If $X_A\cup X_B=V$, then it implies that $\C(X_A) = \C(V\setminus X_A)=\C(X_B\setminus X_A)\ge \C(X_B)=\C(X_A\setminus X_B) \ge \C(X_A)$ since the complement of $X_A$ is a $B$-cut of $\T$ and the complement of $X_B$ is an $A$-cut of $\T$.
Suppose that $X_A\cup X_B\subsetneq V$. By posi-modularity we have
\[
\C(X_A) + \C(X_B) \ge \C(X_A\setminus X_B) + \C(X_B\setminus X_A).
\]
Since $A$ and $B$ are disjoint, $X_A\setminus X_B$ is an $A$-cut and $X_B\setminus X_A$ is a $B$-cut. Hence, we have $\C(X_A\setminus X_B)\ge \C(X_A)$ and $\C(X_B\setminus X_A)\ge \C(X_B)$. Combining with the posi-modularity we have $\C(X_A)=\C(X_A\setminus X_B)$ and $\C(X_B)=\C(X_B\setminus X_A)$ as desired.

\hfill $\square$

\subsection{Proof of \Cref{lem:steiner-modularity}}
\label{app:proof of nesting submod}

By the submodularity of cut value function in \Cref{lem:submod},
\[
    \C(X_A) + \C(X_B) \ge \C(X_A\cap X_B) + \C(X_A\cup X_B) ~.
\]
Since both $X_A$ and $X_A\cap X_B$ are $A$-cuts and $X_A$ is $A$-mincut, we have $\C(X_A)\le \C(X_A\cap X_B)$. Similarly, both $X_B$ and $X_A\cup X_B$ are $B$-cuts and $X_B$ is $B$-mincut implies $\C(X_B)\le \C(X_A\cup X_B)$. Combining with the submodularity, we have $\C(X_A) =  \C(X_A\cap X_B)$ and $\C(X_B) = \C(X_A\cup X_B)$. Therefore, $X_A\cap X_B$ is a $A$-mincut of $\T$, and $X_A\cup X_B$ is a $B$-mincut of $\T$.

\hfill $\square$

\section{Omitted Proofs from \Cref{sec:steiner-cactus}}

\subsection{Proof of \Cref{lem:preprocessing-hypergraph}}
\label{sec:proof-of-preprocessing}

We show how to perform preprocessing in hypergraph, which implies
\Cref{lem:preprocessing} as a corollary for normal graph.

\begin{lemma}[Preprocessing]\label{lem:preprocessing-hypergraph}
Given a hypergraph $G$ and a terminal set $\T$, there exists an algorithm such that, with probability $1-n^{-11}$ the algorithm outputs a partition of $\T$ such that $\lambda(u, v) = \lambda_G(\T)$ if and only if $u$ and $v$ belongs to different parts. This algorithm runs in $O(\log^2 n)\cdot\MaxFlow(O(n+m), O(p))$ time.
\end{lemma}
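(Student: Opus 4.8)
The plan is to reduce the statement to computing the \emph{$\lambda$-connected components} of $\T$, i.e.\ the blocks of the partition in which $u$ and $v$ are in the same block precisely when $\lambda(u,v) > \lambda_G(\T)$, and then to run a hypergraph version of one step of Algorithm~4 of Li and Panigrahi~\cite{li2021approximate}. First I would check this partition is well defined and equals the target partition: any cut separating a pair of terminals has value $\ge \lambda_G(\T)$, so $\lambda(u,v)\ge \lambda_G(\T)$ always; and the relation $\lambda(u,v) > \lambda_G(\T)$ is an equivalence relation because any cut separating $u$ from $w$ separates $v$ from at least one of $u,w$, whence $\lambda(u,w)\ge \min\{\lambda(u,v),\lambda(v,w)\}$. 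Consequently $\lambda(u,v)=\lambda_G(\T)$ iff some $\T$-Steiner mincut separates $u$ and $v$, so the desired partition is exactly the common refinement of all bipartitions of $\T$ induced by $\T$-Steiner mincuts. As a preliminary step I would compute $\lambda_G(\T)$ itself in almost-linear time $\MaxFlow(O(n+m),O(p))^{1+o(1)}$ using the hypergraph Steiner-mincut algorithm of \cite{CQ21}, which is subsumed by the claimed running time.

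The core of the algorithm is a sampling-plus-isolating-cut loop run for $O(\log n)$ rounds. In each round, sample $R\subseteq\T$ by keeping each terminal independently with probability $1/2$ (equivalently, one may sweep over the geometric sampling rates $2^{-1},\dots,2^{-\lceil\log|\T|\rceil}$), invoke the hypergraph isolating-cut subroutine --- the hypergraph analogue of \Cref{thm:isocut}, or the maximal version \Cref{thm:max-min-iso-cut-hypergraph}, either of which costs $O(\log|\T|)\cdot\MaxFlow(O(n+m),O(p))$ since a hypergraph of total size $p$ reduces to a flow network with $O(n+m)$ vertices and $O(p)$ edges --- to obtain isolating mincuts $\{X_t\}_{t\in R}$, and retain those with $\C(X_t)=\lambda_G(\T)$, which are genuine $\T$-Steiner mincuts and each record a bipartition of $\T$. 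The output partition declares $u\sim v$ iff no retained cut separates $u$ and $v$. One direction of correctness is immediate: every retained cut has value exactly $\lambda_G(\T)$, hence never separates two terminals in the same $\lambda$-component. For the converse I would argue, following the analysis of \cite{li2021approximate} and in the spirit of \Cref{lem:steiner-mincut-case-2-part2}, that for every pair $u,v$ in distinct components there is a value-$\lambda_G(\T)$ cut separating them that a single round discovers with constant probability (one sampled terminal lands on the correct side so that its isolating mincut realizes that cut), so $O(\log n)$ independent rounds boost this to $1-n^{-\Omega(1)}$ and a union bound over the $O(|\T|^2)$ pairs gives success probability $1-n^{-11}$.

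For the running time, the $O(\log n)$ rounds contribute $O(\log n)\cdot O(\log|\T|)\cdot\MaxFlow(O(n+m),O(p)) = O(\log^2 n)\cdot\MaxFlow(O(n+m),O(p))$, and by \Cref{lem:total-size-max-min-iso-cut} the retained cuts have total size $O(n\log n)$, so maintaining the common refinement is near-linear and negligible; specializing to rank-$2$ hypergraphs (where $p=\Theta(m)$) recovers \Cref{lem:preprocessing}. I expect the main obstacle to be the converse correctness argument: importing the probabilistic analysis of \cite{li2021approximate} and verifying it is insensitive to the precise isolating-cut primitive used (minimal vs.\ maximal, graph vs.\ hypergraph) and to the subtlety introduced by \Cref{def:A-mincut hypergraph}, namely that in a hypergraph an $A$-mincut need not be a $(\T\setminus A)$-mincut. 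A clean way to sidestep the latter is to work only with value-$\lambda_G(\T)$ cuts, for which both sides are automatically $\T$-Steiner mincuts, so the asymmetry of \Cref{def:A-mincut hypergraph} does not interfere with the refinement argument.
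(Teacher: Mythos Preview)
Your overall plan matches the paper's: compute $\lambda_G(\T)$ via \cite{CQ21}, run isolating cuts over random terminal subsets, retain only cuts of value $\lambda_G(\T)$, and aggregate. There is, however, one genuine gap.

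The claim that sampling at a single rate $1/2$ is ``equivalent'' to sweeping over the geometric rates $2^{-1},\dots,2^{-\lceil\log|\T|\rceil}$ is false, and rate-$1/2$ sampling alone does not give constant per-round success probability for an arbitrary pair $u,v$. Consider $\T$ with exactly two $\lambda$-components $C_u\ni u$ and $C_v\ni v$, each of size $|\T|/2$. At rate $1/2$, with probability $1-2^{-\Omega(|\T|)}$ at least two terminals from \emph{each} of $C_u$ and $C_v$ are sampled; then for every sampled $w$ the $w$-isolating mincut must separate $w$ from another sampled terminal in the same component and hence has value strictly larger than $\lambda_G(\T)$. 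Every cut is discarded and nothing separates $u$ from $v$. The remedy is precisely the multi-scale sweep, which is what the paper actually uses: at rate roughly $1/|C_u|$ (taking $|C_u|\le|C_v|$) there is constant probability that exactly one terminal of $C_u$ and at least one of $C_v$ are sampled, and then the retained isolating mincut of that lone $C_u$-terminal is a $\T$-Steiner mincut whose terminal side is a union of $\lambda$-components containing $C_u$ but not $C_v$, hence separating $u$ from $v$. Note that once you switch to the sweep, a ``round'' costs $O(\log|\T|)$ isolating-cut invocations rather than one, so your running-time tally picks up an extra logarithmic factor relative to what you wrote.

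A minor difference in aggregation: you take the common refinement of all retained cuts, which is sound because every retained cut intersects $\T$ in a union of $\lambda$-components and therefore never over-refines. The paper instead argues that with high probability every terminal $v$ whose component has at most $|\T|/2$ terminals directly recovers that component as the terminal side of its minimal $\T$-Steiner mincut, and all remaining terminals (at most one component can have more than $|\T|/2$ terminals) form a single residual block $\T_{\mathrm{large}}$. Both aggregations are correct.
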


\begin{proof}
The algorithm is exactly the same as invoking one step in Algorithm 4 (\textsc{CutThresholdStep} with post-processing) in~\cite{li2021approximate} with the slightest modification.
The algorithm works as follows.
The algorithm first computes $\lambda_G(\T)$, the value of $\T$-Steiner mincut on $G$ using Chekuri and Quanrud's algorithm~\cite{CQ21}.
Then, the algorithm
samples each terminal vertex with different sampling rates $2^{-i}$ for all $i=0, 1, 2, \ldots, 2^{-\lceil\log|\T|\rceil}$.
For each sampled vertex set $\T_i$ from the sampling rate $2^{-i}$, the algorithm computes the \emph{minimal} isolating mincut using the Isolating Cut Lemma~\cite{LP20}, and keeps only the mincut having the same value as $\lambda_G(\T)$.
Repeat the sampling procedure for $\Theta(\log n)$ times for ensuring that with high probability, every vertex $v$ obtains a minimal $\T$-Steiner mincut that contains $v$, as long as the size of this mincut is at most $|\T|/2$.
Let $\T_{\mathrm{large}}$ be the set of all terminal vertices that does not obtain such a mincut. Then these $\T$-Steiner mincuts together with $\T_{\mathrm{large}}$ \emph{is} a partition of $\T$ that we are looking for.
\end{proof}

\section{Proof of Maximal Isolating Mincuts on Hypergraphs}
\label{sec:max iso mincut proof}

In this section, we prove \Cref{thm:max-min-iso-cut-hypergraph}.
First, we verify that The Nesting \& Submodularity property also holds in hypergraphs, and leads to the uniqueness of maximal and minimal $A$-mincuts of $\T$ in a hypergraph.

\begin{lemma}[Nesting \& Submodularity on Hypergraph]
\label{lem:steiner-modularity-hypergraph}
Let $G$ be a hypergraph and let $\T$ be the set of terminals.
 Consider two nonempty subsets $A$ and $B$ of terminals such that $A\subseteq B\subsetneq \T$.
Let $X_A$ (resp. $X_B$) to be any $A$-mincuts (resp. $B$-mincuts) of $\T$.
Then, $\C(X_A\cap X_B) = \C(X_A)$.
Respectively, $\C(X_A\cup X_B) = \C(X_B)$, and $X_A\cup X_B$ is a $B$-mincut of $\T$. $\hfill\square$
\end{lemma}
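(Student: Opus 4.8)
The plan is to follow the normal-graph argument from \Cref{app:proof of nesting submod} closely, but with two extra observations that account for the connectivity constraint in \Cref{def:A-mincut hypergraph}. The key new ingredient is a \emph{component-restriction} fact: for any vertex set $Y$ with $A\subseteq Y$ and $Y\cap(\T\setminus A)=\emptyset$ (where $Y$ need \emph{not} be connected in $G/A$), if $Y_0$ denotes the connected component of $(G/A)[Y]$ containing $A$, then $\partial Y_0\subseteq\partial Y$, so $\C(Y_0)\le\C(Y)$; since $Y_0$ is a bona fide $A$-cut of $\T$ under \Cref{def:A-mincut hypergraph}, we get $\C(Y)\ge\C(Y_0)\ge\C(X_A)$ whenever $X_A$ is an $A$-mincut of $\T$. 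The inclusion $\partial Y_0\subseteq\partial Y$ is a short case analysis on a hyperedge $e$ that leaves $Y_0$: the witnessing vertex of $e$ outside $Y_0$ is either outside $Y$ (so $e\in\partial Y$ directly), or in another component of $(G/A)[Y]$ — but then $e$ cannot be contained in $Y$, since otherwise $e$ would merge two components of the induced subhypergraph, so again $e\in\partial Y$. Applying this with $Y=X_A\cap X_B$ (which indeed separates $A$ from $\T\setminus A$, as $X_A\cap(\T\setminus A)=\emptyset$) yields $\C(X_A\cap X_B)\ge\C(X_A)$.

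The second ingredient handles the union. I claim $X_A\cup X_B$ is itself a valid $B$-cut of $\T$: it contains $B\subseteq X_B$, it is disjoint from $\T\setminus B$ because $X_A\cap(\T\setminus B)\subseteq X_A\cap(\T\setminus A)=\emptyset$ and $X_B\cap(\T\setminus B)=\emptyset$, and $(G/B)[X_A\cup X_B]$ is connected. For connectivity, $(G/B)[X_B]$ is connected by hypothesis (it is a $B$-mincut), and every $u\in X_A\setminus X_B$ lies on a path of hyperedges contained in $X_A$ reaching some vertex of $A$ inside $(G/A)[X_A]$; each such hyperedge survives in $(G/B)[X_A\cup X_B]$ and the last one meets the contracted vertex $b$ (as $A\subseteq B$), which is connected to all of $X_B$. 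Hence $\C(X_A\cup X_B)\ge\C(X_B)$, with $X_B$ an $B$-mincut.

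Finally I would assemble the pieces. If $X_A$ and $X_B$ are crossing, apply submodularity of the hypergraph cut function (the hypergraph analogue of \Cref{lem:submod}) to get $\C(X_A)+\C(X_B)\ge\C(X_A\cap X_B)+\C(X_A\cup X_B)$, and sandwich the right-hand side between the two lower bounds $\C(X_A\cap X_B)\ge\C(X_A)$ and $\C(X_A\cup X_B)\ge\C(X_B)$; this forces equality everywhere, and since $X_A\cup X_B$ is a valid $B$-cut of minimum value it is a $B$-mincut of $\T$. For the degenerate (non-crossing) cases: $X_A\cap X_B\supseteq A\ne\emptyset$, so it is never empty; $X_A\cup X_B=V$ is impossible since $\T\setminus B\ne\emptyset$ is disjoint from both $X_A$ and $X_B$; if $X_A\subseteq X_B$ all three conclusions are immediate; and $X_B\subseteq X_A$ forces $B=X_B\cap\T\subseteq X_A\cap\T=A$, so $A=B$ and the conclusions are again trivial (both cuts are $A$-mincuts of equal value). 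I expect the component-restriction fact — i.e.\ that passing to the $A$-containing component of an induced subhypergraph never increases the cut value, which is precisely what makes the connectivity-constrained definition submodular-friendly — to be the main obstacle, with the hyperedge case analysis for $\partial Y_0\subseteq\partial Y$ requiring care because a single hyperedge may straddle $Y_0$, another component, and the complement of $Y$ simultaneously.
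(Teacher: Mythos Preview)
Your proposal is correct and follows essentially the same approach as the paper: apply submodularity of the hypergraph cut function, then show $\C(X_A\cap X_B)\ge\C(X_A)$ and $\C(X_A\cup X_B)\ge\C(X_B)$ to force equalities, checking in particular that $(G/B)[X_A\cup X_B]$ is connected so that the union is a genuine $B$-cut under \Cref{def:A-mincut hypergraph}. Your component-restriction fact (passing to the $A$-containing component never increases the boundary) is exactly the content of the paper's \Cref{lem:relaxed A-mincut equals A-mincut}, which it states separately; you have simply inlined that argument, and your hyperedge case analysis for $\partial Y_0\subseteq\partial Y$ is correct.
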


The difference of \Cref{lem:steiner-modularity-hypergraph} with \Cref{lem:steiner-modularity} in a normal graph is that we cannot say $X_A\cap X_B$ is $A$-mincut of $\T$ but only their cut values are the same, since $(G/A)[X_A\cap X_B]$ may not be connected and hence violates \Cref{def:A-mincut hypergraph}.

The proof of Nesting \& Submodularity Lemma is 
almost identical as in proof of~\Cref{lem:steiner-modularity}
since the cut value function $\C$ also preserves submodularity in hypergraph,
except that we need to argue $X_A\cup X_B$ is a $B$-mincut of $\T$ which is straightforward to verify that $(G/B)[X_A\cup X_B]$ is connected.
Fortunately, \Cref{def:A-mincut hypergraph} does not affect the desired properties of maximal and minimal $A$-mincuts.

\begin{lemma}\label{lem:unique-max-iso-hyperedge-cut}
For any $A\subseteq\T$, there exists a unique maximal (resp. minimal) $A$-mincut of $\T$.
\end{lemma}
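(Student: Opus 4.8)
The plan is to mirror the two-nesting argument that underlies the uniqueness of maximal and minimal $A$-mincuts in ordinary graphs (i.e. the argument used for \Cref{def:maximal and minimal A-mincut}), but to be careful about the new connectivity constraint in \Cref{def:A-mincut hypergraph}. I would fix a proper subset $A\subsetneq\T$ and work with the family $\mathcal{F}_A$ of all $A$-mincuts of $\T$ in the sense of \Cref{def:A-mincut hypergraph}; the goal is to show $\mathcal{F}_A$ has a unique $\subseteq$-maximal element and a unique $\subseteq$-minimal element.

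First, for the maximal case, I would take any two $A$-mincuts $X,Y\in\mathcal{F}_A$. Since $A\subseteq X$ and $A\subseteq Y$, and since $\T\setminus A$ is disjoint from both, the four sets $X\cap Y$, $X\setminus Y$, $Y\setminus X$, $V\setminus(X\cup Y)$ need not all be nonempty, so $X,Y$ may or may not be crossing; if they are not crossing, one contains the other and $X\cup Y$ is one of them, so it is trivially in $\mathcal{F}_A$. If they are crossing, apply submodularity (\Cref{lem:submod}): $\C(X)+\C(Y)\ge\C(X\cap Y)+\C(X\cup Y)$. Now $X\cup Y$ separates $A$ from $\T\setminus A$ with $A\subseteq X\cup Y$, and, crucially, $(G/A)[X\cup Y]$ is connected because it is the union of the two connected sets $(G/A)[X]$ and $(G/A)[Y]$, both of which contain the (contracted) vertex $A$; hence $X\cup Y$ is an $A$-cut, so $\C(X\cup Y)\ge\lambda$, where $\lambda:=\C(X)=\C(Y)$ is the $A$-mincut value. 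Similarly $X\cap Y$ contains $A$ and separates $A$ from $\T\setminus A$, so $\C(X\cap Y)\ge\lambda$ as well (here we do not need $X\cap Y$ to be connected — only that it is \emph{some} cut separating $A$ and $\T\setminus A$, whose value is at least the $A$-Steiner-mincut value, which is at most $\lambda$; but actually we need $\C(X\cap Y)\ge\lambda$, which holds since $\lambda$ is the minimum over $A$-cuts and every cut separating $A$ from $\T\setminus A$ has value at least $\lambda$ — wait, that requires that the unconstrained separating-cut minimum equals $\lambda$, which is not automatic). Let me instead only use $\C(X\cup Y)\ge\lambda$ together with $\C(X\cap Y)\ge 0$; that is too weak. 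The cleaner route: because $X\cup Y\in\mathcal F_A$ and $\lambda$ is the $A$-mincut value, $\C(X\cup Y)=\lambda$, and then submodularity forces $\C(X\cap Y)\le\lambda$, but also any set separating $A$ from $\T\setminus A$ has value $\ge\lambda$ \emph{only if} restricting to connected cuts does not lower the minimum — and indeed it does not, since the connected component of $A$ in $(G/A)[Z]$ is always an $A$-cut of value $\le\C(Z)$ for any separating $Z$. So $\C(X\cap Y)\ge\lambda$, hence $\C(X\cap Y)=\C(X\cup Y)=\lambda$, and in particular $X\cup Y$ is a maximum-size-candidate $A$-mincut. Taking $X$ over all of $\mathcal F_A$ and repeatedly unioning, the union $X_A^{\max}:=\bigcup_{Z\in\mathcal F_A}Z$ is itself in $\mathcal F_A$ (finite family, closure under pairwise union established above), and it is by construction the unique maximal element.

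For the minimal case I would proceed dually, but here the connectivity constraint actually helps rather than hinders: given crossing $X,Y\in\mathcal F_A$, submodularity gives $\C(X\cap Y)\le\lambda$ as above, and $\C(X\cap Y)\ge\lambda$ since the $A$-component of $(G/A)[X\cap Y]$ is an $A$-cut of value at most $\C(X\cap Y)$ and at least $\lambda$. Then let $W$ be that $A$-component of $(G/A)[X\cap Y]$: it is an $A$-cut, it is connected in $G/A$ by definition, so $W\in\mathcal F_A$, and $W\subseteq X\cap Y\subseteq X$. Thus from any two members of $\mathcal F_A$ we can extract a common member contained in both; iterating over the (finite) family and always passing to such a common refinement produces a member of $\mathcal F_A$ contained in every other member, i.e. the unique minimal $A$-mincut. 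Uniqueness in both cases is immediate from the definitions of maximal/minimal together with antisymmetry of $\subseteq$.

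The main obstacle I anticipate is exactly the interaction between submodularity and the connectivity constraint in the \emph{minimal} direction: $X\cap Y$ need not be connected in $G/A$, so it is not literally an $A$-mincut, and one must pass to the $A$-component $W\subseteq X\cap Y$ and argue both that $\C(W)=\lambda$ (which needs $\C(W)\le\C(X\cap Y)=\lambda$ from submodularity, plus $\C(W)\ge\lambda$ from minimality over $A$-cuts) and that $W$ is still a valid $A$-cut (it separates $A$ from $\T\setminus A$ because $A\subseteq W$ and $W\subseteq X$ which already excludes $\T\setminus A$). A secondary subtlety, flagged in the text right before the statement, is that one should not invoke the full strength of \Cref{lem:steiner-modularity} (which fails verbatim in hypergraphs); one uses only the cut-value equalities of \Cref{lem:steiner-modularity-hypergraph} and then does the component-extraction by hand. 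Everything else is the standard uncrossing bookkeeping.
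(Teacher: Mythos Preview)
Your proposal is correct and follows essentially the same approach as the paper: for the maximal case you show (as the paper's \Cref{lem:steiner-modularity-hypergraph} does) that $X\cup Y$ is again an $A$-mincut because the union of two $A$-connected sets is $A$-connected, and for the minimal case you pass to the $A$-component of $X\cap Y$ and use $\partial W\subseteq\partial(X\cap Y)$ to conclude $\C(W)\le\C(X\cap Y)=\lambda$, exactly as the paper does. The only cosmetic differences are that the paper packages the cut-value equalities into \Cref{lem:steiner-modularity-hypergraph} (and the ``relaxed $A$-mincut'' observation into \Cref{lem:relaxed A-mincut equals A-mincut}) and then argues by contradiction, whereas you inline those facts and build the extremal cuts constructively; your mid-argument self-correction about why $\C(X\cap Y)\ge\lambda$ is precisely the content of \Cref{lem:relaxed A-mincut equals A-mincut}.
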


\begin{proof}
Suppose by contradiction $X_A$ and $X'_A$ are two different maximal $A$-mincuts of $\T$. By Nesting \& Submodularity \Cref{lem:steiner-modularity-hypergraph}, $X_A\cup X'_A$ is also a $A$-mincut of $\T$, contradicting the maximality of $X_A$ and $X'_A$.

Suppose by contradiction $X_A$ and $X'_A$ are two different minimal $A$-mincuts of $\T$. Again by Nesting \& Submodularity \Cref{lem:steiner-modularity-hypergraph}, we have $\C(X_A\cap X'_A) = \C(X_A)$. Suppose $r_A$ is the vertex contracted from $A$ in $(G/A)[X_A\cap X'_A]$. Let $Y$ be the connected component in $(G/A)[X_A\cap X'_A]$ connected to $r_A$, and $X = Y\cup A\setminus \{r_A\}$ which is the corresponding set of vertices in $G$. Then the boundary edges $\partial X\subseteq \partial(X_A\cap X'_A)$. Therefore, $\C(X)\le \C(X_A\cap X'_A) = \C(X_A)$ implies that $X$ is a also a $A$-mincuts of $\T$, contradicting the minimality of $X_A$ and $X'_A$.
\end{proof}

The reason that we want $(G/A)[X_A]$ to be connected in $A$-mincut of $T$ is because this definition leads to Pairwise Intersection Only Lemma on hypergraphs. Recall that the proof of \Cref{lem:three-crossing-isolating-mincuts} in a normal graph aims to show that there is no edge between $X = X_A\cap X_B\cap X_C$ and $X_A \setminus X$, contradicts to the connectivity of $(G/A)[X_A]$.
The proof for hypergraph is a bit different, since there may exist a hyperedge connecting $X$ and $X_A\setminus X$. Fortunately, we will show that these hyperedges can only be cut edges of $X_A$ and hence removed by \Cref{def:A-mincut hypergraph} when we analyze the connectivity of $(G/A)[X_A]$.

The Disjoint \& Posi-modularity property also holds in a hypergraph, and serves as the key to prove Pairwise Intersection Only Lemma on hypergraph. We need the definition of relaxed $A$-mincut before showing the Disjoint \& Posi-modularity for hypergraph.

\begin{definition}
\label{def:relaxed A-mincut hypergraph}
Let $G=(V, E)$ be a hypergraph and $\T\subseteq V$.
For any proper subset of terminals $A\subsetneq \T$, a cut $X$ is a \emph{relaxed $A$-cut} of $T$ if it satisfies the first conditions of $A$-cut in \Cref{def:A-mincut hypergraph}, i.e., the cut $(X, V\setminus X)$ separates $A$ and $\T\setminus A$, with $A\subseteq X$. A \emph{relaxed $A$-mincut} of $\T$ is a minimum valued relaxed $A$-cut of $\T$.
\end{definition}

\begin{lemma}
\label{lem:relaxed A-mincut equals A-mincut}
    Let $G=(V, E)$ be a hypergraph and $\T\subseteq V$.
    For any proper subset of terminals $A\subsetneq \T$, the value of $A$-mincut equals the value of relaxed $A$-mincut.
\end{lemma}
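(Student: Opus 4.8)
\emph{Plan.} One inequality is immediate: every $A$-cut of $\T$ (in the sense of \Cref{def:A-mincut hypergraph}) is in particular a relaxed $A$-cut of $\T$, so minimizing the value over the smaller family of $A$-cuts can only give a larger quantity; hence the value of a relaxed $A$-mincut is at most the value of an $A$-mincut. For the reverse direction the plan is to take a relaxed $A$-mincut $X$ (which exists, since $X=A$ is always a relaxed $A$-cut) and to ``clean it up'' into a genuine $A$-cut $X'$ with $\C(X')\le\C(X)$ by keeping only the part of $X$ that is connected to $A$ after deleting the boundary hyperedges. Since edge weights are positive, it suffices to show $\partial X'\subseteq\partial X$.

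\emph{Construction of $X'$.} Work in the contracted hypergraph $G/A$, writing $r_A$ for the vertex obtained by contracting $A$, so that $X$ corresponds to a vertex set containing $r_A$. Let $Y$ be the connected component of $r_A$ in the induced subhypergraph $(G/A)[X]$ (equivalently, in $G$: the set of vertices of $X$ reachable from some vertex of $A$ after deleting $\partial X$, together with $A$). Define $X':=(Y\setminus\{r_A\})\cup A$, the corresponding vertex set in $G$. Then $A\subseteq X'\subseteq X$, so $X'$ separates $A$ from $\T\setminus A$ because $X$ does; and $(G/A)[X']$ equals the restriction of $(G/A)[X]$ to the connected component $Y$ (a hyperedge meeting $Y$ lies entirely in $Y$), which is connected. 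Thus $X'$ satisfies both conditions of \Cref{def:A-mincut hypergraph} and is an $A$-cut of $\T$.

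\emph{Boundary containment and conclusion.} Let $e\in\partial X'$, so $e$ has a vertex in $X'$ and a vertex outside $X'$; we may assume $e\not\subseteq A$ (otherwise $e\subseteq A\subseteq X'$ and $e\notin\partial X'$). If $e$ also has a vertex outside $X$, then since $e\cap X'\subseteq e\cap X$, the hyperedge $e$ crosses $X$, i.e.\ $e\in\partial X$. Otherwise $e\subseteq X$, so $e$ is not a boundary hyperedge of $X$ and its image $\bar e$ in $G/A$ is a hyperedge of $(G/A)[X]$ with $|\bar e|\ge 2$; since $e$ meets $X'$, $\bar e$ meets the component $Y$, hence $\bar e\subseteq Y$, hence $e\subseteq (Y\setminus\{r_A\})\cup A=X'$ — contradicting $e\in\partial X'$. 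Therefore this case cannot occur, and $\partial X'\subseteq\partial X$, giving $\C(X')\le\C(X)$. Consequently the value of the $A$-mincut is at most $\C(X')\le\C(X)$, which is the value of the relaxed $A$-mincut; combined with the first paragraph, the two values are equal.

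\emph{Main obstacle.} The only genuinely delicate point is verifying $\partial X'\subseteq\partial X$, i.e.\ ruling out a hyperedge $e\subseteq X$ that crosses between the component $Y$ and $X\setminus Y$; this is exactly where the argument uses that in a hypergraph a hyperedge, once it survives deletion of $\partial X$, lies entirely within a single connected component of $(G/A)[X]$. Everything else is bookkeeping with the contraction $G/A$.
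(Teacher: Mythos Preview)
Your proof is correct and follows essentially the same approach as the paper: both take a relaxed $A$-mincut $X$, pass to the connected component $Y$ of $r_A$ in $(G/A)[X]$, lift it back to $X'=(Y\setminus\{r_A\})\cup A$, and use $\partial X'\subseteq\partial X$ to conclude $\C(X')\le\C(X)$. The paper phrases this as a proof by contradiction and simply asserts the boundary containment, whereas you give a direct argument and spell out the case analysis for $\partial X'\subseteq\partial X$; the substance is the same.
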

\begin{proof}
    The ``$\ge$'' direction is straightforward. For the other direction the proof is similar to the uniqueness of minimal $A$-mincut. We suppose a contradiction that there exists a relaxed $A$-mincut $X_A$ such that $\C(X_A)$ is strictly smaller than the value of $A$-mincut. Suppose $r_A$ is the vertex contracted from $A$ in $(G/A)[X_A\cap X'_A]$. Let $Y$ be the connected component in $(G/A)[X_A\cap X'_A]$ connected to $r_A$, and $X = Y\cup A\setminus \{r_A\}$ which is the corresponding set of vertices in $G$. Then the boundary edges $\partial X\subseteq \partial(X_A)$. Therefore, $\C(X)\le \C(X_A)$ which is smaller than the value of $A$-mincut, contradicting the definition of $A$-mincut.
\end{proof}

The proof of Disjoint \& Posi-modularity Lemma is the same as the normal graph since the cut value function $\C$ also preserves posi-modularity in a hypergraph.

\begin{lemma}[Disjoint \& Posi-modularity]\label{lem:disjoint-posi-modularity-hypergraph}
Let $A, B\subseteq \T$ be two nonempty subsets of terminals with $A\cap B=\emptyset$. Let $X_A$ (resp. $X_B$) be an relaxed $A$-mincut (resp. relaxed $B$-mincut) of $\T$. Then, $X_A\setminus X_B$ is a relaxed $A$-mincut of $\T$, and $X_B\setminus X_A$ is a relaxed $B$-mincut of $\T$. \hfill $\square$
\end{lemma}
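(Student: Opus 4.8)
The plan is to transcribe the proof of the graph version \Cref{lem:disjoint-posi-modularity} (given in \Cref{app:proof of dosjoint posi}), replacing graph posi-modularity by the posi-modularity of the hypergraph cut function $\C$, which is a symmetric submodular set function and hence satisfies both inequalities of \Cref{lem:submod} on crossing sets (see \cite{NI00}). The reason the statement is phrased for \emph{relaxed} $A$-cuts, rather than the connectivity-constrained $A$-cuts of \Cref{def:A-mincut hypergraph}, is precisely that $X_A\setminus X_B$ may fail to stay connected on the $A$-side after deleting its boundary; the relaxed notion avoids this issue entirely and is equivalent in value by \Cref{lem:relaxed A-mincut equals A-mincut}.

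First I would record the elementary fact that a relaxed $A$-cut $X$ satisfies $X\cap\T=A$: from $A\subseteq X$ and $\T\setminus A\subseteq V\setminus X$ we get $X\cap\T\subseteq A\subseteq X\cap\T$. Using $A\cap B=\emptyset$, this gives $A\cap X_B=A\cap(X_B\cap\T)=A\cap B=\emptyset$ and symmetrically $B\cap X_A=\emptyset$. Consequently $(X_A\setminus X_B)\cap\T=A$ and $A\subseteq X_A\setminus X_B$, so $X_A\setminus X_B$ is a relaxed $A$-cut, and similarly $X_B\setminus X_A$ is a relaxed $B$-cut; both are proper cuts since their complements contain $B$ and $A$ respectively.

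Then I would split into the same three cases as the graph proof. If $X_A\cap X_B=\emptyset$ the claim is immediate because $X_A\setminus X_B=X_A$ and $X_B\setminus X_A=X_B$. If $X_A\cup X_B=V$, intersecting with $\T$ forces $\T=A\sqcup B$, hence $X_A\setminus X_B=V\setminus X_B$ and $X_B\setminus X_A=V\setminus X_A$; using the symmetry of $\C$ and minimality I would run the chain
\[
\C(X_A)=\C(V\setminus X_A)=\C(X_B\setminus X_A)\ge\C(X_B)=\C(V\setminus X_B)=\C(X_A\setminus X_B)\ge\C(X_A),
\]
which forces equality throughout and yields the conclusion. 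In the remaining case $X_A$ and $X_B$ are crossing (all four regions are nonempty: $X_A\setminus X_B\supseteq A$, $X_B\setminus X_A\supseteq B$, and the other two by the case hypothesis), so posi-modularity gives $\C(X_A)+\C(X_B)\ge\C(X_A\setminus X_B)+\C(X_B\setminus X_A)$; combining with $\C(X_A\setminus X_B)\ge\C(X_A)$ and $\C(X_B\setminus X_A)\ge\C(X_B)$ (minimality, since we just verified these are relaxed $A$- and $B$-cuts) forces both inequalities to be equalities, so $X_A\setminus X_B$ and $X_B\setminus X_A$ are relaxed $A$- and $B$-mincuts.

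I do not expect a genuine obstacle here; the only points needing care are (i) checking that $X_A\setminus X_B$ and $X_B\setminus X_A$ are relaxed cuts of the correct type, which uses only $A\cap B=\emptyset$ and $X\cap\T=A$, and (ii) disposing of the two degenerate configurations ($X_A\cap X_B=\emptyset$ and $X_A\cup X_B=V$), where \Cref{lem:submod} does not literally apply because the two sets are not crossing — both handled directly as above.
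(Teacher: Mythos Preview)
Your proposal is correct and follows precisely the approach the paper indicates: the paper simply states that ``the proof of Disjoint \& Posi-modularity Lemma is the same as the normal graph since the cut value function $\C$ also preserves posi-modularity in a hypergraph,'' and your write-up faithfully transcribes the argument of \Cref{app:proof of dosjoint posi} while adding the routine verification that $X_A\setminus X_B$ and $X_B\setminus X_A$ are relaxed cuts of the correct type. The extra care you take with the degenerate configurations (disjoint intersection, union equal to $V$) is sound and slightly more explicit than the paper's own normal-graph proof.
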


\begin{lemma}[Pairwise Intersection Only on Hypergraph]
\label{lem:three-crossing-isolating-mincuts-on-hypergraph}
Given a hypergraph $G$, let $A, B, C\subseteq\T$ be three disjoint nonempty subsets of terminals. 
Let $X_A, X_B, X_C\subseteq V$ be any $A$-mincut of $\T$, $B$-mincut of $\T$, and $C$-mincut of $\T$ respectively.
Then $X_A\cap X_B\cap X_C=\emptyset$.
\end{lemma}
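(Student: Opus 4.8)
The plan is to mimic the proof of \Cref{lem:three-crossing-isolating-mincuts} in the graph setting, but track carefully how hyperedges contribute and exploit the connectivity constraint in \Cref{def:A-mincut hypergraph}. As before, the statement is only interesting when the three cuts pairwise cross, so I assume $X_A\cap X_B\neq\emptyset$, $X_B\cap X_C\neq\emptyset$, and $X_C\cap X_A\neq\emptyset$. Define $X'_A=(X_A\setminus X_B)\setminus X_C$, $X'_B=(X_B\setminus X_C)\setminus X_A$, $X'_C=(X_C\setminus X_A)\setminus X_B$; these are nonempty (they contain $A$, $B$, $C$ respectively) and pairwise disjoint. By Disjoint \& Posi-modularity on hypergraphs (\Cref{lem:disjoint-posi-modularity-hypergraph}) together with \Cref{lem:relaxed A-mincut equals A-mincut}, each of $X'_A,X'_B,X'_C$ is a relaxed $A$-, $B$-, $C$-mincut of $\T$ of value $\lambda_G(\T)$, so
\[
\underbrace{(\C(X_A)+\C(X_B)+\C(X_C))-(\C(X'_A)+\C(X'_B)+\C(X'_C))}_{(\mathrm{LHS})}=0.
\]

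Assume for contradiction $X:=X_A\cap X_B\cap X_C\neq\emptyset$. The key claim, exactly as in the graph case, is that no hyperedge meets both $X$ and $X_A\setminus X$; I will then use connectivity. To prove the claim, I first verify that every hyperedge $e$ contributes a non-negative amount to (LHS): partitioning by how many of $X'_A,X'_B,X'_C$ the hyperedge $e$ has a vertex in (at most two, by disjointness, but here a hyperedge may have vertices in all three — so unlike the edge case I must handle the three-way case too), I check that whenever $e$ crosses $X'_D$ it also crosses $X_D$, hence each term $\C(X'_D)$ that $e$ contributes to is matched by a contribution to $\C(X_D)$; and conversely any contribution to $\C(X_D)$ where $e$ has a vertex in $X_D$ and outside is matched. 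The cleanest way is: for a hyperedge $e$ with vertex set $S_e$, its contribution to (LHS) equals $w(e)$ times $\big(\mathbf{1}[S_e\text{ crosses }X_A]+\mathbf{1}[S_e\text{ crosses }X_B]+\mathbf{1}[S_e\text{ crosses }X_C]\big)-\big(\mathbf{1}[S_e\text{ crosses }X'_A]+\cdots\big)$, and I argue term-by-term that $S_e$ crossing $X'_A$ implies $S_e$ crosses $X_A$ (since $X'_A\subseteq X_A$ and $V\setminus X_A\subseteq V\setminus X'_A$, a vertex of $S_e$ in $X'_A$ is in $X_A$ and a vertex outside $X_A$ is outside $X'_A$), giving non-negativity.

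Next, using that (LHS)$=0$, any hyperedge with a positive contribution cannot exist; I then show that a hyperedge $e$ with a vertex $u\in X$ and a vertex $v\in (X_A\cup X_B\cup X_C)\setminus X$ has a positive contribution. Since $u\in X\subseteq X_A\cap X_B\cap X_C$, $u$ lies in none of $X'_A,X'_B,X'_C$, so $e$ contributes to at most one of $\C(X'_A),\C(X'_B),\C(X'_C)$. If $e$ crosses none of them, then because $v\notin X$ there is some $D\in\{A,B,C\}$ with $v\notin X_D$ while $u\in X_D$, so $e$ crosses $X_D$ and contributes $+w(e)$. If $e$ crosses exactly one, say $X'_A$, then some vertex of $e$ lies in $X'_A$, hence outside $X_B$ and outside $X_C$; since $u\in X_B\cap X_C$, $e$ crosses both $X_B$ and $X_C$, giving a net positive contribution. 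Either way the contribution is positive, a contradiction — so no hyperedge joins $X$ to $X_A\setminus X$. Finally, because $G$ is connected and $A\subseteq X\neq V\setminus X_A$'s complement situation — more precisely, since $\T\setminus A$ is nonempty and outside $X_A$, we have $\partial(X_A\setminus X)\subseteq \partial X_A\setminus\{\text{edges from }X\text{ to }V\setminus X_A\}$, wait — the clean finish: consider $A$-mincut $X_A$; since no hyperedge meets both $X$ and $X_A\setminus X$, in $(G/A)[X_A]$ the vertex set $X$ is disconnected from $X_A\setminus X$, but $X_A\setminus X$ contains vertices adjacent to $A$ while $X$ might not contain all of $A$; carefully: if $A\setminus X\neq\emptyset$ then $(G/A)[X_A]$ is not connected (the part of $X$ intersected with nothing reachable from $A$), contradicting \Cref{def:A-mincut hypergraph}; if $A\subseteq X$ then $X_A\setminus X$ is nonempty (it meets $X_A\setminus X_B\supseteq$ nonempty since the cuts cross) and is not connected to $A$, again contradicting connectivity of $(G/A)[X_A]$. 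Thus $X=\emptyset$.

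The main obstacle I anticipate is precisely this last connectivity step: in the graph proof one derives $\C(X_A)>\C(X_A\setminus X)$ from a boundary edge $\C(X,V\setminus X_A)>0$, but in hypergraphs the ``right'' contradiction is with the connectivity constraint rather than with minimality of the cut value, and one must be careful about whether $A$ is split by $X$. I expect the correct argument is that $(G/A)[X_A]$ being connected forces every vertex of $X_A$ (in particular every vertex of $X$) to be reachable from $A$ after deleting $\partial X_A$; but the no-hyperedge-between-$X$-and-$X_A\setminus X$ claim shows $X$ forms its own union of connected components of $(G/A)[X_A]$ disjoint from the component(s) containing $A\setminus X$, and since $X\supsetneq$ possibly doesn't contain $A$ we get a contradiction unless $X=\emptyset$. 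Handling the sub-case $A\subseteq X$ cleanly — showing $X_A\setminus X\neq\emptyset$ and that it (or part of it) is where $A$'s component extends, hence unreachable from the rest — will require using the pairwise-crossing hypothesis one more time.
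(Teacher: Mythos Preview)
Your overall strategy matches the paper's, but the central claim you aim to prove --- ``no hyperedge meets both $X$ and $X_A\setminus X$'' --- is \emph{false} in hypergraphs, and your argument for it breaks at the step ``$u$ lies in none of $X'_A,X'_B,X'_C$, so $e$ contributes to at most one of $\C(X'_A),\C(X'_B),\C(X'_C)$.'' That implication is a graph-only fact (two endpoints, disjoint sets). A hyperedge with $u\in X$ may still contain vertices in all three of $X'_A,X'_B,X'_C$; then it crosses all three $X'$-sets \emph{and} all three $X$-sets, so its contribution to (LHS) is $3-3=0$, yet it genuinely connects $X$ to $X'_A\subseteq X_A\setminus X$. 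So zero-contribution hyperedges joining $X$ to $X_A\setminus X$ can exist, and your contradiction does not fire.

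The paper's fix is exactly this missing case: it proves that any hyperedge touching $X$ with zero contribution is either entirely inside $X$, or touches \emph{all three} of $X'_A,X'_B,X'_C$. In the latter case the hyperedge has a vertex in $X'_B\subseteq V\setminus X_A$, hence lies in $\partial X_A$ and is \emph{removed} in the induced subhypergraph $(G/A)[X_A]$. Therefore, within $(G/A)[X_A]$, the set $X$ is still isolated from $X'_A$, and since $A\subseteq X'_A$ (so $A\cap X=\emptyset$ --- your worry about $A$ being split by $X$ is moot), this contradicts the connectivity constraint of \Cref{def:A-mincut hypergraph}. In short: you cannot forbid such hyperedges in $G$; you can only observe they are boundary edges and hence absent from the induced subhypergraph, which is all that is needed.
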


\begin{proof}
The proof is only interesting whenever the intersection of any two isolating mincuts is non-empty (i.e. crossing). Thus, without loss of generality, we assume that $X_A\cap X_B\neq\emptyset$, $X_B\cap X_C\neq\emptyset$, and $X_C\cap X_A \neq \emptyset$.

Define $X'_A = (X_A\setminus X_B)\setminus  X_C$, $X'_B = (X_B\setminus X_C)\setminus X_A$ and $X'_C = (X_C\setminus X_A)\setminus X_B$. These sets are non-empty since $A\subseteq X'_A$, $B\subseteq X'_B$, and $C\subseteq X'_C$. By posi-modularity (\Cref{lem:disjoint-posi-modularity-hypergraph}) we know that $X'_A$ (resp. $X'_B$ and $X'_C$) is relaxed $A$-mincuts (resp. $B$ and $C$).

Assume for contradiction that $X := X_A\cap X_B\cap X_C \neq \emptyset$.
We now aim to prove that there is no path from $X$ to $X'_A$ in the induced graph $(G/A)[X_A]$, which leads to a contradiction to \Cref{def:A-mincut hypergraph}. Similar as the proof for the normal graph, we consider the equality
\[
    \underbrace{(\C(X_A)+\C(X_B)+\C(X_C)) - (\C(X'_A)+\C(X'_B)+\C(X'_C))}_{\texttt{(LHS)}} = 0 ~.
\]
Note that $X'_A$ (resp. $X'_B$ and $X'_C$) is relaxed $A$-mincut (resp. $B$ and $C$) while $X_A$ (resp. $X_B$ and $X_C$) is $A$-mincut (resp. $B$ and $C$), the equality still holds since \Cref{lem:relaxed A-mincut equals A-mincut}.

Next, it suffices to prove the following combinatorial property.
\begin{enumerate}
    \item For any hyperedge $e$, the total contribution of the weight $e$ to the LHS of the equality is non-negative.
    \item For any hyperedge $e$ connecting $X$ such that $e$ contributes 0 to the LHS of the equality, either $e$ contributes to none of $\C(X_A), \C(X_B)$ and $\C(X_C)$, or $e$ contributes to all of $\C(X'_A), \C(X'_B)$ and $\C(X'_C)$.
\end{enumerate}
To see why it is sufficient, by first property, no hyperedge contributes positive to LHS of the equality. And by the second property, only these two kinds of hyperedges can be connecting to $X$. For the first case, $e$ contributes to none of $\C(X_A), \C(X_B)$ and $\C(X_C)$ means that $e$ only contains vertex in $X$. For the second case, $e$ is not in the induced graph $(G/A)[X_A]$ by definition of induced subhypergraph and hence $X$ is not connected with $X_A\setminus X$ in the induced graph, contradicts to the connectivity assumption in \Cref{def:A-mincut hypergraph}. It remains to prove the two properties.
\begin{enumerate}
    \item We further consider the cases that $e$ contributes to how many terms of $\C(X'_A), \C(X'_B)$ and $\C(X'_C)$. There is nothing to prove if $e$ contributes to none of them.
    
    If $e$ contributes to one of them, without loss of generality $\C(X'_A)$, then $e$ contains some vertex in $X'_A$, and also some vertex $v$ in $V\setminus X'_A = (V\setminus X_A)\cup X_B\cup X_C$. Therefore $v$ is in either $V\setminus X_A$, $X_B$ or $X_C$, hence also contributes to either $\C(X_A), \C(X_B)$ or $\C(X_C)$ respectively.
    
    If $e$ contributes to two of them, without loss of generality $\C(X'_A)$ and $\C(X'_B)$, then $e$ contains some vertex in $X'_A$ and in $X'_B$. So $e$ also contributes to $\C(X_A)$ and $\C(X_B)$.
    
    Otherwise $e$ contributes to all the three terms $\C(X'_A), \C(X'_B)$ and $\C(X'_C)$, then it is direct to see that $e$ also contributes to $\C(X_A), \C(X_B)$ and $\C(X_C)$.
    
    \item Similar to the proof of the first property, we consider the cases that $e$ contributes to how many terms of $\C(X'_A), \C(X'_B)$ and $\C(X'_C)$. This time $e$ must contain some vertex in $X$. 
    
    If $e$ contributes to none of them, the it also contributes to none of $\C(X_A), \C(X_B)$ and $\C(X_C)$ since the total contribution is 0, which satisfies the claim.
    
    If $e$ contributes to one of them, without loss of generality $\C(X'_A)$, then $e$ contains some vertex in $X'_A$. Since $e$ also contains vertex in $X$, it contributes to both $\C(X_B)$ and $\C(X_C)$, which gives a positive total contribution to LHS. So this case cannot happen.
    
    If $e$ contributes to two of them, without loss of generality $\C(X'_A)$ and $\C(X'_B)$, then $e$ contains some vertex in $X'_A$ and in $X'_B$. Since $e$ also contains vertex in $X$, it contributes to all of $\C(X_A), \C(X_B)$ and $\C(X_C)$, which also gives a positive total contribution to LHS. So this case cannot happen.
    
    Otherwise $e$ contributes to all the three terms $\C(X'_A), \C(X'_B)$ and $\C(X'_C)$, which satisfies the claim.\qedhere
\end{enumerate}
\end{proof}

\paragraph{Bounding the output size.} The total size of all maximal $v$-isolating mincuts on hypergraph is also $O(n)$, as a consequence of \Cref{lem:three-crossing-isolating-mincuts-on-hypergraph}. This results and its proof is the same as normal graph:

\begin{lemma}\label{lem:total-size-max-min-iso-cut-hypergraph}
Let $G$ be a hypergraph and $\T$ be a set of terminals.
For each $v\in \T$, let $X_v$ be the maximal $v$-isolating mincut.
Then, $\sum_{v\in \T} |X_v| \le 2n$.\hfill$\square$
\end{lemma}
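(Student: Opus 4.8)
The plan is to mimic the proof of the graph version (\Cref{lem:total-size-max-min-iso-cut}) verbatim, substituting the hypergraph Pairwise Intersection Only Lemma (\Cref{lem:three-crossing-isolating-mincuts-on-hypergraph}) for its graph counterpart. First I would observe that for distinct terminals $v_1, v_2, v_3 \in \T$, the singleton sets $A = \{v_1\}$, $B = \{v_2\}$, $C = \{v_3\}$ are pairwise disjoint and nonempty, and each maximal $v_i$-isolating mincut $X_{v_i}$ is by definition a $\{v_i\}$-mincut of $\T$ in the sense of \Cref{def:A-mincut hypergraph}. Hence \Cref{lem:three-crossing-isolating-mincuts-on-hypergraph} applies directly and gives $X_{v_1} \cap X_{v_2} \cap X_{v_3} = \emptyset$.

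From this, the counting argument is immediate: I claim every vertex $u \in V$ lies in at most two of the sets $\{X_v\}_{v \in \T}$. Indeed, if $u \in X_{v_1} \cap X_{v_2} \cap X_{v_3}$ for three distinct terminals $v_1, v_2, v_3$, this contradicts the triple-intersection being empty. Therefore $\sum_{v \in \T} |X_v| = \sum_{u \in V} |\{v \in \T : u \in X_v\}| \le \sum_{u \in V} 2 = 2n$, which is the desired bound.

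The proof is a direct corollary and I do not expect any genuine obstacle; the only point requiring a moment's care is that one must verify $X_v$ genuinely satisfies the new hypergraph definition of a $\{v\}$-mincut (in particular the connectivity constraint of \Cref{def:A-mincut hypergraph}) so that \Cref{lem:three-crossing-isolating-mincuts-on-hypergraph} is applicable — but this holds by construction since $X_v$ is taken to be the maximal $v$-isolating mincut of $\T$, which is an $A$-mincut with $A = \{v\}$. No other structural input is needed.
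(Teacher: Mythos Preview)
Your proposal is correct and matches the paper's approach exactly: the paper states that ``this result and its proof is the same as normal graph,'' meaning one simply replays the counting argument of \Cref{lem:total-size-max-min-iso-cut} with \Cref{lem:three-crossing-isolating-mincuts-on-hypergraph} in place of \Cref{lem:three-crossing-isolating-mincuts}. Your verification that each $X_v$ satisfies \Cref{def:A-mincut hypergraph} (so that the hypergraph lemma applies) is the only point needing care, and you have addressed it.
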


\paragraph{A Divide and Conquer Algorithm.}

The maximal isolating mincut algorithm is exactly the same as \Cref{alg:max-min-iso-cut}, and we shall reduce computing the max-flow and $s$-minimal (resp. maximal) $s$-$t$ mincut on hypergraph to compute the max-flow on normal graph.

\paragraph{Compute $s$-Maximal $st$-Mincut on a Hypergraph via Max-Flow.}
Given a hypergraph $G = (V,E)$, a source $s$ and a sink $t$, the hypergraph can also be viewed as a bipartite graph $G' = (U',V',E')$ such that $U' = V, V' = E$ and $(u,e)\in E'$ iff $u\in e$. The flow network in the hypergraph $G$ can be viewed as flow network in the bipartite graph $G'$, with vertex capacity on $e\in V'$. So we can compute the max-flow of $G$ via computing the vertex capacity max-flow of graph $G'$.

The $s$-maximal $s$-$t$ mincut can be obtained by a post-processing of max-flow. The algorithm first computes the $s$-$t$ maxflow, and then examines the residue flow graph $G'^{(f)}_{\mathrm{flow}}$: the $s$-minimal $s$-$t$ mincut is the set of vertices in $U$ reachable from $s$, and the $s$-maximal $s$-$t$ mincut can be obtained by first computing $X\subset U$ to be the set of vertices not reachable to $t$, then find the connected component of the induced subhypergraph $G[X]$ connected by $s$.

The correctness proof of \Cref{alg:max-min-iso-cut} follows the same proof of \Cref{lem:max-iso-mincut-correctness}, by replacing the Nesting \& Submodularity Lemma using the hypergraph version \Cref{lem:steiner-modularity-hypergraph}.

\begin{lemma}[Correctness]\label{lem:max-iso-mincut-correctness-hypergraph}
Fix a terminal vertex $v\in \T$. There is a unique (leaf) subproblem where the maximal isolating mincut for $v$ is computed.
Let $\hat{X}_v$ be the cut returned at Line~\ref{line:run-max-flow} of \Cref{alg:max-min-iso-cut} for vertex $v$.
Then $\hat{X}_v=X_v$ is the maximal $v$-isolating mincut on the hypergraph $G$. $\hfill\square$
\end{lemma}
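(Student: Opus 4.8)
The plan is to mirror the proof of \Cref{lem:max-iso-mincut-correctness} at the structural level, substituting the hypergraph Nesting \& Submodularity lemma (\Cref{lem:steiner-modularity-hypergraph}) for its normal-graph counterpart, while keeping careful track of the connectivity clause in \Cref{def:A-mincut hypergraph}. Fix $v\in\T$. First I would observe that original terminals are never chosen as pivots: a pivot is always a vertex obtained by contraction, so $v$ lies in exactly one of the two halves $A\cup B = \T'\setminus\{p'\}$ at every divide step. Hence there is a unique root-to-leaf path in the recursion tree along which $v$ survives as a non-pivot terminal, and Line~\ref{line:run-max-flow} is executed for $v$ in exactly one leaf subproblem; this gives the first claim. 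Moreover the returned cut $\hat X_v$ separates $v$ from the rest of that leaf's terminal set, which includes the leaf's pivot, so $\hat X_v$ contains no contracted vertex and therefore pulls back to an honest subset of $V(G)$.

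The heart of the argument is the invariant that at every divide step the maximal $v$-isolating mincut $X_v$ of $\T$ in $G$ is entirely contained in the side the recursion follows. Suppose at some node we split $\T'\setminus\{p'\}=A\cup B$ with $v\in A$, and let $X_A$ be the maximal $A$-mincut of $\T$ computed by the algorithm. Since $\{v\}\subseteq A\subsetneq\T$, applying \Cref{lem:steiner-modularity-hypergraph} with the ``small'' set $\{v\}$ and the ``large'' set $A$ shows that $X_v\cup X_A$ is again an $A$-mincut of $\T$; crucially, the hypergraph lemma certifies not only that the cut value is preserved but also that $(G/A)[X_v\cup X_A]$ is connected, so $X_v\cup X_A$ is a legal $A$-cut under \Cref{def:A-mincut hypergraph}. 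By maximality of $X_A$ we conclude $X_v\cup X_A\subseteq X_A$, i.e.\ $X_v\subseteq X_A$, so contracting $V\setminus X_A$ into the new pivot $p_A$ destroys no vertex of $X_v$.

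Next I would argue that one recursion step preserves the object being chased: if $G'=G/(V\setminus X_A)$ with terminal set $\T''=A\cup\{p_A\}$, then the maximal $v$-isolating mincut of $\T''$ in $G'$ equals $X_v$ (viewed as a subset of $X_A$). For the value: a hyperedge of $G$ entirely inside $X_v$ is untouched by the contraction, a hyperedge crossing $\partial X_v$ keeps the same contribution to $\partial X_v$ after its outside endpoints are merged into $p_A$, and hyperedges avoiding $X_v$ are irrelevant, so cut values of subsets of $X_v$ agree in $G$ and $G'$. The connectivity clause of \Cref{def:A-mincut hypergraph} only inspects hyperedges internal to the candidate set, which are likewise unchanged. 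Finally $p_A$ represents $\T\setminus A$ together with the old pivot, so separating $v$ from $\T''\setminus\{v\}$ in $G'$ is the same as separating $v$ from $\T\setminus\{v\}$ inside $X_A$ in $G$. Chaining this along the unique path to $v$'s leaf, and unfolding the base case where Line~\ref{line:run-max-flow} directly computes the maximal $v$-isolating mincut of a constant-size terminal set, yields $\hat X_v=X_v$.

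The main obstacle I anticipate is exactly the point where the hypergraph story diverges from the graph story: in hypergraphs $X_A\cap X_B$ need not be an $A$-mincut because its induced subhypergraph may disconnect, so the proof must only ever invoke the \emph{union} half of \Cref{lem:steiner-modularity-hypergraph}, and it must keep the connectivity requirement of \Cref{def:A-mincut hypergraph} in view both when asserting that $X_v\cup X_A$ is a legal $A$-cut and when checking that contraction preserves legality. Once those two checks are in place, the remainder is bookkeeping identical to the normal-graph proof.
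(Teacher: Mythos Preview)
Your proposal is correct and follows essentially the same approach as the paper: the paper's proof simply directs the reader to repeat the argument of \Cref{lem:max-iso-mincut-correctness} with the hypergraph Nesting \& Submodularity lemma (\Cref{lem:steiner-modularity-hypergraph}) substituted in, and that is exactly what you have done. Your additional emphasis on using only the \emph{union} half of \Cref{lem:steiner-modularity-hypergraph} and on verifying the connectivity clause of \Cref{def:A-mincut hypergraph} is well-placed and makes explicit precisely the points the paper leaves implicit.
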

\thatchaphol{add proof.}

\begin{lemma}[Runtime]\label{lem:max-iso-mincut-runtime-hypergraph}
\textsc{MaxIsoMincut}$(G, \T)$ runs in time $O(\log |\T|)\cdot \mathrm{MaxFlow}(O(n+m), O(p))$ time on hypergraph $G$.
\end{lemma}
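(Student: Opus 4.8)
The plan is to establish the runtime of \textsc{MaxIsoMincut}$(G,\T)$ on a hypergraph $G$ by reusing the entire accounting argument of \Cref{lem:max-iso-mincut-runtime} (the normal-graph case), but feeding it the hypergraph analogues of the two structural facts that make that argument work. First I would recall the recursion tree of \Cref{alg:max-min-iso-cut}: its depth is $\lceil\log|\T|\rceil$ since each recursive call halves the number of non-pivot terminals, and at depth $i$ there are at most $\min\{2^i,|\T|\}$ subproblems. The key is that \emph{every vertex of $G$ appears in at most two subproblems at any fixed depth $i$}. In the normal-graph proof this came from the Pairwise Intersection Only Lemma (\Cref{lem:three-crossing-isolating-mincuts}); here I would invoke its hypergraph counterpart \Cref{lem:three-crossing-isolating-mincuts-on-hypergraph} in exactly the same way: the non-pivot terminal sets $\T'_j := \T_j\setminus\{p_j\}$ across depth-$i$ subproblems are pairwise disjoint, each $G_j$ minus its pivot is the maximal $\T'_j$-mincut of $\T$ on $G$, and three of these mincuts cannot share a vertex. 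Hence $\sum_j |V(G_j)| \le 2n + \min\{2^i,|\T|\} \le 2n+|\T| \le 3n$.

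Next I would bound the total edge size (total volume) across depth-$i$ subproblems. Since each hyperedge retained in a subproblem has at least one non-pivot endpoint, and each vertex lies in at most two subproblems at depth $i$, the total volume $\sum_j \operatorname{vol}(G_j)$ is $O(p)$: charge each hyperedge-incidence $(v,e)$ in a depth-$i$ subproblem to the incidence $(v,e)$ in $G$, noting $v$ is charged in at most two subproblems. One subtlety to address is that a hyperedge $e$ of $G$ may survive (entirely, after contractions) in several subproblems at the same depth; but since it must contain a non-pivot vertex in each such subproblem and those non-pivot vertices are distinct, the incidence-charging still gives a constant blow-up. So the number of hyperedges and the total volume at each depth are $O(m)$ and $O(p)$ respectively.

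Then I would account for the per-subproblem cost. In each subproblem $(G',\T')$ with $n'$ vertices, $m'$ hyperedges and volume $p'$, the algorithm forms a max-flow instance by merging $A$ into $s$ and $B$ into $t$, and computes $X_A$, $X_B$ via one max-flow plus linear-time post-processing of the residual graph — except that on a hypergraph we pass through the bipartite encoding $G'$ described just before this lemma (vertices on one side, hyperedges on the other, with vertex capacities on the hyperedge side), which has $O(n'+m')$ vertices and $O(p')$ edges, and we solve a vertex-capacitated max-flow there via a standard splitting reduction, still within $\MaxFlow(O(n'+m'),O(p'))$. The post-processing to extract the $s$-maximal and $s$-minimal $st$-mincuts (reachability in the residual, then the connected component of the induced subhypergraph attached to $s$) is linear in the encoding size. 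Summing over subproblems at a fixed depth and using superadditivity and monotonicity of $\MaxFlow$ together with the bounds $\sum n' + \sum m' = O(n+m)$ and $\sum p' = O(p)$, the cost at each depth is $O(\MaxFlow(O(n+m),O(p)))$; multiplying by the $O(\log|\T|)$ depths gives the claimed bound.

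The main obstacle I anticipate is the bookkeeping around the bipartite encoding and the connectivity-constrained definition of $A$-mincut (\Cref{def:A-mincut hypergraph}): I must be sure that (i) the contractions performed in Lines~\ref{line:contract-ga}--\ref{line:contract-gb} interact correctly with the hyperedge side of the encoding so that the encoding size of a contracted hypergraph is still $O(\text{its volume})$, and (ii) the maximal $A$-mincut returned by the residual-graph post-processing is exactly the connectivity-constrained maximal $A$-mincut, not merely the relaxed one — this is what \Cref{lem:relaxed A-mincut equals A-mincut} secures, and I would cite it explicitly. Everything else is a direct transcription of the proof of \Cref{lem:max-iso-mincut-runtime} with the hypergraph lemmas substituted in.
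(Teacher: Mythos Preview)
Your proposal is correct and follows essentially the same approach as the paper: bound the recursion depth by $\lceil\log|\T|\rceil$, use the hypergraph Pairwise Intersection Only Lemma (\Cref{lem:three-crossing-isolating-mincuts-on-hypergraph}) to show each vertex appears in at most two depth-$i$ subproblems (giving $\sum_j|V(G_j)|\le 3n$), charge incidences to bound total volume by $O(p)$ per depth, and sum the per-subproblem max-flow cost through the bipartite encoding. The paper makes the volume bound explicit via the chain $\sum_j\sum_{e\in E(G_j)}|e|\le 2\sum_j\sum_{e\in E(G_j)}|\{u\in e: u\neq r_j\}|\le 4p$, which is exactly your ``at least one non-pivot endpoint plus each vertex in at most two subproblems'' charging written out; your extra remarks about the bipartite encoding and the connectivity-constrained post-processing are handled in the paper in the paragraph immediately preceding the lemma rather than inside the proof, but they are indeed needed and you invoke the right lemmas.
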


\begin{proof}
It suffices to bound the sum of graph sizes in all subproblems throughout the execution of \Cref{alg:max-min-iso-cut}.
First of all, the maximum depth of the recursion tree is $\left\lceil \log|\T| \right\rceil $ since in each recursive call the number of non-pivot terminals is reduced to half.
In addition, the number of subproblems in each recursion depth $i$ is at most $2^i\le |\T|$.

Now, we focus on a particular recursion depth $i>0$. Let $\{(G_j, \T_j, r_j)\}$ ($r_j$ denote the pivot vertex in hypergraph to avoid ambiguity) be all the subproblems whose recursion depth is $i$.
We observe that all terminals except pivot never goes to two subproblems so the subsets $\T'_j := \T_j\setminus \{r_j\}$ are disjoint.
Moreover, by Lines~\ref{line:contract-ga}-\ref{line:contract-gb} we know that for each $j$, removing the pivot $r_j$ from $G_j$ it is exactly the maximal $\T'_j$-mincut of $\T$ on $G$.

Using the Pairwise Intersection Only Lemma on Hypergraph (\Cref{lem:three-crossing-isolating-mincuts-on-hypergraph}),
we are able to conclude that every vertex in the input graph $G$ occurs in at most two subproblems at recursion depth $i$.
Therefore, the total number of vertices across all subproblems at depth $i$ is $\sum_{j}|V(G_j)| \le 2n + 2^i\le 2n+|\T| \le 3n$.

Next we bound the total volume of edges across all subproblems at depth $i$. 

\begin{align*}
    \sum_{j} \sum_{e\in E(G_j)} |e|
    & = \sum_{j} \sum_{e\in E(G_j)} \big|\{u\in e\mid u\in V(G_j)\}\big| \\
    & \le \sum_{j} \sum_{e\in E(G_j)} 2\big|\{u\in e\mid u\in V(G_j)\setminus\{r_j\} \}\big| \\
    & \le 2 \sum_{u\in V(G)} 2d(u) \\
    & = 4p ~.
\end{align*}
The last inequality is because every vertex in the input graph $G$ occurs in at most two subproblems at recursion depth $i$.

Therefore, in each subproblem $(G',\T')$, where the graph $G'$ has $n'$ vertices and $m'$ edges with total volume $p'$, the algorithm computes maximal $A$-mincut (and $B$-mincut) of $\T$ in $\mathrm{MaxFlow}(O(n'+m'), O(p'))$ time. By summing up the runtime per recursion depth, we obtain an upper bound to the desired total runtime $O(\log|\T|)\cdot\mathrm{MaxFlow}(O(n+m), O(p))$.
\end{proof}

\begin{proof}[Proof of \Cref{thm:max-min-iso-cut-hypergraph}.]
\Cref{thm:max-min-iso-cut-hypergraph} follows directly by \Cref{alg:max-min-iso-cut}, \Cref{lem:max-iso-mincut-correctness-hypergraph}, and \Cref{lem:max-iso-mincut-runtime-hypergraph}.
\end{proof}

\section{Omitted Proofs from \Cref{sec:divide and conquer hypergraph}}

\label{sec:omit hypergraph}

\subsection{Proof from \Cref{thm:steiner-hypercactus-main-hypergraph}}

The proof is almost the same as \Cref{thm:steiner-cactus-main}, except that we need to bound the total volume of hyperedges instead of only the number of normal edges.

\paragraph{Runtime.}
To analyze the total volume of edges and hyperedges across all subproblems within the same recursion depth, we bound normal edges and hyperedges separately.

For normal edges (rank-2 edges), the argument is the same as \Cref{thm:steiner-cactus-main}. after computing an induced decomposition from a good split collection, the total number of edges is increased by at most $\sum_{i=1}^\ell |V(G_i)|\le 2n$ (notice that we charge the number of the newly generated edges in the last decomposed graph $(G_{\ell+1}, \T_{\ell+1})$ to the edges across each split). Hence, we know that at any recursion depth there are at most $m+2n\log|\T|$ edges in total.
Here $n$ refers to the total number of vertices in the input graph, and $m$ refers to the total number of vertices in the input graph.

For hyperedges of rank larger than 2, we use a potential method similar to the vertices number analysis of \Cref{thm:steiner-cactus-main}. Let $E' = \{e\in E, |e|\ge 3\}$ to be the set of hyperedges and $p'_G := \sum_{e\in E'} |e|$ to be the total volume of hyperedges. Define $\Phi(G) := 3p'_G - 6|E'(G)|$. By definition $p'_G \ge 3|E(G)|\ge 0$, so $\Phi(G)\ge 0$. Consider the induced decomposition $\{(G_i, \T_i)\}$ on a good split collection of size $k$.
By definition of simple refinements, we know that $\Phi(G, \T) = \sum_{i=1}^{k+1} \Phi(G_i, \T_i)$.\thatchaphol{explain this a bit more like in normal graphs.}
Therefore, the sum of all potentials within the same recursion depth can be upper bounded by the root problem's potential. Since the total volume of any subproblem $p'_{G'}\le 3p'_{G'} - 6|E'(G')| = \Phi(G')$ by definition, we conclude that the total volume of hyperedges across all subproblems at any particular recursion depth (or any collection of subproblems that are not related to each other) is at most $\Phi(G) \le 3p$. 
Here $p$ refers to the total volume of hyperedges in the input graph.

Finally, we add up the runtime needed per recursion depth.
Fix any recursion depth, for each subproblem $(G_j, \T_j)$, by~\Cref{lem:compute good splits hypergraph}   the runtime spent in Line~\ref{line:compute-good-collection} is at most  $O((\log^3 n)\cdot\MaxFlow(O(|V(G_j)|+|E(G_j)|), O(p_{G_j}))$, the runtime spent for Line~\ref{line:compute-decomposition} is linear in the graph size $O(|V(G_j)|+p_{G_j})$,
and by \Cref{lem:merge-cactus hypergraph} merging hypercactus takes $O(\log|\T_j|\cdot \MaxFlow(O(|V(G_j)|+|E(G_j)|), O(p_{G_j}))$ time. 
Hence, combining all the subproblems together and using the bound of total volume of edges,
the runtime of \Cref{alg:divide-and-conquer-framework} is $O(\log^4 n)\cdot\MaxFlow(O(n+m), O(p + n\log |T|))$.\thatchaphol{Add preprocessing time}

\paragraph{Correctness.} By~\Cref{lem:compute good splits hypergraph}, with probability $1-n^{-11}$ the returned collection is good in Line~\ref{line:compute-good-collection}.
By the same analysis in the proof of~\Cref{thm:steiner-cactus-main}, 
we know that there are at most $4|\T|$ subproblems.
Hence the algorithm makes at most $4|\T|$
invocations to~\Cref{lem:compute good splits hypergraph}.
With a union bound, we know that with probability $1-4n^{-10}$ the collections of splits from all subproblems are good.
Now, by applying the union bound again to \Cref{lem:merge-cactus hypergraph} we know that the returned hypercactus is a $\T$-Steiner hypercactus of $G$ with probability at least $1-4n^{-10}$.
Therefore, with another union bound we know that with probability $1-8n^{-10}$ \Cref{alg:divide-and-conquer-framework} correctly output a $\T$-Steiner hypercactus.\hfill$\square$

\section{Applications}
\label{sec:applications}

\subsection{Steiner Connectivity Edge Augmentation Problem}

Given an unweighted undirected graph $G$, a terminal set $\T$, and a target edge connectivity $\tau$. The goal of the \emph{Steiner connectivity edge augmentation problem} is to find a set of edges to $G$ 
whose addition makes the value of a $\T$-Steiner mincut to be at least $\tau$.

In this subsection, we briefly describe how to solve the Steiner connectivity edge augmentation problem, i.e., proving the first part of \Cref{cor:edge-augmentation}.

\medskip
\noindent\textbf{\Cref{cor:edge-augmentation}.}\textit{
There are randomized almost-linear time algorithms that can w.h.p. compute
\begin{itemize}[noitemsep,nolistsep]
\item the optimal solution of the Steiner connectivity augmentation problem.
\end{itemize}
}
\medskip

Cen et al.~\cite{cen2022augmenting} gives an algorithm that solves the \emph{edge connectivity augmentation problem}, which is a special case to the Steiner connectivity edge augmentation problem with $\T=V$.
Their algorithm utilizes Isolating Cut Lemma~\cite{LP20} and constructs a special hierarchy of vertex subsets called \emph{extreme set tree}.
With the extreme set tree,
the algorithm follows the framework of Bencz\'ur and Karger~\cite{benczur2000augmenting} that solves the \emph{degree-constrained edge connectivity problem} (DECA) given an extreme set tree.
The framework~\cite[Section 3]{cen2022augmenting} consists of 3 phases:

\begin{enumerate}[itemsep=0pt]
\item Using external augmentation, transform the degree constraints $\beta(v)$ to \emph{tight} degree constraints $b(v)$ for all $v\in V$.
\item Repeatedly add an augmentation chain to increase connectivity to at least $\tau-1$.
\item Add a matching defined on the $\T$-Steiner cactus if the connectivity does not reach $\tau$.
\end{enumerate}

Notice that the first two phases can be easily constructed in the Steiner case.
The third phase requires a computation of a $\T$-Steiner cactus of $G$.
By our new Steiner cactus algorithm (\Cref{thm:cactus main}), we are able to obtain an almost-linear time algorithm for DECA, and hence solving the Steiner connectivity augmentation problem.

\subsection{Hypergraph +1-Steiner-Connectivity Augmentation Problem}

Given a hypergraph $G=(V, E)$ and a terminal set $\T\subset V$,
the goal of
the \emph{hypergraph +1-Steiner-connectivity augmentation problem} is to compute a minimum sized set $E'$ of rank-2 edges such that $\lambda_{G\cup E'}(\T) \ge \lambda_{G}(\T)+1$. The \emph{optimal value} is defined to be the minimum possible $|E'|$.
In this subsection we briefly describe the proof to the second part of \Cref{cor:edge-augmentation}.

\medskip
\noindent\textbf{\Cref{cor:edge-augmentation}.}\textit{
There are randomized almost-linear time algorithms that can w.h.p. compute
\begin{itemize}[noitemsep,nolistsep]
\item the optimal value of the hypergraph +1-Steiner-connectivity augmentation problem. 
\end{itemize}
}
\medskip

Cheng~\cite{Cheng99} provides a formula for computing the optimal value of the hypergraph +1-connectivity augmentation problem, the value can be computed in linear time once we obtain a cactus (for all global mincuts) of $G$.
Below, we describe Cheng's result in the Steiner setting, thus solving the hypergraph +1-Steiner-connectivity augmentation problem:

\begin{theorem}[\cite{Cheng99}]
Given $G$ be a hypergraph and $\T$ be a terminal sets, let $H$ be the irredundant $\T$-Steiner hypercactus of $G$. Let $\alpha_G(\T)$ be the size of the largest hyperedge in $H$, and $\beta_G(\T)$ be the number of degree 1 nodes in $H$. Then the optimal value of $+1$-Steiner-connectivity augmentation is
\[
    \max\left\{ \alpha_G(\T) - 1, \left\lceil \frac{\beta_G(\T)}{2} \right\rceil \right\} ~.
\]
\end{theorem}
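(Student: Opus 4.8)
The plan is to reduce the Steiner version to the structure of the irredundant $\T$-Steiner hypercactus $H$ exactly as Cheng \cite{Cheng99} does for the global case, using the fact that $H$ preserves all $\T$-Steiner mincuts and that every hyperedge of rank $r$ in $H$ corresponds to a ``brittle'' block in which essentially all the $2^r$ cuts share the same boundary (which we may assume by \Cref{lem:unique mincut in brittle}). First I would recall the two classical obstructions to augmenting connectivity by one: (i) a hyperedge $e$ of rank $r=\alpha_G(\T)$ in $H$ is a single $\T$-Steiner mincut whose removal splits $H$ into $r$ pieces, so any single added rank-$2$ edge can raise the value of at most one of the (at most $\binom r2$) cuts through $e$ by one unit, but to lift that mincut we must hit $e$; a short counting/LP-duality argument (the ``edge covers all pieces'' lower bound) shows we need at least $\alpha_G(\T)-1$ new edges; (ii) each degree-$1$ node of $H$ (equivalently, each $t$ whose $\{t\}$-isolating cut is a $\T$-Steiner mincut, i.e.\ a leaf of the hypercactus) must gain an extra incident unit of weight, and each new rank-$2$ edge can serve at most two such leaves, giving the $\lceil \beta_G(\T)/2\rceil$ lower bound. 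Taking the max of the two gives a lower bound; the work is matching it.

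For the upper bound, the plan is to exhibit an augmenting edge set of size exactly $\max\{\alpha_G(\T)-1,\ \lceil\beta_G(\T)/2\rceil\}$. I would set $k$ equal to this maximum and describe an explicit pairing: orient the hypercactus $H$ as a rooted tree-of-cycles-and-hyperedges, list the degree-$1$ nodes (leaves) of $H$ in the DFS pre-order produced by the specialized traversal already used in the proof of \Cref{lem:sample-two-vertices hypergraph}, and add edges between the $i$-th and the $(i+\lceil\beta/2\rceil)$-th leaf in that cyclic order. This ``Eswaran--Tarjan''–style circular matching simultaneously (a) adds a unit of boundary to every $\T$-Steiner mincut that is a $1$-edge or $2$-edge cut, because any such cut separates the leaf list into a contiguous arc of size $<\beta/2$ and its complement, so at least one added chord crosses it; and (b) when a hyperedge $e$ of rank $r$ is present, the $r$ pieces it creates each contain at least one leaf, and the $\ge \alpha-1$ chords we add across distinct pieces raise the value of the cut at $e$ by $\ge r-1$ — here the special structure from \Cref{cor:brittle real corollary} and \Cref{lem:unique mincut in brittle} is used to argue that we never have to worry about the exponentially many sub-cuts inside the brittle, only about the single boundary $\partial X_A$. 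I would then verify that after adding these $k$ edges, \emph{every} $\T$-Steiner mincut of $G$ (all of which appear in $H$ by \Cref{def:T-steiner-hypercactus}) has had its value increased by at least one, hence $\lambda_{G\cup E'}(\T)\ge \lambda_G(\T)+1$, which shows $k$ edges suffice.

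The final ingredient is the running time claim: given $H$, which we compute by \Cref{thm:steiner-hypercactus-main-hypergraph} in $O(\log^4 n)\cdot\MaxFlow(O(n+m),O(p+n\log|\T|))$ time, the quantities $\alpha_G(\T)$ (largest hyperedge rank) and $\beta_G(\T)$ (number of degree-$1$ nodes) are read off $H$ in linear time $O(|V(H)|+\sum_{e\in E(H)}|e|)=O(|\T|)$, so computing the optimal \emph{value} costs essentially one hypercactus construction, i.e.\ almost-linear time overall; this is what \Cref{cor:edge-augmentation} asserts.

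\textbf{Main obstacle.} The delicate part is the lower bound argument for the hyperedge term $\alpha_G(\T)-1$ in the Steiner setting: one must show that a rank-$r$ hyperedge in the \emph{irredundant} $\T$-Steiner hypercactus really does force $r-1$ new edges even though in the original hypergraph $G$ the corresponding cut is one of many near-mincuts and the separation into $r$ parts is a statement about $H$, not literally about $G$. I would handle this by transporting the $r$-way partition back through $\phi$: the $r$ connected components of $H-e$ pull back to $r$ nonempty terminal groups $T_1,\dots,T_r$, for each $i$ the set $X_{T_i}$ is a $\T$-Steiner mincut of $G$, these $r$ mincuts are laminar with a common ``core'' (again by posi-modularity, \Cref{lem:disjoint-posi-modularity-hypergraph}, and \Cref{lem:unique mincut in brittle}), and a new rank-$2$ edge lies inside exactly one $X_{T_i}$ or crosses $e$ in exactly two parts, so increasing all $r$ of the nested cuts $\{X_{T_i}\}$ by one needs $\ge r-1$ chords — a clean hypergraph analogue of Cheng's tree argument. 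Getting this laminar-family bookkeeping exactly right (including the interaction with the $\beta$-term when $r$ is large) is where the care is needed; everything else is a routine adaptation of \cite{Cheng99}.
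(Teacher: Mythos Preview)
The paper does not prove this theorem at all: it is quoted verbatim from Cheng~\cite{Cheng99} and used as a black box. The only thing the paper adds is the observation that once the $\T$-Steiner hypercactus $H$ is available (via \Cref{thm:hypercactus main}), the quantities $\alpha_G(\T)$ and $\beta_G(\T)$ can be read off in linear time, which yields the second bullet of \Cref{cor:edge-augmentation}. So there is no ``paper's own proof'' to compare your plan against; your proposal is an attempt to reprove a cited result that the authors do not reprove.

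That said, one step in your lower-bound sketch for the $\alpha_G(\T)-1$ term is not quite right as written. You argue that each new rank-$2$ edge ``crosses $e$ in exactly two parts, so increasing all $r$ of the nested cuts $\{X_{T_i}\}$ by one needs $\ge r-1$ chords.'' But if the only constraints were the $r$ singleton cuts $X_{T_i}$, and each new edge can hit two of them, the bound would be $\lceil r/2\rceil$, not $r-1$. The correct argument uses that \emph{every} bipartition of the $r$ pieces is a $\T$-Steiner mincut (this is exactly what the brittle/hyperedge structure in $H$ encodes, cf.\ \Cref{lem:unique mincut in brittle}); hence the multigraph on $r$ vertices formed by the new edges must be connected, forcing at least $r-1$ edges. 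Your earlier parenthetical about an ``edge covers all pieces'' lower bound suggests you may have had this in mind, but the paragraph under ``Main obstacle'' states the weaker (and insufficient) version.
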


Therefore, with our new Steiner hypercactus algorithm (\Cref{thm:hypercactus main}), the optimal value of the hypergraph +1-Steiner-connecitivty augmentation problem can be solved in almost-linear time.

\subsection{Incremental Algorithm for  Hypergraph Mincuts}

Given a sequence of (unweighted) hyperedges inserting to an initially empty hypergraph $G$, 
the \emph{incremental hypergraph mincut problem} requires 
the data structure to correctly maintain the $\lambda(G)$ subject to this sequence of insertions.

Gupta and Karmakar~\cite{gupta2019incremental} gives an algorithm that solves the incremental hypergraph mincut problem in $O(\lambda n)$ amortized update time.
In this subsection, we show that our hypergraph cactus algorithm (\Cref{thm:hypercactus main}) can be used for substituting the cactus construction step and thus improving the amortized update time to the algorithm.

\medskip
\noindent\textbf{\Cref{cor:incremental-hypergraph-mincut}.}\textit{
There is an algorithm that, given an unweighted hypergraph $G=(V,E)$ undergoing hyperedge insertions, maintains a mincut in time $\Ohat(\lambda)$ amortized update time where $\lambda$ denotes the mincut value at the end of the updates. 
}
\medskip

The algorithm by Gupta and Karmakar~\cite{gupta2019incremental} proceeds in phases.
A new phase starts whenever $\lambda(G)$ increases. Within a phase, they spend $T_{\textrm{cactus}}+ \Otil(n + \sum_{e}|e|)$ time where the sum is over all edges inserted in this phase. 
By using previous hypercactus construction of Chekuri and Xu's algorithm~\cite{CX17}, they obtained an $\Otil(\lambda(G)n)$ amortized update time where $\lambda(G)$ denotes the mincut value after all the updates. 
Now, if we use our almost-linear time construction, the time per phase is $\Ohat(p)$ where $p$ is the total size of the hypergraph at the end of the phase. After $\lambda(G)$ phases, the total update time is then $\Ohat(p \lambda(G))$, which implies $\Ohat(\lambda(G))$ amortized update time.

\end{document}